\newtheorem{assumption}{Assumption}
\renewcommand{\P}{\mathcal{P}}
\newcommand{\R}{\mathbb{R}} 
\newcommand{\N}{\mathcal{N}}
\newcommand{\E}{\mathbb{E}}
\newcommand{\F}{\mathcal{F}}
\newcommand{\X}{\mathcal{X}}
\newcommand{\Y}{\mathcal{Y}}
\newcommand{\bX}{\mathbf{X}}
\newcommand{\bY}{\mathbf{Y}}
\newcommand{\bZ}{\mathbf{Z}}
\newcommand{\bx}{\mathbf{x}}
\newcommand{\by}{\mathbf{y}}
\newcommand{\bz}{\mathbf{z}}
\newcommand{\KL}{\mathsf{KL}}
\newcommand{\FI}{\mathsf{FI}}
\newcommand{\DG}{\mathsf{DG}}
\newcommand{\Tr}{\mathsf{Tr}}
\newcommand{\Cov}{\mathsf{Cov}}
\newcommand{\Var}{\mathsf{Var}}
\newcommand{\op}{\mathsf{op}}
\newcommand{\HS}{\mathsf{HS}}
\newcommand{\GD}{\mathsf{GD}}
\newcommand{\reg}{\tau}
\newcommand{\grad}{\mathrm{grad}}
\newcommand{\error}{\varepsilon}
\newcommand{\Unif}{\mathrm{Unif}}
\newcommand{\avg}{\mathrm{avg}}
\newcommand{\sym}{\mathrm{sym}}
\newcommand{\Err}{\mathsf{Err}}
\renewcommand{\part}[2]{\frac{\partial #1}{\partial #2}}
\newcommand{\red}[1]{{\textcolor{red}{#1}}}
\newcommand{\blue}[1]{{\textcolor{blue}{#1}}}
\newcommand{\purple}[1]{{\color{purple}{#1}}}
\title[On the Convergence of Min-Max Langevin Dynamics and Algorithm]{On the Convergence of 
Min-Max Langevin Dynamics
and Algorithm}
\begin{document}

\maketitle

\begin{abstract}%
  We study zero-sum games in the space of probability distributions over the Euclidean space $\R^d$ with entropy regularization, in the setting when the interaction function between the players is smooth and strongly convex-strongly concave.
  We prove an exponential convergence guarantee for the mean-field min-max Langevin dynamics to compute the equilibrium distribution of the zero-sum game.
  We also study the finite-particle approximation of the mean-field min-max Langevin dynamics, both in continuous and discrete times.
  We prove biased convergence guarantees for the continuous-time finite-particle min-max Langevin dynamics to the stationary mean-field equilibrium distribution with an explicit bias term which does not scale with the number of particles.
  We also prove biased convergence guarantees for the discrete-time finite-particle min-max Langevin algorithm to the stationary mean-field equilibrium distribution with an additional bias term which scales with the step size and the number of particles.
  This provides an explicit iteration complexity for the average particle along the finite-particle algorithm to approximately compute the equilibrium distribution of the zero-sum game.
\end{abstract}

\begin{keywords}%
  Zero-sum games, mean-field Langevin dynamics, finite-particle Langevin algorithm%
\end{keywords}

\section{Introduction}
\label{Sec:Introduction}

Many tasks in computer science, economics, and machine learning can be formulated as \textit{games} in which two or more agents compete to optimize their own objective functions. Examples include Generative Adversarial Networks (GANs)~\citep{goodfellow2020generative}, adversarial learning~\citep{madry2018towards}, reinforcement learning~\citep{busoniu2008comprehensive}, and robust optimization~\citep{rahimian2022frameworks}. The solutions to these problems correspond to the Nash equilibria of these~games.

Two-player zero-sum games, first studied by \citet{borel1921theorie} and \citet{v1928theorie}, are arguably the most fundamental and well-studied class of games in game theory. Unlike classical game theory, where players are typically assumed to choose actions from a finite set, modern applications -- including those mentioned above -- require players to select actions from a continuous set $\X$, introducing substantial new challenges in both the existence and tractability of Nash equilibria.

Given a zero-sum game over action sets $\X$ and $\Y$ with payoff function $V \colon \X \times \Y \to \R$:
\begin{align}\label{Eq:Game00}
    \min_{x \in \X} \max_{y \in \Y} V(x,y),
\end{align}
the celebrated Minimax Theorem by \citet{v1928theorie} guarantees the existence of a pure (Nash) equilibrium $(x^*,y^*)$ of this game if $\X$ and $\Y$ are compact convex sets, and $V(x,y)$ is convex in $x$ and concave in $y$.\footnote{The finite-action setting is captured by choosing  $\X$  and  $\Y$  as simplices and  $V(x, y)$ as a bilinear function.} Unfortunately, when the payoff function is general, such a pure equilibrium need not exist. This motivates us to consider the more general solution concept of a mixed (Nash) equilibrium, which is a probability distribution over actions, and which is guaranteed to exist even for general payoff functions~\citep{glicksberg1952further}.
The problem of finding mixed equilibria in game~\eqref{Eq:Game00} can be recast as finding pure equilibria in the lifted game over the spaces of probability distributions \( \P(\X) \) and \( \P(\Y) \), where the payoff function is given by the expectation of the base payoff function:
\begin{align}\label{Eq:Game0}
    \min_{\rho^X \in \P(\X)} \max_{\rho^Y \in \P(\Y)} \E_{\rho^X \otimes \rho^Y}[V(X,Y)].
\end{align}
This has been studied in many works, including in~\citep{hsieh2019finding} for GANs.
The actions in the lifted game~\eqref{Eq:Game0} are probability distributions $\rho^X$ and $\rho^Y$ over the base sets $\X$ and $\Y$, so a pure equilibrium for the lifted game~\eqref{Eq:Game0} corresponds to a mixed equilibrium of the base game~\eqref{Eq:Game00};
we refer to a pure equilibrium of the lifted game~\eqref{Eq:Game0} as an \textit{equilibrium distribution}.

Despite its universality, finding an equilibrium distribution for the game~\eqref{Eq:Game0} may still be computationally challenging, depending on the base function $V$. 
Recent works, including~\citep{hsieh2019finding,domingo2020mean,ma2022provably,lu2023two,ding2024papal,kim2024symmetric}, have proposed regularizing the game by adding an entropy term to the payoff function with some regularization parameter $\reg > 0$;
this results in a game~\eqref{Eq:Game} that we study below.
The hope is that the entropy regularization makes the equilibrium distribution easier to compute, similar to what happens in the finite-dimensional problem.

In the game theory literature, the equilibrium distribution of the regularized game~\eqref{Eq:Game} with entropy regularization is known as the \textit{quantal response equilibrium} (QRE) defined by~\citet{mckelvey1995quantal}.\footnote{QRE  is originally defined for finite action games. We study its natural extension to continuous games in this paper.}
When the regularization parameter $\reg > 0$ is small, the QRE provides a good approximation to the equilibrium distribution of the original game~\eqref{Eq:Game0}.
We also observe that the QRE is the solution to one step of the proximal method with entropy regularization to compute the equilibrium distribution of the original game~\eqref{Eq:Game0}; therefore, if we can compute the QRE efficiently, then we can hope to run the conceptual prox method~\citep{nemirovski2004prox} which may have a good convergence property to the true equilibrium distribution of the game~\eqref{Eq:Game0}.

In this paper, we study this question in the setting when the payoff function $V$ is strongly convex-strongly concave. 
Two remarks are in order. First, in this setting, a pure equilibrium, i.e., an equilibrium distribution that is a point mass, of the game~\eqref{Eq:Game0} exists and is tractable~\citep{facchinei2003finite}.
However, these algorithms apply only to finite-dimensional settings, making them unsuitable for finding the QRE, an infinite-dimensional object.  
Second, in the space of probability distributions under the Wasserstein metric, the payoff function of the regularized game~\eqref{Eq:Game} 
is geodesically strongly convex-strongly concave. 
In finite dimensions, for a strongly convex-strongly concave min-max optimization problem, the min-max gradient flow and its straightforward discretization -- the gradient descent-ascent algorithm -- converge at an exponential rate~\citep[Chapter~12.4.2]{facchinei2003finite}; see also Section~\ref{Sec:ReviewDeterministic} for a review of the deterministic game.
In our setting, the min-max Wasserstein gradient flow in the space of distributions corresponds to the mean-field min-max Langevin dynamics, arguably the most natural dynamics for solving the game.
Unlike in the finite-dimensional case, the convergence properties of this natural dynamics, along with its particle and time discretizations, remain unknown and were listed as an open question in~\citep{wang24open}.
All existing results for finding the QRE are for different settings or require modifications to the algorithms -- see review below.
This gap in our understanding motivates us to investigate the following question:

\begin{center}
    \itshape
   Does the mean-field min-max Langevin dynamics 
   converge for the regularized game~\eqref{Eq:Game}? If so, what are the convergence rates of its particle approximation and its discrete-time implementation?
\end{center}

\paragraph{Related work}
Zero-sum games in the space of probability distributions have been studied in many recent works, including~\citep{hsieh2019finding,domingo2020mean,cen2021fast,wang2022exponentially,ma2022provably,lu2023two,kim2024symmetric,lascu2023entropic,lascu2024a,lascu2024mirror,ding2024papal,conger2024coupled,an2025convergence,lu2025convergencetimeaveragedmeanfield}.
We mention a few works here, and refer the reader to Section~\ref{Sec:Related} for further discussion. 
When the domains are compact Riemannian manifolds,
\cite{domingo2020mean} show that if the mean-field dynamics converges, then it must converge to the equilibrium distribution.
\cite{ma2022provably} and \cite{lu2023two} show the convergence of the continuous-time mean-field dynamics under timescale separation.
\cite{conger2024coupled} study the more general setting of min-max and cooperative games in the space of distributions. 
Notably,~\cite[Theorem~3.4]{conger2024coupled} show that for zero-sum games under strong convexity, the mean-field min-max Langevin dynamics has exponential convergence to the equilibrium distribution; thus, they have answered the open question by~\citep{wang24open} in the continuous-time mean-field setting under strong convexity.
In this work, we complement the results by providing guarantees for the finite-particle dynamics and the discrete-time algorithm.
When the domains are Euclidean spaces, 
\cite{kim2024symmetric} study zero-sum games with entropy regularization with a more general convex-concave interaction functional using the mean-field Langevin dynamics with a modified drift term replaced by the time average of the gradients; they prove a continuous-time convergence rate, as well as a convergence analysis of the finite-particle discrete-time algorithm.
\cite{ding2024papal} study a finite-particle discrete-time algorithm that implements the mirror-prox primal-dual algorithm in the space of distributions, 
and show explicit convergence guarantees of the resulting algorithm, under the assumption that the base payoff function is a bounded perturbation of a quadratic function.

\subsection{Problem Setting}

In this paper, we study zero-sum games in the space of probability distributions with interaction term which is an expectation over a base interaction function, with entropy regularization.
We work on the unconstrained state space $\X = \Y = \R^d$, for some dimension $d \ge 1$. 
Our techniques and results also generalize to $\X = \R^{d_X}$ and $\Y = \R^{d_Y}$ where $d_X \neq d_Y$, but here we set $d_X = d_Y = d$ for simplicity.
Let $\P(\R^d)$ denote the space of probability distributions over $\R^d$ which are absolutely continuous with respect to the Lebesgue measure and have a finite second moment.
Let $H(\rho) = -\E_\rho[\log \rho]$ be the entropy functional.

Let $V \colon \R^d \times \R^d \to \R$ be a given base payoff function.
We study the following min-max (or zero-sum) game on the product space of probability distributions:
\begin{align}\label{Eq:Game}
    \min_{\rho^X \in \P(\R^d)} \, \max_{\rho^Y \in \P(\R^d)} \; \F_\reg(\rho^X,\rho^Y)
\end{align}
where $\reg > 0$ is a regularization parameter, and the function $\F_\reg \colon \P(\R^d) \times \P(\R^d) \to \R$ is defined~as:
\begin{align}\label{Eq:PayoffDef}
    \F_\reg(\rho^X,\rho^Y)
    := \E_{\rho^X \otimes \rho^Y}[V] - \reg H(\rho^X) + \reg H(\rho^Y).
\end{align}

We say that a pair of probability distributions $(\bar \nu^X, \bar \nu^Y) \in \P(\R^d) \times \P(\R^d)$ is an \textit{equilibrium distribution} for the min-max game~\eqref{Eq:Game} if the following holds for all $(\rho^X, \rho^Y) \in \P(\R^d) \times \P(\R^d)$:
\begin{align}\label{Eq:Equilibrium}
    \F_\reg(\bar \nu^X, \rho^Y) \le \F_\reg(\bar \nu^X, \bar \nu^Y) \le \F_\reg(\rho^X, \bar \nu^Y).
\end{align}

The \textit{duality gap} $\DG \colon \P(\R^d) \times \P(\R^d) \to \R$ of the min-max game~\eqref{Eq:Game} is defined by:
\begin{align}\label{Eq:DG}
    \DG(\rho^X,\rho^Y) := \max_{\tilde \rho^Y \in \P(\R^d)} \, \F_\reg(\rho^X, \tilde \rho^Y) - \min_{\tilde \rho^X \in \P(\R^d)} \, \F_\reg(\tilde \rho^X, \rho^Y).
\end{align}
In the literature, this is also known as the Nikaido-Isoda error~\citep{nikaido1955note}.
Note that by construction, $\DG(\rho^X,\rho^Y) \ge 0$ for all $\rho^X, \rho^Y \in \P(\R^d)$, and furthermore, $\DG(\rho^X,\rho^Y) = 0$ if and only if $(\rho^X, \rho^Y)$ is an equilibrium distribution for the game~\eqref{Eq:Game}.

We are interested in characterizing the existence and uniqueness of the equilibrium distribution, as well as algorithms for approximately computing the equilibrium distribution in practice.
We study the case when the payoff function $V(x,y)$ is strongly convex in $x$, strongly concave in $y$, and has bounded second derivatives, see Assumption~\ref{As:SCSmooth}.
In this case, there exists a unique equilibrium distribution $(\bar \nu^X, \bar \nu^Y)$; see Section~\ref{Sec:Equilibrium}.

We now provide details for the dynamics and algorithms we study.

\paragraph{(1) The mean-field dynamics:}
This is a pair of stochastic processes $(\bar X_t)_{t \ge 0}$ and $(\bar Y_t)_{t \ge 0}$ in $\R^{d}$ which evolve via the following \textit{mean-field min-max Langevin dynamics} for all $t \ge 0$:
\begin{subequations}\label{Eq:MFSystem}
\begin{align}
    d\bar X_t &= -\E_{\bar \rho_t^Y}[\nabla_x V(\bar X_t, \bar Y_t)] \, dt + \sqrt{2\reg} \, dW_t^X \\
    d\bar Y_t &= \E_{\bar \rho_t^X}[\nabla_y V(\bar X_t, \bar Y_t)] \, dt + \sqrt{2\reg} \, dW_t^Y
\end{align}
\end{subequations}
where $\bar X_t \sim \bar \rho_t^X$ and $\bar Y_t \sim \bar \rho_t^Y$, and
where $(W_t^X)_{t \ge 0}$ and $(W_t^Y)_{t \ge 0}$ are independent standard Brownian motion in $\R^d$.
This comes from running the gradient flow dynamics in the space of probability distributions with the Wasserstein metric; see Section~\ref{Sec:MFDerivation} for derivation.
The dynamics~\eqref{Eq:MFSystem} is a \textit{mean-field} system because the evolution of $\bar X_t$ depends on the distribution $\bar \rho_t^Y$, and similarly, the evolution of $\bar Y_t$ depends on $\bar \rho_t^X$.
However, note the dependence is only via their expectation, so in the mean-field system above, $\bar X_t$ and $\bar Y_t$ evolve independently.
Therefore, if we initialize from independent $(\bar X_0, \bar Y_0) \sim \bar \rho_0^X \otimes \bar \rho_0^Y$, then $(\bar X_t, \bar Y_t) \sim \bar \rho_t^X \otimes \bar \rho_t^Y$ remains independent for all $t \ge 0$.

\vspace{-.05in}
\begin{framed}
    \noindent
    \textbf{Main Result 1:}
    When the payoff function $V$ is strongly concave-strongly convex, we show the mean-field dynamics $(\bar \rho_t^X, \bar \rho_t^Y)$ converges to the equilibrium distribution $(\bar \nu^X, \bar \nu^Y)$ exponentially fast in terms of duality gap, KL divergence, and Wasserstein distance; see Theorem~\ref{Thm:ConvMF}.
\end{framed}
\vspace{-.05in}

However, the mean-field system is an idealized algorithm that we cannot exactly implement due to its mean-field dependence and continuous-time nature. Thus, we study its approximations.


\paragraph{(2) The particle dynamics:}
This is an approximation of the mean-field dynamics~\eqref{Eq:MFSystem} by replacing $\bar X_t \sim \bar \rho_t^X$ and $\bar Y_t \sim \bar \rho_t^Y$ by $N \ge 1$ particles $\bX_t = (X_t^1,\dots, X_t^N) \sim \rho_t^{\bX}$ and $\bY_t = (Y_t^1,\dots, Y_t^N) \sim \rho_t^{\bY}$ in $\R^{dN}$, which jointly evolve via the following, for all $t \ge 0$ and for all $i \in [N] := \{1,\dots,N\}$:
\begin{subequations}\label{Eq:ParticleSystem}
\begin{align}
    dX_t^i &= -\frac{1}{N} \sum_{j \in [N]} \nabla_x V(X_t^i, Y_t^j) \, dt + \sqrt{2\reg} \, dW_t^{X,i} \\
    dY_t^i &= \frac{1}{N} \sum_{j \in [N]} \nabla_y V(X_t^j, Y_t^i) \, dt + \sqrt{2 \reg} \, dW_t^{Y,i}
\end{align}
\end{subequations}
where $(W_t^{X,i})_{t \ge 0}$ and $(W_t^{Y,i})_{t \ge 0}$ are independent standard Brownian motions in $\R^{d}$, for $i \in [N]$.
Note the above is not a mean-field system, but a standard system of interacting stochastic processes.
The distribution of $\bZ_t := (\bX_t, \bY_t) \sim \rho_t^{\bZ}$ in $\R^{2dN}$ is in general not independent: $\rho_t^{\bZ} \neq \rho_t^{\bX} \otimes \rho_t^{\bY}$.
Note that in the particle dynamics~\eqref{Eq:ParticleSystem}, we use the empirical mean from the $N$ particles to approximate the true expectation in the mean-field dynamics.
Thus, as $N \to \infty$, we expect the particle dynamics~\eqref{Eq:ParticleSystem} to become closer to the mean-field dynamics. 
As $t \to \infty$, the particle dynamics~\eqref{Eq:ParticleSystem} converges to a stationary distribution $(\bX_\infty, \bY_\infty) \sim \rho_\infty^{\bZ}$, which we expect to be close to the independent product of the stationary mean-field distribution $\bar \nu^{\bZ} = (\bar \nu^X)^{\otimes N} \otimes (\bar \nu^Y)^{\otimes N}$.

\vspace{-.05in}
\begin{framed}
    \noindent\textbf{Main Result 2:}
    For strongly convex-strongly concave and smooth $V$, 
    we prove a biased convergence guarantee of $\rho_t^{\bZ}$ along the finite-particle dynamics~\eqref{Eq:ParticleSystem} to the stationary mean-field distribution $\bar \nu^{\bZ}$, with a bias that is independent of the number of particles $N$; see Theorem~\ref{Thm:ConvergenceDynToMF}. 
\end{framed}
\vspace{-.1in}

\paragraph{(3) The particle algorithm:}
This is a time discretization of the particle dynamics~\eqref{Eq:ParticleSystem}, which maintains a collection of particles 
$\bx_k = (x_k^1,\dots,x_k^N) \sim \rho_k^{\bx,\eta}$ and $\by_k = (y_k^1, \dots, y_k^N) \sim \rho_k^{\by,\eta}$ in $\R^{dN}$ and evolves them via the following discrete-time update, for all $k \ge 0$ and for all $i \in [N]$:
\begin{subequations}\label{Eq:ParticleAlgorithm}
\begin{align}
    x_{k+1}^i &= x_k^i - \eta \cdot \frac{1}{N} \sum_{j \in [N]} \nabla_x V(x_k^i, y_k^j) + \sqrt{2\reg \eta} \, \zeta_k^{x,i} \\
    y_{k+1}^i &= y_k^i + \eta \cdot \frac{1}{N} \sum_{j \in [N]} \nabla_y V(x_k^j, y_k^i) + \sqrt{2 \reg \eta} \, \zeta_k^{y,i}
\end{align}
\end{subequations}
where $\eta > 0$ is a fixed step size, and $\zeta_k^{x,1}, \dots, \zeta_k^{x,N}, \zeta_k^{y,1}, \dots, \zeta_k^{x,N} \sim \N(0,I)$ are independent standard Gaussian random variables in $\R^d$.
Similar to standard discretization of the Langevin dynamics, this discrete-time algorithm is biased, i.e., as $k \to \infty$, the distribution of $\bz_k = (\bx_k, \by_k) \sim \rho_k^{\bz,\eta}$ converges to some distribution $\rho_\infty^{\bz,\eta}$ which is different from the stationary distribution $\rho_\infty^{\bZ}$ of the continuous-time particle dynamics~\eqref{Eq:ParticleSystem}.

\begin{framed}
    \noindent\textbf{Main Result 3:}
    For strongly convex-strongly concave and smooth $V$, we prove a biased convergence guarantee for $\rho_k^{\bz,\eta}$ along the discrete-time algorithm~\eqref{Eq:ParticleAlgorithm} to the stationary mean-field distribution $\bar \nu^{\bZ}$, with
    a bias that scales with step size $\eta$ and the number of particles $N$; see Theorem~\ref{Thm:ConvergenceAlgToMF}.
    This provides an explicit iteration complexity guarantee for the average particle of the 
    algorithm~\eqref{Eq:ParticleAlgorithm} to approximate
    the equilibrium distribution $\bar \nu^{Z}$ for the game~\eqref{Eq:Game}; see Corollary~\ref{Cor:ComplexityAlgToMF}.
\end{framed}

\paragraph{Organization:}
We provide definitions and assumptions in Section~\ref{Sec:Prelim}.
We discuss the properties of the equilibrium distribution and the duality gap in Section~\ref{Sec:Equilibrium}.
We discuss the convergence analysis of the mean-field min-max Langevin dynamics in Section~\ref{Sec:MF}.
We discuss the convergence analysis of the continuous-time finite-particle min-max Langevin dynamics in Section~\ref{Sec:Particle}, and the discrete-time finite-particle min-max Langevin algorithm in Section~\ref{Sec:Algorithm}.
We conclude with discussion in Section~\ref{Sec:Discussion}.
We provide additional details and proofs in the appendix.

\section{Preliminaries}
\label{Sec:Prelim}

\subsection{Notation and Definitions}

Let $\P(\R^d)$ denote the space of probability distributions $\rho$ over $\R^d$ which are absolutely continuous with respect to the Lebesgue measure and which have finite second moment: $\E_{\rho}[\|X\|^2] < \infty$.
We identify a probability distribution $\rho \in \P(\R^d)$ with its probability density function (Radon-Nikodym derivative) $\rho \colon \R^d \to (0,\infty)$ with respect to the Lebesgue measure.

Let $H \colon \P(\R^d) \to \R$ be the \textit{entropy} functional:
$$H(\rho) = -\E_\rho[\log \rho] = -\int_{\R^d} \rho(x) \, \log \rho(x) \, dx.$$
 
The Wasserstein $W_2$ distance between probability distributions $\rho, \nu \in \P(\R^d)$ is defined by:
$$W_2(\rho,\nu) = \inf_{\gamma \in \Pi(\rho,\nu)} \E\left[\|X-Y\|^2\right]^{1/2},$$
where the infimum is over all couplings between $\rho$ and $\nu$, i.e., joint distributions of $(X,Y) \sim \gamma$ with the correct marginal distributions $X \sim \rho$ and $Y \sim \nu$.

For probability distributions $\rho, \nu \in \P(\R^d)$ with $\rho \ll \nu$ (i.e., if $\nu(x) = 0$, then $\rho(x) = 0$), the \textit{Kullback-Leibler (KL) divergence} or the \textit{relative entropy} of $\rho$ with respect to $\nu$ is defined by:
$$\KL(\rho\,\|\,\nu) = \E_\rho\left[\log \frac{\rho}{\nu}\right] = \int_{\R^d} \rho(x) \log \frac{\rho(x)}{\nu(x)} \, dx.$$

If $\rho, \nu \in \P(\R^d)$ have differentiable density functions, then the {\em relative Fisher information} of $\rho$ with respect to $\nu$ is defined by:
$$\FI(\rho\,\|\,\nu) = \E_\rho\left[\left\|\nabla \log \frac{\rho}{\nu}\right\|^2\right] = \int_{\R^d} \rho(x) \, \left\|\nabla \log \frac{\rho(x)}{\nu(x)}\right\|^2 \, dx.$$

We say a probability distribution $\nu$ is {\em $\alpha$-strongly log-concave} ($\alpha$-SLC) for some $\alpha > 0$ if the negative log-density $-\log \nu \colon \R^d \to \R$ is $\alpha$-strongly convex; if $\log \nu$ is twice differentiable, then this is equivalent to $-\nabla^2 \log \nu(x) \succeq \alpha I$ for all $x \in \R^d$.

We say $\nu$ satisfies {\em $\alpha$-log-Sobolev inequality} ($\alpha$-LSI) for some $\alpha > 0$ if for all probability distribution $\rho$, the following inequality holds:
$$\FI(\rho\,\|\,\nu) \ge 2\alpha \, \KL(\rho\,\|\,\nu).$$

We say $\nu$ satisfies {\em $\alpha$-Talagrand inequality} ($\alpha$-TI) for some $\alpha > 0$ if for all probability distribution $\rho$, the following inequality holds:
$$\KL(\rho\,\|\,\nu) \ge \frac{\alpha}{2} \, W_2(\rho, \nu)^2.$$

We recall that if $\nu$ is $\alpha$-SLC, then $\nu$ also satisfies $\alpha$-LSI;
furthermore, if $\nu$ satisfies $\alpha$-LSI, then $\nu$ also satisfies $\alpha$-TI, see~\citep{otto2000generalization}.

\subsection{Assumptions}

Throughout this paper, we make the following assumption.

\begin{assumption}\label{As:SCSmooth}
    The base interaction function $V \colon \R^d \times \R^d \to \R$ is four-times continuously differentiable, $\alpha$-strongly convex in the first argument, $\alpha$-strongly concave in the second argument, 
    and is $L$-smooth of the second order, for some $0 < \alpha \le L < \infty$.
    That is, for all $x,y \in \R^d$:
    \begin{align*}
        \nabla^2_{xx} V(x,y) &\succeq \alpha I,
        \qquad
        -\nabla^2_{yy} V(x,y) \succeq \alpha I, 
        \qquad \text{ and } \qquad
        \|\nabla^2 V(x,y)\|_{\op} \le L.
    \end{align*}
    In particular, $(x,y) \mapsto \nabla V(x,y)$ is $L$-Lipschitz for all $x,y \in \R^d$. 
\end{assumption}

We note the assumption that $V$ be four-times differentiable is to ensure that our computations below are justified, in particular when using integration by parts such as in the proof of Lemma~\ref{Lem:ddtFI_MF}.

\section{Properties of the Equilibrium Distribution and Duality Gap}
\label{Sec:Equilibrium}

We define the \textit{best-response maps} $\Phi^X \colon \P(\Y) \to \P(\X)$ and $\Phi^Y \colon \P(\X) \to \P(\Y)$ by:
\begin{align*}
    \Phi^X(\rho^Y) = \nu^X
    \qquad \text{ and } \qquad
    \Phi^Y(\rho^X) = \nu^Y
\end{align*}
for all $\rho^X, \rho^Y \in \P(\R^d)$, where $\nu^X, \nu^Y \in \P(\R^d)$ are distributions with density given by:
\begin{subequations}\label{Eq:BestResponse}
\begin{align}
    \nu^X(x) &\propto \exp\left( -\reg^{-1}\E_{\rho^Y}[V(x,Y)] \right), \\
    \nu^Y(y) &\propto \exp\left( \reg^{-1} \E_{\rho^X}[V(X,y)] \right).    
\end{align}
\end{subequations}
Note that if we assume $V$ is strongly convex-strongly concave, then the maps above are well-defined, since the right-hand sides are integrable over $\R^d$.
More generally, we need some growth condition on $V$ to ensure the above are well-defined.
The term best-response is justified by the following property.
We provide the proof of Lemma~\ref{Lem:BestResponse} in Section~\ref{Sec:BestResponseProof}.

\begin{lemma}\label{Lem:BestResponse}
    For all $\rho^X, \rho^Y \in \P(\R^d)$, with $\nu^X = \Phi^X(\rho^Y)$ and $\nu^Y = \Phi^Y(\rho^X)$, we have:
    \begin{align*}
        \nu^X &= \arg\min_{\tilde \rho^X \in \P(\R^d)} \, \F_\reg(\tilde \rho^X, \rho^Y),  \\
        \nu^Y &= \arg\max_{\tilde \rho^Y \in \P(\R^d)} \, \F_\reg(\rho^X, \tilde \rho^Y).
    \end{align*}
    Furthermore, the duality gap is given by:
    \begin{align*}
        \DG(\rho^X,\rho^Y) = \reg \, \KL(\rho^X \,\|\, \nu^X) + \reg \, \KL(\rho^Y \,\|\, \nu^Y).
    \end{align*}
\end{lemma}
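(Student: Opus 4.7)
The plan is to rewrite the payoff $\F_\epsilon(\tilde\rho^X,\rho^Y)$ (with $\rho^Y$ fixed) and $\F_\epsilon(\rho^X,\tilde\rho^Y)$ (with $\rho^X$ fixed) as an affine-in-KL expression relative to the Gibbs densities $\nu^X$ and $\nu^Y$ defined in~\eqref{Eq:BestResponse}. Once the objective has the shape ``$\pm\epsilon\,\KL(\cdot \,\|\, \nu^\bullet) + (\text{constant independent of the free argument})$'', the first two claims will follow immediately from nonnegativity of KL divergence (with equality iff the arguments coincide), and the duality gap identity will drop out by subtraction.

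Concretely, for the first claim I would let $Z_X := \int_{\R^d} \exp(-\epsilon^{-1}\E_{\rho^Y}[V(x,Y)])\,dx$, so that $\log \nu^X(x) = -\epsilon^{-1}\E_{\rho^Y}[V(x,Y)] - \log Z_X$. Substituting $\E_{\rho^Y}[V(x,Y)] = -\epsilon \log \nu^X(x) - \epsilon \log Z_X$ into the definition~\eqref{Eq:PayoffDef} and using $-\epsilon H(\tilde\rho^X) = \epsilon\,\E_{\tilde\rho^X}[\log \tilde\rho^X]$, I obtain
\begin{equation*}
    \F_\epsilon(\tilde\rho^X, \rho^Y) = \epsilon\,\KL(\tilde\rho^X \,\|\, \nu^X) - \epsilon \log Z_X + \epsilon H(\rho^Y).
\end{equation*}
Since the last two terms do not depend on $\tilde\rho^X$, the unique minimizer over $\tilde\rho^X \in \P(\R^d)$ is $\tilde\rho^X = \nu^X$, with minimum value $-\epsilon \log Z_X + \epsilon H(\rho^Y)$. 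The symmetric computation, using $\log \nu^Y(y) = \epsilon^{-1}\E_{\rho^X}[V(X,y)] - \log Z_Y$ (where $Z_Y$ is the normalizing constant of $\nu^Y$), gives
\begin{equation*}
    \F_\epsilon(\rho^X, \tilde\rho^Y) = -\epsilon\,\KL(\tilde\rho^Y \,\|\, \nu^Y) + \epsilon \log Z_Y - \epsilon H(\rho^X),
\end{equation*}
so $\tilde\rho^Y = \nu^Y$ is the unique maximizer, with maximum $\epsilon \log Z_Y - \epsilon H(\rho^X)$.

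For the duality gap, I would subtract the two optimal values per the definition~\eqref{Eq:DG}:
\begin{equation*}
    \DG(\rho^X, \rho^Y) = \bigl(\epsilon \log Z_Y - \epsilon H(\rho^X)\bigr) - \bigl(-\epsilon \log Z_X + \epsilon H(\rho^Y)\bigr).
\end{equation*}
Then I would verify that this equals $\epsilon\,\KL(\rho^X\,\|\,\nu^X) + \epsilon\,\KL(\rho^Y\,\|\,\nu^Y)$ by expanding each KL directly from its definition: $\epsilon\,\KL(\rho^X\,\|\,\nu^X) = -\epsilon H(\rho^X) + \E_{\rho^X\otimes\rho^Y}[V] + \epsilon \log Z_X$ and $\epsilon\,\KL(\rho^Y\,\|\,\nu^Y) = -\epsilon H(\rho^Y) - \E_{\rho^X\otimes\rho^Y}[V] + \epsilon \log Z_Y$, where the bilinear $V$-terms cancel upon addition. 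The only mild subtlety is ensuring that all integrals are finite so the manipulations are legitimate; Assumption~\ref{As:SCSmooth} guarantees $\E_{\rho^Y}[V(x,Y)]$ is strongly convex and quadratically growing in $x$, so $Z_X$ is finite and likewise for $Z_Y$, and $\F_\epsilon$ is well-defined on $\P(\R^d)\times\P(\R^d)$. There is no substantive obstacle here; the lemma is essentially a bookkeeping exercise exposing the familiar Gibbs variational principle in both the min and max directions.
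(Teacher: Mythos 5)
Your proposal is correct and follows essentially the same route as the paper's proof: rewriting $\F_\epsilon$ as $\pm\epsilon\,\KL(\cdot\,\|\,\nu^\bullet)$ plus terms independent of the free argument (the paper's equations~\eqref{Eq:DGExp1}--\eqref{Eq:DGExp2}, with your $Z_X, Z_Y$ being the paper's $C^Y(\rho^Y), C^X(\rho^X)$), then reading off the optimizers from nonnegativity of KL and subtracting the optimal values for the duality gap. The only cosmetic difference is that you verify the final identity by expanding the two KL terms directly, whereas the paper cancels the common $\F_\epsilon(\rho^X,\rho^Y)$ term; both are the same bookkeeping.
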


Therefore, the equilibrium distribution $(\bar \nu^X, \bar \nu^Y)$, which minimizes the duality gap, is a fixed point of the best-response maps:
\begin{align*}
    \Phi^X(\bar \nu^Y) = \bar \nu^X
    \qquad \text{ and } \qquad
    \Phi^Y(\bar \nu^X) = \bar \nu^Y.
\end{align*}
We have the following property, which guarantees that if we can minimize the duality gap, then we also have convergence of the iterates to the equilibrium distribution, both in KL divergence and in Wasserstein distance.
The uniqueness of the equilibrium distribution follows from our convergence guarantee of the mean-field dynamics presented in Theorem~\ref{Thm:ConvMF}.
The bound~\eqref{Eq:DGBound} below is the same as stated in~\cite[Eq.~(4)]{wang24open} and~\cite[Lemma~3.5]{kim2024symmetric};
the first inequality in~\eqref{Eq:DGBound} follows from Talagrand's inequality, and the second inequality follows from the optimality property of the duality gap.
We provide the proof of Lemma~\ref{Lem:Equilibrium} in Section~\ref{Sec:EquilibriumProof}.

\begin{lemma}\label{Lem:Equilibrium}
    Assume Assumption~\ref{As:SCSmooth}.
    Then there exists a unique equilibrium distribution $(\bar \nu^X, \bar \nu^Y) \in \P(\R^d) \times \P(\R^d)$ for the game~\eqref{Eq:Game}, and the distributions $\bar \nu^X$ and $\bar \nu^Y$ are $(\alpha/\reg)$-strongly log-concave.
    Furthermore, for all $(\rho^X, \rho^Y) \in \P(\R^d) \times \P(\R^d)$:
    \begin{align}\label{Eq:DGBound}
        \frac{\alpha}{2} \left( W_2(\rho^X, \bar \nu^X)^2 + W_2(\rho^Y, \bar \nu^Y)^2 \right)
        \le \reg \left( \KL(\rho^X \,\|\, \bar \nu^X) + \KL(\rho^Y \,\|\, \bar \nu^Y) \right)
        \le \DG(\rho^X,\rho^Y).
    \end{align}
\end{lemma}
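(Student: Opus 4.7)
The plan is to first compute the duality-gap identity (the second inequality of~\eqref{Eq:DGBound}) using the Gibbs form of the equilibrium. This identity immediately implies uniqueness, and combined with the $(\alpha/\epsilon)$-Talagrand inequality for the equilibrium marginals yields the Wasserstein bound (the first inequality). Strong log-concavity of the marginals is established in passing from the Gibbs form. Existence is the one technical step requiring a separate fixed-point argument, which I would handle last.

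For strong log-concavity and the duality-gap identity, the key observation is that any equilibrium $(\bar \nu^X, \bar \nu^Y)$ must be a fixed point of the best-response maps, so $\bar \nu^X(x) \propto \exp\bigl(-\epsilon^{-1} \E_{\bar \nu^Y}[V(x, Y)]\bigr)$. Differentiating twice in $x$ and invoking Assumption~\ref{As:SCSmooth} gives $-\nabla^2 \log \bar \nu^X(x) = \epsilon^{-1} \E_{\bar \nu^Y}[\nabla^2_{xx} V(x, Y)] \succeq (\alpha/\epsilon) I$, so $\bar \nu^X$ is $(\alpha/\epsilon)$-SLC; the symmetric argument for $\bar \nu^Y$ uses $\alpha$-strong concavity of $V$ in $y$. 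For the duality gap, the Gibbs form gives the clean relation $\epsilon \log \bar \nu^X(x) + \E_{\bar \nu^Y}[V(x, Y)] = \text{const}$, which is what makes the linear-in-$\rho^X$ terms cancel below. Expanding
\begin{align*}
    \F_\epsilon(\rho^X, \bar \nu^Y) - \F_\epsilon(\bar \nu^X, \bar \nu^Y)
    = \int (\rho^X - \bar \nu^X)(x) \, \E_{\bar \nu^Y}[V(x, Y)] \, dx - \epsilon \bigl(H(\rho^X) - H(\bar \nu^X)\bigr),
\end{align*}
substituting the Gibbs relation, and using the Bregman identity $-H(\rho^X) + H(\bar \nu^X) - \langle \log \bar \nu^X, \rho^X - \bar \nu^X \rangle = \KL(\rho^X \| \bar \nu^X)$ collapses the right-hand side to $\epsilon \, \KL(\rho^X \| \bar \nu^X)$. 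A symmetric computation gives $\F_\epsilon(\bar \nu^X, \bar \nu^Y) - \F_\epsilon(\bar \nu^X, \rho^Y) = \epsilon \, \KL(\rho^Y \| \bar \nu^Y)$. Summing and invoking $\DG(\rho^X, \rho^Y) \ge \F_\epsilon(\rho^X, \bar \nu^Y) - \F_\epsilon(\bar \nu^X, \rho^Y)$ from~\eqref{Eq:DG} yields the second inequality.

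The first inequality then follows by applying the $(\alpha/\epsilon)$-Talagrand inequality to each marginal and summing. Uniqueness is immediate: for two equilibria $(\bar \nu^X_i, \bar \nu^Y_i)$, $i = 1, 2$, substituting $(\rho^X, \rho^Y) = (\bar \nu^X_2, \bar \nu^Y_2)$ into the chain of inequalities around equilibrium~1 gives $\frac{\alpha}{2} \bigl(W_2(\bar \nu^X_2, \bar \nu^X_1)^2 + W_2(\bar \nu^Y_2, \bar \nu^Y_1)^2\bigr) \le \DG(\bar \nu^X_2, \bar \nu^Y_2) = 0$, forcing agreement.

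For existence I would apply a fixed-point theorem to the composed best-response map $\Phi := \Phi^Y \circ \Phi^X$. The image of $\Phi^X$ is always $(\alpha/\epsilon)$-SLC, hence has uniformly bounded variance around its mean; its mode $m^\ast$ satisfies $\E_{\rho^Y}[\nabla_x V(m^\ast, Y)] = 0$, and $\alpha$-strong convexity with $L$-Lipschitzness of $\nabla V$ give $\|m^\ast\| \le (L/\alpha) \, \E_{\rho^Y}[\|Y\|] + \|\nabla V(0,0)\|/\alpha$. Combining these controls on variance and mean, $\Phi$ maps a sufficiently large closed ball of bounded second moment (convex and tight) into itself; together with $W_2$-continuity of $\Phi$, Schauder's theorem produces a fixed point, which by Lemma~\ref{Lem:BestResponse} satisfies the equilibrium conditions~\eqref{Eq:Equilibrium}. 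The main obstacle I expect is the bookkeeping around this fixed-point step---verifying $W_2$-continuity of the Gibbs map and sizing the invariant ball correctly---since the set of $(\alpha/\epsilon)$-SLC distributions is not itself convex; an alternative is to directly minimize the strictly convex functional $\rho^X \mapsto -\epsilon H(\rho^X) + \epsilon \log \int \exp\bigl(\epsilon^{-1} \E_{\rho^X}[V(X, y)]\bigr) \, dy$ obtained by inner-maximizing $\F_\epsilon$ over $\rho^Y$, invoking coercivity from the entropy term together with strong convexity of $V$.
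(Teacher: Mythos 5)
Your treatment of everything except existence is correct and essentially the paper's own argument. The strong log-concavity from the Gibbs fixed-point form, the identity $\F_\epsilon(\rho^X,\bar\nu^Y)-\F_\epsilon(\bar\nu^X,\rho^Y)=\epsilon\,\KL(\rho^X\,\|\,\bar\nu^X)+\epsilon\,\KL(\rho^Y\,\|\,\bar\nu^Y)$ combined with $\DG(\rho^X,\rho^Y)\ge \F_\epsilon(\rho^X,\bar\nu^Y)-\F_\epsilon(\bar\nu^X,\rho^Y)$, the Talagrand step, and the uniqueness-by-plugging-in-the-other-equilibrium argument all match the paper; your Bregman-identity bookkeeping is just a reorganization of the paper's computation via the normalizing constants $C^X,C^Y$ and the entropy identity at the fixed point, and it is correct.

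Where you genuinely diverge is existence, and this is also where your proposal has a real gap. The paper does not use a fixed-point theorem at all: it runs the mean-field min-max Langevin dynamics, uses the exponential decay of $\FI(\bar\rho_t^Z\,\|\,\bar\nu_t^Z)$ from Theorem~\ref{Thm:ConvMF} (whose proof is independent of Lemma~\ref{Lem:Equilibrium}) to show the curve $(\bar\rho_t^Z)_{t\ge 0}$ is Cauchy in $W_2$, invokes completeness of $\P_2(\R^{2d})$ to extract a limit, and observes that vanishing relative Fisher information forces the limit to be a fixed point of the best-response maps, hence an equilibrium. Your Schauder sketch, by contrast, does not go through as stated: a closed ball of bounded second moment is convex and tight, but it is \emph{not} compact in the $W_2$ topology (uniformly bounded second moments give only weak relative compactness; $W_2$-compactness needs uniform integrability of the second moments), so Schauder in the $W_2$ metric does not apply directly. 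You would either have to work in the weak topology, where continuity of the Gibbs map is delicate because $V$ grows quadratically so $\rho^Y\mapsto\E_{\rho^Y}[V(x,Y)]$ is not weakly continuous without extra uniform integrability, or exploit the fact that the image of $\Phi^X$ consists of $(\alpha/\epsilon)$-SLC measures with controlled means, which do have uniformly bounded higher moments and hence a $W_2$-relatively-compact image; but then the standard Schauder statement requires embedding $\P_2$ into a locally convex space and further care, precisely the bookkeeping you flag but do not carry out. Your alternative route, minimizing the reduced functional $\rho^X\mapsto \epsilon\log C^X(\rho^X)-\epsilon H(\rho^X)$, is convex and more promising, but it too needs lower semicontinuity and coercivity arguments that are not supplied, plus a passage from the minimizer's first-order condition back to the saddle-point property. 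In short: the static existence route is viable in principle but incomplete as written, whereas the paper's dynamic route gets existence essentially for free from a convergence theorem it must prove anyway.
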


\section{Analysis of the Mean-Field Min-Max Langevin Dynamics}
\label{Sec:MF}

Given a pair of random variables $\bar X_t \sim \bar \rho_t^X$ and $\bar Y_t \sim \bar \rho_t^Y$ in $\R^d$, we can define the joint random variable $\bar Z_t = (\bar X_t, \bar Y_t) \in \R^{2d}$ which has an independent joint distribution $\bar Z_t \sim \bar \rho_t^Z := \bar \rho_t^X \otimes \bar \rho_t^Y$.
Define the independent product of the best-response distributions 
$$\bar \nu_t^Z := \bar \nu_t^X \otimes \bar \nu_t^Y$$ 
where $\bar \nu_t^X = \Phi^X(\bar \rho_t^Y)$ and $\bar \nu_t^Y = \Phi^Y(\bar \rho_t^X)$.
We observe that if $\bar X_t \sim \bar \rho_t^X$ and $\bar Y_t \sim \bar \rho_t^Y$ evolve following the mean-field min-max Langevin dynamics~\eqref{Eq:MFSystem},
then the joint variable $\bar Z_t \sim \bar \rho_t^Z$ evolves following the Langevin dynamics targeting its best-response distribution $\bar \nu_t^Z$:
\begin{align}\label{Eq:MFJoint}
    d\bar Z_t = \reg \nabla \log \bar \nu_t^Z(\bar Z_t) \, dt + \sqrt{2\reg} \, dW_t^Z
\end{align}
where $W_t^Z = (W_t^X, W_t^Y)$ is the standard Brownian motion in $\R^{2d}$.
This follows from the definition of the best-response distributions~\eqref{Eq:BestResponse} and computing their score functions.

We have the following convergence results for the mean-field min-max Langevin dynamics.
We show that the relative Fisher information between the iterates and their best-response distributions converges exponentially fast along the mean-field Langevin dynamics (Theorem~\ref{Thm:ConvMF} part (1)).
This is analogous to the result in the deterministic setting where the squared norm of the velocity vector field converges exponentially fast along the min-max gradient flow (see Theorem~\ref{Thm:DetMinMaxGF} in Section~\ref{Sec:ReviewDeterministic} for a review).
For the mean-field min-max Langevin dynamics, which is the min-max Wasserstein gradient flow, the squared Wasserstein norm of the velocity of $\bar \rho_t^Z$ is precisely the relative Fisher information between $\bar \rho_t^Z$ and its best-response distribution $\bar \nu_t^Z$, which motivates our result below.
The convergence in relative Fisher information also implies the convergence in the duality gap or KL divergence, by the log-Sobolev inequality (Theorem~\ref{Thm:ConvMF} part (2)).

Below, we write $\DG(\bar \rho_t^Z) \equiv \DG(\bar \rho_t^X, \bar \rho_t^Y)$.
We provide the proof of Theorem~\ref{Thm:ConvMF} in Section~\ref{Sec:ConvMFProof}.

\begin{theorem}\label{Thm:ConvMF}
    Assume Assumption~\ref{As:SCSmooth}.
    Suppose $\bar Z_t \sim \bar \rho_t^Z$ evolves following the mean-field dynamics~\eqref{Eq:MFJoint} in the joint space $\R^{2d}$ from $\bar Z_0 \sim \bar \rho_0^Z \in \P(\R^{2d})$, and let $\bar \nu_t^Z$ be the best-response distribution as defined above. 
    Then for all $t \ge 0$, we have the following properties:
    \begin{enumerate}
        \item Convergence in relative Fisher information:~
        $$\FI(\bar \rho_t^Z \,\|\, \bar \nu_t^Z) \le e^{-2\alpha t} \, \FI(\bar \rho_0^Z \,\|\, \bar \nu_0^Z).$$
        \item Convergence in duality gap:~
        $$\DG(\bar \rho_t^Z) = \reg \, \KL(\bar \rho_t^Z \,\|\, \bar \nu_t^Z) \le e^{-2\alpha t} \, \frac{\reg^2}{2\alpha} \, \FI(\bar \rho_0^Z \,\|\, \bar \nu_0^Z).$$
    \end{enumerate}
\end{theorem}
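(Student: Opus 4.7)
The plan is to prove the exponential decay of the relative Fisher information in part~(1) first, and then deduce the duality gap bound in part~(2) via the log-Sobolev inequality. The starting observation is that $\bar \nu_t^Z = \bar \nu_t^X \otimes \bar \nu_t^Y$ is $(\alpha/\epsilon)$-strongly log-concave at every time $t$: by~\eqref{Eq:BestResponse} and Assumption~\ref{As:SCSmooth}, we have $-\nabla_x^2 \log \bar \nu_t^X(x) = \epsilon^{-1} \E_{\bar \rho_t^Y}[\nabla_{xx}^2 V(x,Y)] \succeq (\alpha/\epsilon)\, I$, and symmetrically for $\bar \nu_t^Y$; hence $\bar \nu_t^Z$ also satisfies the $(\alpha/\epsilon)$-LSI at every time.

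The core step is to differentiate $\FI(\bar \rho_t^Z \,\|\, \bar \nu_t^Z) = \FI(\bar \rho_t^X \,\|\, \bar \nu_t^X) + \FI(\bar \rho_t^Y \,\|\, \bar \nu_t^Y)$ along the joint Langevin dynamics~\eqref{Eq:MFJoint}. For each factor, the derivative splits into (i) the standard Bakry-\'Emery-type term that one would obtain if the target $\bar \nu_t^X$ (respectively $\bar \nu_t^Y$) were frozen, and (ii) a moving-target correction from $\partial_t \log \bar \nu_t^X$ (respectively $\partial_t \log \bar \nu_t^Y$). Writing $\phi_t^X = \nabla_x \log(\bar \rho_t^X / \bar \nu_t^X)$ and analogously $\phi_t^Y$, the frozen-target contribution to the $X$-factor equals
\begin{align*}
-2\epsilon \int \bar \rho_t^X \, \|\nabla \phi_t^X\|_{\HS}^2 \, dx \; - \; 2\epsilon \int \bar \rho_t^X \, (\phi_t^X)^\top \, \nabla_x^2(-\log \bar \nu_t^X) \, \phi_t^X \, dx,
\end{align*}
which is bounded above by $-2\alpha \, \FI(\bar \rho_t^X \,\|\, \bar \nu_t^X)$ by the SLC estimate from the first step, and the $Y$-factor is analogous.

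The main obstacle, and the point where the zero-sum structure of the game enters, is to control the moving-target corrections in (ii). Using the continuity equations $\partial_t \bar \rho_t^{X} = \epsilon \nabla_x \cdot (\bar \rho_t^X \phi_t^X)$ and $\partial_t \bar \rho_t^Y = \epsilon \nabla_y \cdot (\bar \rho_t^Y \phi_t^Y)$ together with integration by parts, one computes
\begin{align*}
\partial_t \nabla_x \log \bar \nu_t^X(x) &= +\, \E_{\bar \rho_t^Y}\!\left[\nabla_{xy}^2 V(x,Y) \, \phi_t^Y(Y) \right], \\
\partial_t \nabla_y \log \bar \nu_t^Y(y) &= -\, \E_{\bar \rho_t^X}\!\left[\nabla_{yx}^2 V(X,y) \, \phi_t^X(X) \right],
\end{align*}
with the opposite signs reflecting the opposite signs of $V$ in the exponents of $\bar \nu_t^X$ and $\bar \nu_t^Y$ in~\eqref{Eq:BestResponse}. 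Substituting these into the derivatives of $\FI(\bar \rho_t^X \,\|\, \bar \nu_t^X)$ and $\FI(\bar \rho_t^Y \,\|\, \bar \nu_t^Y)$ gives correction terms $-2\, \E_{\bar \rho_t^Z}[(\phi_t^X(X))^\top \, \nabla_{xy}^2 V(X,Y) \, \phi_t^Y(Y)]$ and $+2\, \E_{\bar \rho_t^Z}[(\phi_t^Y(Y))^\top \, \nabla_{yx}^2 V(X,Y) \, \phi_t^X(X)]$, which cancel because $\nabla_{yx}^2 V = (\nabla_{xy}^2 V)^\top$. Therefore $\frac{d}{dt}\FI(\bar \rho_t^Z \,\|\, \bar \nu_t^Z) \le -2\alpha \, \FI(\bar \rho_t^Z \,\|\, \bar \nu_t^Z)$, Gr\"onwall yields part~(1), and the $(\alpha/\epsilon)$-LSI for $\bar \nu_t^Z$ gives $\KL(\bar \rho_t^Z \,\|\, \bar \nu_t^Z) \le \tfrac{\epsilon}{2\alpha}\FI(\bar \rho_t^Z \,\|\, \bar \nu_t^Z)$, which combined with part~(1) and the identity $\DG(\bar \rho_t^Z) = \epsilon\, \KL(\bar \rho_t^Z \,\|\, \bar \nu_t^Z)$ from Lemma~\ref{Lem:BestResponse} yields part~(2).
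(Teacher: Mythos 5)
Your proof is correct, and at its core it follows the same route as the paper: differentiate the relative Fisher information of $\bar \rho_t^Z$ with respect to the moving best-response target $\bar \nu_t^Z$, observe that the cross terms generated by the time-dependence of the target cancel because of the antisymmetric (zero-sum) coupling $\nabla^2_{yx} V = (\nabla^2_{xy} V)^\top$, conclude $\frac{d}{dt}\FI \le -2\alpha\,\FI$ from $(\alpha/\epsilon)$-strong log-concavity of $\bar \nu_t^Z$, and finish with Gr\"onwall, the LSI, and the duality-gap identity of Lemma~\ref{Lem:BestResponse}. The difference is in the bookkeeping of the key derivative: the paper's Lemma~\ref{Lem:ddtFI_MF} establishes an exact identity by expanding $\FI$ into the three terms $\E[\|\nabla f_t\|^2] + \E[\|\nabla g_t\|^2] - 2\E[\Delta g_t]$ and differentiating each from scratch, whereas you exploit the product structure to split $\FI$ over the two factors and then split each time derivative into a frozen-target Bakry--\'Emery contribution (cited as the standard identity for Langevin with a fixed target) plus a moving-target correction $-2\E_{\bar\rho_t}[\langle \phi_t, \partial_t \nabla \log \bar\nu_t\rangle]$, which you compute explicitly and cancel across the two players. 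Your formulas for $\partial_t \nabla_x \log \bar\nu_t^X$ and $\partial_t \nabla_y \log \bar\nu_t^Y$ (via the continuity equations and integration by parts) and the resulting cancellation are correct, so this organization yields the needed differential inequality more modularly; what it gives up relative to the paper is the exact identity (you only get an upper bound, which suffices here), and it leans on the classical fixed-target Fisher-information dissipation identity rather than deriving it, with the same implicit regularity/integration-by-parts caveats the paper covers via the fourth-order differentiability assumption.
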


Let $\bar \nu^Z := \bar \nu^X \otimes \bar \nu^Y$ be the stationary distribution for the mean-field dynamics in the joint space $\R^{2d}$.
Note that by the bound~\eqref{Eq:DGBound} from Lemma~\ref{Lem:Equilibrium}, the convergence rate in duality gap above also implies the following convergence guarantees between the iterate $\bar \rho_t^Z$ of the mean-field dynamics and the stationary mean-field distribution $\bar \nu^Z$:
\begin{align}\label{Eq:ConvMFImplication}
    \frac{\alpha}{2} \, W_2(\bar \rho_t^Z, \bar \nu^Z)^2
    \,\le\, \reg \, \KL(\bar \rho_t^Z \,\|\, \bar \nu^Z)
    \,\le\, \DG(\bar \rho_t^Z) 
    \,\le\, e^{-2\alpha t} \, \frac{\reg^2}{2\alpha} \, \FI(\bar \rho_0^Z \,\|\, \bar \nu_0^Z).
\end{align}
The bound above is in terms of the initial relative Fisher information of $\bar \rho_0^Z$ to its best-response distribution.
When we choose a Gaussian initial distribution $\bar \rho_0^Z = \N(0, \frac{\reg^2}{L^2} I)$, we can bound $\FI(\bar \rho_0^Z \,\|\, \bar \nu_0^Z) = O(\frac{dL^2}{\reg^2})$; see Lemma \ref{Lem:InitialGaussianFI} in Section~\ref{Sec:BoundFIBestResponse}.

We note that under Assumption~\ref{As:SCSmooth}, one can show that the mean-field dynamics~\eqref{Eq:MFJoint} is in fact an exponential contraction in the $W_2$ distance, which implies $W_2(\bar \rho_t^Z, \bar \nu^Z)^2 \le e^{-2\alpha t} \, W_2(\bar \rho_0^Z, \bar \nu^Z)^2$, see~\cite[Proposition~7.3]{conger2024coupled}.
We note that~\cite[Lemma~7.2]{conger2024coupled} also shows the exponential convergence of the dissipation functional, which is the same as the relative Fisher information above in our setting.
We provide a self-contained proof of Theorem~\ref{Thm:ConvMF} in Section~\ref{Sec:ConvMFProof}; in particular, we derive an identity of the time derivative of the relative Fisher information along the mean-field Langevin dynamics, see Lemma~\ref{Lem:ddtFI_MF} in Section~\ref{Sec:ddtFI_MF}.

\section{Analysis of the Finite-Particle Min-Max Langevin Dynamics}
\label{Sec:Particle}

Here we study a finite-particle approximation of the mean-field min-max Langevin dynamics~\eqref{Eq:MFSystem} where we replace each $\bar X_t \in \R^d$ and $\bar Y_t \in \R^d$ with $N \ge 1$ particles $X_t^1,\dots,X_t^N \in \R^d$ and $Y_t^1,\dots,Y_t^N \in \R^d$, which evolve following the finite-particle system of dynamics~\eqref{Eq:ParticleSystem} where we use the empirical mean from the particles in place of the true expectation in the drift terms of~\eqref{Eq:MFSystem}.

We can write the finite-particle dynamics~\eqref{Eq:ParticleSystem} in terms of the joint vectors $\bX_t = (X_t^1,\dots,X_t^N) \in \R^{dN}$ and $\bY_t = (Y_t^1, \dots, Y_t^N) \in \R^{dN}$ as:
\begin{subequations}\label{Eq:ParticleSystem2}
\begin{align}
    d\bX_t &= b^{\bX}(\bX_t, \bY_t) \, dt + \sqrt{2\reg} \, dW_t^{\bX} \\
    d\bY_t &= b^{\bY}(\bX_t, \bY_t) \, dt + \sqrt{2 \reg} \, dW_t^{\bY}
\end{align}
\end{subequations}
where $(W_t^{\bX})_{t \ge 0}$ and $(W_t^{\bY})_{t \ge 0}$ are independent standard Brownian motions in $\R^{dN}$.
In the above, we have defined the vector fields $b^{\bX} \colon \R^{dN} \times \R^{dN} \to \R^{dN}$ and $b^{\bY} \colon \R^{dN} \times \R^{dN} \to \R^{dN}$ by, for all $\bx = (x^1,\dots,x^N) \in \R^{dN}$ and $\by = (y^1,\dots,y^N) \in \R^{dN}$:
\begin{subequations}\label{Eq:Defbxby}
\begin{align}
    b^{\bX}(\bx,\by) &= 
    \begin{pmatrix}
        b^X(x^1, \by) \\
        \cdots \\
        b^X(x^N, \by)
    \end{pmatrix}
    := \begin{pmatrix}
        -\frac{1}{N} \sum_{j \in [N]} \nabla_x V(x^1, y^j) \\
        \cdots \\
        -\frac{1}{N} \sum_{j \in [N]} \nabla_x V(x^N, y^j)
    \end{pmatrix}, \\
    b^{\bY}(\bx,\by) &= 
    \begin{pmatrix}
        b^Y(\bx, y^1) \\
        \cdots \\
        b^Y(\bx, y^N)
    \end{pmatrix}
    := \begin{pmatrix}
        \frac{1}{N} \sum_{j \in [N]} \nabla_y V(x^j, y^1) \\
        \cdots \\
        \frac{1}{N} \sum_{j \in [N]} \nabla_y V(x^j, y^N)
    \end{pmatrix}.
\end{align}
\end{subequations}

We can further write the finite-particle dynamics~\eqref{Eq:ParticleSystem2} in terms of the joint random variable $\bZ_t = (\bX_t, \bY_t) \in \R^{2dN}$ as:
\begin{align}\label{Eq:ParticleSystem3}
    d\bZ_t = b^{\bZ}(\bZ_t) \, dt + \sqrt{2\reg} \, dW_t^{\bZ}
\end{align}
where $W_t^{\bZ} = (W_t^{\bX}, W_t^{\bY})$ is the standard Brownian motion in $\R^{2dN}$, and we have defined the vector field $b^{\bZ} \colon \R^{2dN} \to \R^{2dN}$ by, for all $\bz = (\bx,\by) \in \R^{2dN}$:
\begin{align}\label{Eq:Defbz}
    b^{\bZ}(\bz) = \begin{pmatrix}
        b^{\bX}(\bx,\by) \\
        b^{\bY}(\bx,\by)
    \end{pmatrix}.
\end{align}

\subsection{Biased Convergence of Finite-Particle Dynamics to Stationary Mean-Field Distribution}

Recall $\bar \nu^Z = \bar \nu^X \otimes \bar \nu^Y \in \P(\R^{2d})$ is the stationary distribution for the base mean-field dynamics~\eqref{Eq:MFSystem}.
We define the tensorized stationary mean-field distribution:
\begin{align}\label{Eq:TensorMF}
    \bar \nu^{\bZ} := (\bar \nu^X)^{\otimes N} \otimes (\bar \nu^Y)^{\otimes N} \in \P(\R^{2dN}).
\end{align}

We can show the biased convergence guarantee of the finite-particle dynamics~\eqref{Eq:ParticleSystem3} to the tensorized mean-field stationary distribution~\eqref{Eq:TensorMF} in Theorem~\ref{Thm:ConvergenceDynToMF}.
A key observation is that the drift term in the finite-particle dynamics~\eqref{Eq:ParticleSystem3} is dissipative (see Lemma~\ref{Lem:LipschitzMonotone_bz} in Section~\ref{Sec:PropertiesVectorFields}); this allows us to perform a synchronous coupling analysis against the stationary mean-field dynamics to show a biased convergence guarantee in $W_2$ distance, which we leverage to show a biased convergence guarantee in KL divergence via a time derivative calculation along simultaneous diffusion processes.
A careful calculation shows that we can control the bias without dependence on the number of particles $N$ (see Lemma~\ref{Lem:ComparisonArbitrary} in Section~\ref{Sec:ComparisonVectorFields}).
We provide the proof of Theorem~\ref{Thm:ConvergenceDynToMF} in Section~\ref{Sec:ConvergenceDynToMFProof}.

\begin{theorem}\label{Thm:ConvergenceDynToMF}
    Assume Assumption~\ref{As:SCSmooth}.
    Suppose $\bZ_t \sim \rho_t^{\bZ}$ evolves following the finite-particle dynamics~\eqref{Eq:ParticleSystem3} in $\R^{2dN}$ from $\bar Z_0 \sim \rho_0^{\bZ} \in \P(\R^{2dN})$.
    Then for all $t \ge 0$:
    \begin{align*}
        W_2(\rho_t^{\bZ}, \bar \nu^{\bZ})^2 &\le e^{-\frac{3}{2}\alpha t} \, W_2(\rho_0^{\bZ}, \bar \nu^{\bZ})^2 + \frac{2L^2}{\alpha^2} \, \Var_{\bar \nu^{Z}}(\bar Z) \\
        \KL(\rho_t^{\bZ} \,\|\, \bar \nu^{\bZ}) &\le e^{-\alpha t} \left( \KL\left(\rho_0^{\bZ} \,\|\, \bar \nu^{\bZ} \right) + 
        \frac{2L^2}{\alpha \reg} \, W_2(\rho_0^{\bZ}, \bar \nu^{\bZ})^2\right) + \frac{4L^4}{\alpha^3 \reg} \, \Var_{\bar \nu^{Z}}(\bar Z).
    \end{align*}
\end{theorem}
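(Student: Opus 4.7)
The plan is to introduce an auxiliary process $\bar{\bZ}_t$ on $\R^{2dN}$ driven by the \emph{same} Brownian motion as $\bZ_t$, following the \emph{tensorized} mean-field Langevin dynamics $d\bar{\bZ}_t = \bar b^{\bZ}(\bar{\bZ}_t)\,dt + \sqrt{2\epsilon}\,dW_t^{\bZ}$ with drift $\bar b^{\bZ}(\bar\bz) := \epsilon \nabla \log \bar\nu^{\bZ}(\bar\bz)$, initialized at $\bar{\bZ}_0 \sim \bar\nu^{\bZ}$ optimally coupled to $\bZ_0$ in $W_2$. Since $\bar\nu^{\bZ}$ is stationary for $\bar b^{\bZ}$, the law of $\bar{\bZ}_t$ equals $\bar\nu^{\bZ}$ for all $t \ge 0$, so $W_2(\rho_t^{\bZ},\bar\nu^{\bZ})^2 \le \E\|\bZ_t - \bar{\bZ}_t\|^2$ and the analysis can be run in terms of this coupled squared distance. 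The key structural observation is that the particle drift $b^{\bZ}$ \emph{itself} is $\alpha$-anti-monotone: expanding from~\eqref{Eq:Defbxby}--\eqref{Eq:Defbz} and re-indexing,
\begin{align*}
    \langle \bz - \bz', b^{\bZ}(\bz) - b^{\bZ}(\bz') \rangle
    = -\tfrac{1}{N} \sum_{i,j=1}^N \langle (x^i,y^j) - (x'^i,y'^j),\, F(x^i,y^j) - F(x'^i,y'^j) \rangle,
\end{align*}
where $F(x,y) := (\nabla_x V(x,y), -\nabla_y V(x,y))$ is the base game operator, which is $\alpha$-monotone under Assumption~\ref{As:SCSmooth} since the symmetric part of its Jacobian equals $\operatorname{diag}(\nabla^2_{xx} V, -\nabla^2_{yy} V) \succeq \alpha I$; summing over the $N^2$ pairs and dividing by $N$ yields $-\alpha\|\bz - \bz'\|^2$.

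For the Wasserstein claim, I would decompose $b^{\bZ}(\bZ_t) - \bar b^{\bZ}(\bar{\bZ}_t) = [b^{\bZ}(\bZ_t) - b^{\bZ}(\bar{\bZ}_t)] + [b^{\bZ}(\bar{\bZ}_t) - \bar b^{\bZ}(\bar{\bZ}_t)]$, apply the monotonicity above to the first bracket and Young's inequality $2ab \le \tfrac{\alpha}{2}a^2 + \tfrac{2}{\alpha}b^2$ to the second, obtaining
\begin{align*}
    \tfrac{d}{dt} \|\bZ_t - \bar{\bZ}_t\|^2
    \le -\tfrac{3\alpha}{2} \|\bZ_t - \bar{\bZ}_t\|^2
    + \tfrac{2}{\alpha} \|b^{\bZ}(\bar{\bZ}_t) - \bar b^{\bZ}(\bar{\bZ}_t)\|^2.
\end{align*}
The drift error is now evaluated at $\bar{\bZ}_t \sim \bar\nu^{\bZ}$, a product of i.i.d.\ blocks. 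For the $i$-th $X$-block, conditional on $\bar X^i$ the quantity $\tfrac{1}{N}\sum_j \nabla_x V(\bar X^i, \bar Y^j) - \E_{\bar\nu^Y}[\nabla_x V(\bar X^i, Y)]$ is a centered empirical mean of $N$ i.i.d.\ vectors with squared expected norm $\le \tfrac{L^2}{N}\Var_{\bar\nu^Y}(Y)$ by the Lipschitz--variance inequality applied to the $L$-Lipschitz map $y \mapsto \nabla_x V(\bar X^i, y)$. Summing over all $2N$ blocks gives $\E\|b^{\bZ}(\bar{\bZ}_t) - \bar b^{\bZ}(\bar{\bZ}_t)\|^2 \le L^2 \Var_{\bar\nu^Z}(\bar Z)$, and Gr\"onwall integration of the linear ODE delivers the first claim.

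For the KL claim, I would apply the Fokker-Planck equation for~\eqref{Eq:ParticleSystem3} and integrate by parts (a standard Langevin-type calculation analogous to the derivation behind Lemma~\ref{Lem:ddtFI_MF}) to obtain
\begin{align*}
    \tfrac{d}{dt}\KL(\rho_t^{\bZ}\,\|\,\bar\nu^{\bZ})
    = \E_{\rho_t^{\bZ}}\!\left[\langle b^{\bZ} - \bar b^{\bZ},\, \nabla \log(\rho_t^{\bZ}/\bar\nu^{\bZ}) \rangle\right] - \epsilon\, \FI(\rho_t^{\bZ}\,\|\,\bar\nu^{\bZ}).
\end{align*}
Cauchy-Schwarz and Young's inequality split the cross term into $\tfrac{\epsilon}{2}\FI + \tfrac{1}{2\epsilon}\E_{\rho_t^{\bZ}}\|b^{\bZ} - \bar b^{\bZ}\|^2$. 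Since $\bar\nu^{\bZ}$ is $(\alpha/\epsilon)$-strongly log-concave as a product of the marginals from Lemma~\ref{Lem:Equilibrium}, it satisfies the $(\alpha/\epsilon)$-LSI, which converts the remaining $-\tfrac{\epsilon}{2}\FI$ into $-\alpha\,\KL$. The drift error must now be evaluated at the \emph{non-product} iterate $\rho_t^{\bZ}$; I would invoke the $O(L)$-Lipschitz continuity of $b^{\bZ} - \bar b^{\bZ}$ (inherited directly from Assumption~\ref{As:SCSmooth}) and the synchronous coupling to transfer the product-measure variance bound as $\E_{\rho_t^{\bZ}}\|b^{\bZ} - \bar b^{\bZ}\|^2 \lesssim L^2\,\E\|\bZ_t - \bar{\bZ}_t\|^2 + L^2\,\Var_{\bar\nu^Z}(\bar Z)$, substitute the Wasserstein bound from the previous step, and integrate the resulting linear ODE in $\KL$ with integrating factor $e^{\alpha t}$.

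The main obstacle is the last transfer: the Fokker-Planck analysis fixes the evaluation point of the drift error at $\bZ_t \sim \rho_t^{\bZ}$, which is not a product measure, so the clean ``variance of an empirical mean'' bound does not apply directly. The Lipschitz-continuity-plus-$W_2$-contraction bootstrap circumvents this at the cost of absorbing the Wasserstein estimate from the first part and producing the extra $W_2(\rho_0^{\bZ}, \bar\nu^{\bZ})^2$ contribution in the transient term, while still leaving the asymptotic bias term $\Var_{\bar\nu^Z}(\bar Z)$ independent of $N$. A second conceptual point is recognizing that the $\alpha$-monotonicity is a property of the particle drift $b^{\bZ}$ itself through the pairwise game structure, not only of the mean-field drift $\bar b^{\bZ}$; without it one would only obtain a contraction rate scaling like $O(L^2/\alpha)$ from Lipschitz comparison, which is not a contraction when $\alpha$ is small relative to $L$.
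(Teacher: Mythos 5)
Your proposal follows essentially the same route as the paper: for the $W_2$ bound, a synchronous coupling of the particle dynamics with the stationary tensorized mean-field process started at the optimal coupling, strong monotonicity of $-b^{\bZ}$ on the first bracket, Young's inequality plus the stationary drift-comparison estimate $\E_{\bar \nu^{\bZ}}\|b^{\bZ}-\bar b^{\bZ}\|^2 \le L^2 \Var_{\bar\nu^Z}(\bar Z)$ on the second, and Gr\"onwall; for the KL bound, the two-Fokker--Planck KL derivative identity, Young's inequality, the $(\alpha/\epsilon)$-LSI of $\bar\nu^{\bZ}$, substitution of the $W_2$ estimate, and an $e^{\alpha t}$ integrating factor. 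Your re-indexing argument reducing the monotonicity of $-b^{\bZ}$ to that of the base operator $F(x,y)=(\nabla_x V,-\nabla_y V)$ is correct and is a cleaner derivation of the paper's Lemma~\ref{Lem:LipschitzMonotone_bz}.

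The one substantive deviation is the step you yourself flag: bounding $\E_{\rho_t^{\bZ}}\|b^{\bZ}-\bar b^{\bZ}\|^2$ at the non-product law. You transfer the stationary bound via the (at most $3L$)-Lipschitz continuity of $b^{\bZ}-\bar b^{\bZ}$, which gives a bound of the form $c_1 L^2 W_2(\rho_t^{\bZ},\bar\nu^{\bZ})^2 + c_2 L^2 \Var_{\bar\nu^Z}(\bar Z)$ with $c_1 \approx 18$, whereas the paper proves a sharper comparison at arbitrary distributions (Lemma~\ref{Lem:ComparisonArbitrary}, via a permutation-based family of couplings that preserves the empirical-mean variance structure) with $c_1 = 2$, $c_2 = 4$. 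Your argument is correct and yields the theorem with the same exponential rates and the same $N$-independent bias structure, but with constants roughly an order of magnitude larger in the transient $W_2$ term of the KL bound; to recover the specific constants $\tfrac{2L^2}{\alpha\epsilon}$ and $\tfrac{6L^4}{\alpha^3\epsilon}$ stated in the theorem you would need the paper's sharper comparison lemma rather than the crude Lipschitz transfer.
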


By taking $t \to \infty$, we obtain the following estimates on the bias of the stationary distribution $\rho_\infty^{\bZ}$ of the finite-particle dynamics~\eqref{Eq:ParticleSystem3} from the tensorized stationary mean-field distribution $\bar \nu^{\bZ}$:
\begin{align*}
    W_2(\rho_\infty^{\bZ}, \bar \nu^{\bZ})^2 &\le \frac{2L^2}{\alpha^2} \, \Var_{\bar \nu^Z}(\bar Z), \\
    \KL(\rho_\infty^{\bZ} \,\|\, \bar \nu^{\bZ}) &\le \frac{4L^4}{\alpha^3 \reg} \, \Var_{\bar \nu^Z}(\bar Z).
\end{align*}
We can bound the variance by $\Var_{\bar \nu^Z}(\bar Z) \le (2\reg d)/\alpha$, see Lemma~\ref{Lem:BoundVar}.  Therefore, note that the bias terms above do not scale with the number of particles $N$.
When we consider the average particle below, this implies a bias of order $O(1/N)$, see Corollary~\ref{Cor:AvgParticleSystem}.

We can also show the unbiased exponential convergence guarantees of the iterate $\rho_t^{\bZ}$ of the finite-particle dynamics~\eqref{Eq:ParticleSystem3} to its stationary distribution $\rho_\infty^{\bZ}$, see Theorem~\ref{Thm:DynamicsToBiasedLimit} in Section~\ref{Sec:DynamicsToBiasedLimit}.

\subsection{Biased Convergence of the Average Particle of the Finite-Particle Dynamics}

Suppose $\bZ_t = (\bX_t,\bY_t) = (X_t^1,\dots,X_t^N,Y_t^1,\dots,Y_t^N) \sim \rho_t^{\bZ}$ evolves following the finite-particle dynamics~\eqref{Eq:ParticleSystem3} in $\R^{2dN}$.
We define the \textit{average particle} to be the random variable $Z_t^I = (X_t^I,Y_t^I) \in \R^{2d}$, where $I \sim \Unif([N])$ is a uniformly chosen random index.
Note the distribution of the average particle $Z_t^I \sim \rho_t^{Z,\avg}$ (including the randomization over indices) is:
$$\rho_t^{Z,\avg} = \frac{1}{N} \sum_{i \in [N]} \rho_t^{Z,i}$$
where $\rho_t^{Z,i}$ is the marginal distribution of $Z_t^i = (X_t^i,Y_t^i) \in \R^{2d}$ from the joint vector $\bZ_t \sim \rho_t^{\bZ}$.

We have the following biased convergence guarantee of the average particle along the finite-particle dynamics~\eqref{Eq:ParticleSystem3}.
We provide the proof of Corollary~\ref{Cor:AvgParticleSystem} in Section~\ref{Sec:AvgParticleSystemProof}.

\begin{corollary}\label{Cor:AvgParticleSystem}
    Assume Assumption~\ref{As:SCSmooth}.
    Suppose $\bZ_t \sim \rho_t^{\bZ}$ evolves following the finite-particle dynamics~\eqref{Eq:ParticleSystem3} in $\R^{2dN}$, and let $Z_t^I \sim \rho_t^{Z,\avg}$ be the average particle as defined above.
    For all $t \ge 0$:
    \begin{align*}
        \KL(\rho_t^{Z,\avg} \,\|\, \bar \nu^{Z}) \le \frac{1}{N} e^{-\alpha t} \left( \KL\left(\rho_0^{\bZ} \,\|\, \bar \nu^{\bZ} \right) + 
        \frac{2L^2}{\alpha \reg} \, W_2(\rho_0^{\bZ}, \bar \nu^{\bZ})^2\right) + \frac{4L^4}{\alpha^3 \reg N} \, \Var_{\bar \nu^{Z}}(\bar Z).
    \end{align*}
\end{corollary}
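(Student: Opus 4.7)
\textbf{Proof plan for Corollary~\ref{Cor:AvgParticleSystem}.}
The corollary follows from Theorem~\ref{Thm:ConvergenceDynToMF} by passing from the joint KL divergence in the high-dimensional space $\R^{2dN}$ to a KL divergence in $\R^{2d}$ for the average particle, at the cost of a factor $1/N$. The plan is to first observe that the tensorized reference measure $\bar\nu^{\bZ} = (\bar\nu^X)^{\otimes N} \otimes (\bar\nu^Y)^{\otimes N}$ can be rewritten as $(\bar\nu^Z)^{\otimes N}$, since $\bar\nu^Z = \bar\nu^X \otimes \bar\nu^Y$ and the coordinates can be reordered. This identification is what enables the tensorization step below.

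The main inequality I would then establish is
\begin{align*}
    \KL(\rho_t^{Z,\avg} \,\|\, \bar\nu^Z) \le \frac{1}{N}\,\KL(\rho_t^{\bZ} \,\|\, \bar\nu^{\bZ}),
\end{align*}
via two standard facts. First, convexity of $\KL$ in its first argument (Jensen's inequality applied to $\rho_t^{Z,\avg} = \frac{1}{N}\sum_{i\in[N]} \rho_t^{Z,i}$) yields
\begin{align*}
    \KL(\rho_t^{Z,\avg} \,\|\, \bar\nu^Z) \le \frac{1}{N}\sum_{i\in[N]} \KL(\rho_t^{Z,i}\,\|\,\bar\nu^Z).
\end{align*}
Second, subadditivity of KL with respect to a product reference measure gives
\begin{align*}
    \sum_{i\in[N]}\KL(\rho_t^{Z,i}\,\|\,\bar\nu^Z) \le \KL\bigl(\rho_t^{\bZ}\,\|\,(\bar\nu^Z)^{\otimes N}\bigr) = \KL(\rho_t^{\bZ}\,\|\,\bar\nu^{\bZ}),
\end{align*}
which is the familiar inequality obtained from subadditivity of entropy: expanding $\KL(\rho_t^{\bZ}\,\|\,(\bar\nu^Z)^{\otimes N}) = -H(\rho_t^{\bZ}) - \sum_i \E_{\rho_t^{Z,i}}[\log \bar\nu^Z]$ and using $H(\rho_t^{\bZ}) \le \sum_i H(\rho_t^{Z,i})$.

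With this $1/N$ reduction in hand, the final step is simply to plug in the bound from Theorem~\ref{Thm:ConvergenceDynToMF} for $\KL(\rho_t^{\bZ}\,\|\,\bar\nu^{\bZ})$, which produces exactly the stated right-hand side. There is no real obstacle here: the content is entirely in Theorem~\ref{Thm:ConvergenceDynToMF}, and the corollary is a short, clean consequence. The only mild subtlety worth being explicit about is the reordering identification $\bar\nu^{\bZ}=(\bar\nu^Z)^{\otimes N}$, which is what allows the tensorization inequality to yield a bound in terms of $\KL(\cdot\,\|\,\bar\nu^Z)$ on the marginals $Z^i=(X^i,Y^i)\in\R^{2d}$ rather than separate bounds for the $X$-marginals and $Y$-marginals.
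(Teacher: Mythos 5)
Your proposal is correct and follows essentially the same route as the paper: identify $\bar \nu^{\bZ} = (\bar \nu^Z)^{\otimes N}$ after reordering coordinates, use convexity of $\KL$ in its first argument to pass to the mixture $\rho_t^{Z,\avg}$, use subadditivity of $\KL$ against the product reference measure (which is exactly the paper's step of decomposing through the product of marginals $\hat\rho_t^{\bZ}$ and dropping the nonnegative mutual information term, equivalently your entropy subadditivity $H(\rho_t^{\bZ}) \le \sum_i H(\rho_t^{Z,i})$), and then invoke Theorem~\ref{Thm:ConvergenceDynToMF}. No gaps.
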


By taking $t \to \infty$, we see that the limiting distribution $\rho_\infty^{Z,\avg}$ of the average particle satisfies:
\begin{align*}
    \KL(\rho_\infty^{Z,\avg} \,\|\, \bar \nu^{Z}) \le \frac{4L^4}{\alpha^3 \reg N} \, \Var_{\bar \nu^{Z}}(\bar Z).
\end{align*}
This shows that as $N \to \infty$ and $t \to \infty$, the average particle of the finite-particle dynamics~\eqref{Eq:ParticleSystem3} indeed converges to the stationary mean-field distribution $\bar \nu^Z$, which is the equilibrium distribution for the game~\eqref{Eq:Game} that we wish to compute.
However, this is still in the idealized continuous-time setting.
To obtain a meaningful practical guarantee, we study the discrete-time algorithm in Section~\ref{Sec:Algorithm}.

\section{Analysis of the Discrete-Time Finite-Particle Min-Max Langevin Algorithm}
\label{Sec:Algorithm}

We study a time discretization of the finite-particle dynamics~\eqref{Eq:ParticleSystem} into the discrete-time algorithm~\eqref{Eq:ParticleAlgorithm}.
We can write the algorithm~\eqref{Eq:ParticleAlgorithm} in terms of the joint vectors $\bx_k = (x_k^1,\dots,x_k^N) \in \R^{dN}$ and $\by_k = (y_k^1,\dots,y_k^N) \in \R^{dN}$ which follow the update:
\begin{subequations}\label{Eq:ParticleAlgorithm2}
    \begin{align}
        \bx_{k+1} &= \bx_k + \eta b^{\bX}(\bx_k,\by_k) + \sqrt{2\reg\eta} \, \bm{\zeta}_k^{\bx} \\
        \by_{k+1} &= \by_k + \eta b^{\bY}(\bx_k,\by_k) + \sqrt{2\reg\eta} \, \bm{\zeta}_k^{\by}
    \end{align}
\end{subequations}
where $\eta > 0$ is step size, $\bm{\zeta}_k^{\bx}, \bm{\zeta}_k^{\by} \sim \N(0,I)$ are independent standard Gaussian random variables in $\R^{dN}$, and where we have used the same vector fields $b^{\bX}$ and $b^{\bY}$ as defined in~\eqref{Eq:Defbxby}.

We can further write the algorithm~\eqref{Eq:ParticleAlgorithm2} in terms of the joint random variable $\bz_k = (\bx_k,\by_k) \in \R^{2dN}$ which follows the update:
\begin{align}\label{Eq:ParticleAlgorithm3}
    \bz_{k+1} &= \bz_k + \eta b^{\bZ}(\bz_k) + \sqrt{2\reg\eta} \, \bm{\zeta}_k^{\bz}
\end{align}
where $\eta > 0$ is step size, $\bm{\zeta}_k^{\bz} \sim \N(0,I)$ is an independent standard Gaussian random variable in $\R^{2dN}$, and where we have used the same vector field $b^{\bZ}$ as defined in~\eqref{Eq:Defbz}.

\subsection{Biased Convergence of Finite-Particle Algorithm to Stationary Mean-Field Distribution}

We can prove the following biased convergence guarantees of the finite-particle algorithm~\eqref{Eq:ParticleAlgorithm3} to the tensorized stationary mean-field distribution $\bar \nu^{\bZ} \in \P(\R^{2dN})$ defined in~\eqref{Eq:TensorMF}.
Our technique is to show a one-step biased contraction guarantee in $W_2$ distance along the discrete-time algorithm~\eqref{Eq:ParticleAlgorithm3}, using a continuous-time interpolation of each step of the algorithm~\eqref{Eq:ParticleAlgorithm3} as the solution to a stochastic process, and performing a synchronous coupling analysis against the stationary mean-field dynamics; see Lemma~\ref{Lem:OneStepW2} in Section~\ref{Sec:OneStepRecurrenceW2}.
We leverage this to show a one-step biased contraction guarantee in KL divergence along the algorithm~\eqref{Eq:ParticleAlgorithm3}, see Lemma~\ref{Lem:OneStepKL} in Section~\ref{Sec:OneStepRecurrenceKL}, and solve the recursions to conclude the biased convergence guarantees stated in Theorem~\ref{Thm:ConvergenceAlgToMF}.
We provide the proof of Theorem~\ref{Thm:ConvergenceAlgToMF} in Section~\ref{Sec:ConvergenceAlgToMFProof}.

Note that the bias terms in Theorem~\ref{Thm:ConvergenceAlgToMF} below match the continuous-time bias from Theorem~\ref{Thm:ConvergenceDynToMF}, with an additional bias term that scales with the step size $\eta$ as well as the number of particles $N$, which comes from the dimension of the space $\R^{2dN}$ where the algorithm operates.
This additional bias term that scales with $\eta$ is consistent with the analysis of simple discretization of the Langevin dynamics such as the Unadjusted Langevin Algorithm, see e.g.,~\citep{VW19}. 

\begin{theorem}\label{Thm:ConvergenceAlgToMF}
    Assume Assumption~\ref{As:SCSmooth}.
    Suppose $\bz_k \sim \rho_k^{\bz,\eta}$ evolves via the finite-particle algorithm~\eqref{Eq:ParticleAlgorithm3} in $\R^{2dN}$ with step size $0 < \eta \le \frac{\alpha}{64L^2}$ from $\bz_0 \sim \rho_0^{\bz,\eta} \in \P(\R^{2dN})$.
    Then for all $k \ge 0$:
    \begin{align*}
        W_2(\rho_k^{\bz,\eta}, \bar \nu^{\bZ})^2 &\le e^{-\frac{3}{2}\alpha \eta k} \, W_2(\rho_0^{\bz,\eta}, \bar \nu^{\bZ})^2 + \frac{8L^2}{\alpha^2}  \left(\Var_{\bar \nu^Z}(\bar Z) + 64 \, \reg \eta d N \right) \\
        \KL(\rho_k^{\bz,\eta} \,\|\, \bar \nu^{\bZ}) 
        &\le e^{-\alpha \eta k} \left(\KL(\rho_0^{\bz,\eta} \,\|\, \bar \nu^{\bZ}) + \frac{9 L^2}{\alpha \reg} W_2(\rho_0^{\bz,\eta}, \bar \nu^{\bZ})^2 \right) + \frac{45 L^4}{\alpha^3 \reg} \left(\Var_{\bar \nu^{Z}}(\bar Z) 
        + 55 \, \eta \reg d N \right).
    \end{align*}        
\end{theorem}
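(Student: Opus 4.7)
The plan is to reduce the discrete-time analysis to a continuous-time interpolation argument that mirrors the proof of Theorem~\ref{Thm:ConvergenceDynToMF}, while carefully tracking the extra errors introduced by freezing the drift over each step. Throughout, let $\bar b^{\bZ}(\bz)$ denote the \emph{mean-field drift} on $\R^{2dN}$ in which each particle's empirical expectation in $b^{\bZ}$ is replaced by an expectation against $\bar \nu^Z$; the tensorized mean-field stationarity identity gives $\bar b^{\bZ}(\bz) = \epsilon \nabla \log \bar \nu^{\bZ}(\bz)$.

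For the $W_2$ bound I would fix an optimal coupling $(\bz_k, \bar \bZ) \sim \gamma_k$ achieving $W_2(\rho_k^{\bz,\eta}, \bar \nu^{\bZ})$ and evolve both via synchronously coupled SDEs on $[0,\eta]$: the algorithm's interpolation $d\hat \bz_t = b^{\bZ}(\bz_k)\, dt + \sqrt{2\epsilon}\, dW_t$ with $\hat \bz_0 = \bz_k$, so that $\hat \bz_\eta$ has the same law as $\bz_{k+1}$, and the reference mean-field Langevin $d\bar \bZ_t = \bar b^{\bZ}(\bar \bZ_t)\, dt + \sqrt{2\epsilon}\, dW_t$ with $\bar \bZ_0 = \bar \bZ$ driven by the same Brownian motion, whose law remains $\bar \nu^{\bZ}$ at all times. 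Differentiating $\E[\|\hat \bz_t - \bar \bZ_t\|^2]$, the drift difference $b^{\bZ}(\bz_k) - \bar b^{\bZ}(\bar \bZ_t)$ decomposes into three pieces: a contraction term $\bar b^{\bZ}(\hat \bz_t) - \bar b^{\bZ}(\bar \bZ_t)$ contributing $-2\alpha \|\hat \bz_t - \bar \bZ_t\|^2$ by the $\alpha$-strong monotonicity of $-\bar b^{\bZ}$ (a coordinatewise consequence of Assumption~\ref{As:SCSmooth}); a mean-field bias $b^{\bZ}(\hat \bz_t) - \bar b^{\bZ}(\hat \bz_t)$, handled exactly as in the proof of Theorem~\ref{Thm:ConvergenceDynToMF} and controlled by Lemma~\ref{Lem:BoundVar}; and a discretization error $b^{\bZ}(\bz_k) - b^{\bZ}(\hat \bz_t)$, bounded via $L$-Lipschitzness of $b^{\bZ}$ together with the one-step It\^o estimate $\E[\|\hat \bz_t - \bz_k\|^2] \lesssim \eta^2\,\E[\|b^{\bZ}(\bz_k)\|^2] + \epsilon \eta dN$. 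A Young's inequality with weights tuned to the hypothesis $\eta \le \alpha/(64L^2)$ yields a one-step recursion $\E[\|\hat \bz_\eta - \bar \bZ_\eta\|^2] \le (1 - \tfrac{3}{2}\alpha\eta)\,\E[\|\bz_k - \bar \bZ\|^2] + C_1 \eta\,(\Var_{\bar \nu^Z}(\bar Z) + \epsilon \eta dN)$, which iterates in $k$ to the claimed geometric bound.

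For the $\KL$ bound I would differentiate $\KL(\hat \rho_t^{\bz} \,\|\, \bar \nu^{\bZ})$ along the interpolated algorithm SDE on each step, as in the unadjusted Langevin analysis~\citep{VW19}. This produces a negative dissipation $-\epsilon\,\FI(\hat \rho_t^{\bz} \,\|\, \bar \nu^{\bZ})$ plus a cross term pairing $\nabla \log(\hat \rho_t^{\bz}/\bar \nu^{\bZ})$ against $b^{\bZ}(\bz_k) - \bar b^{\bZ}(\hat \bz_t)$, which I would split into the same three pieces as above and absorb into the Fisher information via Young's inequality: the contraction piece, combined with the tensorized $(\alpha/\epsilon)$-LSI for $\bar \nu^{\bZ}$ (from Lemma~\ref{Lem:Equilibrium} and tensorization of strong log-concavity), yields the geometric rate $-c\alpha\,\KL$; the mean-field bias produces an $L^4 \alpha^{-3} \epsilon^{-1} \Var_{\bar \nu^Z}(\bar Z)$ contribution; and the discretization piece produces an additional $L^4 \alpha^{-3} \eta dN$ contribution. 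The resulting differential inequality has the form $\tfrac{d}{dt}\KL(\hat \rho_t^{\bz} \,\|\, \bar \nu^{\bZ}) \le -c\alpha\,\KL(\hat \rho_t^{\bz} \,\|\, \bar \nu^{\bZ}) + C_2 L^2 \epsilon^{-1}\, W_2(\hat \rho_t^{\bz}, \bar \nu^{\bZ})^2 + \text{biases}$; feeding the $W_2$ estimate from the first part into the right-hand side and iterating yields the stated KL guarantee.

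The main technical obstacle will be the KL computation: unlike the $W_2$ calculation, which only requires a Gr\"onwall-type estimate on $\E[\|\hat \bz_t - \bar \bZ_t\|^2]$, here every source of error must be expressed as an inner product against $\nabla \log(\hat \rho_t^{\bz}/\bar \nu^{\bZ})$ and the resulting Young's inequalities balanced so that the dissipation $-\epsilon\,\FI$ dominates all cross terms; this is precisely where the quantitative condition $\eta \le \alpha/(64 L^2)$ enters. Tracking the $\eta dN$ scaling in the discretization bias also requires bounding $\E[\|b^{\bZ}(\bz_k)\|^2]$ linearly in $N$, which follows from $L$-Lipschitzness of $b^{\bZ}$, the variance estimate $\E_{\bar \nu^{\bZ}}[\|b^{\bZ}\|^2] = O(L^2 \epsilon dN/\alpha)$ (Lemma~\ref{Lem:BoundVar}), and the triangle inequality applied with the $W_2$ bound already established for $\rho_k^{\bz,\eta}$.
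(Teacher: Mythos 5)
Your KL argument follows the paper's route (interpolate each step, differentiate the KL divergence with the conditional-expectation drift, absorb the drift error into half the Fisher information, use the $(\alpha/\epsilon)$-LSI of $\bar \nu^{\bZ}$, and feed in the $W_2$ bound), but your $W_2$ one-step contraction has a genuine gap in how you decompose the drift difference. You take the contraction from the strong monotonicity of $-\bar b^{\bZ}$ applied to $\bar b^{\bZ}(\hat \bz_t)-\bar b^{\bZ}(\bar \bZ_t)$, which forces the empirical-versus-mean-field discrepancy to be evaluated at the \emph{current} configuration, $b^{\bZ}(\hat \bz_t)-\bar b^{\bZ}(\hat \bz_t)$, and you assert this is controlled by the variance (Lemma~\ref{Lem:BoundVar}) ``exactly as in Theorem~\ref{Thm:ConvergenceDynToMF}.'' That is false at a non-stationary law: the correct bound is Lemma~\ref{Lem:ComparisonArbitrary}, namely $\E\|b^{\bZ}-\bar b^{\bZ}\|^2 \le 2L^2\, W_2(\hat \rho_t^{\bz},\bar \nu^{\bZ})^2+4L^2\,\Var_{\bar \nu^Z}(\bar Z)$, and the $W_2^2$ part is unavoidable because the particles' empirical average can be arbitrarily far from the $\bar\nu$-expectation when the configuration is far from equilibrium. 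After Young or Cauchy--Schwarz, this piece contributes a positive term of order $(L^2/\alpha)\,\E\|\hat \bz_t-\bar \bZ_t\|^2$ (or $L\,\E\|\hat \bz_t-\bar \bZ_t\|^2$), which overwhelms the $-2\alpha\,\E\|\hat \bz_t-\bar \bZ_t\|^2$ contraction whenever $L\gg\alpha$; no choice of Young weight or smallness of $\eta$ rescues it, since this term has nothing to do with the discretization. So the claimed recursion with factor $1-\tfrac{3}{2}\alpha\eta$ does not follow from your decomposition, and the KL bound inherits the problem because it consumes the $W_2$ estimate.

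The fix is the decomposition the paper uses in Lemma~\ref{Lem:OneStepW2} (and already in the $W_2$ part of Theorem~\ref{Thm:ConvergenceDynToMF}): take the contraction from the strong monotonicity of the \emph{particle} drift, $\langle \hat \bz_t-\bar \bZ_t,\, b^{\bZ}(\hat \bz_t)-b^{\bZ}(\bar \bZ_t)\rangle \le -\alpha\|\hat \bz_t-\bar \bZ_t\|^2$ (Lemma~\ref{Lem:LipschitzMonotone_bz}), so that the leftover mean-field bias sits at the stationary coupled copy, $b^{\bZ}(\bar \bZ_t)-\bar b^{\bZ}(\bar \bZ_t)$ with $\bar \bZ_t\sim\bar \nu^{\bZ}$, where Lemma~\ref{Lem:ComparisonStationary} bounds it by $L^2\,\Var_{\bar \nu^Z}(\bar Z)$ with no $W_2$ dependence; the discretization error $b^{\bZ}(\bz_k)-b^{\bZ}(\hat \bz_t)$ is then the only place where $\eta\le\alpha/(64L^2)$ is needed, as you anticipated. (Two smaller citation slips: the second-moment bound $\E\|b^{\bZ}\|^2 = O(L^2\Var_{\bar \nu^Z}(\bar Z)+\epsilon dLN)$ used for the discretization term is Lemmas~\ref{Lem:BoundFPMF} and~\ref{Lem:BoundFPArb}, not Lemma~\ref{Lem:BoundVar}, and $b^{\bZ}$ is $2L$-Lipschitz rather than $L$-Lipschitz; neither affects the structure.) With that change, your outline coincides with the paper's proof.
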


As the finite-particle algorithm~\eqref{Eq:ParticleAlgorithm3} is practically implementable, this provides a concrete algorithm to approximately compute the stationary mean-field distribution, with an explicit iteration complexity which we characterize in Corollary~\ref{Cor:ComplexityAlgToMF} below.

We can also show the unbiased exponential convergence guarantees of the iterate $\rho_k^{\bz,\eta}$ of the finite-particle algorithm~\eqref{Eq:ParticleAlgorithm3} to its stationary distribution $\rho_\infty^{\bz,\eta}$, see Theorem~\ref{Thm:AlgorithmToBiasedLimit} in Section~\ref{Sec:AlgorithmToBiasedLimit}.

\subsection{Biased Convergence of the Average Particle of the Finite-Particle Algorithm}

Suppose $\bz_k = (\bx_k,\by_k) = (x_k^1,\dots,x_k^N,y_k^1,\dots,y_k^N) \sim \rho_k^{\bz,\eta}$ evolves following the finite-particle algorithm~\eqref{Eq:ParticleAlgorithm3} in $\R^{2dN}$.
As before, we define the \textit{average particle} to be the random variable $z_k^I = (x_k^I,y_k^I) \in \R^{2d}$, where $I \sim \Unif([N])$ is a uniformly chosen random index.
The distribution of the average particle $z_k^I \sim \rho_k^{z,\eta,\avg}$ (including the randomization over indices) is:
$$\rho_k^{z,\eta,\avg} = \frac{1}{N} \sum_{i \in [N]} \rho_k^{z,\eta,i}$$
where $\rho_k^{z,\eta,i}$ is the marginal distribution of $z_k^i = (x_k^i,y_k^i) \in \R^{2d}$ from the joint vector $\bz_k \sim \rho_k^{\bz,\eta}$.

We have the following biased convergence guarantee of the average particle along the finite-particle algorithm~\eqref{Eq:ParticleAlgorithm3}.
We provide the proof of Corollary~\ref{Cor:AvgParticleAlgorithm} in Section~\ref{Sec:AvgParticleAlgorithmProof}.
\begin{corollary}\label{Cor:AvgParticleAlgorithm}
    Assume Assumption~\ref{As:SCSmooth}.
    Suppose $\bz_k \sim \rho_k^{\bz,\eta}$ evolves following the finite-particle algorithm~\eqref{Eq:ParticleAlgorithm3} in $\R^{2dN}$ with step size $0 < \eta \le \frac{\alpha}{64L^2}$, and let $z_k^I \sim \rho_k^{z,\eta,\avg}$ be the average particle as defined above.
    Then for all $k \ge 0$:
    \begin{align*}
        \KL(\rho_k^{z,\eta,\avg} \,\|\, \bar \nu^{Z}) \le \frac{e^{-\alpha \eta k}}{N} \left(\KL(\rho_0^{\bz,\eta} \,\|\, \bar \nu^{\bZ}) + \frac{9 L^2}{\alpha \reg} W_2(\rho_0^{\bz,\eta}, \bar \nu^{\bZ})^2 \right) 
        + \frac{45 L^4}{\alpha^3 \reg N} \Var_{\bar \nu^{Z}}(\bar Z)
        + 2475 \frac{\eta d L^4}{\alpha^3}.
    \end{align*}
\end{corollary}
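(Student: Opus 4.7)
The plan is to mirror the template of Corollary~\ref{Cor:AvgParticleSystem} (the continuous-time case), reducing the bound on the average-particle distribution in $\R^{2d}$ to the joint bound on $\rho_k^{\bz,\eta} \in \P(\R^{2dN})$ already established in Theorem~\ref{Thm:ConvergenceAlgToMF}. The key conceptual observation is that $\bar \nu^{\bZ} = (\bar \nu^X)^{\otimes N} \otimes (\bar \nu^Y)^{\otimes N}$ can be reorganized as $(\bar \nu^Z)^{\otimes N}$ when we group coordinates block-wise into $N$ copies of $\R^{2d}$, so that the reference measure factorizes over the index $i \in [N]$. This product structure is what lets us trade an $N$-fold bound against $\bar \nu^{\bZ}$ for a $(1/N)$-factor improvement against $\bar \nu^Z$.

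The reduction proceeds in two short steps. First, since $\rho_k^{z,\eta,\avg} = \frac{1}{N} \sum_{i \in [N]} \rho_k^{z,\eta,i}$ is a uniform mixture and $\KL(\cdot \,\|\, \bar \nu^Z)$ is jointly convex in its first argument, I get
\begin{align*}
    \KL(\rho_k^{z,\eta,\avg} \,\|\, \bar \nu^Z) \,\le\, \frac{1}{N} \sum_{i \in [N]} \KL(\rho_k^{z,\eta,i} \,\|\, \bar \nu^Z).
\end{align*}
Second, by the superadditivity of relative entropy against a product reference measure (a direct consequence of the subadditivity of entropy and the factorization $\log \bar \nu^{\bZ}(\bz) = \sum_i \log \bar \nu^Z(z^i)$), the sum of marginal KL divergences is upper-bounded by the joint KL:
\begin{align*}
    \sum_{i \in [N]} \KL(\rho_k^{z,\eta,i} \,\|\, \bar \nu^Z) \,\le\, \KL(\rho_k^{\bz,\eta} \,\|\, \bar \nu^{\bZ}).
\end{align*}
Chaining the two inequalities gives $\KL(\rho_k^{z,\eta,\avg} \,\|\, \bar \nu^Z) \le \frac{1}{N} \KL(\rho_k^{\bz,\eta} \,\|\, \bar \nu^{\bZ})$.

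The final step is to plug in Theorem~\ref{Thm:ConvergenceAlgToMF} and distribute the $\frac{1}{N}$ factor over the three terms of the joint bound. The exponentially decaying term and the $\Var_{\bar \nu^Z}(\bar Z)$ bias term both inherit the $\frac{1}{N}$ prefactor. The crucial cancellation happens in the discretization bias: the joint bound contributes $\frac{45 L^4}{\alpha^3 \epsilon} \cdot 55 \, \eta \epsilon d N$, and multiplying by $\frac{1}{N}$ kills both the $N$ and the $\epsilon$ to yield the dimension-free constant $\frac{2475 \, \eta d L^4}{\alpha^3}$ appearing in the corollary.

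I expect no real obstacle: the two reduction inequalities are standard information-theoretic facts, and the main analytical work has already been absorbed into Theorem~\ref{Thm:ConvergenceAlgToMF}. The only subtlety worth flagging is the identification of $\bar \nu^{\bZ}$ with $(\bar \nu^Z)^{\otimes N}$ under the block-wise regrouping of coordinates from $(\bx,\by) = (x^1,\dots,x^N,y^1,\dots,y^N)$ to $((x^1,y^1),\dots,(x^N,y^N))$, but this is just a relabeling of independent variables and does not affect any of the quantities involved.
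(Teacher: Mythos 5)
Your proposal is correct and follows essentially the same route as the paper: the paper also reduces via $\KL(\rho_k^{z,\eta,\avg}\,\|\,\bar\nu^Z) \le \frac{1}{N}\KL(\rho_k^{\bz,\eta}\,\|\,\bar\nu^{\bZ})$ — obtaining the superadditivity step by splitting off the (nonnegative) multivariate mutual information term $\KL(\rho_k^{\bz,\eta}\,\|\,\hat\rho_k^{\bz,\eta})$ and using convexity of KL for the mixture — and then plugs in Theorem~\ref{Thm:ConvergenceAlgToMF}, with the same arithmetic $45\times 55\,\eta\epsilon dN/(\epsilon N)=2475\,\eta d$ producing the final bias term.
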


By taking $k \to \infty$, we see that the limiting distribution $\rho_\infty^{z,\eta,\avg}$ of the average particle satisfies:
\begin{align*}
    \KL(\rho_\infty^{z,\eta,\avg} \,\|\, \bar \nu^{Z}) \le 45 \frac{L^4}{\alpha^3 \reg N} \Var_{\bar \nu^{Z}}(\bar Z)  
    + 2475 \frac{\eta d L^4}{\alpha^3}.
\end{align*}
This shows that the average particle of the $N$-particle algorithm approximately converges to the stationary mean-field distribution $\bar \nu^Z$, up to a bias term which scales with step size $\eta$ and inversely with the number of particles $1/N$.
Thus, we can make the bias arbitrarily small by choosing a sufficiently large number of particles $N$ and a sufficiently small step size $\eta$. 

The result above implies the following iteration complexity guarantee for the average particle of the finite-particle algorithm~\eqref{Eq:ParticleAlgorithm3} to reach an arbitrary precision in KL divergence to the stationary mean-field distribution $\bar \nu^Z = \bar \nu^X \otimes \bar \nu^Y$, which is the equilibrium distribution for the game~\eqref{Eq:Game} that we wish to compute.
We provide the proof of Corollary~\ref{Cor:ComplexityAlgToMF} in Section~\ref{Sec:ComplexityAlgToMFProof}.

\begin{corollary}\label{Cor:ComplexityAlgToMF}
    Assume Assumption~\ref{As:SCSmooth}.
    For any regularization parameter $\reg > 0$, and given any small error threshold $\error > 0$, suppose we do the following:
    \begin{enumerate}
        \item Run the min-max gradient descent algorithm~\eqref{Eq:MinMaxGD} from $\tilde z_0 = (0,0) \in \R^{2d}$ with step size $\eta_{\GD} = \frac{\alpha}{4L^2}$ for 
        $$k_{\GD} \ge \frac{4L^2}{\alpha^2} \log \frac{\alpha^3 \|z^*\|^2}{\reg dL^2}$$ 
        iterations, to obtain a final point $m^Z = \tilde z_{k_\GD} \in \R^{2d}$. 
        \item Define the Gaussian distribution $\gamma^Z = \N(m^Z, \frac{\reg}{L} I)$ on $\R^{2d}$, and initialize the algorithm~\eqref{Eq:ParticleAlgorithm3} from $\bz_0 = (z_0^{1}, \dots, z_0^{N}) \in \R^{2dN}$
        where $z_0^{1}, \dots, z_0^{N} \sim \gamma^Z$ are i.i.d., so $\bz_0 \sim \rho_0^{\bz,\eta} = (\gamma^Z)^{\otimes N}$.
        %
        \item Run the finite-particle algorithm~\eqref{Eq:ParticleAlgorithm3} with step size and number of particles:
        $$\eta = \frac{\error \, \alpha^3}{7500 \, d L^4} \ , 
        \qquad\qquad
        N \ge \frac{270 \, dL^4}{\error \, \alpha^4}.$$
    \end{enumerate}
    Then the average particle $z_k^{I} \sim \rho_k^{z,\eta,\avg}$ of the algorithm~\eqref{Eq:ParticleAlgorithm3} satisfies $\KL(\rho_k^{z,\eta,\avg} \,\|\, \bar \nu^{Z}) \le \error$
    whenever the number of iterations $k$ satisfies:
    $$k \,\ge\, \frac{7500 \, d L^4}{\error \, \alpha^4} \log \frac{684 \, dL^6}{\error \, \alpha^6}.$$
\end{corollary}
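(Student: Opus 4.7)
The plan is to apply Corollary~\ref{Cor:AvgParticleAlgorithm} and tune $\eta$, $N$, and $k$ so that each of the three terms on the right-hand side contributes at most $\error/3$. The step-size bias $2475 \, \eta d L^4/\alpha^3$ is killed by choosing $\eta = \error \alpha^3/(7500 \, d L^4)$, which also satisfies the standing hypothesis $\eta \le \alpha/(64 L^2)$ for the relevant regime of small $\error$ (the complementary regime yields a trivial guarantee). The particle bias is bounded using Lemma~\ref{Lem:BoundVar}, which gives $\Var_{\bar \nu^Z}(\bar Z) \le 2\epsilon d/\alpha$; substituting makes the second term at most $90 \, dL^4/(\alpha^4 N)$, so the threshold $N \ge 270 \, dL^4/(\error \alpha^4)$ forces it to at most $\error/3$.

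Next I would bound the initial gap. Since $\bar \nu^{\bZ}$ is the product of $N$ independent copies of $\bar \nu^Z = \bar \nu^X \otimes \bar \nu^Y$ (after reordering coordinates), and $\rho_0^{\bz,\eta} = (\gamma^Z)^{\otimes N}$ is i.i.d.\ by construction, tensorization of KL divergence and of the $W_2$ distance gives
$$\KL(\rho_0^{\bz,\eta} \,\|\, \bar \nu^{\bZ}) = N \, \KL(\gamma^Z \,\|\, \bar \nu^Z), \qquad W_2^2(\rho_0^{\bz,\eta}, \bar \nu^{\bZ}) = N \, W_2^2(\gamma^Z, \bar \nu^Z).$$
The factor of $1/N$ out front in the transient term of Corollary~\ref{Cor:AvgParticleAlgorithm} then cancels these $N$'s, so what remains is $e^{-\alpha \eta k}$ times a quantity that depends only on $d, \alpha, L, \epsilon$.

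To bound the $N$-independent factors $\KL(\gamma^Z \,\|\, \bar \nu^Z)$ and $W_2^2(\gamma^Z, \bar \nu^Z)$, I would use that $\bar \nu^Z$ is $(\alpha/\epsilon)$-SLC with $-\nabla^2 \log \bar \nu^Z \preceq (L/\epsilon) I$ (inherited from the Hessian bounds on $V$), and that the covariance $(\epsilon/L) I$ of $\gamma^Z$ matches the smoothness scale. A standard Gaussian-versus-log-concave computation (expanding $-\log \bar \nu^Z$ around its minimizer $z^\dagger$, upper-bounding $\E_{\gamma^Z}[-\log \bar \nu^Z]$ using smoothness, and the log normalizing constant using SLC) yields
$$\KL(\gamma^Z \,\|\, \bar \nu^Z) \le \frac{L}{2\epsilon} \|m_Z - z^\dagger\|^2 + \frac{d}{2} \log \frac{L}{\alpha},$$
and Talagrand's inequality for $\bar \nu^Z$ gives $W_2^2(\gamma^Z, \bar \nu^Z) \le (2\epsilon/\alpha) \, \KL(\gamma^Z \,\|\, \bar \nu^Z)$. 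The displacement $\|m_Z - z^\dagger\|$ is controlled by combining $\|m_Z - z^*\|^2 = O(\epsilon d L^2/\alpha^3)$, which follows from the choice of $k_{\GD}$ and the exponential contraction of the deterministic min-max gradient descent (Theorem~\ref{Thm:DetMinMaxGD}) from $\tilde z_0 = 0$, with a perturbation bound $\|z^\dagger - z^*\|^2 = O(\epsilon d L^2/\alpha^3)$ obtained from the fixed-point characterizations of $z^*$ and $z^\dagger$ together with the variance bound on $\bar \nu^Z$.

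Finally, plugging these bounds into the transient term yields an expression of the form $e^{-\alpha \eta k} \cdot C$ for an explicit constant $C$ that is polynomial in $d$, $L$, $1/\alpha$; demanding $e^{-\alpha \eta k} C \le \error/3$ and using $1/(\alpha \eta) = 7500 \, dL^4/(\error \alpha^4)$ produces the claimed iteration count, with the stated log-argument $684 \, dL^6/(\error \alpha^6)$ emerging after tracking all constants. The main obstacle is the perturbation bound between the deterministic equilibrium $z^*$ and the mean-field mode $z^\dagger$: these are defined by coupled fixed-point conditions (one involves $V$ at a point, the other involves $V$ averaged against $\bar \nu^X, \bar \nu^Y$), so controlling their difference requires combining strong convex-concavity of $V$ with the concentration of $\bar \nu^Z$, and in particular a careful handling of the mean-versus-mode displacement for the SLC marginals. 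The remainder of the proof is bookkeeping.
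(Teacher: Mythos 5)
Your overall skeleton matches the paper's proof: apply Corollary~\ref{Cor:AvgParticleAlgorithm}, make each of the three terms at most $\error/3$ via the stated choices of $\eta$ and $N$ (using Lemma~\ref{Lem:BoundVar}), tensorize the KL divergence and $W_2^2$ over the i.i.d.\ initialization so the $1/N$ prefactor cancels, and use the min-max gradient descent warm start to control $\|m_Z - z^*\|^2$. Where you diverge is in bounding $\KL(\gamma^Z\,\|\,\bar\nu^Z)$ and $W_2(\gamma^Z,\bar\nu^Z)^2$: the paper (Lemma~\ref{Lem:BoundInitKL}) bounds the relative Fisher information directly by expanding $\E_{\gamma^Z}[\|\nabla\log\gamma^Z+\nabla g\|^2]$, killing the cross term by integration by parts, and controlling $\E_{\gamma^Z}[\|\nabla g\|^2]$ using only $\nabla V(z^*)=0$, $L$-Lipschitzness, and the second-moment bound $\E_{\bar\nu^Z}[\|\bar Z-z^*\|^2]\le 8\epsilon dL^2/\alpha^3$ (Lemma~\ref{Lem:BoundDistStationary}); LSI and Talagrand then give KL and $W_2$. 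Everything is centered at the deterministic equilibrium $z^*$, so the mode of $\bar\nu^Z$ never enters.

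Your route instead expands $-\log\bar\nu^Z$ around its minimizer $z^\dagger$ and therefore needs the perturbation bound $\|z^\dagger-z^*\|^2 = O(\epsilon dL^2/\alpha^3)$, which you flag as the main obstacle and do not prove. This is a genuine gap, and moreover the natural argument does not deliver the rate you assert: since $\bar b^Z(z^\dagger)=0$ and $-\bar b^Z$ is $\alpha$-strongly monotone, one gets $\alpha\|z^\dagger-z^*\|\le\|\bar b^Z(z^*)\|\le L\,\E_{\bar\nu^Z}[\|\bar Z-z^*\|^2]^{1/2}$, which with Lemma~\ref{Lem:BoundDistStationary} yields $\|z^\dagger-z^*\|^2\le 8\epsilon dL^4/\alpha^5$, a factor $L^2/\alpha^2$ worse than your claim. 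Propagating this through your KL bound and Talagrand, the transient prefactor becomes of order $dL^7/\alpha^7$ rather than $228\,dL^6/\alpha^6$, so the stated iteration count with $\log(684\,dL^6/(\error\alpha^6))$ is not recovered exactly (you would obtain the same qualitative complexity with a worse polynomial inside the logarithm). There is also a minor dimensional slip: in $\R^{2d}$ the Gaussian-versus-log-concave computation gives an additive $d\log(L/\alpha)$, not $\tfrac{d}{2}\log(L/\alpha)$, and your "trivial guarantee" claim for large $\error$ is not justified (the paper simply assumes $\error$ small enough that $\eta\le\alpha/(64L^2)$). The clean fix is to avoid the mode altogether and bound the initial relative Fisher information centered at $z^*$ as in Lemma~\ref{Lem:BoundInitKL}; alternatively, prove the perturbation bound with the correct exponent and redo the constant tracking, accepting the weaker log factor.
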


\section{Discussion}
\label{Sec:Discussion}

In this paper, we study zero-sum games in the space of probability distributions over $\R^d$ with entropy regularization and a base interaction function which is smooth and strongly convex-strongly concave.
We show the exponential convergence guarantee of the mean-field min-max Langevin dynamics to the equilibrium distribution in continuous time.
We also show the biased convergence guarantees for the finite-particle dynamics in continuous time, and the finite-particle algorithm in discrete time, to the equilibrium (stationary mean-field) distribution. 
We also provide an explicit iteration complexity for the average particle of the finite-particle algorithm to approximately compute the equilibrium distribution. 
We use standard tools from the analysis of stochastic processes and their time discretization, which have been used for analyzing sampling algorithms.

Our results answer a special case of the open question posed by~\citep{wang24open} in the simple setting of Euclidean case under strong convexity assumption.
There are many directions one can study 
toward the more general open question.
It would be interesting to study how to extend our results to weaken the strong convexity assumption, for example to allow weak convexity or local non-convexity of the interaction function, or only assuming isoperimetry of the equilibrium distribution.
In our analysis, the strong convexity assumption is crucial to show the exponential convergence guarantee of the mean-field dynamics and to provide biased convergence guarantees for the finite-particle systems 
in $W_2$ distance, which we leverage to obtain biased convergence guarantees in KL divergence.
It would also be interesting to study the more general setting of constrained domains or the manifold setting without convexity assumptions.
For constrained domains, we may have to add a reflection or projection step to the Langevin dynamics; or we can try to adapt the idea of the Proximal Sampler from sampling, see e.g.,~\citep{LST21,chen2022improved}.

\acks{The authors thank Guillaume Wang, Daniel Lacker, and Manuel Arnese for valuable discussions.
A.W. thanks Tinsel W. for guidance. A.W. is supported by NSF Award CCF-2403391 and CCF-2443097. 
Y.C. is supported by the NSF Award CCF-2342642.
}

\addcontentsline{toc}{section}{References}
\bibliography{arxiv-min-max-v3.bbl}

\newpage
\appendix

\setcounter{tocdepth}{2}
\tableofcontents

\section{Additional Related Work}
\label{Sec:Related}

Zero-sum games in the space of probability distributions have been studied in many recent works, including~\citep{hsieh2019finding,domingo2020mean,cen2021fast,wang2022exponentially,ma2022provably,lu2023two,kim2024symmetric,lascu2023entropic,lascu2024a,lascu2024mirror,ding2024papal,conger2024coupled,an2025convergence,lu2025convergencetimeaveragedmeanfield}.
Our main motivation in this work comes from the open problem by~\cite{wang24open}, who pose the question of studying the convergence guarantees of the simple mean-field min-max Langevin dynamics and its particle approximations for zero-sum games in the space of distributions, in particular without using two timescales or modifying the dynamics or algorithm.

The question posed by~\cite{wang24open} is for a general setting on a manifold, without convexity assumption on the interaction function.
In this work, we contribute an answer to this question for the simple case on the unconstrained Euclidean space with a strong convexity assumption on the base interaction function. 
From the perspective of optimization on the space of probability distributions under the Wasserstein metric, this makes the payoff function $\F_\reg$ in~\eqref{Eq:PayoffDef} a strongly convex-strongly concave function, so it is natural to expect that the min-max gradient flow in the space of distributions---which is exactly the mean-field min-max Langevin dynamics~\eqref{Eq:MFSystem}---to converge exponentially fast.
However, precise results in this simple setting taking into account particle approximation and time discretization seem to be previously unknown.
We establish these results in this paper to provide a baseline toward understanding the more general question posed by~\cite{wang24open}. 

We review a few works that are the most related to our work; see also~\citep{wang24open} and the references therein.
\cite{domingo2020mean} study the setting when the domains are compact Riemannian manifolds, and show a conditional convergence result that if the mean-field dynamics converges, then it converges to the equilibrium distribution of the game.
\cite{ma2022provably} and \cite{lu2023two} also study the compact manifold setting with two-timescale dynamics. 
\cite{ma2022provably} study the quasistatic regime where one player is always at optimality, and establish the asymptotic convergence of the mean-field dynamics to the equilibrium distribution; they also study the finite-particle and discrete-time approximation, without convergence analysis.
\cite{lu2023two} study the two-timescale dynamics with finite timescale separation, and establish the exponential convergence guarantees of the mean-field dynamics to the equilibrium distribution.
In the compact manifold setting, only smoothness assumptions on the interaction function are needed.

When the domains are Euclidean spaces, some convexity or integrability assumptions are needed.
\cite{kim2024symmetric} consider a more general setting of zero-sum games with entropy regularization where the interaction functional on the space of distributions is convex-concave (whereas in our setting~\eqref{Eq:Game} the interaction functional is bilinear since it is an expectation over a base function);
they study a modified mean-field Langevin averaged-gradient dynamics where the drift term uses a time average of the gradients over the iterates, and show a continuous-time convergence rate, as well as a convergence analysis of the finite-particle discrete-time algorithm.
\cite{ding2024papal} consider the setting when the base interaction function is a bounded perturbation of a quadratic function;
they study a finite-particle discrete-time algorithm that implements the mirror-prox primal-dual algorithm in the space of distributions, which requires an inner loop running a sampling algorithm to implement each iteration, and show explicit convergence guarantees of the resulting algorithm.

The work of~\cite{conger2024coupled} study a very general setting of coupled Wasserstein gradient flows for min-max and cooperative games in the space of distributions, with payoff functionals which include additional interaction energy terms, which are not present in our setting.
They show convergence guarantees of the mean-field dynamics in continuous time under convexity assumptions, utilizing the convexity structure in the Wasserstein space of distributions.
A special setting of their result~\cite[Theorem~3.4]{conger2024coupled}, for zero-sum games with a bilinear interaction functional as in~\eqref{Eq:Game} with a strongly convex-strongly concave base interaction function, already shows the exponential convergence guarantee of the mean-field dynamics to the equilibrium distribution, as in our Theorem~\ref{Thm:ConvMF}, and thus they have answered the open question by~\cite{wang24open} for the mean-field setting with strongly convex interaction.
However,~\cite{conger2024coupled} focuses on the continuous-time mean-field analysis, and in this work we complement the results by considering finite-particle approximation and discrete-time analysis.

\section{Proofs for the Properties of Equilibrium Distribution and Duality Gap}

\subsection{Proof of Lemma~\ref{Lem:BestResponse} (Properties of the Best-Response Distribution)}
\label{Sec:BestResponseProof}

\begin{proof}
    We note the KL divergence between two distributions $\rho, \nu \in \P(\R^d)$ can be decomposed as:
    \begin{align*}
        \KL(\rho \,\|\, \nu)
        &= \int_{\R^d} \rho \, \log \rho \, dx - \int_{\R^d} \rho \, \log \nu \, dx 
        = -H(\rho) + \E_\rho[-\log \nu]
    \end{align*}
    where $H$ is the negative entropy functional.
    For all $\rho^X, \rho^Y \in \P(\R^d)$, from the definitions~\eqref{Eq:BestResponse} of the 
    distributions $\nu^X = \Phi^X(\rho^Y)$ and $\nu^Y = \Phi^Y(\rho^X)$, we have:
    \begin{subequations}\label{Eq:BestResponseExplicit}
    \begin{align}
        -\log \nu^X(x) &= \reg^{-1} \E_{\rho^Y}[V(x,Y)] + \log C^Y(\rho^Y), \\
        -\log \nu^Y(y) &= -\reg^{-1} \E_{\rho^X}[V(X,y)] + \log C^X(\rho^X)
    \end{align}
    \end{subequations}
    where $C^Y(\rho^Y)$ and $C^X(\rho^X)$ are the normalizing constants:
    \begin{subequations}\label{Eq:NormConst}     
    \begin{align}
        C^Y(\rho^Y) &:= \int_{\R^d} \exp\left(-\reg^{-1} \E_{\rho^Y}[V(x,Y)]\right) \, dx, \\
        C^X(\rho^X) &:= \int_{\R^d} \exp\left(\reg^{-1} \E_{\rho^X}[V(X,y)]\right) \, dy.
    \end{align}
    \end{subequations}
    Then from the definition~\eqref{Eq:PayoffDef} of the payoff function $\F_\reg$, we can write:
    \begin{align}
        \F_\reg(\rho^X, \rho^Y) 
        &= \E_{\rho^X \otimes \rho^Y}[V(X,Y)] - \reg H(\rho^X) + \reg H(\rho^Y) \notag \\
        &= \reg \left(- H(\rho^X) + \E_{\rho^X}\left[ \reg^{-1} \E_{\rho^Y}\left[V(X,Y)\right]\right] \right) + \reg H(\rho^Y) \notag \\
        &= \reg \left(- H(\rho^X) + \E_{\rho^X}\left[ -\log \nu^X \right] - \log C^Y(\rho^Y) \right) + \reg H(\rho^Y) \notag \\
        &= \reg \, \KL(\rho^X \,\|\, \nu^X) - \reg \log C^Y(\rho^Y) + \reg H(\rho^Y). \label{Eq:DGExp1}
    \end{align}
    Then as a function of $\rho^X$, we see that $\rho^X \mapsto \F_\reg(\rho^X, \rho^Y)$ is minimized when we set $\rho^X = \nu^X$:
    $$\nu^X = \arg\min_{\tilde \rho^X \in \P(\R^d)} \F_\reg(\tilde \rho^X, \rho^Y).$$
    Similarly, we can also write:
    \begin{align}\label{Eq:DGExp2}
        \F_\reg(\rho^X, \rho^Y) 
        &= -\reg \, \KL(\rho^Y \,\|\, \nu^Y) + \reg \log C^X(\rho^X) - \reg H(\rho^X).
    \end{align} 
    Then as a function of $\rho^Y$, we see that $\rho^Y \mapsto \F_\reg(\rho^X, \rho^Y)$ is maximized when we set $\rho^Y = \nu^Y$:
    $$\nu^Y = \arg\max_{\tilde \rho^Y \in \P(\R^d)} \F_\reg(\rho^X, \tilde \rho^Y).$$
    
    The above computation also gives us:
    \begin{align*}
        \min_{\tilde \rho^X \in \P(\R^d)} \F_\reg(\tilde \rho^X, \rho^Y)
        &= - \reg \log C^Y(\rho^Y) + \reg H(\rho^Y) = \F_\reg(\rho^X, \rho^Y) - \reg \, \KL(\rho^X \,\|\, \nu^X) \\
        \max_{\tilde \rho^Y \in \P(\R^d)} \F_\reg(\rho^X, \tilde \rho^Y)
        &= \; \reg \log C^X(\rho^X) - \reg H(\rho^X) \;= \F_\reg(\rho^X, \rho^Y) + \reg \, \KL(\rho^Y \,\|\, \nu^Y).
    \end{align*}
    Therefore, we can write the duality gap as:
    \begin{align*}
        \DG(\rho^X, \rho^Y) = 
        \max_{\tilde \rho^Y \in \P(\R^d)} \F_\reg(\rho^X, \tilde \rho^Y)
        -
        \min_{\tilde \rho^X \in \P(\R^d)} \F_\reg(\tilde \rho^X, \rho^Y)
        = \reg \left( \KL(\rho^X \,\|\, \nu^X) + \KL(\rho^Y \,\|\, \nu^Y) \right)
    \end{align*}
    as desired.
\end{proof}

\subsection{Proof of Lemma~\ref{Lem:Equilibrium} (Existence, Uniqueness of Equilibrium and Bound on Duality Gap)}
\label{Sec:EquilibriumProof}

\begin{proof}[Proof of Lemma~\ref{Lem:Equilibrium}]
\textbf{(1)~Existence of equilibrium distribution:} 
Suppose $\bar Z_t \sim \bar \rho_t^Z = \bar \rho_t^X \otimes \bar \rho_t^Y$ in $\R^{2d}$ evolves following the mean-field dynamics~\eqref{Eq:MFJoint} from $\bar Z_0 \sim \bar \rho_0^Z = \bar \rho_0^X \otimes \bar \rho_0^Y \in \P(\R^{2d})$, and let $\bar \nu_t^Z = \bar \nu_t^X \otimes \bar \nu_t^Y$ where $\bar \nu_t^X = \Phi^X(\bar \rho_t^Y)$ and $\bar \nu_t^Y = \Phi^Y(\bar \rho_t^X)$ are the best-response distributions.
We show in Lemma~\ref{Lem:BoundMomentParticleMF} that $\bar \rho_t^Z \in \P(\R^{2d})$ for all $t \ge 0$.
In particular, $\bar \rho_t^Z$ is also in $\P_2(\R^{2d})$, the space of probability distributions over $\R^{2d}$ with finite second moment (without requiring absolute continuity with respect to the Lebesgue measure).

Furthermore, we show in Theorem~\ref{Thm:ConvMF} that for all $t \ge 0$:
$$\FI(\bar \rho_t^Z \,\|\, \bar \nu_t^Z) \le e^{-2\alpha t} \, \FI(\bar \rho_0^Z \,\|\, \bar \nu_0^Z).$$
We claim this implies $(\bar \rho_t^Z)_{t \ge 0}$ is a (continuous-time) Cauchy sequence in $\P_2(\R^{2d})$ with the $W_2$ metric.
Since $\bar Z_t$ evolves following the dynamics~\eqref{Eq:MFJoint},
$\bar \rho_t^Z$ evolves following the Fokker-Planck equation: 
$\part{\bar \rho_t^Z}{t} = \reg \nabla \cdot \left( \bar \rho_t^Z \nabla \log \frac{\bar \rho_t^Z}{\bar \nu_t^Z} \right)$,
so its Wasserstein speed (norm of the velocity) is: 
\begin{align*}
    \left\|\part{\bar \rho_t^Z}{t} \right\|_{\bar \rho_t^Z} = \reg \, \E_{\bar \rho_t^Z}\left[\left\|\nabla \log \frac{\bar \rho_t^Z}{\bar \nu_t^Z}\right\|^2\right]^{1/2} = \reg \, \FI(\bar \rho_t^Z \,\|\, \bar \nu_t^Z)^{1/2}
    \le e^{-\alpha t} \, \reg \, \FI(\bar \rho_0^Z \,\|\, \bar \nu_0^Z)^{1/2}.
\end{align*}
Then using the definition of the Wasserstein distance as the shortest path length, we can estimate:
\begin{align*}
    W_2(\bar \rho_{T_0}^Z, \bar \rho_{T_1}^Z) 
    \,\le\, \int_{T_0}^{T_1} \left\|\part{\bar \rho_t^Z}{t} \right\|_{\bar \rho_t^Z} \, dt 
    \,\le\, \reg \, \FI(\bar \rho_0^Z \,\|\, \bar \nu_0^Z)^{1/2} \int_{T_0}^{T_1} e^{-\alpha t} \, dt  
    \,\le\, e^{-\alpha T_0} \, \frac{\reg}{\alpha} \, \FI(\bar \rho_0^Z \,\|\, \bar \nu_0^Z)^{1/2}.
\end{align*}
Therefore, for any $T_1 > T_0$, we have $W_2(\bar \rho_{T_0}^Z, \bar \rho_{T_1}^Z) \to 0$ as $T_0 \to \infty$.
This shows that $(\bar \rho_t^Z)_{t \ge 0}$ is a Cauchy sequence in $\P_2(\R^{2d})$ with the $W_2$ metric.
Since $\P_2(\R^{2d})$ with the $W_2$ metric is a complete metric space~\citep[Theorem~6.18]{villani2009optimal}, this implies $(\bar \rho_t^Z)_{t \ge 0}$ must converge to a limit: $\bar \rho_\infty^Z = \lim_{t \to \infty} \bar \rho_t^Z$ which is also in $\P_2(\R^{2d})$.
Since each $\bar \rho_t^Z = \bar \rho_t^X \otimes \bar \rho_t^Y$ is a product distribution, the limit $\bar \rho_\infty^Z = \bar \rho_\infty^X \otimes \bar \rho_\infty^Y$ must also be a product distribution.

Furthermore, since $\lim_{t \to \infty} \FI(\bar \rho_t^Z \,\|\, \bar \nu_t^Z) = 0$, the limiting distribution $\bar \rho_\infty^Z$ is a fixed point for the best-response map, i.e., $\bar \rho_\infty^Z = \bar \rho_\infty^X \otimes \bar \rho_\infty^Y$ satisfies:
$$\bar \rho_\infty^X = \Phi^X(\bar \rho_\infty^Y),
\qquad\qquad
\bar \rho_\infty^Y = \Phi^Y(\bar \rho_\infty^X).$$
Therefore, $\bar \rho_\infty^Z$ minimizes the duality gap: $\DG(\bar \rho_\infty^X, \bar \rho_\infty^Y) = 0$, and thus, $(\bar \rho_\infty^X, \bar \rho_\infty^Y)$ is an equilibrium distribution for the game~\eqref{Eq:Game}, as defined in~\eqref{Eq:Equilibrium}.
Furthermore, from the definition of the best-response maps~\eqref{Eq:BestResponse}, we see that $\bar \rho_\infty^X = \Phi^X(\rho_\infty^Y)$ is $(\alpha/\reg)$-SLC, since we assume $V(x,y)$ is $\alpha$-strongly convex in $x$.
Similarly, $\bar \rho_\infty^Y = \Phi^Y(\rho_\infty^X)$ is $(\alpha/\reg)$-SLC, since we assume $V(x,y)$ is $\alpha$-strongly concave in $y$.

\medskip
\noindent
\textbf{(2)~Bound on duality gap:}
Let $(\bar \nu^X, \bar \nu^Y) \in \P(\R^d) \times \P(\R^d)$ be an equilibrium distribution for the game~\eqref{Eq:Game}, which exists by the argument above (which we call $(\bar \rho_\infty^X, \bar \rho_\infty^Y)$ above).
Since $\bar \nu^X$ is $(\alpha/\reg)$-SLC, it also satisfies $(\alpha/\reg)$-Talagrand inequality, which means for all $\rho^X \in \P(\R^d)$:
$$\KL(\rho^X \,\|\, \bar \nu^X) \ge \frac{\reg}{2\alpha} \, W_2(\rho^X, \bar \nu^X)^2.$$
Similarly, since $\bar \nu^Y$ is $(\alpha/\reg)$-SLC, it satisfies $(\alpha/\reg)$-Talagrand inequality, so for all $\rho^Y \in \P(\R^d)$:
$$\KL(\rho^Y \,\|\, \bar \nu^Y) \ge \frac{\reg}{2\alpha} \, W_2(\rho^Y, \bar \nu^Y)^2.$$
Adding the two bounds above gives the first inequality claimed in~\eqref{Eq:DGBound}.

Next, since $(\bar \nu^X, \bar \nu^Y)$ is a fixed point of the best-response map, the identities from~\eqref{Eq:BestResponseExplicit} become:
\begin{align*}
        -\log \bar \nu^X(x) &= \reg^{-1} \E_{\bar \nu^Y}[V(x,Y)] + \log C^Y(\bar \nu^Y), \\
        -\log \bar \nu^Y(y) &= -\reg^{-1} \E_{\bar \nu^X}[V(X,y)] + \log C^X(\bar \nu^X)   
\end{align*}
where $C^X, C^Y$ are the normalizing constants defined in~\eqref{Eq:NormConst}.
Then we can compute:
\begin{align*}
    H(\bar \nu^X) 
    = \E_{\bar \nu^X}\left[-\log \bar \nu^X\right]
    &= \reg^{-1} \E_{\bar \nu^X \otimes \bar \nu^Y}[V(X,Y)] + \log C^Y(\bar \nu^Y) \\
    H(\bar \nu^Y) 
    = \E_{\bar \nu^Y}\left[-\log \bar \nu^Y\right]
    &= -\reg^{-1} \E_{\bar \nu^X \otimes \bar \nu^Y}[V(X,Y)] + \log C^X(\bar \nu^X).
\end{align*}
Adding the two equations, we obtain:
\begin{align}\label{Eq:EqEntIdentity}
    H(\bar \nu^X) + H(\bar \nu^Y) = \log C^Y(\bar \nu^Y) + \log C^X(\bar \nu^X).
\end{align}
Therefore, for all $\rho^X, \rho^Y \in \P(\R^d)$, from the definition of the duality gap, and using the relations~\eqref{Eq:DGExp1} and~\eqref{Eq:DGExp2}, the identities $\bar \nu^X = \Phi^X(\bar \nu^Y)$, $\bar \nu^Y = \Phi^Y(\bar \nu^X)$, and the identity~\eqref{Eq:EqEntIdentity}, we have:
\begin{align*}
    \DG(\rho^X, \rho^Y)
    &= 
    \max_{\tilde \rho^Y \in \P(\R^d)} \F_\reg(\rho^X, \tilde \rho^Y)
    - \min_{\tilde \rho^X \in \P(\R^d)} \F_\reg(\tilde \rho^X, \rho^Y) \\
    &\ge \F_\reg(\rho^X, \bar \nu^Y) - \F_\reg(\bar \nu^X, \rho^Y) \\
    &= \reg \, \KL(\rho^X \,\|\, \bar \nu^X) - \reg \log C^Y(\bar \nu^Y) + \reg H(\bar \nu^Y) \\
    &\qquad - \left(-\reg \, \KL(\rho^Y \,\|\, \bar \nu^Y) + \reg \log C^X(\bar \nu^X) - \reg H(\bar \nu^X)\right) \\
    &= \reg \, \KL(\rho^X \,\|\, \bar \nu^X) + \reg \, \KL(\rho^Y \,\|\, \bar \nu^Y)
\end{align*}
which is the second inequality claimed in~\eqref{Eq:DGBound}.

\medskip
\noindent
\textbf{(3)~Uniqueness of equilibrium distribution:}
Suppose the contrary that we have two equilibrium distributions $(\bar \nu^X, \bar \nu^Y)$ and $(\bar \mu^X, \bar \mu^Y)$ of the game~\eqref{Eq:Game}.
Then by using the second inequality in the bound~\eqref{Eq:DGBound} with $(\rho^X, \rho^Y) = (\bar \mu^X, \bar \mu^Y)$, we have:
\begin{align*}
    \reg \, \KL(\bar \mu^X \,\|\, \bar \nu^X) + \reg \, \KL(\bar \mu^Y \,\|\, \bar \nu^Y)
    \le \DG(\bar \mu^X, \bar \mu^Y)
    = 0
\end{align*}
where the last equality follows because $(\bar \mu^X, \bar \mu^Y)$ is an equilibrium distribution so it minimizes the duality gap.
Therefore, we must have $\KL(\bar \mu^X \,\|\, \bar \nu^X) = 0$, so $\bar \mu^X = \bar \nu^X$.
Similarly, we must have $\KL(\bar \mu^Y \,\|\, \bar \nu^Y) = 0$, so $\bar \mu^Y = \bar \nu^Y$.
Therefore, the equilibrium distribution of the game~\eqref{Eq:Game} is unique.
\end{proof}

\section{Proofs for the Mean-Field Min-Max Langevin Dynamics}
\label{Sec:MFAppendix}

\subsection{Derivation of the Mean-Field Dynamics}
\label{Sec:MFDerivation}

At an idealized continuous-time level, a natural strategy to compute the equilibrium distribution of the game~\eqref{Eq:Game} is for each player to run the gradient flow dynamics in the space of probability distributions to minimize their own objective function.
Suppose we endow the space of probability distribution $\P(\R^d)$ with the Wasserstein $W_2$ metric.
Then each player maintains a continuous-time curve of probability distributions $(\bar \rho^X_t)_{t \ge 0}$ and $(\bar \rho^Y_t)_{t \ge 0}$, which they evolve via:
\begin{align*}
    \part{\bar \rho^X_t}{t} &= -\grad_{\rho^X} \, \F_\reg( \bar \rho^X_t, \bar \rho^Y_t),
    \qquad\qquad
    \part{\bar \rho^Y_t}{t} = \grad_{\rho^Y} \, \F_\reg(\bar  \rho^X_t, \bar \rho^Y_t).
\end{align*}
In the above, $\grad_{\rho^X} \F_\reg(\rho^X,\rho^Y)$ denotes the Wasserstein gradient of $\F_\reg(\rho^X,\rho^Y)$ with respect to the first argument $\rho^X$, with the second argument $\rho^Y$ fixed.
Similarly, $\grad_{\rho^Y} \F_\reg(\rho^X,\rho^Y)$ denotes the Wasserstein gradient with respect to the second argument $\rho^Y$, with the first argument $\rho^X$ fixed. 

Following the computation rule for the Wasserstein gradient~\citep{villani2009optimal}, the above system of gradient flow dynamics corresponds to the following system of Fokker-Planck equations:
\begin{align*}
    \part{\bar \rho^X_t}{t} &= \nabla \cdot \left( \bar \rho^X_t \, \E_{\bar \rho^Y_t}[\nabla_x V(\cdot, \bar Y_t)] \right) + \reg \Delta \bar \rho^X_t \\
    \part{\bar \rho^Y_t}{t} &= -\nabla \cdot \left( \bar \rho^Y_t \, \E_{\bar \rho^X_t}[\nabla_y V(\bar X_t, \cdot)] \right) + \reg \Delta \bar \rho^Y_t.
\end{align*}
In the above, $\nabla_x V(x,y)$ and $\nabla_y V(x,y)$ denote the gradient with respect to the first and second argument, respectively, while keeping the other argument fixed.
Here $\nabla \cdot$ is the divergence (trace of the Jacobian) of a vector field, and $\Delta$ is the Laplacian (trace of the Hessian) of a function.

The above system of Fokker-Planck equations can be realized as the continuity equations for a pair of stochastic processes $(\bar X_t)_{t \ge 0}$ and $(\bar Y_t)_{t \ge 0}$ in $\R^d$ which evolve following the \textit{mean-field min-max Langevin dynamics}:
\begin{align*}
    d\bar X_t &= -\E_{\bar \rho_t^Y}[\nabla_x V(\bar X_t, \bar Y_t)] \, dt + \sqrt{2\reg} \, dW_t^X \\
    d\bar Y_t &= \E_{\bar \rho_t^X}[\nabla_y V(\bar X_t, \bar Y_t)] \, dt + \sqrt{2\reg} \, dW_t^Y
\end{align*}
where $(W_t^X)_{t \ge 0}$ and $(W_t^Y)_{t \ge 0}$ are independent standard Brownian motion in $\R^d$.
The above is a mean-field system because the evolution of $\bar X_t$ depends on the distribution $\bar \rho_t^Y$ of the other player, while the evolution of $\bar Y_t$ depends on the distribution $\bar \rho_t^X$.
However, note the dependence is only via their expectations, so in the mean-field system above, $\bar X_t$ and $\bar Y_t$ evolve independently, i.e., they do not interact at the level of random variables, only at the level of distributions.
In particular, it does not matter that we use independent Brownian motions for $\bar X_t$ and $\bar Y_t$, we could also use the same Brownian motion; we leave it as the above for convenience when writing the dynamics in the joint form in Section~\ref{Sec:MF}.

\subsection{Time Derivative of Relative Fisher Information along Mean-Field Langevin Dynamics}
\label{Sec:ddtFI_MF}

We show the following identity on the time derivative of the relative Fisher information between the iterate of the mean-field min-max Langevin dynamics and its best-response distribution.
In Lemma~\ref{Lem:ddtFI_MF} below, the first term on the right-hand side is the second-order relative Fisher information between $\bar \rho_t^Z$ and $\bar \nu_t^Z$, and the second term is a weighted relative Fisher information.

We note the identity in Lemma~\ref{Lem:ddtFI_MF} is formally identical to the time derivative identity of the relative Fisher information $\FI(\rho_t \,\|\, \nu)$ (or the second derivative of the KL divergence $\KL(\rho_t \,\|\, \nu)$) when $\rho_t$ evolves along the Langevin dynamics to the target distribution $\nu$, using the Otto calculus formula for the Hessian of KL divergence; see~\citep[Formula~15.7]{villani2009optimal} or~\citep[Eq.~(10) in Appendix~A]{WJ18b}.
Since the mean-field min-max Langevin dynamics is the min-max Wasserstein gradient flow in the space of distributions, the identity in Lemma~\ref{Lem:ddtFI_MF} can also be formally derived via Otto calculus computation.
We provide a proof of Lemma~\ref{Lem:ddtFI_MF} via an explicit computation below.
See~\citep[Lemma~7.2]{conger2024coupled} for an alternative proof for a similar result, and see also the proof of Theorem~\ref{Thm:DetMinMaxGF} part (4) in Section~\ref{Sec:ReviewDetGF} for the finite-dimensional version of this identity.

We denote $\|u\|^2_A := u^\top A u$ and $\|A\|^2_{\HS} := \Tr(A^\top A)$ for any $u \in \R^{2d}$ and $A \in \R^{2d \times 2d}$.

\begin{lemma}\label{Lem:ddtFI_MF}
    Suppose $\bar Z_t = (\bar X_t,\bar Y_t) \sim \bar \rho_t^Z = \bar \rho_t^X \otimes \bar \rho_t^Y$ evolves following the mean-field min-max Langevin dynamics~\eqref{Eq:MFJoint} in the joint space $\R^{2d}$.
    Define its best-response distribution $\bar \nu_t^Z = \bar \nu_t^X \otimes \bar \nu_t^Y$ where $\bar \nu_t^X = \Phi^X(\bar \rho_t^Y)$ and $\bar \nu_t^Y = \Phi^Y(\bar \rho_t^X)$ as defined in~\eqref{Eq:BestResponse}.
    Then for all $t \ge 0$, we have the identity:
    \begin{align*}
        \frac{d}{dt} \FI(\bar \rho_t^Z \,\|\, \bar \nu_t^Z)
        &= -2\reg \, \E_{\bar \rho_t^Z}\left[\left\|\nabla^2 \log \frac{\bar \rho_t^Z}{\bar \nu_t^Z} \right\|^2_{\HS} \right] - 2\reg \, \E_{\bar \rho_t^Z} \left[\left\|\nabla \log \frac{\bar \rho_t^Z}{\bar \nu_t^Z} \right\|^2_{(-\nabla^2 \log \bar \nu_t^Z)}\right].
    \end{align*}
\end{lemma}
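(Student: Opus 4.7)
The plan is to differentiate $\FI(\bar \rho_t^Z \,\|\, \bar \nu_t^Z) = \int \bar \rho_t^Z \, |s_t|^2 \, dz$, where $s_t := \nabla \log(\bar \rho_t^Z / \bar \nu_t^Z)$, treating the time dependence of the iterate $\bar \rho_t^Z$ and of the target $\bar \nu_t^Z$ separately. Writing
\begin{align*}
\frac{d}{dt}\FI(\bar \rho_t^Z \,\|\, \bar \nu_t^Z) = \int (\partial_t \bar \rho_t^Z) \, |s_t|^2 \, dz + 2 \int \bar \rho_t^Z \, s_t \cdot \partial_t s_t \, dz,
\end{align*}
and expanding $\partial_t s_t = \nabla \partial_t \log \bar \rho_t^Z - \nabla \partial_t \log \bar \nu_t^Z$, I would group the terms into a \emph{frozen-target contribution} (the part that would arise if $\bar \nu_t^Z$ were held fixed) and an \emph{extra contribution} coming from the time dependence of $\bar \nu_t^Z$.

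For the frozen-target contribution, I would run the standard de Bruijn-style computation for the Langevin semigroup targeting $\bar \nu_t^Z$. Using the Fokker-Planck equation $\partial_t \bar \rho_t^Z = \epsilon \nabla \cdot (\bar \rho_t^Z \, s_t)$ and repeated integration by parts, the classical Bochner-type identity (scaled by $\epsilon$) yields exactly the two terms
\begin{align*}
-2\epsilon \, \E_{\bar \rho_t^Z}\!\left[\left\|\nabla^2 \log(\bar \rho_t^Z/\bar \nu_t^Z)\right\|^2_{\HS}\right] \quad \text{and} \quad -2\epsilon \, \E_{\bar \rho_t^Z}\!\left[\left\|\nabla \log(\bar \rho_t^Z/\bar \nu_t^Z)\right\|^2_{-\nabla^2 \log \bar \nu_t^Z}\right]
\end{align*}
that appear on the right-hand side of the claim; the Hessian-weighted term is the curvature contribution of the drift, as in the familiar Langevin identity.

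The crux is to show the extra contribution $-2 \int \bar \rho_t^Z \, s_t \cdot \partial_t \nabla \log \bar \nu_t^Z \, dz$ vanishes. Here I would exploit the product structure preserved by the mean-field system: $\bar \rho_t^Z = \bar \rho_t^X \otimes \bar \rho_t^Y$ and $\bar \nu_t^Z = \bar \nu_t^X \otimes \bar \nu_t^Y$, so $s_t(x,y) = (s_t^X(x), s_t^Y(y))$, and the extra term splits into an $X$-piece and a $Y$-piece. Differentiating the explicit best-response relations $\nabla_x \log \bar \nu_t^X(x) = -\epsilon^{-1} \E_{\bar \rho_t^Y}[\nabla_x V(x, Y)]$ and $\nabla_y \log \bar \nu_t^Y(y) = \epsilon^{-1}\E_{\bar \rho_t^X}[\nabla_y V(X,y)]$ in time, substituting the Fokker-Planck equation for the partner marginal, and integrating by parts in the partner variable, each piece reduces to a cross expectation involving the mixed Hessian block $\nabla^2_{xy} V$ contracted against $s_t^X$ and $s_t^Y$. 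The two pieces carry opposite signs and match via $\nabla^2_{yx} V = (\nabla^2_{xy} V)^\top$, producing exact cancellation. This cancellation is the analytic signature of the zero-sum (antisymmetric) coupling between the two players.

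I expect this cancellation step to be the main obstacle in the proof: all the required integrations by parts are justified by the fourth-order smoothness of $V$ in Assumption~\ref{As:SCSmooth} together with the strong log-concavity of $\bar \nu_t^X$ and $\bar \nu_t^Y$ (which control tails and boundary terms), but carefully tracking signs and matching the mixed-derivative blocks of $V$ so that the two cross terms transpose into each other requires attentive bookkeeping.
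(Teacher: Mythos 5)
Your proposal is correct, and it reaches the identity by a cleaner bookkeeping than the paper's proof. The paper expands $\FI(\bar\rho_t^Z\,\|\,\bar\nu_t^Z)=\E_{\bar\rho_t^Z}[\|\nabla f_t\|^2]+\E_{\bar\rho_t^Z}[\|\nabla g_t\|^2]-2\E_{\bar\rho_t^Z}[\Delta g_t]$ (with $f_t=-\log\bar\rho_t^Z$, $g_t=-\log\bar\nu_t^Z$) and differentiates each of the three terms from scratch; because it keeps the drift $\nabla g_t^Y$ and the diffusion $\Delta\bar\rho_t^Y$ separate when computing $\partial_t\nabla g_t^X$ and $\partial_t\Delta g_t^X$, it generates extra terms of the type $\langle\nabla g_t^X,\Delta_y\nabla_x V\rangle$ and $\Delta_x\Delta_y V$, which only cancel when the three pieces are recombined at the end. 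You instead split the derivative into the frozen-target part (the instantaneous evolution of $\bar\rho_t^Z$, which at time $t$ is exactly Langevin toward the frozen $\bar\nu_t^Z$, so the classical de Bruijn/Bakry--\'Emery dissipation identity delivers both terms on the right-hand side) plus the single target-motion term $-2\,\E_{\bar\rho_t^Z}[\langle s_t,\partial_t\nabla\log\bar\nu_t^Z\rangle]$ with $s_t=\nabla\log(\bar\rho_t^Z/\bar\nu_t^Z)$. Writing the partner's Fokker--Planck equation in the compact form $\partial_t\bar\rho_t^Y=\epsilon\,\nabla\cdot(\bar\rho_t^Y s_t^Y)$ (and likewise for $X$) and integrating by parts once in the partner variable reduces each piece to a single bilinear term $\pm\,\E_{\bar\rho_t^Z}[\langle s_t^X,\nabla^2_{xy}V\,s_t^Y\rangle]$; the opposite signs coming from the min/max best responses, together with $\nabla^2_{xy}V=(\nabla^2_{yx}V)^\top$, give exact cancellation — I verified the sign bookkeeping and it works. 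Your route buys a shorter argument (the heavy lifting is outsourced to the standard Fisher-information dissipation identity, and working with $s_t^Y$ rather than $\nabla g_t^Y$ and $\nabla\log\bar\rho_t^Y$ separately avoids the third-and-fourth-derivative-of-$V$ terms entirely); the paper's route buys a fully self-contained computation that never invokes the classical identity. The only things to spell out in a full write-up are the same ones the paper leaves implicit: justification of differentiation under the integral and vanishing boundary terms, which your appeal to the fourth-order smoothness of $V$ and strong log-concavity of $\bar\nu_t^X,\bar\nu_t^Y$ covers.
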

\begin{proof}
    For ease of notation, in this proof we write $\rho_t^Z \equiv \bar \rho_t^Z$ and $\nu_t^Z \equiv \bar \nu_t^Z$, i.e., we drop the superscript bar notation (which denotes distributions along mean-field dynamics).
    Similarly, we write $\rho_t^X \equiv \bar \rho_t^X$, $\rho_t^Y \equiv \bar \rho_t^Y$, and $\nu_t^X \equiv \bar \nu_t^X$, $\nu_t^Y \equiv \bar \nu_t^Y$.
    
    We define $f_t = -\log \rho_t^Z$ and $g_t = -\log \nu_t^Z$, so $f_t, g_t \colon \R^{2d} \to \R$ are separable functions, in particular:
    $$g_t(z) = g_t(x,y) = g_t^X(x) + g_t^Y(y)$$
    where $g_t^X = -\log \nu_t^X$ and $g_t^Y = -\log \nu_t^Y$.
    Recall from the definition~\eqref{Eq:BestResponse} for $\nu_t^X = \Phi^X(\rho_t^Y)$ and $\nu_t^Y = \Phi^Y(\rho_t^X)$, we have:
    \begin{subequations}\label{Eq:BRScore}
    \begin{align}
        \nabla g_t^X(x) &= \reg^{-1} \E_{\rho_t^Y}[\nabla_x V(x, Y)] \label{Eq:BRScorex} \\
        \nabla g_t^Y(y) &= -\reg^{-1} \E_{\rho_t^X}[\nabla_y V(X, y)]. \label{Eq:BRScorey}
    \end{align}
    \end{subequations}
    This also implies:
    \begin{subequations}\label{Eq:BRLaplacian}
    \begin{align}
        \Delta g_t^X(x) &= \reg^{-1} \E_{\rho_t^Y}\left[\Delta_x V(x, Y)\right] \label{Eq:BRLaplacianx} \\
        \Delta g_t^Y(y) &= -\reg^{-1} \E_{\rho_t^X}\left[\Delta_y V(X, y)\right]. \label{Eq:BRLaplaciany}
    \end{align}        
    \end{subequations}
        
    The mean-field min-max Langevin dynamics~\eqref{Eq:MFJoint} can be written as the SDE:
    \begin{align*}
        d\bar Z_t 
        \,=\, -\reg \nabla g_t(\bar Z_t) \, dt + \sqrt{2\reg} \, dW_t
    \end{align*}
    Then $\rho_t^Z = \exp(-f_t)$ evolves following the Fokker-Planck equation:
    \begin{align}
        \part{\rho_t^Z}{t}
        &= \reg \nabla \cdot \left(\rho_t^Z \nabla g_t \right) + \reg \Delta \rho_t^Z. \label{Eq:FPrhot}
    \end{align}
    In particular, for the components $\rho_t^Z = \rho_t^X \otimes \rho_t^Y$, we also have:
    \begin{subequations}\label{Eq:FPrhotXY}
    \begin{align}
        \part{\rho_t^X}{t}
        &= \reg \nabla \cdot \left(\rho_t^X \nabla g_t^X \right) + \reg \Delta \rho_t^X \label{Eq:FPrhotX} \\
        \part{\rho_t^Y}{t}
        &= \reg \nabla \cdot \left(\rho_t^Y \nabla g_t^Y \right) + \reg \Delta \rho_t^Y. \label{Eq:FPrhotY}
    \end{align}
    \end{subequations}
    
    We will use the following relations for $f_t = -\log \rho_t^Z$: 
    \begin{subequations}
        \begin{align}
            \rho_t^Z \, \nabla f_t &= -\nabla \rho_t^Z \label{Eq:Rel1} \\
            \rho_t^Z \, \nabla^2 f_t &= -\nabla^2 \rho_t^Z + \rho_t^Z (\nabla f_t) (\nabla f_t)^\top  \label{Eq:Rel2} \\
            \Delta \rho_t^Z &= - \rho_t^Z \Delta f_t + \rho_t^Z \|\nabla f_t\|^2. \label{Eq:Rel3}
        \end{align}
    \end{subequations}
    Then we can compute that $f_t = -\log \rho_t^Z$ evolves following the equation:
    \begin{align}
        \partial_t f_t = -\frac{1}{\rho_t^Z} \part{\rho_t^Z}{t} 
        &=  -\frac{1}{\rho_t^Z} \left( \reg \langle \nabla \rho_t^Z, \nabla g_t \rangle +\reg \rho_t^Z \Delta g_t + \reg \Delta \rho_t^Z \right)
        \notag \\
        &= \reg \langle \nabla f_t, \nabla g_t \rangle - \reg \Delta g_t + \reg \Delta f_t - \reg \|\nabla f_t\|^2 \label{Eq:ft}
    \end{align}

    We will also use the \textit{Bochner's formula}, which states that for smooth $u \colon \R^{2d} \to \R$,
    \begin{align}\label{Eq:Bochner}
        \langle \nabla u, \nabla \Delta u \rangle = \frac{1}{2} \Delta \|\nabla u\|^2 - \|\nabla^2 u\|^2_{\HS}.
    \end{align}

    By integration by parts, we can write the relative Fisher information as:
    \begin{align}
        \FI(\rho_t^Z \,\|\, \nu_t^Z)
        \,&=\, \E_{\rho_t^Z}\left[\|\nabla f_t - \nabla g_t\|^2 \right] \notag \\
        &= \E_{\rho_t^Z}\left[\|\nabla f_t\|^2 + \|\nabla g_t\|^2 - 2 \langle \nabla f_t, \nabla g_t \rangle \right] \notag \\
        &= \blue{\E_{\rho_t^Z}\left[\|\nabla f_t\|^2 \right]}
        \red{\,+\, \E_{\rho_t^Z}\left[\|\nabla g_t\|^2\right]} 
        \purple{\,-\, 2 \E_{\rho_t^Z}\left[\Delta g_t \right]}. \label{Eq:FIt}
    \end{align}
    We compute the time derivative of the three terms above separately, with explicit computation below, color-coded for clarity for when we combine them.
    
    \paragraph{(I) First term:}
    The first term in~\eqref{Eq:FIt} is the Fisher information of $\rho_t^Z$.    
    We compute its time derivative using the Fokker-Planck equation~\eqref{Eq:FPrhot} and the formula~\eqref{Eq:ft} to get:
    \begin{subequations}
    \begin{align}
        \frac{d}{dt} \E_{\rho_t^Z}\left[\|\nabla f_t\|^2 \right] 
        &= \int_{\R^{2d}} (\partial_t \rho_t^Z) \,  \|\nabla f_t\|^2 \, dz + 2\int_{\R^{2d}} \rho_t^Z \, \langle \nabla f_t, \nabla \partial_t f_t \rangle \, dz \notag \\
        &= \reg \int_{\R^{2d}} \nabla \cdot (\rho_t^Z \nabla g_t) \,  \|\nabla f_t\|^2 \, dz + \reg \int_{\R^{2d}} (\Delta \rho_t^Z) \,  \|\nabla f_t\|^2 \, dz  \label{Eq:Calc1a}  \\
        &\qquad
        + 2\reg \int_{\R^{2d}} \rho_t^Z \, \langle \nabla f_t, \nabla (\langle \nabla f_t, \nabla g_t \rangle - \Delta g_t) \rangle \, dz \label{Eq:Calc1b} \\
        &\qquad 
        + 2\reg \int_{\R^{2d}} \rho_t^Z \, \langle \nabla f_t, \nabla (\Delta f_t) \rangle \, dz 
        - 2\reg \int_{\R^{2d}} \rho_t^Z \, \langle \nabla f_t, \nabla \|\nabla f_t\|^2 \rangle \, dz. \label{Eq:Calc1c} 
    \end{align}
    \end{subequations}
    We calculate the terms above one by one.
    \begin{enumerate}
        \item The first term in~\eqref{Eq:Calc1a} is, by integration by parts:
        \begin{align*}
            \reg\int_{\R^{2d}} \nabla \cdot (\rho_t^Z \nabla g_t) \,  \|\nabla f_t\|^2 \, dz
            &= -\reg \int_{\R^{2d}} \rho_t^Z \langle \nabla g_t, \, \nabla (\|\nabla f_t\|^2) \rangle \, dz  \\
            &= -2\reg\int_{\R^{2d}} \rho_t^Z \langle \nabla g_t, \, (\nabla^2 f_t) \, \nabla f_t \rangle \, dz.            
        \end{align*}
        \item The second term in~\eqref{Eq:Calc1a} is, by integration by parts:
        \begin{align*}
            \reg \int_{\R^{2d}} (\Delta \rho_t^Z) \,  \|\nabla f_t\|^2 \, dz 
            = \reg \int_{\R^{2d}} \rho_t^Z \, \Delta  \|\nabla f_t\|^2 \, dz.
        \end{align*}
        \item The term in~\eqref{Eq:Calc1b} is, by distributing the gradient,
        \begin{align*}
            &2\reg \int_{\R^{2d}} \rho_t^Z \, \langle \nabla f_t, \nabla ( \langle \nabla f_t, \nabla g_t \rangle - \Delta g_t) \rangle \, dz \\
            &= 2\reg \int_{\R^{2d}} \rho_t^Z \left( \langle \nabla f_t, (\nabla^2 f_t) \, \nabla g_t \rangle 
            + \langle \nabla f_t, (\nabla^2 g_t) \, \nabla f_t \rangle 
            - \langle \nabla f_t, \, \nabla \Delta g_t \rangle \right) dz.
        \end{align*}
        \item The first term in~\eqref{Eq:Calc1c} is, by Bochner's formula~\eqref{Eq:Bochner}:
        \begin{align*}
            2\reg \int_{\R^{2d}} \rho_t^Z \, \langle \nabla f_t, \nabla (\Delta f_t) \rangle \, dz
            &= \reg \int_{\R^{2d}} \rho_t^Z \, \Delta \|\nabla f_t\|^2 \, dz
            - 2\reg \int_{\R^{2d}} \rho_t^Z \, \|\nabla^2 f_t\|^2_{\HS} \, dz.
        \end{align*}
        \item The second term in~\eqref{Eq:Calc1c} is, using relation~\eqref{Eq:Rel2} and integration by parts:
        \begin{align*}
            - 2\reg \int_{\R^{2d}} \rho_t^Z \, \langle \nabla f_t, \nabla \|\nabla f_t\|^2 \rangle \, dz
            &= 2\reg \int_{\R^{2d}} \langle \nabla \rho_t^Z, \nabla \|\nabla f_t\|^2 \rangle \, dz 
            = - 2\reg \int_{\R^{2d}} \rho_t^Z \, \Delta \|\nabla f_t\|^2 dz.
        \end{align*}
    \end{enumerate}
    Combining the above, we see that the terms involving $\langle \nabla g_t, \, (\nabla^2 f_t) \, \nabla f_t \rangle$ and $\Delta \|\nabla f_t\|^2$ vanish, so we are left with:
    \begin{align}
        \blue{\frac{d}{dt} \E_{\rho_t^Z}\left[\|\nabla f_t\|^2 \right]
        = -2\reg \, \E_{\rho_t^Z}\left[\left\|\nabla^2 f_t \right\|^2_{\HS} \right] 
        +2\reg \E_{\rho_t^Z}\left[\langle \nabla f_t, (\nabla^2 g_t) \, \nabla f_t \rangle  \right] 
        - 2\reg \E_{\rho_t^Z}\left[\langle \nabla f_t, \, \nabla \Delta g_t \rangle  \right].} 
        \label{Eq:FirstTerm-1}
    \end{align}
    
    \paragraph{(II) Second term:} For the second term in~\eqref{Eq:FIt}, using the Fokker-Planck equation~\eqref{Eq:FPrhot}, we have:
    \begin{subequations}        
    \begin{align}
        \frac{d}{dt} \E_{\rho_t^Z}\left[\|\nabla g_t\|^2 \right]
        &= \int_{\R^{2d}} (\partial_t \rho_t^Z) \|\nabla g_t\|^2 \, dz +  2 \int_{\R^{2d}} \rho_t^Z \, \langle \nabla g_t, \partial_t (\nabla g_t) \rangle \, dz \notag \\
        &= \reg \int_{\R^{2d}}  \nabla \cdot (\rho_t^Z \nabla g_t) \, \|\nabla g_t\|^2 \, dz \,+\, \reg \int_{\R^{2d}}  \Delta \rho_t^Z \, \|\nabla g_t\|^2 \, dz \label{Eq:Calc2a} \\
        &\qquad + 2 \int_{\R^{d}} \rho_t^X \, \langle \nabla g_t^X, \partial_t (\nabla g_t^X) \rangle \, dx
        + 2 \int_{\R^{d}} \rho_t^Y \, \langle \nabla g_t^Y, \partial_t (\nabla g_t^Y) \rangle \, dy \label{Eq:Calc2b}
    \end{align}
    \end{subequations}
    where in the second line above we have used the fact that $g_t(x,y) = g_t^X(x) + g_t^Y(y)$ is separable, and $\rho_t^Z = \rho_t^X \otimes \rho_t^Y$ is independent.
    We calculate the terms above one by one.
    \begin{enumerate}
        \item The first term in~\eqref{Eq:Calc2a} is, by integration by parts:
        \begin{align*}
            \reg \int_{\R^{2d}}  \nabla \cdot (\rho_t^Z \nabla g_t) \, \|\nabla g_t\|^2 \, dz
            &= -\reg \int_{\R^{2d}} \rho_t^Z \langle \nabla g_t, \nabla (\|\nabla g_t\|^2) \rangle \, dz \\
            &= -2\reg\int_{\R^{2d}} \rho_t^Z \langle \nabla g_t, (\nabla^2 g_t) \, \nabla g_t \rangle \, dz.
        \end{align*}
        \item The second term in~\eqref{Eq:Calc2a} is, by integration by parts and using relation~\eqref{Eq:Rel1}:
        \begin{align*}
            \reg \int_{\R^{2d}} \Delta \rho_t^Z \, \|\nabla g_t\|^2 \, dz
            &= \reg \int_{\R^{2d}} \rho_t^Z \langle \nabla f_t, \, \nabla \|\nabla g_t\|^2 \rangle \, dz \\
            &= 2\reg \int_{\R^{2d}} \rho_t^Z \, \langle \nabla f_t, \, (\nabla^2 g_t) \, \nabla g_t \rangle \, dz.
        \end{align*}
        \item For the first term in~\eqref{Eq:Calc2b}, we can compute using~\eqref{Eq:BRScorex} and~\eqref{Eq:FPrhotY}, and integration by parts:
        \begin{align*}
            \partial_t (\nabla g_t^X(x))
            &= \reg^{-1} \int_{\R^d} \partial_t \rho_t^Y(y) \nabla_x V(x,y) \, dy \\
            &= \int_{\R^d} \left(\nabla \cdot \left(\rho_t^Y \nabla g_t^Y \right)(y) + \Delta \rho_t^Y(y)\right) \nabla_x V(x,y) \, dy \\
            &= -\int_{\R^d} \rho_t^Y(y) \, \nabla^2_{yx} V(x,y) \, \nabla g_t^Y(y) \, dy + \int_{\R^d} \rho_t^Y(y) \Delta_y \nabla_x V(x,y) \, dy.
        \end{align*}
        Then, since $\rho_t^Z = \rho_t^X \otimes \rho_t^Y$, we can compute:
        \begin{align*}
            &2 \int_{\R^{d}} \rho_t^X \, \langle \nabla g_t^X, \partial_t (\nabla g_t^X) \rangle \, dx \\
            &= 2 \int_{\R^{d}} \rho_t^X(x) \, \left\langle \nabla g_t^X(x), \, \int_{\R^d} \rho_t^Y(y) \left(- \nabla^2_{yx} V(x,y) \, \nabla g_t^Y(y) + \Delta_y \nabla_x V(x,y) \right) dy \right \rangle \, dx \\
            &= 2 \int_{\R^{2d}} \rho_t^Z(x,y) \, \left\langle \nabla g_t^X(x), \, - \nabla^2_{yx} V(x,y) \, \nabla g_t^Y(y) + \Delta_y \nabla_x V(x,y) \right \rangle \, dx \,dy \\
            &= 2 \int_{\R^{2d}} \rho_t^Z \, \left\langle \nabla g_t^X, \, - (\nabla^2_{yx} V ) \nabla g_t^Y + \Delta_y \nabla_x V \right \rangle \, dz.
        \end{align*}
        \item For the second term in~\eqref{Eq:Calc2b}, we can compute using~\eqref{Eq:BRScorey} and~\eqref{Eq:FPrhotX}, and integration by parts:
        \begin{align*}
            \partial_t (\nabla g_t^Y(y))
            &= -\reg^{-1} \int_{\R^d} \partial_t \rho_t^X(x) \nabla_y V(x,y) \, dx \\
            &= -\int_{\R^d} \left(\nabla \cdot \left(\rho_t^X \nabla g_t^X \right)(x) + \Delta \rho_t^X(x)\right) \nabla_y V(x,y) \, dx \\
            &= \int_{\R^d} \rho_t^X(x) \, \nabla^2_{xy} V(x,y) \, \nabla g_t^X(x) \, dx - \int_{\R^d} \rho_t^X(x) \Delta_x \nabla_y V(x,y) \, dx.
        \end{align*}
        Then, since $\rho_t^Z = \rho_t^X \otimes \rho_t^Y$, we can compute:
        \begin{align*}
            &2 \int_{\R^{d}} \rho_t^Y \, \langle \nabla g_t^Y, \partial_t (\nabla g_t^Y) \rangle \, dy \\
            &= 2 \int_{\R^{d}} \rho_t^Y(y) \, \left\langle \nabla g_t^Y(y), \, \int_{\R^d} \rho_t^X(x) \left( \nabla^2_{xy} V(x,y) \, \nabla g_t^X(x) - \Delta_x \nabla_y V(x,y) \right) dx \right \rangle \, dy \\
            &= 2 \int_{\R^{2d}} \rho_t^Z(x,y) \, \left\langle \nabla g_t^Y(y), \, \nabla^2_{xy} V(x,y) \, \nabla g_t^X(x) - \Delta_x \nabla_y V(x,y) \right \rangle \, dx \,dy \\
            &= 2 \int_{\R^{2d}} \rho_t^Z \, \left\langle \nabla g_t^Y, \, (\nabla^2_{xy} V ) \nabla g_t^X - \Delta_x \nabla_y V \right \rangle \, dz.
        \end{align*}    
    \end{enumerate}
    Since $\nabla^2_{xy} V = (\nabla^2_{yx} V)^\top$, when we combine the calculations above, we see that the terms involving $\langle \nabla g_t^Y, \nabla^2_{xy} V \; \nabla g_t^X \rangle$ cancel, so we get:
    \begin{align}\label{Eq:SecondTerm-1}
        \red{\frac{d}{dt} \E_{\rho_t^Z}\left[\|\nabla g_t\|^2 \right]}
        &\red{\,= \E_{\rho_t^Z}\left[2\reg \, \langle \nabla f_t-\nabla g_t, (\nabla^2 g_t) \nabla g_t \rangle + 2\langle \nabla g_t^X, \Delta_y \nabla_x V \rangle - 2\langle \nabla g_t^Y, \Delta_x \nabla_y V \rangle \right].}
    \end{align}

    \paragraph{(III) Third term:} For the third term in~\eqref{Eq:FIt}, using the Fokker-Planck equation~\eqref{Eq:FPrhot}, we have: 
    \begin{subequations}        
    \begin{align}
        \frac{d}{dt} \left(-2 \E_{\rho_t^Z}\left[\Delta g_t \right] \right)
        &= -2 \int_{\R^{2d}} (\partial_t \rho_t^Z) \, \Delta g_t \, dz - 2 \int_{\R^{2d}} \rho_t^Z \, \partial_t (\Delta g_t) \, dz \notag \\
        &= -2\reg \int_{\R^{2d}} \nabla \cdot (\rho_t^Z \nabla g_t) \, (\Delta g_t)  \, dz
        - 2\reg \int_{\R^{2d}} (\Delta \rho_t^Z) \, (\Delta g_t) \, dz \label{Eq:Calc3a} \\
        &\qquad -2 \int_{\R^{d}} \rho_t^X \, \partial_t (\Delta_x g_t^X) \, dx
        -2 \int_{\R^{d}} \rho_t^Y \, \partial_t (\Delta_y g_t^Y) \, dy \label{Eq:Calc3b}
    \end{align}
    \end{subequations}
    where in the second line above we have used the fact that $g_t(x,y) = g_t^X(x) + g_t^Y(y)$ is separable, and $\rho_t^Z = \rho_t^X \otimes \rho_t^Y$ is independent.
    We calculate the terms above one by one.
    \begin{enumerate}
        \item The first term in~\eqref{Eq:Calc3a} is, by integration by parts:
        \begin{align*}
            -2\reg \int_{\R^{2d}} \nabla \cdot (\rho_t^Z \nabla g_t) \, (\Delta g_t)  \, dz
            &= 2\reg \int_{\R^{2d}} \rho_t^Z \langle \nabla g_t, \, \nabla \Delta g_t \rangle \, dz.
        \end{align*}
        \item The second term in~\eqref{Eq:Calc3a} is, by integration by parts and denoting $z = (z_1,\dots,z_{2d})$:
        {\allowdisplaybreaks
        \begin{align*}
            -2\reg \int_{\R^{2d}} (\Delta \rho_t^Z) \, (\Delta g_t) \, dz 
            &= -2\reg \int_{\R^{2d}} \left(\sum_{i=1}^{2d} \frac{\partial^2}{\partial z_i^2} \rho_t^Z(z) \right) \, \left(\sum_{j=1}^{2d} \frac{\partial^2}{\partial z_j^2} g_t(z) \right) \, dz \notag \\
            &= -2\reg \int_{\R^{2d}} \sum_{i,j=1}^{2d}  \left(\frac{\partial^2}{\partial z_i^2} \rho_t^Z(z) \right) \, \left(\frac{\partial^2}{\partial z_j^2} g_t(z) \right) \, dz \notag \\
            &= -2\reg \int_{\R^{2d}} \sum_{i,j=1}^{2d}  \rho_t^Z(z) \, \left(\frac{\partial^4}{\partial z_i^2 \, \partial z_j^2} g_t(z) \right) \, dz \notag \\
            &= -2\reg \int_{\R^{2d}} \sum_{i,j=1}^{2d} \left(\frac{\partial^2}{\partial z_i \, \partial z_j} \rho_t^Z(z) \right) \left(\frac{\partial^2}{\partial z_i \, \partial z_j} g_t(z) \right) \, dz \notag \\
            &= -2\reg \int_{\R^{2d}} \langle \nabla^2 \rho_t^Z, \nabla^2 g_t \rangle_{\HS} \, dz \notag \\
            &= 2\reg \, \int_{\R^{2d}} \rho_t^Z \, \left( \langle \nabla^2 f_t, \nabla^2 g_t \rangle_{\HS} - \left\langle \nabla f_t, \, (\nabla^2 g_t) \, \nabla f_t \right\rangle \right) \, dz
        \end{align*}
        }
        where in the last step we have used relation~\eqref{Eq:Rel2}.
        \item For the first term in~\eqref{Eq:Calc3b}, we can compute using~\eqref{Eq:BRLaplacianx} and~\eqref{Eq:FPrhotY}, and integration by parts:
        \begin{align*}
            \partial_t (\Delta_x g_t^X(x))
            &= \reg^{-1} \int_{\R^d} \partial_t \rho_t^Y(y) \, \Delta_x V(x, y) \, dy \\
            &= \int_{\R^d} \left(\nabla \cdot \left(\rho_t^Y \nabla g_t^Y \right)(y) + \Delta \rho_t^Y(y) \right) \, \Delta_x V(x, y) \, dy \\
            &= \int_{\R^d} \rho_t^Y(y) \left(-\langle \nabla g_t^Y(y), \nabla_y \Delta_x V(x, y) \rangle \,+\, \Delta_y \Delta_x V(x, y) \right) dy.
        \end{align*}
        Therefore, since $\rho_t^Z = \rho_t^X \otimes \rho_t^Y$:
        \begin{align*}
            &-2 \int_{\R^{d}} \rho_t^X \, \partial_t (\Delta_x g_t^X) \, dx \\
            &= 2 \int_{\R^{d}} \rho_t^X(x) \, \int_{\R^d} \rho_t^Y(y) \left(\langle \nabla g_t^Y(y), \nabla_y \Delta_x V(x, y) \rangle \,-\, \Delta_y \Delta_x V(x, y) \right) dy  \, dx \\
            &= 2 \int_{\R^{2d}} \rho_t^Z \left(\langle \nabla g_t^Y, \nabla_y \Delta_x V \rangle \,-\, \Delta_y \Delta_x V \right) dz.
        \end{align*}
        \item For the second term in~\eqref{Eq:Calc3b}, we can compute using~\eqref{Eq:BRLaplaciany} and~\eqref{Eq:FPrhotX}, and integration by parts:
        \begin{align*}
            \partial_t (\Delta g_t^Y(y)) 
            &= -\reg^{-1} \int_{\R^d} \partial_t \rho_t^X(x) \, \Delta_y V(x, y) \, dx \\
            &= -\int_{\R^d} \left(\nabla \cdot \left(\rho_t^X \nabla g_t^X \right)(x) + \Delta \rho_t^X(x) \right) \, \Delta_y V(x, y) \, dx \\
            &= \int_{\R^d} \rho_t^X(x) \left(\langle \nabla g_t^X(x), \nabla_x \Delta_y V(x, y) \rangle \,-\, \Delta_x \Delta_y V(x, y) \right) dx.
        \end{align*}
        Therefore, since $\rho_t^Z = \rho_t^X \otimes \rho_t^Y$:
        \begin{align*}
            &-2 \int_{\R^{d}} \rho_t^Y \, \partial_t (\Delta_y g_t^Y) \, dy \\
            &= 2 \int_{\R^{d}} \rho_t^Y(y) \, \int_{\R^d} \rho_t^X(x) \left(-\langle \nabla g_t^X(x), \nabla_x \Delta_y V(x, y) \rangle \,+\, \Delta_x \Delta_y V(x, y) \right) dx \, dy \\
            &= 2 \int_{\R^{2d}} \rho_t^Z \left(-\langle \nabla g_t^X, \nabla_x \Delta_y V \rangle \,+\, \Delta_x \Delta_y V \right) dz.
        \end{align*}
    \end{enumerate}
    When we combine the calculations above, we see that the terms involving $\Delta_x \Delta_y V = \Delta_y \Delta_x V$ cancel, so we get:
    \begin{align}
        \purple{\frac{d}{dt} \left(-2 \E_{\rho_t^Z}\left[\Delta g_t \right] \right)}
        &\purple{\,= \E_{\rho_t^Z}\left[ 2\reg \langle \nabla g_t, \, \nabla \Delta g_t \rangle 
        + 2\reg\langle \nabla^2 f_t, \nabla^2 g_t \rangle_{\HS} 
        - 2\reg \left\langle \nabla f_t, \, (\nabla^2 g_t) \, \nabla f_t \right\rangle
        \right]} \notag \\
        &\qquad \purple{\,+\, \E_{\rho_t^Z}\left[2\langle \nabla g_t^Y, \nabla_y \Delta_x V \rangle \,-\, 2\langle \nabla g_t^X, \nabla_x \Delta_y V \rangle\right].} \label{Eq:ThirdTerm-1}      
    \end{align}

    \paragraph{Combining the terms.}
    Combining the calculations in~\eqref{Eq:FirstTerm-1},~\eqref{Eq:SecondTerm-1}, and~\eqref{Eq:ThirdTerm-1} below,
    we see that the terms involving $\langle \nabla g_t^X, \Delta_y \nabla_x V \rangle$, $\langle \nabla g_t^Y, \Delta_x \nabla_y V \rangle$, 
    and $\left\langle \nabla f_t, \, (\nabla^2 g_t) \, \nabla f_t \right\rangle$ cancel.
    Then we can rearrange the remaining terms to get:
    \begin{align*}
        \frac{d}{dt} \FI(\rho_t^Z \,\|\, \nu_t^Z)
        &= \blue{\frac{d}{dt} \E_{\rho_t^Z}\left[\|\nabla f_t\|^2 \right]} 
        + \red{\frac{d}{dt} \E_{\rho_t^Z}\left[\|\nabla g_t\|^2 \right]}
        +
        \purple{\frac{d}{dt} \left(-2 \E_{\rho_t^Z}\left[\Delta g_t \right] \right)} \\
        &= \blue{-2\reg \, \E_{\rho_t^Z}\left[\left\|\nabla^2 f_t \right\|^2_{\HS} \right] 
        +2\reg \, \E_{\rho_t^Z}\left[\langle \nabla f_t, (\nabla^2 g_t) \, \nabla f_t \rangle  \right] 
        - 2\reg \, \E_{\rho_t^Z}\left[\langle \nabla f_t, \, \nabla \Delta g_t \rangle  \right]} \\
        &\qquad \red{\,+\, \E_{\rho_t^Z}\left[2\reg \, \langle \nabla f_t-\nabla g_t, (\nabla^2 g_t) \nabla g_t \rangle + 2\langle \nabla g_t^X, \Delta_y \nabla_x V \rangle - 2\langle \nabla g_t^Y, \Delta_x \nabla_y V \rangle \right]} \\
        &\qquad \purple{\,+\, \E_{\rho_t^Z}\left[ 2\reg \langle \nabla g_t, \, \nabla \Delta g_t \rangle 
        + 2\reg\langle \nabla^2 f_t, \nabla^2 g_t \rangle_{\HS} 
        - 2\reg \left\langle \nabla f_t, \, (\nabla^2 g_t) \, \nabla f_t \right\rangle
        \right]} \\
        &\qquad \purple{\,+\, \E_{\rho_t^Z}\left[2\langle \nabla g_t^Y, \nabla_y \Delta_x V \rangle \,-\, 2\langle \nabla g_t^X, \nabla_x \Delta_y V \rangle\right]} \\
        &= \blue{-2\reg \, \E_{\rho_t^Z}\left[\left\|\nabla^2 f_t \right\|^2_{\HS} \right] 
        - 2\reg \, \E_{\rho_t^Z}\left[\langle \nabla f_t, \, \nabla \Delta g_t \rangle  \right]} \\
        &\qquad \red{\,+\, \E_{\rho_t^Z}\left[2\reg \, \langle \nabla f_t-\nabla g_t, (\nabla^2 g_t) \nabla g_t \rangle \right]} \\
        &\qquad \purple{\,+\, \E_{\rho_t^Z}\left[ 2\reg \langle \nabla g_t, \, \nabla \Delta g_t \rangle 
        + 2\reg\langle \nabla^2 f_t, \nabla^2 g_t \rangle_{\HS}
        \right]}.
    \end{align*}
    We then complete the squares to form the first two terms below, which are the good terms that we want, and collect the rest in a remainder term:
    \begin{align*}
        \frac{d}{dt} \FI(\rho_t^Z \,\|\, \nu_t^Z)
        &= -2\reg \, \E_{\rho_t^Z}\left[\left\|\nabla^2 f_t-\nabla^2 g_t \right\|^2_{\HS} \right] - 2\reg \, \E_{\rho_t^Z}\left[\left\|\nabla f_t-\nabla g_t\right\|^2_{\nabla^2 g_t} \right] + 2\reg R(t)
    \end{align*}
    where we define the remainder term $R(t)$ as the difference of the two sides above (scaled by $2\reg$):
    \begin{align*}
        R(t) &:= -\E_{\rho_t^Z}\left[\langle \nabla^2 f_t, \nabla^2 g_t \rangle_{\HS}
        \right] + \E_{\rho_t^Z}\left[\left\|\nabla^2 g_t \right\|^2_{\HS} \right] \\
        &\qquad - \E_{\rho_t^Z}\left[\langle \nabla f_t, (\nabla^2 g_t) \nabla g_t \rangle \right] + \E_{\rho_t^Z}\left[\langle \nabla f_t, (\nabla^2 g_t) \nabla f_t \rangle \right]   \\
        &\qquad - \E_{\rho_t^Z}\left[\langle \nabla f_t, \, \nabla \Delta g_t \rangle  \right] + \E_{\rho_t^Z}\left[\langle \nabla g_t, \, \nabla \Delta g_t \rangle \right]
    \end{align*}
    We claim this remainder term is identically zero.
    Indeed, by relation~\eqref{Eq:Rel1} and integration by parts, we have:
    \begin{align*}
        \E_{\rho_t^Z} \left[\langle \nabla f_t, (\nabla^2 g_t) \nabla f_t \rangle \right]
        &= -\int_{\R^{2d}} \langle \nabla \rho_t^Z, \, (\nabla^2 g_t) \, \nabla f_t \rangle \, dz \\
        &= \int_{\R^{2d}} \rho_t^Z \left( \langle \nabla \Delta g_t, \nabla f_t \rangle + \langle \nabla^2 g_t, \nabla^2 f_t \rangle_{\HS} \right) \, dz \\
        &= \E_{\rho_t^Z}\left[\langle \nabla \Delta g_t, \nabla f_t \rangle + \langle \nabla^2 g_t, \nabla^2 f_t \rangle_{\HS} \right].
    \end{align*}
    where in the above we have used the identity $\nabla \cdot \nabla^2 g_t = \nabla \Delta g_t$.
    Similarly:
    \begin{align*}
        \E_{\rho_t^Z} \left[\langle \nabla f_t, (\nabla^2 g_t) \nabla g_t \rangle \right]
        &= -\int_{\R^{2d}} \langle \nabla \rho_t^Z, \, (\nabla^2 g_t) \, \nabla g_t \rangle \, dz \\
        &= \int_{\R^{2d}} \rho_t^Z \left( \langle \nabla \Delta g_t, \nabla g_t \rangle + \langle \nabla^2 g_t, \nabla^2 g_t \rangle_{\HS} \right) \, dz \\
        &= \E_{\rho_t^Z}\left[\langle \nabla \Delta g_t, \nabla g_t \rangle + \left\| \nabla^2 g_t \right\|^2_{\HS} \right].
    \end{align*}
    Therefore, we see that all the terms in the remainder term cancel, so indeed $R(t) = 0$.
    Thus, we have shown the identity:
    \begin{align*}
        \frac{d}{dt} \FI(\rho_t^Z \,\|\, \nu_t^Z)
        &= -2\reg \, \E_{\rho_t^Z}\left[\left\|\nabla^2 f_t-\nabla^2 g_t \right\|^2_{\HS} \right] - 2\reg \, \E_{\rho_t^Z} \left[\left\|\nabla f_t - \nabla g_t \right\|^2_{\nabla^2 g_t}\right].
    \end{align*}
\end{proof}

\subsection{Bound on the Second Moment Along the Mean-Field Dynamics}
\label{Sec:BoundSecondMomentMF}

We show that along the mean-field dynamics~\eqref{Eq:MFSystem}, the distributions remain in $\P(\R^{2dN})$.

\begin{lemma}\label{Lem:BoundMomentParticleMF}
    Assume Assumption~\ref{As:SCSmooth}.
    Suppose $\bar Z_t \sim \bar \rho_t^{Z}$ evolves following the mean-field min-max Langevin dynamics~\eqref{Eq:MFSystem} in $\R^{2d}$ from $\bar Z_0 \sim \bar \rho_0^{Z} \in \P(\R^{2d})$.
    Then $\bar \rho_t^{Z} \in \P(\R^{2d})$ for all $t \ge 0$.
\end{lemma}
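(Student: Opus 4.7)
The plan is to establish the two defining properties of $\P(\R^{2d})$ separately: finite second moment and absolute continuity with respect to the Lebesgue measure. Since $(\bar X_t, \bar Y_t)$ remain independent under the mean-field evolution, I will work either with the joint SDE~\eqref{Eq:MFJoint} or directly with the component form~\eqref{Eq:MFSystem}, whichever is convenient for the step at hand.

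For the second moment, I would apply It\^o's formula to $\|\bar Z_t\|^2 = \|\bar X_t\|^2 + \|\bar Y_t\|^2$ and take expectations to obtain
\begin{align*}
    \frac{d}{dt}\E[\|\bar Z_t\|^2]
    = -2\,\E_{\bar \rho_t^Z}[\langle X, \nabla_x V(X,Y)\rangle]
    + 2\,\E_{\bar \rho_t^Z}[\langle Y, \nabla_y V(X,Y)\rangle]
    + 4\epsilon d,
\end{align*}
where $(X,Y) \sim \bar \rho_t^Z = \bar \rho_t^X \otimes \bar \rho_t^Y$ and $4\epsilon d$ is the It\^o correction in dimension $2d$. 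By the $L$-Lipschitz bound on $\nabla V$ from Assumption~\ref{As:SCSmooth}, one has $\|\nabla V(x,y)\| \le \|\nabla V(0,0)\| + L\sqrt{\|x\|^2 + \|y\|^2}$, and Cauchy--Schwarz together with Young's inequality yields a linear differential inequality
\[
    \frac{d}{dt}\E[\|\bar Z_t\|^2] \,\le\, C_1\, \E[\|\bar Z_t\|^2] + C_2
\]
for explicit constants $C_1, C_2 > 0$ depending only on $L$, $\|\nabla V(0,0)\|$, $\epsilon$, and $d$. Gr\"onwall's inequality then gives $\E[\|\bar Z_t\|^2] \le \bigl(\E[\|\bar Z_0\|^2] + C_2 t\bigr) e^{C_1 t} < \infty$ for every fixed $t \ge 0$, using the hypothesis $\bar \rho_0^Z \in \P(\R^{2d})$ to initialize.

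For absolute continuity, I would argue via the marginal Fokker--Planck equations. Each marginal evolves as $d\bar X_t = b_t^X(\bar X_t)\,dt + \sqrt{2\epsilon}\,dW_t^X$ with a time-inhomogeneous drift $b_t^X(x) = -\E_{\bar \rho_t^Y}[\nabla_x V(x,Y)]$ that is globally $L$-Lipschitz in $x$ uniformly in $t$ (by Assumption~\ref{As:SCSmooth}), together with a nondegenerate constant diffusion $\sqrt{2\epsilon}\,I$. Standard parabolic regularity for the associated Kolmogorov forward equation shows the transition kernel admits a smooth positive density with respect to the Lebesgue measure, so $\bar \rho_t^X$ is absolutely continuous whenever $\bar \rho_0^X$ is, and similarly for $\bar \rho_t^Y$. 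By independence, $\bar \rho_t^Z = \bar \rho_t^X \otimes \bar \rho_t^Y$ is then absolutely continuous on $\R^{2d}$.

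The main subtlety is the circular dependence between the SDE and the law $\bar \rho_t^Y$ entering its own drift. I would resolve this by first invoking the well-posedness of the mean-field Langevin dynamics with a globally Lipschitz interaction kernel (classical, by Picard iteration in the $W_2$ metric), which supplies a measurable curve $(\bar \rho_t^X, \bar \rho_t^Y)_{t \ge 0}$; one may then treat each marginal as driven by a genuinely time-inhomogeneous but deterministic Lipschitz drift, to which the standard It\^o and Fokker--Planck arguments apply without further complication.
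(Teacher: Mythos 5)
Your proof is correct, but it takes a different route from the paper for the key step. For the second moment, you only use the $L$-Lipschitz growth of $\nabla V$, which yields the linear differential inequality $\frac{d}{dt}\E[\|\bar Z_t\|^2] \le C_1 \E[\|\bar Z_t\|^2] + C_2$ and, via Gr\"onwall, a bound that grows exponentially in $t$ but is finite for every fixed $t$ --- enough for the statement. The paper instead exploits the strong convexity--concavity: it centers the computation at the unique saddle point $z^*$ (where $\nabla V(z^*)=0$) and uses the $\alpha$-strong monotonicity of $-b^Z$ (Lemma~\ref{Lem:LipschitzMonotone_bz} with $N=1$) to get $\frac{d}{dt}\E[\|\bar Z_t - z^*\|^2] \le -2\alpha\,\E[\|\bar Z_t - z^*\|^2] + 4\epsilon d$, hence a bound on $\E[\|\bar Z_t - z^*\|^2]$ that is \emph{uniform in $t$} (of order $\E[\|\bar Z_0 - z^*\|^2] + 2\epsilon d/\alpha$). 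So your argument is more elementary and would survive without the strong convexity assumption, while the paper's buys a time-uniform moment estimate consistent with the rest of its quantitative analysis; for the lemma as stated, either suffices. Your treatment of absolute continuity (nondegenerate diffusion plus Lipschitz time-inhomogeneous drift, parabolic regularity of the forward equation) and your remark on well-posedness of the mean-field law via Picard iteration in $W_2$ are in fact more careful than the paper, which simply asserts absolute continuity ``by virtue of the Brownian motion component.'' The only technical point you gloss over, taking expectations after It\^o's formula (i.e., discarding the local-martingale term), needs a routine localization/stopping-time argument, but this is at the same level of informality as the paper's own integration by parts, so it is not a genuine gap.
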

\begin{proof}
    We note that $\bar \rho_t^{Z}$ is absolutely continuous with respect to the Lebesgue measure on $\R^{2d}$ by virtue of the Brownian motion component in the dynamics~\eqref{Eq:MFSystem}.
    We now show that the second moment of $\bar \rho_t^{Z}$ remains finite for all $t \ge 0$.

    Under Assumption~\ref{As:SCSmooth}, recall from Theorem~\ref{Thm:DetMinMaxGF} (in Section~\ref{Sec:ReviewDeterministic}) that there exists a unique equilibrium point $z^* = (x^*,y^*) \in \R^{2d}$ which satisfies $\nabla V(z^*) = 0$.
    Define the vector field $b^Z \colon \R^{2d} \to \R^{2d}$ by, for all $z = (x,y) \in \R^{2d}$:
    \begin{align*}
        b^Z(x,y) = \begin{pmatrix}
            -\nabla_x V(x,y) \\
            \nabla_y V(x,y)
        \end{pmatrix}.
    \end{align*}
    Observe that $b^Z$ is the case $N=1$ of the vector field $b^{\bZ}$ that we defined in~\eqref{Eq:Defbz} (in $\R^{2dN} = \R^{2d}$).
    Note $b^Z(z^*) = 0$, since $\nabla_x V(z^*) = \nabla_y V(z^*) = 0$. Lemma~\ref{Lem:LipschitzMonotone_bz} in the next section guarantees that $-b^Z$ is $\alpha$-strongly monotone.
    
    Since $\bar Z_t = (\bar X_t, \bar Y_t)$ evolves following the mean-field dynamics~\eqref{Eq:MFSystem}, its density $\bar \rho_t^{Z} = \bar \rho_t^Z \otimes \bar \rho_t^Y$ evolves following the Fokker-Planck equations:
    \begin{align*}
        \part{\bar \rho_t^{X}}{t} &= \nabla \cdot \left(\bar \rho_t^{X} \, \E_{\bar \rho_t^Y}[\nabla_x V(\cdot, \bar Y_t)] \right) + \reg \, \Delta \bar \rho_t^{X} \\
        \part{\bar \rho_t^{Y}}{t} &= -\nabla \cdot \left(\bar \rho_t^{Y} \, \E_{\bar \rho_t^X}[\nabla_y V(\bar X_t, \cdot)] \right) + \reg \, \Delta \bar \rho_t^{Y}.
    \end{align*}
    Then we can compute, using integration by parts:
    \begin{align*}
        &\frac{d}{dt} \E_{\bar \rho_t^{X}}\left[\left\|\bar X_t-x^*\right\|^2\right] \\
        &= \int_{\R^{d}} \part{\bar \rho_t^{X}(x)}{t} \, \|x-x^*\|^2 \, dx \\
        &= \int_{\R^{d}} \left(\nabla \cdot \left(\bar \rho_t^{X} \, \E_{\bar \rho_t^Y}[\nabla_x V(\cdot, \bar Y_t)]\right)(x) + \reg \, \Delta \bar \rho_t^{X}(x)\right) \|x-x^*\|^2 \, dx \\
        &= -\int_{\R^{d}} \bar \rho_t^{X}(x) \left\langle \E_{\bar \rho_t^Y}[\nabla_x V(x, \bar Y_t)], \nabla \left(\|x-x^*\|^2\right) \right \rangle dx + \reg \int_{\R^{d}} \bar \rho_t^{X}(x) \Delta \left(\|x-x^*\|^2\right) dx \\
        &= -2 \int_{\R^{d}} \bar \rho_t^{X}(x) \left\langle \E_{\bar \rho_t^Y}[\nabla_x V(x, \bar Y_t)], x-x^* \right \rangle \, dx + 2\reg d \\
        &= -2 \E_{\bar \rho_t^X}\left[ \left\langle \E_{\bar \rho_t^Y}[\nabla_x V(\bar X_t, \bar Y_t)], X_t-x^* \right \rangle \right] + 2\reg d \\
        &= -2 \E_{\bar \rho_t^Z}\left[ \left\langle \nabla_x V(\bar X_t, \bar Y_t), \bar X_t-x^* \right \rangle \right] + 2\reg d.
    \end{align*}
    Similarly, we can also compute:
    \begin{align*}
        \frac{d}{dt} \E_{\bar \rho_t^{Y}}\left[\left\|\bar Y_t-y^*\right\|^2\right] 
        &= 2 \E_{\bar \rho_t^Z}\left[ \left\langle \nabla_y V(\bar X_t, \bar Y_t), \bar Y_t-y^* \right \rangle \right] + 2\reg d.
    \end{align*}
    Adding the two identities above gives:
    \begin{align*}
        \frac{d}{dt} \E_{\bar \rho_t^{Z}}\left[\left\|\bar Z_t-z^*\right\|^2\right] 
        &= -2 \E_{\bar \rho_t^Z}\left[ \left\langle \nabla_x V(\bar X_t, \bar Y_t), \bar X_t-x^* \right \rangle \right]
        +2 \E_{\bar \rho_t^Z}\left[ \left\langle \nabla_y V(\bar X_t, \bar Y_t), \bar Y_t-y^* \right \rangle \right] + 4\reg d \\
        &= 2\E_{\bar \rho_t^Z}\left[ \left\langle b^Z(\bar Z_t), \bar Z_t-z^* \right \rangle \right] + 4\reg d \\
        &= 2\E_{\bar \rho_t^Z}\left[ \left\langle b^Z(\bar Z_t) - b^Z(z^*), \bar Z_t-z^* \right \rangle \right] + 4\reg d \\
        &\le -2\alpha \, \E_{\bar \rho_t^Z}\left[ \left\| \bar Z_t-z^* \right \|^2 \right] + 4\reg d
    \end{align*}
    where the inequality follows from the property that $-b^Z$ is $\alpha$-strongly monotone.
    We can write the differential inequality above equivalently as:
    \begin{align*}
        \frac{d}{dt} \left(e^{2\alpha t} \, \E_{\bar \rho_t^Z}\left[ \left\| \bar Z_t-z^* \right \|^2 \right] \right)
        \le e^{2\alpha t} \, 4 \reg d.
    \end{align*}
    Integrating from $0$ to $t$ and rearranging the result gives:
    \begin{align*}
        \E_{\bar \rho_t^Z}\left[ \left\| \bar Z_t-z^* \right \|^2 \right]
        &\le e^{-2\alpha t} \, \E_{\bar \rho_0^Z}\left[ \left\| \bar Z_0-z^* \right \|^2 \right] + \frac{(1-e^{-2\alpha t})}{\alpha} \, 2\reg d \\
        &\le \E_{\bar \rho_0^Z}\left[ \left\| \bar Z_0-z^* \right \|^2 \right] + \frac{2\reg d}{\alpha}.
    \end{align*}
    Since $\bar \rho_0^{Z} \in \P(\R^{2d})$, $\E_{\bar \rho_0^{Z}}\left[\left\|\bar Z_0\right\|^2\right] < \infty$, so $\E_{\bar \rho_0^{Z}}\left[\left\|\bar Z_0-z^*\right\|^2\right] \le 2\E_{\bar \rho_0^{Z}}\left[\left\|\bar Z_0\right\|^2\right] + 2\|z^*\|^2 < \infty$.
    Therefore, we also have for all $t \ge 0$:
    \begin{align*}
        \E_{\bar \rho_t^{Z}}\left[\left\|\bar Z_t\right\|^2\right]
        &\le 2 \, \E_{\bar \rho_t^Z}\left[ \left\| \bar Z_t-z^* \right \|^2 \right] + 2\|z^*\|^2 \\
        &\le 2 \, \E_{\bar \rho_0^Z}\left[ \left\| \bar Z_0-z^* \right \|^2 \right] + \frac{4\reg d}{\alpha} + 2\|z^*\|^2 
        \,<\, \infty
    \end{align*}
    which shows that $\bar \rho_t^{Z} \in \P(\R^{2d})$.
\end{proof}

\subsection{Proof of Theorem~\ref{Thm:ConvMF} (Convergence of the Mean-Field Min-Max Langevin Dynamics)}
\label{Sec:ConvMFProof}

\begin{proof}[Proof of Theorem~\ref{Thm:ConvMF}]
By Assumption~\ref{As:SCSmooth}, both $\bar \nu_t^X$ and $\bar \nu_t^Y$ are $(\alpha/\reg)$-strongly log-concave, which can be directly verified from the form of their log-density functions.
Then $\bar \nu_t^Z = \bar \nu_t^X \otimes \bar \nu_t^Y$ is also $(\alpha/\reg)$-SLC, i.e., 
$-\nabla^2 \log \bar \nu_t^Z \succeq (\alpha/\reg) I.$
Then by the identity from Lemma~\ref{Lem:ddtFI_MF}, we have:
\begin{align*}
    \frac{d}{dt} \FI(\bar \rho_t^Z \,\|\, \bar \nu_t^Z)
    &= -2\reg \, \E_{\bar \rho_t^Z}\left[\left\|\nabla^2 \log \frac{\bar \rho_t^Z}{\bar \nu_t^Z} \right\|^2_{\HS} \right] - 2\reg \, \E_{\bar \rho_t^Z} \left[\left\|\nabla \log \frac{\bar \rho_t^Z}{\bar \nu_t^Z} \right\|^2_{(-\nabla^2 \log \bar \nu_t^Z)}\right] \\
    &\le - 2\reg \, \E_{\bar \rho_t^Z} \left[\left\|\nabla \log \frac{\bar \rho_t^Z}{\bar \nu_t^Z} \right\|^2_{(-\nabla^2 \log \bar \nu_t^Z)}\right] \\
    &\le -2\alpha \, \E_{\bar \rho_t^Z} \left[\left\|\nabla \log \frac{\bar \rho_t^Z}{\bar \nu_t^Z} \right\|^2 \right] \\
    &= -2\alpha \, \FI(\bar \rho_t^Z \,\|\, \bar \nu_t^Z).
\end{align*}
In the first inequality above, we drop the first term in the identity, which is the second-order relative Fisher information.
In the second inequality, we use the property that $\bar \nu_t^Z$ is $(\alpha/\reg)$-SLC.
In the next step, we recognize the term in the previous line as also equal to the relative Fisher information.
Integrating the differential inequality above implies the desired convergence rate:
$$\FI(\bar \rho_t^Z \,\|\, \bar \nu_t^Z) \le e^{-2\alpha t} \, \FI(\bar \rho_0^Z \,\|\, \bar \nu_0^Z).$$

Since $\bar \nu_t^Z$ is $(\alpha/\reg)$-SLC, it also satisfies $(\alpha/\reg)$-LSI. 
Then by combining with the convergence rate for relative Fisher information above, we get:
\begin{align*}
    \KL(\bar \rho_t^Z \,\|\, \bar \nu_t^Z) \le \frac{\reg}{2\alpha} \, \FI(\bar \rho_t^Z \,\|\, \bar \nu_t^Z)
    \le \frac{\reg}{2\alpha} \, e^{-2\alpha t} \, \FI(\bar \rho_0^Z \,\|\, \bar \nu_0^Z).
\end{align*}
Multiplying both sides by $\reg$ yields the desired convergence rate in duality gap.
\end{proof}

\subsubsection{Bound on Relative Fisher Information to the Best-Response Distribution}
\label{Sec:BoundFIBestResponse}

We provide the following bound on the initial Fisher information $\FI(\bar \rho_0^Z \,\|\, \bar \nu_0^Z)$ when $\bar \rho_0^Z$ is Gaussian and $\bar \nu_0^Z$ is its best-response distribution.
Under Assumption~\ref{As:SCSmooth}, recall from Theorem~\ref{Thm:DetMinMaxGF} (in Section~\ref{Sec:ReviewDeterministic}) that there exists a unique equilibrium point $z^* = (x^*,y^*) \in \R^{2d}$ which satisfies $\nabla V(z^*) = 0$,
and that $\nabla V$ is $L$-Lipschitz.

\begin{lemma}\label{Lem:InitialGaussianFI} Assume Assumption \ref{As:SCSmooth}. Let $\bar \rho_0^X = \bar \rho_0^Y = \mathcal{N}(0,\frac{\reg^2}{L^2}I_d)$, and $\bar \rho_0^Z = \bar \rho_0^X \otimes \bar \rho_0^Y$. Then:
\begin{align*}    
    \FI(\bar \rho_0^Z \,\|\, \bar \nu_0^Z) \leq 2d \left(1 + \frac{L^2}{\reg^2} \right) + \frac{L^2 \|z^*\|^2}{\reg^2}.
\end{align*}
\end{lemma}
\begin{proof}
    Recall that the best-response distributions are defined by: 
    \begin{align*}        
        \bar \nu_0^X &\propto \exp\left(-\reg^{-1}\E_{\rho_0^Y }[V(x, Y)] \right), \\
        \bar \nu_0^Y &\propto \exp\left(\reg^{-1}\E_{\bar \rho_0^X }[V(X, y)] \right),
    \end{align*}
    and $\bar \nu_0^Z = \bar \nu_0^X \otimes \bar \nu_0^Y$.
    Since both $\bar \rho_0^Z$ and $\bar \nu_0^Z$ are product distributions,
    $$\FI(\bar \rho_0^Z \,\|\, \bar \nu_0^Z) = \FI(\rho^X \,\|\, \bar \nu^X) + \FI(\rho^Y \,\|\, \bar \nu^Y).$$

    Define $g^X = -\log \bar \nu_0^X$, so $\nabla g^X(x) = \reg^{-1} \E_{\bar \rho_0^Y}[\nabla_x V(x, Y)]$
    and $\Delta g^X(x) = \reg^{-1} \E_{\bar \rho_0^Y}[\Delta_x V(x,\bar Y)]$.
    Since we assume $V(x,y)$ is $\alpha$-strongly convex in $x$, we have $\Delta g^X(x) \ge \alpha d/\reg \ge 0$ for all $x \in \R^d$.
    Note also that for $\bar \rho_0^X = \N(0, \frac{\reg^2}{L^2}I)$ on $\R^d$, we have
    $$\E_{\bar \rho_0^X}\left[\left\| \nabla \log \bar \rho_0^X \right\|^2\right] = \frac{L^4}{\reg^4} \E_{\bar \rho_0^X}\left[\left\|X\right\|^2\right] = \frac{L^2d}{\reg^2}.$$
    Then by expanding the square and using integration by parts, we can write:
    \begin{align*}
        \FI(\bar \rho_0^X \,\|\, \bar \nu_0^X)
        &= \E_{\bar \rho_0^X}\left[\left\| \nabla \log \bar \rho_0^X + \nabla g^X \right\|^2\right] \\
        &= \E_{\bar \rho_0^X}\left[\left\| \nabla \log \bar \rho_0^X \right\|^2\right] -2 \, \E_{\bar \rho_0^X}\left[\Delta g^X\right] + \E_{\bar \rho_0^X}\left[\left\|\nabla g^X \right\|^2\right] \\
        &\le \frac{L^2d}{\reg^2} + \E_{X \sim \bar \rho_0^X}\left[\left\|\reg^{-1} \E_{Y \sim \bar \rho_0^Y}[\nabla_x V(X, Y)] \right\|^2\right] \\
        &\le \frac{L^2d}{\reg^2} + \frac{1}{\reg^2} \E_{\bar \rho_0^X \otimes \bar \rho_0^Y}\left[\left\|\nabla_x V(X, Y) \right\|^2\right].
    \end{align*}
    By an identical argument, we can similarly show:
    \begin{align*}
        \FI(\bar \rho_0^Y \,\|\, \bar \nu_0^Y)
        &\le \frac{L^2d}{\reg^2} + \frac{1}{\reg^2} \E_{\bar \rho_0^X \otimes \bar \rho_0^Y}\left[\left\|\nabla_y V(X, Y) \right\|^2\right].
    \end{align*}
    Therefore,
    \begin{align*}
        \FI(\bar \rho_0^Z \,\|\, \bar \nu_0^Z)
        &\le \frac{2L^2d}{\reg^2} + \frac{1}{\reg^2} \E_{\bar \rho_0^X \otimes \bar \rho_0^Y}\left[\left\|\nabla_x V(X, Y) \right\|^2 + \left\|\nabla_y V(X, Y) \right\|^2\right] \\
        &= \frac{2Ld}{\reg} + \frac{1}{\reg^2} \E_{\bar \rho_0^Z}\left[\left\|\nabla V(Z) \right\|^2\right].
    \end{align*}
    Using $\nabla V(z^*) = 0$ and $\nabla V$ is $L$-Lipschitz, and since $\bar \rho_0^Z = \N(0, \frac{\reg^2}{L^2} I_{2d})$, we can bound:
    \begin{align*}
        \E_{\bar \rho_0^Z}\left[\left\|\nabla V(Z) \right\|^2\right]
        &= \E_{\bar \rho_0^Z}\left[\left\|\nabla V(Z) - \nabla V(z^*) \right\|^2\right] 
        \le L^2\E_{\bar \rho_0^Z}\left[\left\|Z - z^* \right\|^2\right] 
        = 2\reg^2 d + L^2 \|z^*\|^2.
    \end{align*}
    Plugging this in to our earlier calculation above, we obtain:
    \begin{align*}
        \FI(\bar \rho_0^Z \,\|\, \bar \nu_0^Z)
        &\le \frac{2L^2d}{\reg^2} + \frac{1}{\reg^2} \left(2\reg^2 d + L^2 \|z^*\|^2 \right) 
        = 2d \left(1 + \frac{L^2}{\reg^2} \right) + \frac{L^2 \|z^*\|^2}{\reg^2}.
    \end{align*}
\end{proof}

\section{Technical Lemmas for the Finite-Particle Analysis}
\label{Sec:ParticleLemmas}

We collect some lemmas for our analysis of the finite-particle dynamics and algorithm.
Here we consider a particle discretization of the mean-field dynamics with $N$ particles.
This means we work with a joint vector $\bz = (\bx,\by) = (x^1,\dots,x^N,y^1,\dots,y^N) \in \R^{2dN}$, where each $x^i, y^j \in \R^d$.

We recall some definitions.
Recall $\bar \nu^Z = \bar \nu^X \otimes \bar \nu^Y \in \P(\R^{2d})$ is the stationary distribution of the mean-field dynamics~\eqref{Eq:MFSystem} in $\R^{2d}$.
Let $\Var_{\bar \nu^Z}(\bar Z) = \E_{\bar \nu^Z}[\|\bar Z - \E_{\bar \nu^Z}[\bar Z]\|^2]$ be the variance of $\bar \nu^Z$.
Recall under Assumption~\ref{As:SCSmooth}, $\bar \nu^Z$ is $(\alpha/\reg)$-SLC, so we can bound the variance by $\Var_{\bar \nu^Z}(\bar Z) \le \frac{2\reg d}{\alpha}$ (see Lemma~\ref{Lem:BoundVar} in Section~\ref{Sec:BoundInitFI}), but in our computations below, we will keep it as the variance.

Recall we defined in~\eqref{Eq:TensorMF} the tensorized power of the stationary mean-field distribution: 
$$\bar \nu^{\bZ} = (\bar \nu^X)^{\otimes N} \otimes (\bar \nu^Y)^{\otimes N} \in \P(\R^{2dN}).$$
Recall we defined in~\eqref{Eq:Defbz} the vector field $b^{\bZ} \colon \R^{2dN} \to \R^{2dN}$, which is the drift term in the finite-particle dynamics~\eqref{Eq:ParticleSystem3} and the finite-particle algorithm~\eqref{Eq:ParticleAlgorithm3}.

Let us also define a vector field $\bar b^{\bZ} \colon \R^{2dN} \to \R^{2dN}$ by, for all $\bz = (\bx,\by) \in \R^{2dN}$:
\begin{align}\label{Eq:Defbzbar}
    \bar b^{\bZ}(\bz) = \begin{pmatrix}
        \bar b^{\bX}(\bx) \\
        \bar b^{\bY}(\by)
    \end{pmatrix}
\end{align}
where we define the vector fields $\bar b^{\bX} \colon \R^{dN} \to \R^{dN}$ and $\bar b^{\bY} \colon \R^{dN} \to \R^{dN}$ by, for all $\bx = (x^1,\dots,x^N) \in \R^{dN}$ and $\by = (y^1,\dots,y^N) \in \R^{dN}$:
\begin{align*}
    \bar b^{\bX}(\bx) &= 
    \begin{pmatrix}
        \bar b^X(x^1) \\
        \cdots \\
        \bar b^X(x^N)
    \end{pmatrix}
    := \begin{pmatrix}
        \E_{\bar \nu^Y}[-\nabla_x V(x^1, \bar Y)] \\
        \cdots \\
        \E_{\bar \nu^Y}[-\nabla_x V(x^N, \bar Y)]
    \end{pmatrix}, \\
    \bar b^{\bY}(\by) &= 
    \begin{pmatrix}
        \bar b^Y(y^1) \\
        \cdots \\
        \bar b^Y(y^N)
    \end{pmatrix}
    := \begin{pmatrix}
        \E_{\bar \nu^X}[\nabla_y V(\bar X, y^1)] \\
        \cdots \\
        \E_{\bar \nu^X}[\nabla_y V(\bar X, y^N)]
    \end{pmatrix}.
\end{align*}
By the definition of $\bar \nu^Z$ as the stationary distribution of the mean-field dynamics~\eqref{Eq:MFSystem}, so it is a fixed point for the best-response distribution, we observe that the vector field $\bar b^{\bZ}$ can be written as a scaled version of the score function of the tensorized stationary mean-field distribution $\bar \nu^{\bZ}$:
\begin{align}\label{Eq:TensorMFScore}
    \bar b^{\bZ}(\bz) = \reg \nabla \log \bar \nu^{\bZ}(\bz).    
\end{align}

Finally, we also define the following stochastic process for $\bar \bZ_t \in \R^{2dN}$ with drift term $\bar b^{\bZ}$:
\begin{align}\label{Eq:MFSystemTensor}
    d\bar \bZ_t = \bar b^{\bZ}(\bar \bZ_t) \, dt + \sqrt{2\reg} \, dW_t^{\bZ}
\end{align}
where $W_t^{\bZ}$ is the standard Brownian motion in $\R^{2dN}$.
Observe that since $\bar b^{\bZ} = \reg \nabla \log \bar \nu^{\bZ}$,
the stationary distribution of the process~\eqref{Eq:MFSystemTensor} is equal to $\bar \nu^{\bZ}$.
Therefore, we call the process~\eqref{Eq:MFSystemTensor} the \textit{tensorized mean-field dynamics}, since we can obtain it by replicating the base mean-field dynamics~\eqref{Eq:MFSystem} from $\R^{2d}$ for $N$ times.
We will use the process~\eqref{Eq:MFSystemTensor} to compare the finite-particle dynamics~\eqref{Eq:ParticleSystem3} and algorithm~\eqref{Eq:ParticleAlgorithm3}.

\subsection{Properties of the Vector Fields}
\label{Sec:PropertiesVectorFields}

\subsubsection{Properties of the finite-particle vector field}

\begin{lemma}\label{Lem:LipschitzMonotone_bz}
    Assume Assumption~\ref{As:SCSmooth}.
    Then:
    \begin{enumerate}
        \item The vector field $b^{\bZ}$ defined in~\eqref{Eq:Defbz} is $(2L)$-Lipschitz, which means for all $\bz, \bar \bz \in \R^{2dN}$:
        $$\left\| b^{\bZ}(\bz) - b^{\bZ}(\bar \bz) \right\| \le 2L \|\bz-\bar \bz\|.$$    
        \item Furthermore, $-b^{\bZ}$ is $\alpha$-strongly monotone, which means for all $\bz, \bar \bz \in \R^{2dN}$:
        $$\left\langle b^{\bZ}(\bz) - b^{\bZ}(\bar \bz), \bz-\bar \bz \right\rangle \le -\alpha \|\bz-\bar \bz\|^2.$$
    \end{enumerate}
\end{lemma}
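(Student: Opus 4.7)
\textbf{Proof proposal for Lemma~\ref{Lem:LipschitzMonotone_bz}.}
The plan is to reduce both claims to pointwise properties of the base interaction function $V$, by unpacking the definition~\eqref{Eq:Defbz}--\eqref{Eq:Defbxby} of $b^{\bZ}$ into sums over pairs $(i,j) \in [N]^2$ and summing the resulting bounds.

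For Lipschitzness, I would fix any $\bz = (\bx,\by)$ and $\bar\bz = (\bar\bx,\bar\by)$ in $\R^{2dN}$ and, for each $i \in [N]$, bound $\|b^X(x^i,\by) - b^X(\bar x^i, \bar\by)\|$ using the fact (Assumption~\ref{As:SCSmooth}) that $(x,y) \mapsto \nabla V(x,y)$ is $L$-Lipschitz on $\R^{2d}$. Applying this to each summand inside $b^X$, followed by Jensen's inequality on the average $\frac{1}{N}\sum_j$, gives
\[
\|b^X(x^i,\by) - b^X(\bar x^i,\bar\by)\|^2 \le L^2 \|x^i-\bar x^i\|^2 + \frac{L^2}{N}\sum_{j \in [N]} \|y^j-\bar y^j\|^2.
\]
Summing over $i$ controls $\|b^{\bX}(\bx,\by)-b^{\bX}(\bar\bx,\bar\by)\|^2$ by $L^2 \|\bx-\bar\bx\|^2 + L^2 \|\by-\bar\by\|^2$, and the analogous bound for $b^{\bY}$ adds another copy. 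Combining them and taking square roots yields the $(2L)$-Lipschitz bound (in fact the stronger $\sqrt{2}L$ works, but $2L$ suffices for later use).

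For strong monotonicity, the key idea is to expand
\[
\langle b^{\bZ}(\bz)-b^{\bZ}(\bar\bz), \bz-\bar\bz\rangle = \sum_{i \in [N]} \langle b^X(x^i,\by)-b^X(\bar x^i,\bar\by), x^i-\bar x^i\rangle + \sum_{i \in [N]} \langle b^Y(\bx,y^i)-b^Y(\bar\bx,\bar y^i), y^i-\bar y^i\rangle,
\]
then use the definitions of $b^X, b^Y$ and relabel the dummy summation index in the $\by$-block (swap $i \leftrightarrow j$) so that both sums become double sums $-\frac{1}{N}\sum_{i,j}$ over the \emph{same} pair of points $(x^i,y^j)$ vs $(\bar x^i, \bar y^j)$. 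After this pairing, each $(i,j)$ summand has the form
\[
\langle \nabla_x V(x^i,y^j) - \nabla_x V(\bar x^i,\bar y^j),\, x^i-\bar x^i\rangle \; - \; \langle \nabla_y V(x^i,y^j) - \nabla_y V(\bar x^i,\bar y^j),\, y^j-\bar y^j\rangle,
\]
which is exactly $\langle u(x^i,y^j)-u(\bar x^i,\bar y^j),\,(x^i,y^j)-(\bar x^i,\bar y^j)\rangle$ for the saddle vector field $u=(\nabla_x V, -\nabla_y V)$ on $\R^{2d}$. Under Assumption~\ref{As:SCSmooth}, the symmetric part of $Du$ is block-diagonal with blocks $\nabla^2_{xx}V \succeq \alpha I$ and $-\nabla^2_{yy}V \succeq \alpha I$ (the off-diagonal cross blocks cancel by symmetry of the Hessian), so $u$ is $\alpha$-strongly monotone on $\R^{2d}$. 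Applying this to each $(i,j)$ pair and summing gives
\[
\langle b^{\bZ}(\bz)-b^{\bZ}(\bar\bz), \bz-\bar\bz\rangle \le -\frac{\alpha}{N}\sum_{i,j}\bigl(\|x^i-\bar x^i\|^2 + \|y^j-\bar y^j\|^2\bigr) = -\alpha\|\bz-\bar\bz\|^2.
\]

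I expect the Lipschitz step to be essentially routine bookkeeping; the main subtlety is the strong-monotonicity part, specifically (i) noticing the index swap in the $\by$-sum that allows the two averages to be combined into one double sum whose summands match the $2d$-dimensional saddle monotonicity, and (ii) the small but important observation that the mixed Hessian $\nabla^2_{xy}V$ cancels in the symmetric part, so that strong convex-concavity in the separate arguments upgrades to joint strong monotonicity of $u$ on $\R^{2d}$.
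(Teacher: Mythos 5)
Your proposal is correct, and it takes a somewhat different route from the paper on both parts. For the Lipschitz bound, the paper inserts the intermediate point $\nabla_x V(\bar x^i, y^j)$ and pays a factor $2$ via $\|a+b\|^2 \le 2\|a\|^2 + 2\|b\|^2$, whereas you apply the joint $L$-Lipschitzness of $(x,y)\mapsto \nabla_x V(x,y)$ to each summand and then Jensen on the empirical average; this is valid and in fact yields the sharper constant $\sqrt{2}L$ (the paper only needs and states $2L$). For strong monotonicity, the paper works globally in $\R^{2dN}$: it writes out the full Jacobian of $b^{\bZ}$, observes that its symmetrized version is block-diagonal (the $[x^i,y^j]$ cross blocks cancel because $\nabla^2_{xy}V = (\nabla^2_{yx}V)^\top$) with diagonal blocks $\preceq -\alpha I$, and concludes via the mean-value theorem. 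You instead decompose the inner product into $N^2$ pairwise terms by the index relabeling $i\leftrightarrow j$ in the $\by$-block, recognize each term as the monotonicity gap of the base saddle field $u=(\nabla_x V, -\nabla_y V)$ on $\R^{2d}$, and sum; the $\alpha$-strong monotonicity of $u$ that you invoke is established by exactly the same symmetric-Jacobian cancellation, just in dimension $2d$ rather than $2dN$. Your organization is more modular (it reduces the $N$-particle statement to the $N=1$ case, which the paper separately uses as $b^Z$ elsewhere), while the paper's global Jacobian computation avoids the index-pairing bookkeeping; both hinge on the same key cancellation of the mixed Hessian in the symmetric part.
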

\begin{proof}        
    Let $\bz = (\bx,\by) = (x^1,\dots,x^N,y^1,\dots,y^N)$
    and $\bar \bz = (\bar \bx,\bar \by) = (\bar x^1,\dots,\bar x^N,\bar y^1,\dots,\bar y^N) \in \R^{2dN}$ be given.
    
    \textbf{(1)~} We show $b^{\bZ}$ is $(2L)$-Lipschitz.
    Recall by assumption, $(x,y) \mapsto \nabla V(x,y)$ is $L$-Lipschitz; in particular, each component $\nabla_x V(x,y)$ and $\nabla_y V(x,y)$ is also $L$-Lipschitz. 
    By definition,
    \begin{align*}
        \left\| b^{\bZ}(\bz) - b^{\bZ}(\bar \bz) \right\|^2
        &= \left\| b^{\bX}(\bx,\by) - b^{\bX}(\bar \bx, \bar \by) \right\|^2 + \left\| b^{\bY}(\bx,\by) - b^{\bY}(\bar \bx, \bar \by) \right\|^2.
    \end{align*}
    We bound each term above separately.
    For the first term:
    {\allowdisplaybreaks
    \begin{align*}
        &\left\| b^{\bX}(\bx,\by) - b^{\bX}(\bar \bx, \bar \by) \right\|^2 \\
        &\stackrel{(1)}{=} \sum_{i \in [N]} \left\| b^{X}(x^i,\by) - b^{X}(\bar x^i, \bar \by) \right\|^2 \\
        &\stackrel{(2)}{=} \sum_{i \in [N]} \left\| -\frac{1}{N} \sum_{j \in [N]} \nabla_x V(x^i, y^j) + \frac{1}{N} \sum_{j \in [N]} \nabla_x V(\bar x^i, \bar y^j) \right\|^2 \\
        &\stackrel{(3)}{=} \frac{1}{N^2} \sum_{i \in [N]} \left\| \sum_{j \in [N]} (\nabla_x V(x^i, y^j) - \nabla_x V(\bar x^i, \bar y^j)) \right\|^2 \\
        &\stackrel{(4)}{\le} \frac{1}{N} \sum_{i \in [N]} \sum_{j \in [N]} \left\| \nabla_x V(x^i, y^j) - \nabla_x V(\bar x^i, \bar y^j) \right\|^2 \\
        &\stackrel{(5)}{\le} \frac{2}{N} \sum_{i \in [N]} \sum_{j \in [N]} \left(\left\| \nabla_x V(x^i, y^j) - \nabla_x V(\bar x^i, y^j) \right\|^2 + \left\| \nabla_x V(\bar x^i, y^j) - \nabla_x V(\bar x^i, \bar y^j) \right\|^2\right) \\
        &\stackrel{(6)}{\le} \frac{2L^2}{N} \sum_{i \in [N]} \sum_{j \in [N]} \left(\left\| x^i-\bar x^i \right\|^2 + \left\| y^j - \bar y^j \right\|^2\right) \\
        &\stackrel{(7)}{=} \frac{2L^2}{N} \sum_{j \in [N]} \left\| \bx-\bar \bx \right\|^2 + \frac{2L^2}{N} \sum_{i \in [N]} \left\| \by - \bar \by \right\|^2 \\
        &\stackrel{(8)}{=} 2L^2 \left( \left\| \bx-\bar \bx \right\|^2 + \left\| \by - \bar \by \right\|^2 \right) \\
        &\stackrel{(9)}{=} 2L^2 \left\| \bz-\bar \bz \right\|^2.
    \end{align*}
    }
    In the above, steps (1) and (2) are by definitions; in step (3) we pull the $1/N$ outside the square.
    Step (4) follows from Cauchy-Schwarz inequality ($\|\sum_{j \in [N]} a_j\|^2 \le N \sum_{j \in [N]} \|a_j\|^2$).
    In step (5) we introduce an intermediate term $\nabla_x V(\bar x^i, y^j)$ and use the inequality $\|a+b\|^2 \le 2\|a\|^2 + 2\|b\|^2$.
    In step (6) we use the property that $\nabla_x V$ is $L$-Lipschitz, and note that in the first term the $y^j$ part is common, while in the second term the $\bar x^i$ part is common.
    In step (7) we write the previous terms in vector notation.
    In step (8) we collect the terms inside the previous summation which are all equal.
    In the last step (9) we use the definitions $\bz = (\bx,\by)$ and $\bar \bz = (\bar \bx, \bar \by)$.

    By a similar argument and using the $L$-Lipschitz property of $\nabla_y V$, we can show the second term is also bounded by the same quantity:
    \begin{align*}
        \left\| b^{\bY}(\bx,\by) - b^{\bY}(\bar \bx, \bar \by) \right\|^2
        \le 2L^2 \left\| \bz-\bar \bz \right\|^2.
    \end{align*}
    Combining the two bounds above, we obtain:
    \begin{align*}
        \left\| b^{\bZ}(\bz) - b^{\bZ}(\bar \bz) \right\|^2
        \,\le\, 2L^2 \left\| \bz-\bar \bz \right\|^2 + 2L^2 \left\| \bz-\bar \bz \right\|^2 
        \,=\, 4L^2 \left\| \bz-\bar \bz \right\|^2
    \end{align*}
    which shows that $b^{\bZ}$ is $(2L)$-Lipschitz.

    \textbf{(2)~}
    We now show that $-b^{\bZ}$ is $\alpha$-strongly monotone.
    This is equivalent to showing that the symmetrized Jacobian of $b^{\bZ}$ satisfies, for all $\bz \in \R^{2dN}$:
    \begin{align}\label{Eq:SymJac}
        (\nabla b^{\bZ}(\bz))_{\sym} := \frac{1}{2} \left(\nabla b^{\bZ}(\bz) + \nabla b^{\bZ}(\bz)^\top \right) \preceq -\alpha I.
    \end{align}
    Indeed, if we have~\eqref{Eq:SymJac}, then for all $\bz, \bar \bz \in \R^{2dN}$, by the mean-value theorem we can write:
    \begin{align*}
        \left\langle b^{\bZ}(\bz) - b^{\bZ}(\bar \bz), \bz-\bar \bz \right\rangle
        &= \left\langle \int_0^1 \nabla b^{\bZ}(\bz_t) \, (\bz - \bar \bz) \, dt, \bz-\bar \bz \right\rangle \\
        &= \int_0^1 (\bz-\bar \bz)^\top \nabla b^{\bZ}(\bz_t) \, (\bz - \bar \bz) \, dt \\
        &= \int_0^1 (\bz-\bar \bz)^\top (\nabla b^{\bZ}(\bz_t))_\sym \, (\bz - \bar \bz) \, dt \\
        &\le -\alpha \int_0^1 \left\|\bz-\bar \bz \right\|^2 \, dt \\
        &= -\alpha \left\|\bz-\bar \bz \right\|^2
    \end{align*}
    where in the above we have defined $\bz_t := (1-t)\bz + t \bar \bz$ for $0 \le t \le 1$, and used the property that $u^\top A u = \frac{1}{2} u^\top (A+A^\top) u = u^\top (A_\sym) u$ for all $u \in \R^D$ and $A \in \R^{D \times D}$.

    To show~\eqref{Eq:SymJac}, we compute the Jacobian matrix $\nabla b^{\bZ}(\bz) \in \R^{2dN \times 2dN}$, which is a block matrix with $2N$ blocks on each dimension, indexed by $x^1,\dots,x^N,y^1,\dots,y^N$, with the following entries:
    {\allowdisplaybreaks
    \begin{align*}
        (\nabla b^{\bZ}(\bz))[x^i,x^i] &= \part{}{x^i} b^{\bZ}(\bz)[x^i] = -\frac{1}{N} \sum_{k \in [N]} \nabla^2_{xx} V(x^i,y^k) \qquad &\forall ~ i \in [N], \\
        (\nabla b^{\bZ}(\bz))[x^i,x^j] &= \part{}{x^j} b^{\bZ}(\bz)[x^i] = 0 &\forall ~ i \neq j \in [N], \\
        (\nabla b^{\bZ}(\bz))[x^i,y^j] &= \part{}{y^j} b^{\bZ}(\bz)[x^i] = -\frac{1}{N} \nabla^2_{yx} V(x^i,y^j) &\forall ~ i, j \in [N], \\
        (\nabla b^{\bZ}(\bz))[y^i,y^i] &= \part{}{y^i} b^{\bZ}(\bz)[y^i] = \frac{1}{N} \sum_{k \in [N]} \nabla^2_{yy} V(x^k,y^i) \qquad &\forall ~ i \in [N], \\
        (\nabla b^{\bZ}(\bz))[y^i,y^j] &= \part{}{y^j} b^{\bZ}(\bz)[y^i] = 0 &\forall ~ i \neq j \in [N], \\
        (\nabla b^{\bZ}(\bz))[y^i,x^j] &= \part{}{x^j} b^{\bZ}(\bz)[y^i] = \frac{1}{N} \nabla^2_{xy} V(x^j,y^i) &\forall ~ i, j \in [N].
    \end{align*}
    }
    Now for the symmetrized Jacobian $(\nabla b^{\bZ}(\bz))_\sym$, we can compute its block entries:
    \begin{align*}
        (\nabla b^{\bZ}(\bz))_\sym[x^i,x^i] &= -\frac{1}{N} \sum_{k \in [N]} \nabla^2_{xx} V(x^i,y^k) \qquad &\forall ~ i \in [N], \\
        (\nabla b^{\bZ}(\bz))_\sym[x^i,x^j] &= 0 &\forall ~ i \neq j \in [N], \\
        (\nabla b^{\bZ}(\bz))_\sym[x^i,y^j] &= \frac{1}{2} \left(\nabla b^{\bZ}(\bz)[x^i,y^j] + \nabla b^{\bZ}(\bz)[y^j,x^i]^\top \right) \\
        &= \frac{1}{2} \left(-\frac{1}{N} \nabla^2_{yx} V(x^i,y^j) + \frac{1}{N} \nabla^2_{xy} V(x^i,y^j)^\top \right) = 0 &\forall ~ i, j \in [N], \\
        (\nabla b^{\bZ}(\bz))_\sym[y^i,y^i] &= \frac{1}{N} \sum_{k \in [N]} \nabla^2_{yy} V(x^k,y^i) \qquad &\forall ~ i \in [N], \\
        (\nabla b^{\bZ}(\bz))_\sym[y^i,y^j] &= 0 &\forall ~ i \neq j \in [N], \\
        (\nabla b^{\bZ}(\bz))_\sym[y^i,x^j] 
        &= \frac{1}{2} \left(\nabla b^{\bZ}(\bz)[y^i,x^j] + \nabla b^{\bZ}(\bz)[x^j,y^i]^\top \right) \\
        &= \frac{1}{2} \left(\frac{1}{N} \nabla^2_{xy} V(x^j,y^i) - \frac{1}{N} \nabla^2_{yx} V(x^j,y^i)^\top \right) = 0 &\forall ~ i, j \in [N].
    \end{align*}
    Therefore, we see that the symmetrized Jacobian $(\nabla b^{\bZ}(\bz))_\sym$ is block-diagonal, since all the off-diagonal block entries are $0$.
    Moreover, from the assumption that $V(x,y)$ is $\alpha$-strongly convex in $x$ and $\alpha$-strongly concave in $y$, the block-diagonal entries satisfy, for all $i \in [N]$:
    \begin{align*}
        (\nabla b^{\bZ}(\bz))_\sym[x^i,x^i] &= -\frac{1}{N} \sum_{k \in [N]} \nabla^2_{xx} V(x^i,y^k)
        \preceq -\alpha I \\
        (\nabla b^{\bZ}(\bz))_\sym[y^i,y^i] &= \frac{1}{N} \sum_{k \in [N]} \nabla^2_{yy} V(x^k,y^i) 
        \preceq -\alpha I.
    \end{align*}    
    This shows that $(\nabla b^{\bZ}(\bz))_\sym \preceq -\alpha I$, and thus $-b^{\bZ}$ is $\alpha$-strongly monotone, as desired.
\end{proof}

\subsubsection{Properties of the mean-field vector field}

\begin{lemma}\label{Lem:LipschitzMonotone_bzbar}
    Assume Assumption~\ref{As:SCSmooth}.
    Then:
    \begin{enumerate}
        \item The vector field $\bar b^{\bZ}$ defined in~\eqref{Eq:Defbzbar} is $L$-Lipschitz, which means for all $\bz, \bar \bz \in \R^{2dN}$:
        $$\left\|\bar b^{\bZ}(\bz) - \bar b^{\bZ}(\bar \bz) \right\| \le L \|\bz-\bar \bz\|.$$
        \item Furthermore, $-\bar b^{\bZ}$ is $\alpha$-strongly monotone, which means for all $\bz, \bar \bz \in \R^{2dN}$:
        $$\left\langle \bar b^{\bZ}(\bz) - \bar b^{\bZ}(\bar \bz), \bz-\bar \bz \right\rangle \le -\alpha \|\bz-\bar \bz\|^2.$$
    \end{enumerate}    
\end{lemma}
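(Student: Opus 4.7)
The plan is to exploit the separable (non-interacting) structure of $\bar b^{\bZ}$, which distinguishes it sharply from the finite-particle vector field $b^{\bZ}$ of Lemma~\ref{Lem:LipschitzMonotone_bz}: each block $\bar b^X(x^i)$ depends only on the coordinate $x^i$ (and on the \emph{fixed} distribution $\bar \nu^Y$), and similarly each $\bar b^Y(y^i)$ depends only on $y^i$. Thus the joint vector field tensorizes, and both the Lipschitz and strong-monotonicity bounds reduce to their single-particle versions.

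For the single-particle step, I would differentiate under the expectation (justified since $V \in C^4$ and $\bar \nu^X, \bar \nu^Y$ are $(\alpha/\epsilon)$-SLC with Gaussian tails) to obtain
$$\nabla \bar b^X(x) = -\E_{\bar \nu^Y}\bigl[\nabla^2_{xx} V(x, \bar Y)\bigr], \qquad \nabla \bar b^Y(y) = \E_{\bar \nu^X}\bigl[\nabla^2_{yy} V(\bar X, y)\bigr].$$
By Assumption~\ref{As:SCSmooth}, $\nabla^2_{xx} V \succeq \alpha I$ and $-\nabla^2_{yy} V \succeq \alpha I$, so both Jacobians are $\preceq -\alpha I$. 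Likewise $\|\nabla^2 V\|_{\op} \le L$ forces $\|\nabla \bar b^X(x)\|_{\op}, \|\nabla \bar b^Y(y)\|_{\op} \le L$. Applying the mean-value theorem exactly as in part (2) of Lemma~\ref{Lem:LipschitzMonotone_bz} then gives $L$-Lipschitzness and $\alpha$-strong monotonicity of $-\bar b^X$ and $-\bar b^Y$ individually.

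To lift to the joint field, I would simply decompose
$$\|\bar b^{\bZ}(\bz) - \bar b^{\bZ}(\bar \bz)\|^2 = \sum_{i \in [N]} \|\bar b^X(x^i) - \bar b^X(\bar x^i)\|^2 + \sum_{i \in [N]} \|\bar b^Y(y^i) - \bar b^Y(\bar y^i)\|^2,$$
and analogously for the inner product $\langle \bar b^{\bZ}(\bz) - \bar b^{\bZ}(\bar \bz), \bz - \bar \bz \rangle$. Substituting the single-particle bounds and collecting the sums into $\|\bx - \bar \bx\|^2 + \|\by - \bar \by\|^2 = \|\bz - \bar \bz\|^2$ yields both claims directly, with the sharper constant $L$ (instead of $2L$ in Lemma~\ref{Lem:LipschitzMonotone_bz}) because there is no off-diagonal cross-coupling to absorb via Cauchy-Schwarz.

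There is no real obstacle here; an equivalent one-line argument for strong monotonicity would invoke~\eqref{Eq:TensorMFScore} to write $\bar b^{\bZ} = \epsilon \nabla \log \bar \nu^{\bZ}$ and use that $\bar \nu^{\bZ}$ is a tensor product of $(\alpha/\epsilon)$-SLC distributions, hence itself $(\alpha/\epsilon)$-SLC. I prefer the direct coordinate-wise route since it simultaneously produces the Lipschitz bound from the same Jacobian computation, and the only mild technicality (differentiation under the integral) is standard.
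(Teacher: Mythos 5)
Your proposal is correct, but the mechanism at the single-particle level differs from the paper's. The paper never differentiates under the expectation: it works purely at first order, writing each block difference as $\E_{\bar \nu^Y}[\nabla_x V(x^i,\bar Y)-\nabla_x V(\bar x^i,\bar Y)]$ (resp.\ the $y$-analogue), then for the Lipschitz bound it pushes the expectation inside the squared norm by Jensen/Cauchy--Schwarz and uses that $\nabla_x V(\cdot,y)$ is $L$-Lipschitz pointwise in $y$, and for monotonicity it uses linearity of expectation together with the $\alpha$-strong monotonicity of $x\mapsto\nabla_x V(x,y)$ and $y\mapsto-\nabla_y V(x,y)$, before summing over particles exactly as you do. Your route instead computes $\nabla \bar b^X(x)=-\E_{\bar \nu^Y}[\nabla^2_{xx}V(x,\bar Y)]$ and $\nabla \bar b^Y(y)=\E_{\bar \nu^X}[\nabla^2_{yy}V(\bar X,y)]$ and then invokes the mean-value-theorem argument of Lemma~\ref{Lem:LipschitzMonotone_bz}; this is valid, and it has the aesthetic advantage of producing both bounds from one Jacobian computation and of connecting cleanly to the one-line view $\bar b^{\bZ}=\epsilon\nabla\log\bar\nu^{\bZ}$ with $\bar\nu^{\bZ}$ being $(\alpha/\epsilon)$-SLC and $(L/\epsilon)$-smooth. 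The price is the extra technical step of differentiation under the integral sign; note that the correct justification there is simply the uniform bound $\|\nabla^2 V\|_{\op}\le L$ (so the difference quotients of $\nabla_x V$ in $x$ are dominated uniformly in $y$), not any tail property of $\bar\nu^X,\bar\nu^Y$ --- the strong log-concavity you cite is beside the point, though integrability of $\E_{\bar\nu^Y}[\nabla_x V(x,\bar Y)]$ does use that $\bar\nu^Y$ has finite moments. The paper's first-order argument sidesteps this technicality entirely, which is why it is the slightly more economical proof; both yield the sharper constant $L$ for the same structural reason you identify, namely the absence of cross-coupling between particles.
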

\begin{proof}        
    Let $\bz = (\bx,\by) = (x^1,\dots,x^N,y^1,\dots,y^N)$
    and $\bar \bz = (\bar \bx,\bar \by) = (\bar x^1,\dots,\bar x^N,\bar y^1,\dots,\bar y^N) \in \R^{2dN}$ be given.

    \begin{enumerate}
        \item We first show $\bar b^{\bZ}$ is $L$-Lipschitz.
        By assumption, $(x,y) \mapsto \nabla V(x,y)$ is $L$-Lipschitz; in particular, each component $\nabla_x V(x,y)$ and $\nabla_y V(x,y)$ is also $L$-Lipschitz. 
        By definition,
        \begin{align*}
            \left\|\bar b^{\bZ}(\bz) - \bar b^{\bZ}(\bar \bz) \right\|^2
            &= \left\| \bar b^{\bX}(\bx) - \bar b^{\bX}(\bar \bx) \right\|^2 + \left\| \bar b^{\bY}(\by) - \bar b^{\bY}(\bar \by) \right\|^2.
        \end{align*}
        We can bound the first term as:
        \begin{align*}
            \left\| \bar b^{\bX}(\bx) - \bar b^{\bX}(\bar \bx) \right\|^2 
            &= \sum_{i \in [N]} \left\| \E_{\bar \nu^Y}[\nabla_x V(x^i, \bar Y) - \nabla_x V(\bar x^i, \bar Y)] \right\|^2  \\
            &\le \sum_{i \in [N]} \E_{\bar \nu^Y}\left[\left\| \nabla_x V(x^i, \bar Y) - \nabla_x V(\bar x^i, \bar Y) \right] \right\|^2  \\
            &\le L^2 \sum_{i \in [N]} \E_{\bar \nu^Y}\left[\left\| x^i - \bar x^i \right] \right\|^2  \\
            &= L^2 \left\| \bx - \bar \bx \right\|^2
        \end{align*}
        where the first inequality is by Cauchy-Schwarz, and the second inequality is by the $L$-Lipschitz property of $\nabla_x V$.
        Similarly, using the $L$-Lipschitz property of $\nabla_y V$, we can bound the second term as:
        \begin{align*}
            \left\| \bar b^{\bY}(\by) - \bar b^{\bY}(\bar \by) \right\|^2
            \le L^2 \left\| \by - \bar \by \right\|^2.
        \end{align*}
        Combining the two bounds above gives:
        \begin{align*}
            \left\|\bar b^{\bZ}(\bz) - \bar b^{\bZ}(\bar \bz) \right\|^2
            &\le L^2 \left\| \bx - \bar \bx \right\|^2 + L^2 \left\| \by - \bar \by \right\|^2 
            \,=\, L^2 \left\| \bz - \bar \bz \right\|^2
        \end{align*}
        which shows that $\bar b^{\bZ}$ is $L$-Lipschitz.
    
        \item We now show $-\bar b^{\bZ}$ is $\alpha$-strongly monotone.
        By assumption, for each $y \in \R^d$, $x \mapsto V(x,y)$ is $\alpha$-strongly convex, so $x \mapsto \nabla_x V(x,y)$ is $\alpha$-strongly monotone.
        Similarly, for each $x \in \R^d$, $y \mapsto V(x,y)$ is $\alpha$-strongly concave, so $y \mapsto -\nabla_y V(x,y)$ is $\alpha$-strongly monotone.
        By definition,
        \begin{align*}
            \left\langle \bar b^{\bZ}(\bz) - \bar b^{\bZ}(\bar \bz), \bz-\bar \bz \right\rangle
            &= \left\langle \bar b^{\bX}(\bx) - \bar b^{\bX}(\bar \bx), \bx-\bar \bx \right\rangle
            + \left\langle \bar b^{\bY}(\by) - \bar b^{\bY}(\bar \by), \by-\bar \by \right\rangle.
        \end{align*}
        We can bound the first term as:
        \begin{align*}
            \left\langle \bar b^{\bX}(\bx) - \bar b^{\bX}(\bar \bx), \bx-\bar \bx \right\rangle
            &= \sum_{i \in [N]} \left\langle \E_{\bar \nu^Y}[-\nabla_x V(x^i, \bar Y)] + \E_{\bar \nu^Y}[\nabla_x V(\bar x^i, \bar Y)], x^i-\bar x^i \right\rangle \\
            &= -\sum_{i \in [N]} \E_{\bar \nu^Y}\left[\left\langle \nabla_x V(x^i, \bar Y) - \nabla_x V(\bar x^i, \bar Y), x^i-\bar x^i \right\rangle \right] \\
            &\le -\alpha \sum_{i \in [N]} \E_{\bar \nu^Y}\left[\left\| x^i - \bar x^i \right\|^2 \right] \\
            &= -\alpha \left\| \bx - \bar \bx \right\|^2
        \end{align*}
        where the inequality above follows from the $\alpha$-strong monotonicity of $x \mapsto \nabla_x V(x, \bar Y)$.
        
        Similarly, we can bound the second term as:
        \begin{align*}
            \left\langle \bar b^{\bY}(\by) - \bar b^{\bY}(\bar \by), \by-\bar \by \right\rangle
            &= \sum_{i \in [N]} \left\langle \E_{\bar \nu^X}[\nabla_y V(\bar X, y^i)] - \E_{\bar \nu^X}[\nabla_y V(\bar X, \bar y^i)], y^i-\bar y^i \right\rangle \\
            &= \sum_{i \in [N]} \E_{\bar \nu^X}\left[\left\langle \nabla_y V(\bar X, y^i) - \nabla_y V(\bar X, \bar y^i), y^i-\bar y^i \right\rangle \right] \\
            &\le -\alpha \sum_{i \in [N]} \E_{\bar \nu^X}\left[\left\| y^i - \bar y^i \right\|^2 \right] \\
            &= -\alpha \left\| \by - \bar \by \right\|^2
        \end{align*}
        where the inequality above follows from the $\alpha$-strong monotonicity of $y \mapsto -\nabla_y V(X, \bar y)$.

        Combining the two bounds above gives:
        \begin{align*}
            \left\langle \bar b^{\bZ}(\bz) - \bar b^{\bZ}(\bar \bz), \bz-\bar \bz \right\rangle
            &\le -\alpha \left\| \bx - \bar \bx \right\|^2
            -\alpha \left\| \by - \bar \by \right\|^2
            \,=\, -\alpha \left\| \bz - \bar \bz \right\|^2
        \end{align*}
        which shows that $-\bar b^{\bZ}$ is $\alpha$-strongly monotone.
    \end{enumerate}
\end{proof}

\subsection{Comparison Between the Vector Fields}
\label{Sec:ComparisonVectorFields}

\subsubsection{Comparison at stationary distribution}

\begin{lemma}\label{Lem:ComparisonStationary}
    Assume Assumption~\ref{As:SCSmooth}.
    Then the vector fields $b^{\bZ}$ defined in~\eqref{Eq:Defbz} and $\bar b^{\bZ}$ defined in~\eqref{Eq:Defbzbar} satisfy:
    \begin{align*}
        \E_{\bar \nu^{\bZ}}\left[\left\|\bar b^{\bZ}(\bar \bZ) - b^{\bZ}(\bar \bZ) \right\|^2\right]
        \le L^2 \, \Var_{\bar \nu^Z}(\bar Z).
    \end{align*}
\end{lemma}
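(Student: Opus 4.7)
The plan is to split the squared norm into its $\bX$ and $\bY$ components, then for each component carry out the calculation particle by particle, exploiting the independence structure of $\bar\nu^{\bZ}=(\bar\nu^X)^{\otimes N}\otimes(\bar\nu^Y)^{\otimes N}$ and the $L$-Lipschitz continuity of $\nabla V$.

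Write $\bz=(\bx,\by)$ with $\bx=(x^1,\dots,x^N)$ and $\by=(y^1,\dots,y^N)$. From the definitions of $b^{\bZ}$ and $\bar b^{\bZ}$, for each $i\in[N]$ we have
\begin{align*}
    \bar b^X(x^i) - b^X(x^i,\by) = \frac{1}{N}\sum_{j\in[N]}\Big(\nabla_x V(x^i,y^j) - \E_{\bar\nu^Y}[\nabla_x V(x^i,\bar Y)]\Big),
\end{align*}
an average of $N$ mean-zero terms which, under $\bar\nu^{\bZ}$, are i.i.d.\ over $j$ once $x^i$ is fixed (since the $y^j$'s are i.i.d.\ from $\bar\nu^Y$ and independent of $\bx$). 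The plan is therefore to condition on $\bx$, use that the variance of an average of $N$ i.i.d.\ copies is $1/N$ times the single-sample variance, and then sum over $i$ so the $1/N$ cancels.

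The key ingredient for controlling the single-sample variance is the identity $\Var(U)=\tfrac12\E[\|U-U'\|^2]$ applied to $U=\nabla_x V(x^i,y)$ with $y,y'\sim\bar\nu^Y$ i.i.d., combined with the $L$-Lipschitz property of $y\mapsto\nabla_x V(x^i,y)$ from Assumption~\ref{As:SCSmooth}. This yields
\begin{align*}
    \Var_{y\sim\bar\nu^Y}\!\big(\nabla_x V(x^i,y)\big) \le \frac{L^2}{2}\,\E_{y,y'\sim\bar\nu^Y}[\|y-y'\|^2] = L^2\,\Var_{\bar\nu^Y}(\bar Y).
\end{align*}
Conditioning on $\bx$ and using independence of the $y^j$'s gives $\E_{\by}[\|\bar b^X(x^i)-b^X(x^i,\by)\|^2]\le \frac{L^2}{N}\Var_{\bar\nu^Y}(\bar Y)$; summing over $i\in[N]$ kills the $1/N$ and leaves $\E_{\bar\nu^{\bZ}}[\|\bar b^{\bX}(\bx)-b^{\bX}(\bx,\by)\|^2]\le L^2\,\Var_{\bar\nu^Y}(\bar Y)$. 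An entirely symmetric computation, using Lipschitzness of $x\mapsto\nabla_y V(x,y^i)$ and independence of the $x^j$'s, gives $\E_{\bar\nu^{\bZ}}[\|\bar b^{\bY}(\by)-b^{\bY}(\bx,\by)\|^2]\le L^2\,\Var_{\bar\nu^X}(\bar X)$.

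Adding the two bounds and using that $\bar\nu^Z=\bar\nu^X\otimes\bar\nu^Y$ is a product measure, so $\Var_{\bar\nu^Z}(\bar Z)=\Var_{\bar\nu^X}(\bar X)+\Var_{\bar\nu^Y}(\bar Y)$, completes the proof. There is no real obstacle here: the only subtle point is to be careful with the conditioning order (freeze $\bx$ before taking the $\by$-variance in the $\bX$-block, and vice versa), and to invoke the i.i.d.\ variance identity rather than a crude Jensen-type Cauchy--Schwarz bound, which would lose the cancellation of $1/N$ and yield only $L^2\,\E_{\bar\nu^Y}[\|\bar Y\|^2]$ instead of the centered quantity $\Var_{\bar\nu^Y}(\bar Y)$.
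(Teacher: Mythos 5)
Your proposal is correct and follows essentially the same route as the paper's proof: split into the $\bX$ and $\bY$ blocks, recognize each per-particle error as an average of mean-zero independent terms (the paper does this by expanding the square and dropping the vanishing cross terms, which is the same $1/N$ variance reduction you invoke), bound the single-sample variance via the identity $\Var(U)=\tfrac12\E[\|U-U'\|^2]$ together with the $L$-Lipschitz property of $\nabla V$, and finally sum the two blocks using the product structure $\Var_{\bar \nu^Z}(\bar Z)=\Var_{\bar \nu^X}(\bar X)+\Var_{\bar \nu^Y}(\bar Y)$. The constants and the final bound match the paper exactly.
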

\begin{proof}
    Let $\bar \bZ = (\bar \bX, \bar \bY) = (\bar X^1, \dots, \bar X^N, \bar Y^1, \dots, \bar Y^N) \sim \bar \nu^{\bZ} = (\bar \nu^X)^{\otimes N} \otimes (\bar \nu^Y)^{\otimes N}$, so all the random variables $\bar X^i \sim \bar \nu^X$ and $\bar Y^j \sim \bar \nu^Y$ are independent, for all $i,j \in [N]$.

    By definition, the quantity we wish to bound is:
    \begin{align*}
        \E_{\bar \nu^{\bZ}}\left[\left\|\bar b^{\bZ}(\bar \bZ) - b^{\bZ}(\bar \bZ) \right\|^2\right]
        = \E_{\bar \nu^{\bZ}}\left[\left\|\bar b^{\bX}(\bar \bX) - b^{\bX}(\bar \bX, \bar \bY) \right\|^2\right]
        + \E_{\bar \nu^{\bZ}}\left[\left\|\bar b^{\bY}(\bar \bY) - b^{\bY}(\bar \bX, \bar \bY) \right\|^2\right].
    \end{align*}
    We bound each term in the right-hand side separately.

    We will use the following formula for variance:
    If $U$ and $U'$ are independent random variables with the same distribution $\rho$, then
    $\Var_\rho(U) = \E_{\rho}[\|U-\E_\rho[U]\|^2] 
    = \frac{1}{2} \E_{\rho \otimes \rho}[\|U-U'\|^2].$
    
    We bound the first term above.
    We introduce an independent random variable $\bar Y \sim \bar \nu^Y$.
    Then:
    {\allowdisplaybreaks
    \begin{align*}
        \E_{\bar \nu^{\bZ}}\left[\left\|\bar b^{\bX}(\bar \bX) - b^{\bX}(\bar \bX, \bar \bY) \right\|^2\right]
        &\stackrel{(1)}{=} 
        \sum_{i \in [N]} 
        \E_{\bar \nu^{\bZ}}\left[\left\|\bar b^{X}(\bar X^i) - b^{X}(\bar X^i, \bar \bY) \right\|^2\right] \\
        &\stackrel{(2)}{=} \sum_{i \in [N]} 
        \E_{\bar \nu^{\bZ}}\left[\left\|
        \E_{\bar \nu^{Y}}\left[-\nabla_x V(\bar X^i, \bar Y) \right] + 
        \frac{1}{N} \sum_{j \in [N]} \nabla_x V(\bar X^i, \bar Y^j) \right\|^2\right] \\
        &\stackrel{(3)}{=} \frac{1}{N^2} \sum_{i \in [N]} 
        \E_{\bar \nu^{\bZ}}\left[\left\|
        \sum_{j \in [N]} \left(\nabla_x V(\bar X^i, \bar Y^j) - \E_{\bar \nu^{Y}}\left[\nabla_x V(\bar X^i, \bar Y) \right]  \right) \right\|^2\right] \\
        &\stackrel{(4)}{=} \frac{1}{N^2} \sum_{i \in [N]} \sum_{j \in [N]} 
        \E_{\bar \nu^{\bZ}}\left[\left\|
        \nabla_x V(\bar X^i, \bar Y^j) - \E_{\bar \nu^{Y}}\left[\nabla_x V(\bar X^i, \bar Y) \right] \right\|^2\right] \\
        &\stackrel{(5)}{=} \frac{1}{2N^2} \sum_{i \in [N]} \sum_{j \in [N]} 
        \E_{\bar \nu^{\bZ}} \E_{\bar \nu^{Y}}\left[\left\|
        \nabla_x V(\bar X^i, \bar Y^j) - \nabla_x V(\bar X^i, \bar Y) \right\|^2\right] \\
        &\stackrel{(6)}{\le} \frac{L^2}{2N^2} \sum_{i \in [N]} \sum_{j \in [N]} 
        \E_{\bar \nu^{\bZ}} \E_{\bar \nu^{Y}}\left[\left\|\bar Y^j -  \bar Y^j \right\|^2\right] \\
        &\stackrel{(7)}{=} \frac{L^2}{N^2} \sum_{i \in [N]} \sum_{j \in [N]} 
        \Var_{\bar \nu^{Y}}(\bar Y) \\
        &\stackrel{(8)}{=} L^2 \, \Var_{\bar \nu^Y}(\bar Y).
    \end{align*}
    }
    Above, in step~(1) we use the definitions of $b^{\bX}$ and $\bar b^{\bX}$ and split the squared norm across coordinates.
    In step (2), we use the definitions of $b^X$ and $\bar b^X$.
    In step (3), we pull the $1/N$ outside the square. 
    In step (4), we expand the square and note the cross terms are zero since all the random variables are independent;
    at this point, we recognize each term in the summation, conditioned on $\bar X^i$, is the variance of the random variable $\nabla_x V(\bar X^i, \bar Y^j)$ where $\bar Y^j \sim \bar \nu^Y$.
    In step (5), we use the variance formula by introducing an independent random variable $\bar Y \sim \bar \nu^Y$.
    In step (6), we use the property that $y \mapsto \nabla_x V(x,y)$ is $L$-Lipschitz, which follows from the smoothness of $V$ from Assumption~\ref{As:SCSmooth}.
    In step (7), we use the variance formula for the random variable $\bar Y \sim \bar \nu^Y$.
    In the last step (8), we collect the terms in the double summation which are all the same.

    By an identical argument, we can also bound the $Y$-component in the quantity above as:
    \begin{align*}
        \E_{\bar \nu^{\bZ}}\left[\left\|\bar b^{\bY}(\bar \bY) - b^{\bY}(\bar \bX, \bar \bY) \right\|^2\right] 
        \le L^2 \, \Var_{\bar \nu^X}(\bar X).
    \end{align*}

    Combining the two calculations above, we obtain the desired bound:
    \begin{align*}
        \E_{\bar \nu^{\bZ}}\left[\left\|\bar b^{\bZ}(\bar \bZ) - b^{\bZ}(\bar \bZ) \right\|^2\right]
        &\le L^2 \, \Var_{\bar \nu^Y}(\bar Y) + L^2 \, \Var_{\bar \nu^X}(\bar X) 
        = L^2 \, \Var_{\bar \nu^Z}(\bar Z)
    \end{align*}
    where the last step follows from the fact that $\bar \nu^Z = \bar \nu^X \otimes \bar \nu^Y$ is a product distribution.
\end{proof}

\subsubsection{Comparison at arbitrary distribution}

\begin{lemma}\label{Lem:ComparisonArbitrary}
    Assume Assumption~\ref{As:SCSmooth}.
    For any $\rho^{\bZ} \in \P(\R^{2dN})$, the vector fields $b^{\bZ}$ defined in~\eqref{Eq:Defbz} and $\bar b^{\bZ}$ defined in~\eqref{Eq:Defbzbar} satisfy:
    \begin{align*}
        \E_{\rho^{\bZ}}\left[\left\|\bar b^{\bZ}(\bZ) - b^{\bZ}(\bZ) \right\|^2\right]
        \le 2L^2 \, W_2(\rho^{\bZ}, \bar \nu^{\bZ})^2 + 4L^2 \, \Var_{\bar \nu^{Z}}(\bar Z).
    \end{align*}
\end{lemma}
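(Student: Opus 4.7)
The plan is to compare the two vector fields at $\bZ \sim \rho^{\bZ}$ by introducing an auxiliary random variable $\bar \bZ \sim \bar \nu^{\bZ}$ coupled optimally with $\bZ$, and then reducing to the two cases already handled: Lipschitz control of each vector field (Lemmas~\ref{Lem:LipschitzMonotone_bz} and~\ref{Lem:LipschitzMonotone_bzbar}) and the stationary comparison (Lemma~\ref{Lem:ComparisonStationary}).

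Concretely, I would pick a coupling $\gamma \in \Pi(\rho^{\bZ}, \bar \nu^{\bZ})$ achieving the $W_2$ distance, so $\E_\gamma[\|\bZ - \bar\bZ\|^2] = W_2(\rho^{\bZ}, \bar \nu^{\bZ})^2$. Writing the shorthand $h(\bz) := \bar b^{\bZ}(\bz) - b^{\bZ}(\bz)$, decompose
\begin{align*}
    h(\bZ) = \bigl[\bar b^{\bZ}(\bZ) - \bar b^{\bZ}(\bar \bZ)\bigr] + \bigl[\bar b^{\bZ}(\bar \bZ) - b^{\bZ}(\bar \bZ)\bigr] + \bigl[b^{\bZ}(\bar \bZ) - b^{\bZ}(\bZ)\bigr],
\end{align*}
and apply $\|a+b+c\|^2 \le C(\|a\|^2 + \|b\|^2 + \|c\|^2)$ (or equivalently regroup into two parts using $\|u+v\|^2 \le 2\|u\|^2 + 2\|v\|^2$ in the combination that is most favorable for the constants stated). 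The first and third terms are controlled pointwise by the Lipschitz constants $L$ and $2L$ of $\bar b^{\bZ}$ and $b^{\bZ}$ respectively, giving a multiple of $L^2\|\bZ - \bar \bZ\|^2$. Taking $\E_\gamma$ then converts this into a multiple of $L^2 W_2(\rho^{\bZ}, \bar \nu^{\bZ})^2$.

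For the middle term, I would take $\E_\gamma$ and observe that the integrand depends only on the $\bar \bZ$-marginal of $\gamma$, which is exactly $\bar \nu^{\bZ}$. Therefore
\begin{align*}
    \E_\gamma\bigl[\|\bar b^{\bZ}(\bar \bZ) - b^{\bZ}(\bar \bZ)\|^2\bigr] = \E_{\bar \nu^{\bZ}}\bigl[\|\bar b^{\bZ}(\bar \bZ) - b^{\bZ}(\bar \bZ)\|^2\bigr] \le L^2 \, \Var_{\bar \nu^Z}(\bar Z)
\end{align*}
by direct appeal to Lemma~\ref{Lem:ComparisonStationary}. Combining the three contributions and taking the infimum over couplings (or just using the optimal one from the start) yields the desired bound, modulo tracking the triangle-inequality constants so that they land on $2L^2$ and $4L^2$.

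The main obstacle is getting the constants to match the precise values $2L^2$ and $4L^2$ in the statement: a naive application of $\|a+b+c\|^2 \le 3(\|a\|^2+\|b\|^2+\|c\|^2)$ together with the Lipschitz constants $L$ and $2L$ gives a larger prefactor in front of $W_2^2$. The fix is to regroup the decomposition so that the $\bar \bZ$-dependent stationary error is isolated and handled by $\|u+v\|^2 \le 2\|u\|^2 + 2\|v\|^2$ (producing the factor $2L^2$ on the $W_2^2$ term and a factor of $2$ times the Lemma~\ref{Lem:ComparisonStationary} bound, yielding $4L^2\Var$), with the Lipschitz residual absorbed into the $W_2$ term. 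No serious new ideas are needed; it is a two-term triangle inequality plus the two preceding lemmas.
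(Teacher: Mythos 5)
Your overall plan (optimal $W_2$ coupling, Lipschitz bounds on the two vector fields, and Lemma~\ref{Lem:ComparisonStationary} for the stationary error) does produce an inequality of the right \emph{shape}, but it cannot produce the constants in the statement, and the ``regrouping'' you invoke to fix this does not exist. In your decomposition the two Lipschitz pieces are $A=\bar b^{\bZ}(\bZ)-\bar b^{\bZ}(\bar\bZ)$ and $C=b^{\bZ}(\bar\bZ)-b^{\bZ}(\bZ)$, with $\|A\|\le L\|\bZ-\bar\bZ\|$ and $\|C\|\le 2L\|\bZ-\bar\bZ\|$. However you group the three terms and whatever Young parameters you use, the coefficient multiplying $\E\|\bZ-\bar\bZ\|^2=W_2(\rho^{\bZ},\bar\nu^{\bZ})^2$ is at least $(L+2L)^2=9L^2$ (with the plain $\|u+v\|^2\le 2\|u\|^2+2\|v\|^2$ splits you in fact get $18L^2$ or $15L^2$), never $2L^2$. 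The obstruction is structural: moving $b^{\bZ}$ from $\bZ$ to $\bar\bZ$ changes \emph{both} the $x$- and $y$-arguments, so you must pay the full $2L$ Lipschitz constant of $b^{\bZ}$ plus the $L$ of $\bar b^{\bZ}$. So your proposal proves a weaker lemma (a larger prefactor on $W_2^2$), which would still give qualitatively similar downstream conclusions with degraded constants, but it does not prove the statement as written, and the claimed constant-matching step is the genuine gap.

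The paper gets $2L^2$ by never comparing the full vector fields at two different points. It works block by block: for the $i$-th $X$-block it writes $b^{X}(X^i,\bY)-\bar b^{X}(X^i)$ as an average over $j$ of $\nabla_x V(X^i,Y^j)-\E_{\bar\nu^Y}[\nabla_x V(X^i,\bar Y)]$ and inserts the intermediate term $\nabla_x V(X^i,\bar Y^{\sigma_j(i),j})$ with the \emph{same} first argument $X^i$, so only the $L$-Lipschitzness in the $y$-variable enters; that is the source of the $2L^2\,W_2^2$ term. The remaining centered term is an average of conditionally independent, mean-zero quantities: the paper introduces $N$ copies $\bar\bZ^1,\dots,\bar\bZ^N$ of the optimal coupling, pairwise independent given $\bZ$ and index-shifted by permutations with $\sigma_j(i)\neq\sigma_k(i)$, so the cross terms vanish and the fluctuation is bounded by $4L^2\,\Var_{\bar\nu^Z}(\bar Z)$ with no factor of $N$. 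Your single-coupling shortcut handles the variance part correctly via Lemma~\ref{Lem:ComparisonStationary}, but to obtain the stated constants you need this finer argument-by-argument comparison rather than the global Lipschitz bound on $b^{\bZ}$.
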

\begin{proof}
    Let $\bZ = (\bX, \bY) = (X^1, \dots, X^N, Y^1, \dots, Y^N) \sim \rho^{\bZ}$.
    We first introduce some set up.
    
    Let $\sigma_1, \dots, \sigma_N \colon [N] \to [N]$ be a collection of permutations such that $\sigma_j(i) \neq \sigma_k(i)$ for all $j \neq k$ and $i \in [N]$.
    For example, we can take $\sigma_j(i) = i+j \pmod{N}$.
    
    We introduce new random variables $\bar \bZ^1, \dots, \bar \bZ^N \in \R^{2dN}$, where each $\bar \bZ^i = (\bar \bX^i, \bar \bY^i) = (\bar X^{i,1}, \dots, \bar X^{i,N}, \bar Y^{i,1}, \dots, \bar Y^{i,N})$ with $\bar X^{i,j}, \bar Y^{i,j} \in \R^d$ for each $i,j \in [N]$, with the following structure of joint distribution:
    \begin{itemize}
        \item For each $i \in [N]$, $\bar \bZ^i \sim \bar \nu^{\bZ}$ marginally, and $(\bZ, \bar \bZ^i)$ is jointly distributed as the optimal $W_2$ coupling between $\rho^{\bZ}$ and $\bar \nu^{\bZ}$, so $\E[\|\bZ-\bar \bZ^i\|^2] = W_2(\rho^{\bZ}, \bar \nu^{\bZ})^2$.
        \item The random variables $\bar \bZ^1, \dots, \bar \bZ^N$ are pairwise independent conditioned on $\bZ$.
        In particular, this implies that for all $i,j,k \in [N]$ with $j \neq k$:
        (1) $\bar Y^{\sigma_j(i),j}$ and $\bar Y^{\sigma_k(i),k}$ are independent when conditioned on $\bZ$, since $\sigma_j(i) \neq \sigma_k(i)$; and similarly,
        (2) $\bar X^{\sigma_j(i),j}$ and $\bar X^{\sigma_k(i),k}$ are independent when conditioned on $\bZ$.
    \end{itemize}

    By definition, the quantity we wish to bound is:
    \begin{align*}
        \E_{\rho^{\bZ}}\left[\left\|\bar b^{\bZ}(\bZ) - b^{\bZ}(\bZ) \right\|^2\right]
        = \E_{\rho^{\bZ}}\left[\left\|\bar b^{\bX}(\bX) - b^{\bX}(\bX, \bY) \right\|^2\right]
        + \E_{\rho^{\bZ}}\left[\left\|\bar b^{\bY}(\bY) - b^{\bY}(\bX, \bY) \right\|^2\right].
    \end{align*}
    We will bound each term in the right-hand side separately.    

    We bound the first term above.
    Below, we write $\E$ to denote the expectation over the collective joint distribution of the random variables $(\bZ, \bZ^1,\dots, \bZ^N)$ introduced above.
    We can bound:
    \begin{align}
        \E_{\rho^{\bZ}}&\left[\left\|\bar b^{\bX}(\bX) - b^{\bX}(\bX, \bY) \right\|^2\right] 
        = \sum_{i \in [N]} \E_{\rho^{\bZ}}\left[ \left\|\bar b^{X}(X^i) - b^{X}(X^i, \bY)\right\|^2 \right] \notag \\
        &\qquad = \sum_{i \in [N]} \E_{\rho^{\bZ}}\left[ \left\|\E_{\bar Y \sim \bar \nu^Y}[-\nabla_x V(X^i, \bar Y)] + \frac{1}{N} \sum_{j \in [N]} \nabla_x V(X^i, Y^j)\right\|^2 \right] \notag \\
        &\qquad = \frac{1}{N^2} \sum_{i \in [N]} \E_{\rho^{\bZ}}\left[ \left\|\sum_{j \in [N]} \left(\nabla_x V(X^i, Y^j) + \E_{\bar Y \sim \bar \nu^Y}[-\nabla_x V(X^i, \bar Y)] \right)\right\|^2 \right] \notag \\
        &\qquad \le \frac{2}{N^2} \sum_{i \in [N]} \E\left[\left\|\sum_{j \in [N]} \left(\nabla_x V(X^i, Y^j) - \nabla_x V(X^i, \bar Y^{\sigma_j(i),j})\right)\right\|^2 \right]  \notag \\
        &\qquad \quad + \frac{2}{N^2} \sum_{i \in [N]} \E\left[\left\|\sum_{j \in [N]} \left(\nabla_x V(X^i, \bar Y^{\sigma_j(i),j}) - \E_{\bar Y \sim \bar \nu^{Y}}[\nabla_x V(X^i, \bar Y)] \right)\right\|^2 \right]  \notag \\
        &\qquad =: \Err_{1}^X + \Err_{2}^X. \label{Eq:CalcX}
    \end{align}
    In the above, we have used the inequality $\|a+b\|^2 \le 2\|a\|^2 + 2\|b\|^2$.
    
    We can bound the first error term in~\eqref{Eq:CalcX} by:
    {\allowdisplaybreaks
    \begin{align*}
        \Err_{1}^X
        &:= \frac{2}{N^2} \sum_{i \in [N]} \E\left[\left\|\sum_{j \in [N]} \left(\nabla_x V(X^i, Y^j) - \nabla_x V(X^i, \bar Y^{\sigma_j(i),j})\right)\right\|^2 \right] \\
        &\le \frac{2}{N} \sum_{i \in [N]} \sum_{j \in [N]} \E\left[\left\|\nabla_x V(X^i, Y^j) - \nabla_x V(X^i, \bar Y^{\sigma_j(i),j})\right\|^2 \right] \\
        &\le \frac{2L^2}{N} \sum_{i \in [N]} \sum_{j \in [N]} \E\left[\left\|Y^j-\bar Y^{\sigma_j(i),j}\right\|^2 \right] \\
        &= \frac{2L^2}{N} \sum_{i \in [N]} \sum_{j \in [N]} \E\left[\left\|Y^j-\bar Y^{i,j}\right\|^2 \right] \\
        &= \frac{2L^2}{N} \sum_{i \in [N]} \E\left[\left\|\bY-\bar \bY^{i}\right\|^2 \right].
    \end{align*}
    }
    In the above, the first inequality is by Cauchy-Schwarz ($\|\sum_{j \in [N]} a_j\|^2 \le N \sum_{j \in [N]} \|a_j\|^2$);
    the next inequality is by the smoothness assumption on $V$;
    the next step is by rearranging the permutation;
    and in the last step we rewrite the expression in terms of the vector form.
    
    We can bound the second error term in~\eqref{Eq:CalcX} by:
    \begin{align*}
        \Err_{2}^X
        &:= \frac{2}{N^2} \sum_{i \in [N]} \E\left[\left\|\sum_{j \in [N]} \left(\nabla_x V(X^i, \bar Y^{\sigma_j(i),j}) - \E_{\bar Y \sim \bar \nu^{Y}}[\nabla_x V(X^i, \bar Y)] \right)\right\|^2 \right] \\
        &\stackrel{(1)}{=} \frac{2}{N^2} \sum_{i \in [N]} \sum_{j \in [N]} \E\left[\left\|\nabla_x V(X^i, \bar Y^{\sigma_j(i),j}) - \E_{\bar Y \sim \bar \nu^{Y}}[\nabla_x V(X^i, \bar Y)] \right\|^2 \right] \\
        &\stackrel{(2)}{\le} \frac{2}{N^2} \sum_{i \in [N]} \sum_{j \in [N]} \E \, \E_{\bar Y \sim \bar \nu^Y} \left[\left\|\nabla_x V(X^i, \bar Y^{\sigma_j(i),j}) - \nabla_x V(X^i, \bar Y)\right\|^2 \right]  \\
        &\stackrel{(3)}{\le} \frac{2L^2}{N^2} \sum_{i \in [N]} \sum_{j \in [N]} \E \, \E_{\bar Y \sim \bar \nu^Y} \left[\left\|\bar Y^{\sigma_j(i),j} - \bar Y\right\|^2 \right]  \\
        &\stackrel{(4)}{=} \frac{4L^2}{N^2} \sum_{i \in [N]} \sum_{j \in [N]} \Var_{\bar \nu^{Y}}(\bar Y) \\
        &\stackrel{(5)}{=} 4L^2 \, \Var_{\bar \nu^{Y}}(\bar Y).
    \end{align*}
    In the above, step (1) follows by expanding the square and noting the cross terms are zero; this follows from our construction, since for each $i$ and for each $j \neq k$, when conditioned on $\bZ$ (which includes $X^i$), the random variables $\bar Y^{\sigma_j(i),j}$ and $\bar Y^{\sigma_k(i),k}$ are independent (recall $\sigma_j(i) \neq \sigma_k(i)$), and each term in the summation has mean $0$.
    The next step (2) follows by introducing an independent random variable $\bar Y \sim \bar \nu^Y$ and using Cauchy-Schwarz inequality to pull the expectation outside the square.
    The next step (3) follows from the smoothness assumption on $V$ from Assumption~\ref{As:SCSmooth}.
    The next step (4) follows by recognizing each term in the double summation is a variance of $\bar Y \sim \bar \nu^Y$ (recall $\bar Y^{\sigma_j(i),j} \sim \bar \nu^Y$ also and is independent of $\bar Y$).
    The last step (5) is by collecting the terms in the double summation which are all equal.

    Plugging in the two calculations above to~\eqref{Eq:CalcX}, we can control the first term in the quantity we wish to bound as:
    \begin{align*}
        \E_{\rho^{\bZ}}\left[\left\|\bar b^{\bX}(\bX) - b^{\bX}(\bX, \bY) \right\|^2\right]
        &\le \frac{2L^2}{N} \sum_{i \in [N]} \E\left[\left\|\bY-\bar \bY^{i}\right\|^2 \right] + 4L^2 \, \Var_{\bar \nu^{Y}}(\bar Y).
    \end{align*}

    By an identical argument, we can control the second term in the quantity we wish to bound as:
    \begin{align*}
        \E_{\rho^{\bZ}}\left[\left\|\bar b^{\bY}(\bY) - b^{\bY}(\bX, \bY) \right\|^2\right]
        &\le \frac{2L^2}{N} \sum_{i \in [N]} \E\left[\left\|\bX-\bar \bX^{i}\right\|^2 \right] + 4L^2 \, \Var_{\bar \nu^{X}}(\bar X).
    \end{align*}

    Combining the two bounds above, and recalling that each $(\bZ, \bar \bZ^i)$ has the optimal $W_2$ coupling, we obtain:
    \begin{align*}
        \E_{\rho^{\bZ}}\left[\left\|\bar b^{\bZ}(\bZ) - b^{\bZ}(\bZ) \right\|^2\right]
        &\le \frac{2L^2}{N} \sum_{i \in [N]} \E\left[\left\|\bY-\bar \bY^{i}\right\|^2 \right] + 4L^2 \, \Var_{\bar \nu^{Y}}(\bar Y) \\
        &\quad  + \frac{2L^2}{N} \sum_{i \in [N]} \E\left[\left\|\bX-\bar \bX^{i}\right\|^2 \right] + 4L^2 \, \Var_{\bar \nu^{X}}(\bar X) \\
        &= \frac{2L^2}{N} \sum_{i \in [N]} \E\left[\left\|\bZ-\bar \bZ^{i}\right\|^2 \right] + 4L^2 \, \Var_{\bar \nu^{Z}}(\bar Z) \\
        &= \frac{2L^2}{N} \sum_{i \in [N]} W_2(\rho^{\bZ}, \bar \nu^{\bZ})^2 + 4L^2 \, \Var_{\bar \nu^{Z}}(\bar Z) \\
        &= 2L^2 W_2(\rho^{\bZ}, \bar \nu^{\bZ})^2 + 4L^2 \, \Var_{\bar \nu^{Z}}(\bar Z)
    \end{align*}
    which is the desired bound.
\end{proof}

\subsection{Bounds on the Vector Fields}

\subsubsection{Bound on the mean-field vector field under mean-field distribution}

We bound the magnitude of the mean-field vector field under the stationary mean-field distribution, which is proportional to the Fisher information.

\begin{lemma}\label{Lem:BoundFisher}
    Assume Assumption~\ref{As:SCSmooth}.
    Then for the mean-field vector field $\bar b^{\bZ}$ defined in~\eqref{Eq:Defbzbar}, under the tensorized stationary mean-field distribution $\bar \nu^{\bZ}$ defined in~\eqref{Eq:TensorMF}:
    \begin{align*}
        \E_{\bar \nu^{\bZ}}\left[\left\|\bar b^{\bZ}(\bar \bZ)\right\|^2\right] \le 2\reg dLN.
    \end{align*}
\end{lemma}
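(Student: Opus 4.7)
The plan is to exploit the identity~\eqref{Eq:TensorMFScore}, which says $\bar b^{\bZ}(\bz) = \epsilon \nabla \log \bar \nu^{\bZ}(\bz)$. So the quantity we want to bound is $\epsilon^2$ times the (non-relative) Fisher information of the tensorized equilibrium distribution, and we want a bound of order $Ld/\epsilon$ per coordinate. Since $\bar \nu^{\bZ} = (\bar \nu^X)^{\otimes N} \otimes (\bar \nu^Y)^{\otimes N}$, the log-density decomposes as a sum of $2N$ single-particle log-densities, hence so does the expected squared gradient:
\begin{align*}
    \E_{\bar \nu^{\bZ}}\!\left[\left\|\bar b^{\bZ}(\bar \bZ)\right\|^2\right]
    \,=\, \epsilon^2 \, \E_{\bar \nu^{\bZ}}\!\left[\left\|\nabla \log \bar \nu^{\bZ}(\bar \bZ)\right\|^2\right]
    \,=\, N \epsilon^2 \, \E_{\bar \nu^X}\!\left[\left\|\nabla \log \bar \nu^X\right\|^2\right] + N \epsilon^2 \, \E_{\bar \nu^Y}\!\left[\left\|\nabla \log \bar \nu^Y\right\|^2\right].
\end{align*}

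Next I would reduce each single-particle term to a Laplacian bound via integration by parts. For $\bar \nu^X$ (and similarly for $\bar \nu^Y$), using $\bar \nu^X \nabla \log \bar \nu^X = \nabla \bar \nu^X$ and integrating by parts (justified by the $(\alpha/\epsilon)$-strong log-concavity of $\bar \nu^X$ from Lemma~\ref{Lem:Equilibrium}, which gives the necessary decay at infinity), we get
\begin{align*}
    \E_{\bar \nu^X}\!\left[\left\|\nabla \log \bar \nu^X\right\|^2\right]
    \,=\, \int_{\R^d} \left\langle \nabla \bar \nu^X, \, \nabla \log \bar \nu^X \right\rangle dx
    \,=\, -\E_{\bar \nu^X}\!\left[\Delta \log \bar \nu^X\right].
\end{align*}

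Then I would upper-bound $-\Delta \log \bar \nu^X$ pointwise. Since $\bar \nu^X$ is the best-response distribution to $\bar \nu^Y$, the explicit formula~\eqref{Eq:BRScore}--\eqref{Eq:BRLaplacian} yields $-\nabla^2 \log \bar \nu^X(x) = \epsilon^{-1} \E_{\bar \nu^Y}[\nabla^2_{xx} V(x, \bar Y)]$. Assumption~\ref{As:SCSmooth} gives $\|\nabla^2 V\|_{\op} \le L$, and since $\nabla^2_{xx} V$ is a principal submatrix of $\nabla^2 V$, we also have $\nabla^2_{xx} V \preceq L I$, hence $\mathrm{tr}(\nabla^2_{xx} V) \le Ld$. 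Taking the trace gives $-\Delta \log \bar \nu^X(x) \le Ld/\epsilon$, and therefore $\E_{\bar \nu^X}[\|\nabla \log \bar \nu^X\|^2] \le Ld/\epsilon$. The analogous bound holds for $\bar \nu^Y$.

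Plugging back into the decomposition yields $\E_{\bar \nu^{\bZ}}[\|\bar b^{\bZ}(\bar \bZ)\|^2] \le 2N\epsilon^2 \cdot (Ld/\epsilon) = 2\epsilon d L N$, as claimed. The argument is essentially a one-line computation once the integration-by-parts identity $\E_\nu[\|\nabla \log \nu\|^2] = -\E_\nu[\Delta \log \nu]$ is in hand; the only minor subtlety is justifying that identity, which is standard for smooth strongly log-concave densities (the four-times differentiability hypothesis in Assumption~\ref{As:SCSmooth} ensures the integrand is regular enough, and strong log-concavity gives Gaussian tail decay so that the boundary terms vanish).
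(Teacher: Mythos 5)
Your proposal is correct and follows essentially the same route as the paper: both rest on the identity $\bar b^{\bZ} = \epsilon \nabla \log \bar \nu^{\bZ}$, the integration-by-parts identity relating $\E[\|\nabla\log\nu\|^2]$ to the expected Laplacian of $\log\nu$, and the $(L/\epsilon)$-smoothness of the log-density coming from $\|\nabla^2 V\|_{\op}\le L$. The only (harmless) difference is that you tensorize into $2N$ single-particle terms before integrating by parts, while the paper carries out the same computation directly in $\R^{2dN}$.
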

\begin{proof}
    Recall by construction, as stated in~\eqref{Eq:TensorMFScore}, that the mean-field vector field $\bar b^{\bZ}$ is a scaled score function of the tensorized stationary mean-field distribution $\bar \nu^{\bZ}$, i.e.,
    $\bar b^{\bZ}(\bz) = \reg \nabla \log \bar \nu^{\bZ}(\bz)$.    
    Note from the definition~\eqref{Eq:BestResponse}, since $V$ is $L$-smooth, $\log \bar \nu^X$ and $\log \bar \nu^Y$ are $(L/\reg)$-smooth, and thus $\log \bar \nu^{\bZ}$ is also $(L/\reg)$-smooth, i.e., 
    $\left\|\nabla^2 \log \bar \nu^{\bZ}(\bz)\right\|_\op \le L/\reg$ for all $\bz \in \R^{2dN}$.
    In particular, 
    \begin{align*}
        \Delta \log \bar \nu^{\bZ}(\bz) = \Tr\left(\nabla^2 \log \bar \nu^{\bZ}(\bz)\right) \le \frac{L}{\reg} \cdot 2dN.
    \end{align*}
    Using integration by parts, where the boundary term is $0$ since $\bar \nu^{\bZ}(\bz) \nabla \log \bar \nu^{\bZ}(\bz) \to 0$ as $\|\bz\| \to \infty$:
    \begin{align*}
        \E_{\bar \nu^{\bZ}}\left[\left\|\bar b^{\bZ}(\bar \bZ)\right\|^2\right]
        &= \reg^2 \, \E_{\bar \nu^{\bZ}}\left[\left\|\nabla \log \bar \nu^{\bZ}(\bar \bZ)\right\|^2\right] \\
        &= \reg^2 \int_{\R^{2dN}} \bar \nu^{\bZ}(\bz) \left\langle \nabla \log \bar \nu^{\bZ}(\bz), \, \nabla \log \bar \nu^{\bZ}(\bz) \right\rangle \, d\bz \\
        &= -\reg^2 \int_{\R^{2dN}} \left\langle \nabla \bar \nu^{\bZ}(\bz), \, \nabla \log \bar \nu^{\bZ}(\bz) \right\rangle \, d\bz \\
        &= \reg^2 \int_{\R^{2dN}} \bar \nu^{\bZ}(\bz) \, \Delta \log \bar \nu^{\bZ}(\bz) \, d\bz \\
        &\le 2\reg dLN.
    \end{align*}
\end{proof}

\subsubsection{Bound on the finite-particle vector field under mean-field distribution}

\begin{lemma}\label{Lem:BoundFPMF}
    Assume Assumption~\ref{As:SCSmooth}.
    Then for the finite-particle vector field $b^{\bZ}$ defined in~\eqref{Eq:Defbz}, and for the tensorized stationary mean-field distribution~\eqref{Eq:TensorMF}:
    \begin{align*}
        \E_{\bar \nu^{\bZ}}\left[\left\|b^{\bZ}(\bar \bZ)\right\|^2\right] \le 2L^2 \, \Var_{\bar \nu^Z}(\bar Z) + 4\reg dLN.
    \end{align*}
\end{lemma}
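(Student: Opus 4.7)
The plan is to bound $\|b^{\bZ}(\bar \bZ)\|^2$ by decomposing it against the mean-field vector field $\bar b^{\bZ}$, then invoking the two lemmas immediately preceding this statement in the excerpt. Concretely, I would write
\begin{align*}
  b^{\bZ}(\bar \bZ) = \bigl(b^{\bZ}(\bar \bZ) - \bar b^{\bZ}(\bar \bZ)\bigr) + \bar b^{\bZ}(\bar \bZ)
\end{align*}
and apply the elementary inequality $\|a+b\|^2 \le 2\|a\|^2 + 2\|b\|^2$ to obtain
\begin{align*}
  \E_{\bar \nu^{\bZ}}\bigl[\|b^{\bZ}(\bar \bZ)\|^2\bigr]
  \,\le\, 2\,\E_{\bar \nu^{\bZ}}\bigl[\|b^{\bZ}(\bar \bZ) - \bar b^{\bZ}(\bar \bZ)\|^2\bigr]
  + 2\,\E_{\bar \nu^{\bZ}}\bigl[\|\bar b^{\bZ}(\bar \bZ)\|^2\bigr].
\end{align*}

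Next, the first term is exactly what is controlled by Lemma~\ref{Lem:ComparisonStationary}, which gives $\E_{\bar \nu^{\bZ}}[\|\bar b^{\bZ}(\bar \bZ) - b^{\bZ}(\bar \bZ)\|^2] \le L^2 \, \Var_{\bar \nu^Z}(\bar Z)$. The second term is controlled by Lemma~\ref{Lem:BoundFisher}, which gives $\E_{\bar \nu^{\bZ}}[\|\bar b^{\bZ}(\bar \bZ)\|^2] \le 2\epsilon d L N$. Substituting both bounds gives precisely $2 L^2 \, \Var_{\bar \nu^Z}(\bar Z) + 4 \epsilon d L N$, matching the claim.

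There is essentially no obstacle here: the lemma is a short corollary, and the only thing one has to be slightly careful about is the choice of the ``anchor'' vector field used in the decomposition. Anchoring at $\bar b^{\bZ}$ is the natural choice because (i) under $\bar \nu^{\bZ}$ the mean-field vector field coincides with the (scaled) score of $\bar \nu^{\bZ}$ via~\eqref{Eq:TensorMFScore}, which is why Lemma~\ref{Lem:BoundFisher} can invoke integration by parts, and (ii) the difference $b^{\bZ} - \bar b^{\bZ}$ measures exactly the empirical-mean vs.\ expectation fluctuation that Lemma~\ref{Lem:ComparisonStationary} controls by the variance of $\bar \nu^Z$. No additional assumption beyond Assumption~\ref{As:SCSmooth} is needed, and the resulting constants $2$ and $4$ arise directly from the factor of $2$ in $\|a+b\|^2 \le 2\|a\|^2+2\|b\|^2$.
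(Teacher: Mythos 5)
Your proposal is correct and is essentially identical to the paper's own proof: the same decomposition $b^{\bZ}(\bar \bZ) = (b^{\bZ}(\bar \bZ) - \bar b^{\bZ}(\bar \bZ)) + \bar b^{\bZ}(\bar \bZ)$, the same inequality $\|a+b\|^2 \le 2\|a\|^2 + 2\|b\|^2$, and the same invocations of Lemma~\ref{Lem:ComparisonStationary} and Lemma~\ref{Lem:BoundFisher}. No gaps.
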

\begin{proof}
    We can bound:
    \begin{align*}
        \E_{\bar \nu^{\bZ}}\left[\left\|b^{\bZ}(\bar \bZ) \right\|^2 \right]
        &= \E_{\bar \nu^{\bZ}}\left[\left\|b^{\bZ}(\bar \bZ) - \bar b^{\bZ}(\bar \bZ) + \bar b^{\bZ}(\bar \bZ) \right\|^2\right] \\
        &\le 2\E_{\bar \nu^{\bZ}}\left[\left\|b^{\bZ}(\bar \bZ) - \bar b^{\bZ}(\bar \bZ) \right\|^2\right]
        + 2\E_{\bar \nu^{\bZ}} \left[\left\|\bar b^{\bZ}(\bar \bZ) \right\|^2 \right] \\
        &\le 2L^2 \, \Var_{\bar \nu^Z}(\bar Z) + 4\reg dLN.
    \end{align*}
    In the above, we have introduced an intermediate term $\bar b^{\bZ}(\bar \bZ)$, used the inequality $\|a+b\|^2 \le 2\|a\|^2 + 2\|b\|^2$, and used the results from Lemma~\ref{Lem:ComparisonStationary} and Lemma~\ref{Lem:BoundFisher}.
\end{proof}

\subsubsection{Bound on the finite-particle vector field under arbitrary distribution}

\begin{lemma}\label{Lem:BoundFPArb}
    Assume Assumption~\ref{As:SCSmooth}.
    Then for the finite-particle vector field $b^{\bZ}$ defined in~\eqref{Eq:Defbz}, and for any distribution $\rho^{\bZ} \in \P(\R^{2dN})$:
    \begin{align*}
        \E_{\rho^{\bZ}}\left[\left\|b^{\bZ}(\bZ)\right\|^2\right] \le 8L^2 \, W_2(\rho^{\bZ}, \bar \nu^{\bZ})^2 + 4L^2 \, \Var_{\bar \nu^Z}(\bar Z) + 8\reg dLN.
    \end{align*}
\end{lemma}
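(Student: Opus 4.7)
\textbf{Proof plan for Lemma~\ref{Lem:BoundFPArb}.}
The plan is to reduce this bound to the stationary-case bound of Lemma~\ref{Lem:BoundFPMF} by a coupling argument, using the Lipschitz property of $b^{\bZ}$ from Lemma~\ref{Lem:LipschitzMonotone_bz}. Concretely, I would let $(\bZ, \bar \bZ)$ be distributed according to the optimal $W_2$ coupling between $\rho^{\bZ}$ and $\bar \nu^{\bZ}$, so that $\E[\|\bZ - \bar \bZ\|^2] = W_2(\rho^{\bZ}, \bar \nu^{\bZ})^2$ and $\bar \bZ \sim \bar \nu^{\bZ}$ marginally.

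Then I would split $b^{\bZ}(\bZ) = (b^{\bZ}(\bZ) - b^{\bZ}(\bar \bZ)) + b^{\bZ}(\bar \bZ)$ and apply the elementary inequality $\|a+b\|^2 \le 2\|a\|^2 + 2\|b\|^2$ to get
\begin{align*}
    \E_{\rho^{\bZ}}\!\left[\left\|b^{\bZ}(\bZ)\right\|^2\right]
    \,\le\, 2\,\E\!\left[\left\|b^{\bZ}(\bZ) - b^{\bZ}(\bar \bZ)\right\|^2\right] + 2\,\E_{\bar \nu^{\bZ}}\!\left[\left\|b^{\bZ}(\bar \bZ)\right\|^2\right].
\end{align*}
For the first term, Lemma~\ref{Lem:LipschitzMonotone_bz} tells us $b^{\bZ}$ is $(2L)$-Lipschitz, so $\E[\|b^{\bZ}(\bZ)-b^{\bZ}(\bar\bZ)\|^2] \le 4L^2\,\E[\|\bZ-\bar\bZ\|^2] = 4L^2\,W_2(\rho^{\bZ}, \bar \nu^{\bZ})^2$. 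For the second term, I would plug in Lemma~\ref{Lem:BoundFPMF} directly to get $2L^2 \Var_{\bar \nu^Z}(\bar Z) + 4 \epsilon d L N$.

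Combining these bounds yields exactly $8L^2 \, W_2(\rho^{\bZ}, \bar \nu^{\bZ})^2 + 4L^2 \, \Var_{\bar \nu^Z}(\bar Z) + 8\epsilon dLN$, matching the claim. There is no real obstacle here: the essential work has already been done in Lemmas~\ref{Lem:LipschitzMonotone_bz}, \ref{Lem:ComparisonStationary}, \ref{Lem:BoundFisher}, and \ref{Lem:BoundFPMF}; this lemma is a straightforward triangle-type consequence that lifts the mean-field bound to an arbitrary distribution via the Lipschitz constant and optimal $W_2$ coupling.
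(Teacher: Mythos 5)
Your proof is correct and is essentially identical to the paper's argument: the paper also takes the optimal $W_2$ coupling between $\rho^{\bZ}$ and $\bar \nu^{\bZ}$, splits $b^{\bZ}(\bZ)$ via the intermediate term $b^{\bZ}(\bar \bZ)$ with $\|a+b\|^2 \le 2\|a\|^2 + 2\|b\|^2$, and then invokes the $(2L)$-Lipschitz property from Lemma~\ref{Lem:LipschitzMonotone_bz} together with Lemma~\ref{Lem:BoundFPMF}, yielding the same constants.
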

\begin{proof}
    Let $\bZ \sim \rho^{\bZ}$ and $\bar \bZ \sim \bar \nu^{\bZ}$ such that $(\bZ,\bar \bZ)$ has the optimal $W_2$ coupling between $\rho^{\bZ}$ and $\bar \nu^{\bZ}$.
    We use $\E$ to denote the expectation over this joint coupling.
    Then we can bound:
    \begin{align*}
        \E_{\rho^{\bZ}}\left[\left\|b^{\bZ}(\bZ) \right\|^2 \right]
        &= \E\left[\left\|b^{\bZ}(\bZ) - b^{\bZ}(\bar \bZ) + b^{\bZ}(\bar \bZ) \right\|^2\right] \\
        &\le 2 \E\left[\left\|b^{\bZ}(\bZ) - b^{\bZ}(\bar \bZ) \right\|^2 \right] + 2\E_{\bar \nu^{\bZ}}\left[\left\|b^{\bZ}(\bar \bZ) \right\|^2 \right] \\
        &\le 8L^2 \, \E\left[\left\|\bZ-\bar \bZ \right\|^2 \right] + 2 \left(2L^2 \, \Var_{\bar \nu^Z}(\bar Z) + 4\reg dLN\right) \\
        &= 8L^2 \, W_2(\rho^{\bZ}, \bar \nu^{\bZ})^2 + 4L^2 \, \Var_{\bar \nu^Z}(\bar Z) + 8\reg dLN.
    \end{align*}
    In the above, we introduce an additional term $b^{\bZ}(\bar \bZ)$ with the coupling defined above, and use the inequality $\|a+b\|^2 \le 2\|a\|^2 + 2\|b\|^2$.
    In the next step, we use the property that $b^{\bZ}$ is $(2L)$-Lipschitz from Lemma~\ref{Lem:LipschitzMonotone_bz}, and the bound from Lemma~\ref{Lem:BoundFPMF}.
    In the last step, we use the fact that $(\bZ,\bar \bZ)$ has the optimal $W_2$ coupling between $\rho^{\bZ}$ and $\bar \nu^{\bZ}$.
\end{proof}

\subsection{Time Derivative of KL Divergence Along Fokker-Planck Equations}

We have the following formula on the time derivative of KL divergence of two distributions which evolve following their Fokker-Planck equations, which can have different drift terms, but with the same diffusion term.
This is a classical formula that has been used in many previous works, including for analyzing stochastic interpolants~\cite[Lemma~2.2]{albergo2023stochastic} and for showing the propagation of chaos in interacting particle systems~\cite[Lemma~3.1]{lacker2023sharp}.

\begin{lemma}\label{Lem:ddtKLDiv}
    Suppose $(\rho_t)_{t \ge 0}$ and $(\bar \rho_t)_{t \ge 0}$ are probability distributions in $\P(\R^D)$ which evolve following the Fokker-Planck equations:
    \begin{align*}
        \part{\rho_t}{t} &= -\nabla \cdot (\rho_t b_t) + c \Delta \rho_t \\
        \part{\bar \rho_t}{t} &= -\nabla \cdot (\bar \rho_t \bar b_t) + c \Delta \bar \rho_t
    \end{align*}
    for some time-dependent vector fields $b_t, \bar b_t \colon \R^D \to \R^D$, and for some constant $c \ge 0$.
    Then:
    \begin{align*}
        \frac{d}{dt} \KL(\rho_t \,\|\, \bar \rho_t) = -c \, \FI(\rho_t \,\|\, \bar \rho_t) + \E_{\rho_t}\left[\left\langle \nabla \log \frac{\rho_t}{\bar \rho_t}, \, b_t-\bar b_t \right\rangle \right].
    \end{align*}
\end{lemma}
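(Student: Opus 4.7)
The plan is a direct calculation: differentiate the definition of KL divergence in time, substitute the two Fokker-Planck equations, and then integrate by parts to recognize the relative Fisher information together with the cross-term involving $b_t-\bar b_t$.

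First I would write $\KL(\rho_t \,\|\, \bar\rho_t) = \int \rho_t \log \rho_t \, dx - \int \rho_t \log \bar\rho_t \, dx$ and apply $\partial_t$ under the integral sign. The product rule produces three pieces: $\int (\partial_t \rho_t) \log(\rho_t/\bar\rho_t) \, dx$, plus $\int \rho_t \, \partial_t \log \rho_t \, dx = \int \partial_t \rho_t \, dx = 0$ by mass conservation, plus $-\int (\rho_t/\bar\rho_t) \, \partial_t \bar\rho_t \, dx$. So the time derivative reduces to
\begin{align*}
    \frac{d}{dt} \KL(\rho_t \,\|\, \bar\rho_t) = \int (\partial_t \rho_t) \, \log \frac{\rho_t}{\bar\rho_t} \, dx - \int \frac{\rho_t}{\bar\rho_t} \, \partial_t \bar\rho_t \, dx.
\end{align*}

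Next I would substitute the Fokker-Planck equations and integrate by parts, letting $\ell_t := \log(\rho_t/\bar\rho_t)$. For the first integral, the drift part gives $-\int \nabla \cdot(\rho_t b_t) \, \ell_t \, dx = \int \rho_t \, \langle b_t, \nabla \ell_t\rangle \, dx$ and the diffusion part gives $c \int \Delta \rho_t \, \ell_t \, dx = -c \int \rho_t \, \langle \nabla \log \rho_t, \nabla \ell_t\rangle \, dx$. For the second integral, the key identity is $\nabla(\rho_t/\bar\rho_t) = (\rho_t/\bar\rho_t) \nabla \ell_t$, so $\bar\rho_t \nabla(\rho_t/\bar\rho_t) = \rho_t \nabla \ell_t$; integration by parts then yields $\int (\rho_t/\bar\rho_t) \nabla\cdot(\bar\rho_t \bar b_t) \, dx = -\int \rho_t \, \langle \bar b_t, \nabla \ell_t\rangle \, dx$ for the drift part, and $-c \int (\rho_t/\bar\rho_t) \Delta \bar\rho_t \, dx = c \int \rho_t \, \langle \nabla \log \bar\rho_t, \nabla \ell_t\rangle \, dx$ for the diffusion part.

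Finally I would collect the four pieces. The drift pieces combine to $\int \rho_t \, \langle b_t - \bar b_t, \nabla \ell_t\rangle \, dx$, and the diffusion pieces combine to $-c \int \rho_t \, \langle \nabla \log \rho_t - \nabla \log \bar\rho_t, \nabla \ell_t\rangle \, dx = -c \int \rho_t \, \|\nabla \ell_t\|^2 \, dx = -c \, \FI(\rho_t \,\|\, \bar\rho_t)$, which gives the claimed identity. The only non-routine step is ensuring the integration-by-parts boundary terms vanish; under the implicit regularity/integrability conditions (finite KL and finite Fisher information, plus sufficient decay of $\rho_t, \bar\rho_t$ at infinity) this is standard, and the proof should simply note that these conditions are assumed to hold along the flows under consideration.
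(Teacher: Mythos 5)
Your proposal is correct and follows essentially the same route as the paper's proof: the identical three-term decomposition of $\frac{d}{dt}\KL$ (with the middle term vanishing by mass conservation), the same substitutions of the two Fokker--Planck equations, and the same integration-by-parts identities $\Delta\rho_t = \nabla\cdot(\rho_t\nabla\log\rho_t)$ and $\bar\rho_t\nabla(\rho_t/\bar\rho_t) = \rho_t\nabla\log(\rho_t/\bar\rho_t)$, leading to the same recombination into the Fisher information term and the drift-difference term. The only difference is your explicit remark about boundary terms vanishing, which the paper treats implicitly.
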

\begin{proof}
    We can compute by differentiating under the integral sign and using chain rule:
    \begin{align}
        \frac{d}{dt} \KL(\rho_t \,\|\, \bar \rho_t)
        &= \frac{d}{dt} \int_{\R^D} \rho_t \log \frac{\rho_t}{\bar \rho_t} \, dx \notag \\
        &= \int_{\R^D} \part{\rho_t}{t} \log \frac{\rho_t}{\bar \rho_t} \, dx + \int_{\R^D} \rho_t \frac{1}{\rho_t} \part{\rho_t}{t} \, dx - \int_{\R^D} \rho_t \frac{1}{\bar \rho_t} \part{\bar \rho_t}{t}. \label{Eq:ddtCalc1}
    \end{align}
    We compute each term above.
    For the first term in~\eqref{Eq:ddtCalc1}, 
    by the Fokker-Planck equation and  using integration by parts, where all the boundary terms vanish:
    \begin{align*}
        \int_{\R^D} \part{\rho_t}{t} \log \frac{\rho_t}{\bar \rho_t} \, dx
        &= \int_{\R^D} \left(-\nabla \cdot (\rho_t b_t) + c \Delta \rho_t\right) \log \frac{\rho_t}{\bar \rho_t} \, dx \\
        &= \int_{\R^D} \rho_t \left\langle b_t, \nabla \log \frac{\rho_t}{\bar \rho_t} \right\rangle \, dx
        - c\int_{\R^D} \rho_t \left\langle \nabla \log \rho_t, \nabla \log \frac{\rho_t}{\bar \rho_t} \right\rangle \, dx
    \end{align*}
    where for the second term above we have used the identity that $\Delta \rho_t = \nabla \cdot \nabla \rho_t = \nabla \cdot (\rho_t \nabla \log \rho_t)$.

    For the second term in~\eqref{Eq:ddtCalc1}, we can show it is equal to $0$:
    \begin{align*}
        \int_{\R^D} \rho_t \frac{1}{\rho_t} \part{\rho_t}{t} \, dx
        &= \int_{\R^D} \part{\rho_t}{t} \, dx
        \,=\, \part{}{t} \int_{\R^D} \rho_t \, dx
        \,=\, \part{}{t} (1)
        \,=\, 0.
    \end{align*}

    For the third term in~\eqref{Eq:ddtCalc1}, 
    by the Fokker-Planck equation and  using integration by parts:
    \begin{align*}
        - \int_{\R^D} \rho_t \frac{1}{\bar \rho_t} \part{\bar \rho_t}{t}
        &= -\int_{\R^D} \left(-\nabla \cdot (\bar \rho_t \bar b_t) + c \Delta \bar \rho_t\right) \frac{\rho_t}{\bar \rho_t} \, dx \\
        &= - \int_{\R^D} \bar \rho_t \left\langle \bar b_t, \nabla \frac{\rho_t}{\bar \rho_t} \right\rangle \, dx
        + c \int_{\R^D} \bar \rho_t \left\langle \nabla \log \bar \rho_t, \nabla \frac{\rho_t}{\bar \rho_t} \right\rangle \, dx \\
        &= - \int_{\R^D} \rho_t \left\langle \bar b_t, \nabla \log \frac{\rho_t}{\bar \rho_t} \right\rangle \, dx
        + c \int_{\R^D} \rho_t \left\langle \nabla \log \bar \rho_t, \nabla \log \frac{\rho_t}{\bar \rho_t} \right\rangle \, dx
    \end{align*}
    where in the above we have used the identity that $\Delta \bar \rho_t = \nabla \cdot (\bar \rho_t \nabla \log \bar \rho_t)$,
    and $\bar \rho_t \nabla \frac{\rho_t}{\bar \rho_t} = \rho_t \nabla \log \frac{\rho_t}{\bar \rho_t}$.

    Combining the three terms above in~\eqref{Eq:ddtCalc1}, we obtain:
    \begin{align*}
        \frac{d}{dt} \KL(\rho_t \,\|\, \bar \rho_t)
        &= \int_{\R^D} \rho_t \left\langle b_t, \nabla \log \frac{\rho_t}{\bar \rho_t} \right\rangle \, dx
        - c\int_{\R^D} \rho_t \left\langle \nabla \log \rho_t, \nabla \log \frac{\rho_t}{\bar \rho_t} \right\rangle \, dx \\
        &\qquad - \int_{\R^D} \rho_t \left\langle \bar b_t, \nabla \log \frac{\rho_t}{\bar \rho_t} \right\rangle \, dx
        + c \int_{\R^D} \rho_t \left\langle \nabla \log \bar \rho_t, \nabla \log \frac{\rho_t}{\bar \rho_t} \right\rangle \, dx \\
        &= \int_{\R^D} \rho_t \left\langle b_t-\bar b_t, \nabla \log \frac{\rho_t}{\bar \rho_t} \right\rangle \, dx
        - c\int_{\R^D} \rho_t \left\| \nabla \log \frac{\rho_t}{\bar \rho_t} \right\|^2 \, dx  \\
        &= \E_{\rho_t}\left[\left\langle b_t-\bar b_t, \nabla \log \frac{\rho_t}{\bar \rho_t} \right\rangle \right]
        - c \, \FI(\rho_t \,\|\, \bar \rho_t)
    \end{align*}
    as desired.
\end{proof}

\section{Proofs for the Finite-Particle Dynamics}

\subsection{Proof of Theorem~\ref{Thm:ConvergenceDynToMF} (Biased Convergence of the Finite-Particle Dynamics)}
\label{Sec:ConvergenceDynToMFProof}

\begin{proof}[Proof of Theorem~\ref{Thm:ConvergenceDynToMF}]
    \textbf{(1)~Biased $W_2$ convergence bound:} We use the synchronous coupling technique.
    Concretely, we consider $(\bZ_t)_{t \ge 0}$ and $(\bar \bZ_t)_{t \ge 0}$ in $\R^{2dN}$, where $\bZ_t \sim \rho_t^{\bZ}$ evolves following the finite-particle dynamics~\eqref{Eq:ParticleSystem3}:
    \begin{align*}
        d\bZ_t = b^{\bZ}(\bZ_t) \, dt + \sqrt{2\reg} \, dW_t^{\bZ}
    \end{align*}
    and where $\bar \bZ_t \sim \bar \rho_t^{\bZ} = \bar \nu^{\bZ}$ evolves following the stationary tensorized mean-field dynamics~\eqref{Eq:ParticleSystem3}:
    \begin{align*}
        d\bar \bZ_t = \bar b^{\bZ}(\bar \bZ_t) \, dt + \sqrt{2\reg} \, dW_t^{\bZ}
    \end{align*}
    where we start from the stationary distribution $\bar \rho_0^{\bZ} = \bar \nu^{\bZ}$, so $\bar \rho_t^{\bZ} = \bar \nu^{\bZ}$ for all $t \ge 0$.
    Suppose we run the two stochastic processes above using the same standard Brownian motion $(W_t^{\bZ})_{t \ge 0}$ in $\R^{2dN}$.
    Furthermore, suppose we start the two processes above from $(\bZ_0, \bar \bZ_0)$ which has a joint distribution which is the optimal $W_2$ coupling between $\rho_0^{\bZ}$ and $\bar \rho_0^{\bZ} = \bar \nu^{\bZ}$.

    With the set up above, the difference $\bZ_t-\bar \bZ_t$ evolves via:
    \begin{align*}
        d(\bZ_t-\bar \bZ_t) = \left(b^{\bZ}(\bZ_t)-\bar b^{\bZ}(\bar \bZ_t)\right) dt
    \end{align*}
    where the Brownian motion terms cancel because they are equal in the two processes.
    Then we get:
    \begin{align*}
        d\left\|\bZ_t-\bar \bZ_t\right\|^2 = 2\left\langle \bZ_t-\bar \bZ_t, \, b^{\bZ}(\bZ_t)-\bar b^{\bZ}(\bar \bZ_t) \right \rangle dt.
    \end{align*}
    Taking expectation over the joint distribution, we obtain:
    \begin{align*}
        \frac{d}{dt} \E\left[ \left\|\bZ_t-\bar \bZ_t\right\|^2 \right]
        &= 2 \E\left[\left\langle \bZ_t-\bar \bZ_t, \, b^{\bZ}(\bZ_t)-\bar b^{\bZ}(\bar \bZ_t) \right \rangle \right] \\
        &= 2 \E\left[\left\langle \bZ_t-\bar \bZ_t, b^{\bZ}(\bZ_t)- b^{\bZ}(\bar \bZ_t) \right \rangle \right] 
        + 2 \E\left[\left\langle \bZ_t-\bar \bZ_t, b^{\bZ}(\bar \bZ_t)-\bar b^{\bZ}(\bar \bZ_t) \right \rangle \right],
    \end{align*}
    where in the last step we have introduced an intermediate term $\E\left[\left\langle \bZ_t-\bar \bZ_t, b^{\bZ}(\bar \bZ_t) \right \rangle \right]$.
    
    We can bound the two terms above separately as follows.
    For the first term, recall from Lemma~\ref{Lem:LipschitzMonotone_bz} that $-b^{\bZ}$ is $\alpha$-strongly monotone, so we can bound:
    \begin{align*}
        \E\left[\left\langle \bZ_t-\bar \bZ_t, \, b^{\bZ}(\bZ_t)- b^{\bZ}(\bar \bZ_t) \right \rangle \right] 
        \le -\alpha \, \E\left[\left\|\bZ_t-\bar \bZ_t \right \|^2 \right].
    \end{align*}
    For the second term, we use the inequality $\langle a,b \rangle \le \frac{\alpha}{4} \|a\|^2 + \frac{1}{\alpha} \|b\|^2$, and apply the bound from Lemma~\ref{Lem:ComparisonStationary} to get:
    \begin{align*}
        \E\left[\left\langle \bZ_t-\bar \bZ_t, b^{\bZ}(\bar \bZ_t)-\bar b^{\bZ}(\bar \bZ_t) \right \rangle \right]
        &\le \frac{\alpha}{4} \, \E\left[\left\|\bZ_t-\bar \bZ_t \right \|^2 \right] + \frac{1}{\alpha} \E_{\bar \nu^{\bZ}}\left[\left\|b^{\bZ}(\bar \bZ_t)-\bar b^{\bZ}(\bar \bZ_t) \right \|^2 \right] \\
        &\le \frac{\alpha}{4} \, \E\left[\left\|\bZ_t-\bar \bZ_t \right \|^2 \right] + \frac{L^2}{\alpha} \Var_{\bar \nu^{Z}}(\bar Z).
    \end{align*}
    Plugging in these two bounds to the computation above, we obtain:
    \begin{align*}
        \frac{d}{dt} \E\left[ \left\|\bZ_t-\bar \bZ_t\right\|^2 \right]
        &\le -\frac{3\alpha}{2} \, \E\left[\left\|\bZ_t-\bar \bZ_t \right \|^2 \right] + \frac{2L^2}{\alpha} \Var_{\bar \nu^{Z}}(\bar Z).
    \end{align*}    
    This is equivalent to:
    \begin{align*}
        \frac{d}{dt}\left( e^{\frac{3}{2}\alpha t} \, \E\left[ \left\|\bZ_t-\bar \bZ_t\right\|^2 \right] \right)
        &\le e^{\frac{3}{2}\alpha t} \,\frac{2L^2}{\alpha} \Var_{\bar \nu^{Z}}(\bar Z).
    \end{align*}
    Integrating from $0$ to $t$ and rearranging yields:
    \begin{align*}
        \E\left[ \left\|\bZ_t-\bar \bZ_t\right\|^2 \right]
        &\le e^{-\frac{3}{2}\alpha t} \, \E\left[ \left\|\bZ_0-\bar \bZ_0\right\|^2 \right] + \left(\frac{1-e^{-\frac{3}{2}\alpha t}}{\frac{3}{2}\alpha} \right) \,\frac{2L^2}{\alpha} \Var_{\bar \nu^{Z}}(\bar Z) \\
        &\le e^{-\frac{3}{2}\alpha t} \, \E\left[ \left\|\bZ_0-\bar \bZ_0\right\|^2 \right] + \,\frac{2L^2}{\alpha^2} \Var_{\bar \nu^{Z}}(\bar Z) \\
        &= e^{-\frac{3}{2}\alpha t} \, W_2(\rho_0^{\bZ}, \bar \nu^{\bZ})^2 + \,\frac{2L^2}{\alpha^2} \Var_{\bar \nu^{Z}}(\bar Z)
    \end{align*}
    where in the second inequality above we drop the $-e^{-\frac{3}{2}\alpha t}$ term and further bound $\frac{4}{3} \le 2$,
    and in the last step we use the fact that $(\bZ_0, \bar \bZ_0)$ has the optimal $W_2$ coupling between $\rho_0^{\bZ}$ and $\bar \nu^{\bZ}$.
    Finally, since $W_2$ distance is the infimum over all coupling, from the above inequality we conclude:
    \begin{align}\label{Eq:CalcW2Bound}
        W_2(\rho_t^{\bZ}, \bar \nu^{\bZ})^2
        &\le e^{-\frac{3}{2}\alpha t} \, W_2(\rho_0^{\bZ}, \bar \nu^{\bZ})^2 + \,\frac{2L^2}{\alpha^2} \Var_{\bar \nu^{Z}}(\bar Z) 
    \end{align}
    as desired.

    \medskip
    \noindent
    \textbf{(2)~Biased convergence in KL divergence:}
    We consider $\bZ_t \sim \rho_t^{\bZ}$ which evolves following the finite-particle dynamics~\eqref{Eq:ParticleSystem3} in $\R^{2dN}$, so $\rho_t^{\bZ}$ evolves following the Fokker-Planck equation:
    \begin{align*}
        \part{\rho_t^{\bZ}}{t} = -\nabla \cdot \left(\rho_t^{\bZ} \, b^{\bZ} \right) + \reg \Delta \rho_t^{\bZ}.
    \end{align*}
    We also consider $\bar \bZ_t \sim \bar \rho_t^{\bZ} = \bar \nu^{\bZ}$ which evolves following the stationary tensorized mean-field dynamics~\eqref{Eq:MFSystemTensor} from $\bar \bZ_0 \sim \bar \rho_0^{\bZ} = \bar \nu^{\bZ}$, so $\bar \rho_t^{\bZ}$ satisfies the Fokker-Planck equation:
    \begin{align*}
        \part{\bar \rho_t^{\bZ}}{t} = -\nabla \cdot \left(\bar \rho_t^{\bZ} \, \bar b^{\bZ} \right) + \reg \Delta \bar \rho_t^{\bZ}.
    \end{align*}
    (Note since $\bar \rho_t^{\bZ} = \bar \nu^{\bZ}$, both sides of the Fokker-Planck equation above are in fact equal to $0$.)

    Then using the formula from Lemma~\ref{Lem:ddtKLDiv}, we can compute:
    \begin{align*}
        \frac{d}{dt} \KL\left(\rho_t^{\bZ} \,\|\, \bar \rho_t^{\bZ}\right)
        &\stackrel{(1)}{=} -\reg \, \FI\left(\rho_t^{\bZ} \,\|\, \bar \rho_t^{\bZ}\right) + \E_{\rho_t^{\bZ}}\left[\left\langle \nabla \log \frac{\rho_t^{\bZ}}{\bar \rho_t^{\bZ}}, \, b^{\bZ} - \bar b^{\bZ} \right\rangle \right] \\
        &\stackrel{(2)}{\le} -\frac{\reg}{2} \, \FI\left(\rho_t^{\bZ} \,\|\, \bar \rho_t^{\bZ}\right) + \frac{1}{2\reg} \, \E_{\rho_t^{\bZ}}\left[\left\|b^{\bZ} - \bar b^{\bZ} \right\|^2 \right] \\
        &\stackrel{(3)}{\le} -\alpha \, \KL\left(\rho_t^{\bZ} \,\|\, \bar \rho_t^{\bZ}\right) + \frac{1}{2\reg} \, \E_{\rho_t^{\bZ}}\left[\left\|b^{\bZ} - \bar b^{\bZ} \right\|^2 \right] \\
        &\stackrel{(4)}{\le} -\alpha \, \KL\left(\rho_t^{\bZ} \,\|\, \bar \rho_t^{\bZ}\right) + 
        \frac{L^2}{\reg} \, W_2(\rho_t^{\bZ}, \bar \nu^{\bZ})^2 + \frac{2L^2}{\reg} \, \Var_{\bar \nu^{Z}}(\bar Z) \\
        &\stackrel{(5)}{\le} -\alpha \, \KL\left(\rho_t^{\bZ} \,\|\, \bar \rho_t^{\bZ}\right) + 
        e^{-\frac{3}{2}\alpha t} \, \frac{L^2}{\reg} \, W_2(\rho_0^{\bZ}, \bar \nu^{\bZ})^2 + \frac{2L^2}{\reg} \left(\frac{L^2}{\alpha^2} + 1\right) \, \Var_{\bar \nu^{Z}}(\bar Z) \\
        &\stackrel{(6)}{\le} -\alpha \, \KL\left(\rho_t^{\bZ} \,\|\, \bar \rho_t^{\bZ}\right) + 
        e^{-\frac{3}{2}\alpha t} \, \frac{L^2}{\reg} \, W_2(\rho_0^{\bZ}, \bar \nu^{\bZ})^2 + \frac{4L^4}{\alpha^2 \reg} \, \Var_{\bar \nu^{Z}}(\bar Z).
    \end{align*}
    In the above, in step (1) we use the time derivative formula for KL divergence from Lemma~\ref{Lem:ddtKLDiv}.
    In step (2), we apply the Cauchy-Schwarz inequality ($\langle a,b \rangle \le \frac{\reg}{2}\|a\|^2 + \frac{1}{2\reg}\|b\|^2$) to the second term.
    In step (3), we use the fact that $\bar \rho_t^{\bZ} = \bar \nu^{\bZ}$ is $(\alpha/\reg)$-strongly log-concave by Lemma~\ref{Lem:Equilibrium}, so it also satisfies $(\alpha/\reg)$-LSI, which allows us to bound the relative Fisher information by KL divergence.
    In step (4), we use the bound from Lemma~\ref{Lem:ComparisonArbitrary}.
    In step (5), we use the $W_2$ biased convergence bound~\eqref{Eq:CalcW2Bound} that we derived earlier, in the first part of this Theorem. 
    In step (6), we bound $1 \le L^2/\alpha^2$ since $\alpha \le L$.

    Let us now substitute $\bar \rho_t^{\bZ} = \bar \nu^{\bZ}$.
    Then we can write the differential inequality above as:
    \begin{align*}
        \frac{d}{dt} \left(e^{\alpha t} \KL\left(\rho_t^{\bZ} \,\|\, \bar \nu^{\bZ} \right) \right)
        &\le 
         e^{-\frac{1}{2}\alpha t} \, \frac{L^2}{\reg} \, W_2(\rho_0^{\bZ}, \bar \nu^{\bZ})^2 + e^{\alpha t} \, \frac{4L^4}{\alpha^2 \reg} \, \Var_{\bar \nu^{Z}}(\bar Z).
    \end{align*}
    Integrating from $0$ to $t$ and rearranging yields:
    \begin{align*}
        &\KL\left(\rho_t^{\bZ} \,\|\, \bar \nu^{\bZ} \right) \\
        &\le e^{-\alpha t} \, \KL\left(\rho_0^{\bZ} \,\|\, \bar \nu^{\bZ} \right) + e^{-\alpha t} \left( \frac{1-e^{-\frac{1}{2}\alpha t}}{\frac{1}{2}\alpha} \right)
         \frac{L^2}{\reg} \, W_2(\rho_0^{\bZ}, \bar \nu^{\bZ})^2 + \left(\frac{1-e^{-\alpha t}}{\alpha}\right) \frac{4L^4}{\alpha^2 \reg} \, \Var_{\bar \nu^{Z}}(\bar Z) \\
         &\le e^{-\alpha t} \left( \KL\left(\rho_0^{\bZ} \,\|\, \bar \nu^{\bZ} \right) + 
         \frac{2L^2}{\alpha \reg} \, W_2(\rho_0^{\bZ}, \bar \nu^{\bZ})^2\right) + \frac{4L^4}{\alpha^3 \reg} \, \Var_{\bar \nu^{Z}}(\bar Z) \\
         &\le e^{-\alpha t} \left( \KL\left(\rho_0^{\bZ} \,\|\, \bar \nu^{\bZ} \right) + 
         \frac{2L^2}{\alpha \reg} \, W_2(\rho_0^{\bZ}, \bar \nu^{\bZ})^2\right) + \frac{4L^4}{\alpha^3 \reg} \, \Var_{\bar \nu^{Z}}(\bar Z)
    \end{align*}
    as desired.
\end{proof}

\subsection{Proof of Corollary~\ref{Cor:AvgParticleSystem} (Average Particle Along the Finite-Particle Dynamics)}
\label{Sec:AvgParticleSystemProof}

\begin{proof}[Proof of Corollary~\ref{Cor:AvgParticleSystem}]
    For $\bZ_t = (\bX_t,\bY_t) = (X_t^1,\dots,X_t^N,Y_t^1,\dots,Y_t^N) \sim \rho_t^{\bZ}$ in $\R^{2dN}$, let $\rho_t^{Z,i} \in \P(\R^{2d})$ be the marginal distribution of the component $Z_t^i = (X_t^i,Y_t^i) \in \R^{2d}$, for $i \in [N]$.
    We rearrange the coordinates to write $\bZ_t = (Z_t^1,\dots,Z_t^N)$ for convenience, and still denote its distribution by $\rho_t^{\bZ}$. 
    We introduce an independent product of the marginal distributions:
    \begin{align*}
        \hat \rho_t^{\bZ} := \bigotimes_{i \in [N]} \rho_t^{Z,i}.
    \end{align*}    
    Since $\bar \nu^{\bZ} = (\bar \nu^X)^{\otimes N} \otimes (\bar \nu^Y)^{\otimes N}$ is an independent product, after rearranging the coordinates as above, we can write it as $\bar \nu^{\bZ} = (\bar \nu^Z)^{\otimes N}$ where $\bar \nu^Z = \bar \nu^X \otimes \bar \nu^Y$.
    
    Then we can bound the KL divergence by:
    \begin{align*}
        \KL(\rho_t^{\bZ} \,\|\, \bar \nu^{\bZ})
        = \E_{\rho_t^{\bZ}}\left[ \log \frac{\rho_t^{\bZ}}{\bar \nu^{\bZ}} \right]
        &= \E_{\rho_t^{\bZ}}\left[ \log \frac{\rho_t^{\bZ}}{\hat \rho_t^{\bZ}} \right] + \E_{\rho_t^{\bZ}}\left[\log \frac{\hat \rho_t^{\bZ}}{(\bar \nu^{Z})^{\otimes N}} \right] \\
        &= \E_{\rho_t^{\bZ}}\left[ \log \frac{\rho_t^{\bZ}}{\hat \rho_t^{\bZ}} \right] + \sum_{i \in [N]} \E_{\rho_t^{Z,i}} \left[\log \frac{\rho_t^{Z,i}}{\bar \nu^{Z}} \right] \\
        &= \KL\left(\rho_t^{\bZ} \,\|\, \hat \rho_t^{\bZ}\right) + \sum_{i \in [N]} \KL\left(\rho_t^{Z,i} \,\|\, \bar \nu^{Z}\right) \\
        &\ge \sum_{i \in [N]} \KL\left(\rho_t^{Z,i} \,\|\, \bar \nu^{Z}\right)
    \end{align*}
    where the inequality follows by dropping the first term $\KL\left(\rho_t^{\bZ} \,\|\, \hat \rho_t^{\bZ}\right) \ge 0$, which is the multivariate mutual information of $\bZ_t$. 
    Furthermore, since KL divergence is jointly convex in both arguments, we can further bound the above by:
    \begin{align*}
        \KL(\rho_t^{\bZ} \,\|\, \bar \nu^{\bZ})
        &\ge N \cdot \frac{1}{N} \sum_{i \in [N]} \KL\left(\rho_t^{Z,i} \,\|\, \bar \nu^{Z}\right) 
        \,\ge\, N \cdot \KL\left(\rho_t^{Z,\avg} \,\|\, \bar \nu^{Z}\right)
    \end{align*}
    using the definition $\rho_t^{Z,\avg} = \frac{1}{N} \sum_{i \in [N]} \rho_t^{Z,i}$.
    
    Combining this with the bound for $\KL(\rho_t^{\bZ} \,\|\, \bar \nu^{\bZ})$ from Theorem~\ref{Thm:ConvergenceDynToMF} gives the desired result.
\end{proof}

\section{Proofs for the Finite-Particle Algorithm}

\subsection{Preliminary Results}

\subsubsection{Bound in one step of the finite-particle algorithm}

We present the following bound which will be useful in our subsequent analysis.

\begin{lemma}\label{Lem:BoundDiffAlg}
    Assume Assumption~\ref{As:SCSmooth}.
    Let $\bZ_0 \sim \rho_0^{\bZ}$ for any $\rho_0^{\bZ} \in \P(\R^{2dN})$.
    For $t \ge 0$, define $\bZ_t \sim \rho_t^{\bZ}$ by:
    \begin{align*}
        \bZ_t = \bZ_0 + t b^{\bZ}(\bZ_0) + \sqrt{2\reg t} \, \bm{\zeta}
    \end{align*}
    where $\bm{\zeta} \sim \N(0,I)$ is an independent Gaussian random variable in $\R^{2dN}$, and $b^{\bZ}$ is the vector field defined in~\eqref{Eq:Defbz}.
    If $0 \le t \le 1/(4L)$, then:
    \begin{align*}
        \E\left[\left\|b^{\bZ}(\bZ_t)-b^{\bZ}(\bZ_0)\right\|^2\right] \le 128t^2 L^4 \, W_2(\rho_t^{\bZ}, \bar \nu^{\bZ})^2 + 64 t^2L^4 \, \Var_{\bar \nu^Z}(\bar Z) + 64\reg t dL^2 N.
    \end{align*}
\end{lemma}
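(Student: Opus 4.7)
The plan is to reduce the target quantity to a bound on $\E[\|\bZ_t-\bZ_0\|^2]$ using the Lipschitz property of $b^{\bZ}$, then carefully control $\E[\|b^{\bZ}(\bZ_0)\|^2]$ via a self-referential inequality that closes because of the step-size restriction.

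First, since $b^{\bZ}$ is $(2L)$-Lipschitz by Lemma~\ref{Lem:LipschitzMonotone_bz},
\[
    \E\left[\left\|b^{\bZ}(\bZ_t)-b^{\bZ}(\bZ_0)\right\|^2\right] \le 4L^2 \, \E\left[\left\|\bZ_t-\bZ_0\right\|^2\right].
\]
Expanding $\bZ_t - \bZ_0 = t b^{\bZ}(\bZ_0) + \sqrt{2\epsilon t}\,\bm{\zeta}$ and using the independence of $\bm{\zeta}\sim\N(0,I)$ from $\bZ_0$ to kill the cross term, we get
\[
    \E\left[\left\|\bZ_t-\bZ_0\right\|^2\right] = t^2 \, \E\left[\left\|b^{\bZ}(\bZ_0)\right\|^2\right] + 4\epsilon t dN.
\]
So everything reduces to estimating $\E[\|b^{\bZ}(\bZ_0)\|^2]$ in terms of $\rho_t^{\bZ}$ (not $\rho_0^{\bZ}$, since the target bound is stated at time $t$).

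Second, I would bound $\E[\|b^{\bZ}(\bZ_0)\|^2]$ by comparing to $b^{\bZ}(\bZ_t)$. Using $\|a\|^2 \le 2\|a-b\|^2 + 2\|b\|^2$, Lipschitzness of $b^{\bZ}$, and Lemma~\ref{Lem:BoundFPArb} applied at time $t$,
\[
    \E\left[\left\|b^{\bZ}(\bZ_0)\right\|^2\right]
    \le 8L^2 \, \E\left[\left\|\bZ_t-\bZ_0\right\|^2\right] + 2 \E\left[\left\|b^{\bZ}(\bZ_t)\right\|^2\right],
\]
which, after substituting the expression for $\E[\|\bZ_t-\bZ_0\|^2]$ above and invoking Lemma~\ref{Lem:BoundFPArb}, produces a self-referential inequality of the form
\[
    \E\left[\left\|b^{\bZ}(\bZ_0)\right\|^2\right] \le 8L^2 t^2 \, \E\left[\left\|b^{\bZ}(\bZ_0)\right\|^2\right] + 32 L^2 \epsilon t dN + 16 L^2 W_2(\rho_t^{\bZ}, \bar\nu^{\bZ})^2 \cdot 2 + \dots
\]
This is where the hypothesis $t \le 1/(4L)$ is essential: it forces $8L^2 t^2 \le 1/2$, so the self-referencing coefficient is strictly below $1$ and one can rearrange to get an explicit bound on $\E[\|b^{\bZ}(\bZ_0)\|^2]$ in terms of $W_2(\rho_t^{\bZ},\bar\nu^{\bZ})^2$, $\Var_{\bar\nu^Z}(\bar Z)$, $\epsilon t dN$, and $\epsilon dLN$.

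Third, I would substitute this bound back into $\E[\|b^{\bZ}(\bZ_t)-b^{\bZ}(\bZ_0)\|^2] \le 4L^2 t^2 \, \E[\|b^{\bZ}(\bZ_0)\|^2] + 16 L^2 \epsilon t dN$ and collect terms. The $W_2$ and $\Var$ contributions come out with the stated constants $128 t^2 L^4$ and $64 t^2 L^4$ directly. The various $\epsilon dN$-type contributions must be combined: the residual $t^3 L^4 \epsilon dN$ term is absorbed using $t^2 \le 1/(16L^2)$, the $t^2 L^3 \epsilon dN$ term is absorbed using $t \le 1/(4L)$, and the leftover $L^2 \epsilon t dN$ term is already of the desired form; arithmetically, $16 + 32 + 16 = 64$, matching the coefficient $64 \epsilon t dL^2 N$ in the statement.

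The main obstacle is the self-referential estimate in Step 2: one must express the bound in terms of $\rho_t^{\bZ}$ rather than the initial $\rho_0^{\bZ}$, and the only tool available for this transfer is Lipschitzness of $b^{\bZ}$, which introduces $\|\bZ_t-\bZ_0\|^2$ and hence $\|b^{\bZ}(\bZ_0)\|^2$ again. The step-size restriction $t \le 1/(4L)$ is precisely what is needed to close this loop; dropping the factor $1/4$ would fail. The rest is bookkeeping of constants to confirm that the dominant terms come out with the advertised coefficients $128$, $64$, and $64$.
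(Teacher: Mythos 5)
Your proposal is correct and follows essentially the same route as the paper's proof: Lipschitzness of $b^{\bZ}$, expansion of $\bZ_t-\bZ_0$ with the independent Gaussian killing the cross term, a self-referential absorption made possible by $8t^2L^2\le \tfrac12$, and Lemma~\ref{Lem:BoundFPArb} applied at time $t$. The only (cosmetic) difference is that you isolate $\E[\|b^{\bZ}(\bZ_0)\|^2]$ before closing the loop, whereas the paper absorbs the self-referential term directly in $\E[\|b^{\bZ}(\bZ_0)-b^{\bZ}(\bZ_t)\|^2]$; both yield the stated constants $128$, $64$, $64$.
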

\begin{proof}
    We can bound:
    \begin{align*}
        \E\left[\left\|b^{\bZ}(\bZ_0) - b^{\bZ}(\bZ_t)\right\|^2\right] 
        &\stackrel{(1)}{\le} 4L^2 \, \E\left[\left\|\bZ_0 - \bZ_t\right\|^2 \right] \\
        &\stackrel{(2)}{=} 4L^2 \, \E\left[\left\|t b^{\bZ}(\bZ_0) + \sqrt{2\reg t} \, \bm{\zeta} \right\|^2 \right] \\
        &\stackrel{(3)}{=} 4t^2 L^2 \, \E\left[\left\|b^{\bZ}(\bZ_0) \right\|^2 \right] + 8\reg t L^2 \, \E\left[\left\|\bm{\zeta}\right\|^2\right] \\
        &\stackrel{(4)}{=} 4t^2 L^2 \, \E\left[\left\|b^{\bZ}(\bZ_0) \right\|^2 \right] + 16\reg t dL^2 N \\
        &\stackrel{(5)}{\le} 8t^2 L^2 \, \E\left[\left\|b^{\bZ}(\bZ_0) - b^{\bZ}(\bZ_t)\right\|^2\right] + 8t^2 L^2 \, \E\left[\left\|b^{\bZ}(\bZ_t) \right\|^2 \right] + 16\reg t dL^2 N \\
        &\stackrel{(6)}{\le} \frac{1}{2} \, \E\left[\left\|b^{\bZ}(\bZ_0) - b^{\bZ}(\bZ_t)\right\|^2\right] + 8t^2 L^2 \, \E\left[\left\|b^{\bZ}(\bZ_t) \right\|^2 \right] + 16\reg t dL^2 N.
    \end{align*}
    In the above, in step (1) we have used the property that $b^{\bZ}$ is $(2L)$-Lipschitz from Lemma~\ref{Lem:LipschitzMonotone_bz}.
    In step (2), we plug in the definition of $\bZ_t$.
    In step (3), we expand the square, and note the cross term vanishes since $\bm{\zeta}$ is independent of $\bZ_0$ and $\E[\bm{\zeta}] = 0$.
    In step (4), we use the property that $\E[\|\bm{\zeta}\|^2] = 2dN$ since $\bm{\zeta} \sim \N(0,I)$ is a standard Gaussian in $\R^{2dN}$.
    In step (5), we introduce the term $b^{\bZ}(\bZ_t)$ again, and use the inequality $\|a+b\|^2 \le 2\|a\|^2 + 2\|b\|^2$.
    In step (6), we use the assumption that $t \le 1/(4L)$, so $8t^2 L^2 \le 1/2$.
    
    Rearranging the inequality above, we get:
    \begin{align*}
        \E\left[\left\|b^{\bZ}(\bZ_0) - b^{\bZ}(\bZ_t)\right\|^2\right]
        &\stackrel{(7)}{\le} 16t^2 L^2 \, \E_{\rho_t^{\bZ}}\left[\left\|b^{\bZ}(\bZ_t) \right\|^2 \right] + 32\reg t dL^2 N \\
        &\stackrel{(8)}{\le} 16t^2 L^2 \left(8L^2 \, W_2(\rho_t^{\bZ}, \bar \nu^{\bZ})^2 + 4L^2 \, \Var_{\bar \nu^Z}(\bar Z) + 8\reg dLN\right) + 32\reg t dL^2 N \\
        &\stackrel{(9)}{=} 128t^2 L^4 \, W_2(\rho_t^{\bZ}, \bar \nu^{\bZ})^2 + 64 t^2L^4 \, \Var_{\bar \nu^Z}(\bar Z) + 32\reg t dL^2 N (4tL+1) \\
        &\stackrel{(10)}{\le} 128t^2 L^4 \, W_2(\rho_t^{\bZ}, \bar \nu^{\bZ})^2 + 64 t^2L^4 \, \Var_{\bar \nu^Z}(\bar Z) + 64\reg t dL^2 N.
    \end{align*}
    In the above, in step (7) we rearrange the inequality from step (6).
    In step (8), we use the bound from Lemma~\ref{Lem:BoundFPArb} for the distribution $\rho_t^{\bZ}$.
    In step (9), we collect the terms from the previous line.
    In step (10), we again use the assumption that $t \le 1/(4L)$.
    This gives us the desired bound.    
\end{proof}

\subsubsection{Fokker-Planck equation for one-step interpolation of finite-particle algorithm}

We show the following continuous-time interpolation of one step of the finite-particle algorithm~\eqref{Eq:ParticleAlgorithm3}.
This is similar to the interpolation technique that has been used, for example, to analyze the mixing time of the Unadjusted Langevin Algorithm for sampling~\citep{VW19}.

\begin{lemma}\label{Lem:InterpolationFP}
    Let $\bZ_0 \sim \rho_0^{\bZ}$ for any $\rho_0^{\bZ} \in \P(\R^{2dN})$.
    For $t \ge 0$, define $\bZ_t \sim \rho_t^{\bZ}$ by:
    \begin{align}\label{Eq:Interpolation}
        \bZ_t = \bZ_0 + t b^{\bZ}(\bZ_0) + \sqrt{2t \reg} \, \bm{\zeta}
    \end{align}
    where $\bm{\zeta} \sim \N(0,I)$ is an independent standard Gaussian random variable in $\R^{2dN}$, and $b^{\bZ}$ is the vector field defined in~\eqref{Eq:Defbz}.
    Then the density $\rho_t^{\bZ}$ evolves following the Fokker-Planck equation:
    \begin{align*}
        \part{\rho_t^{\bZ}}{t} = -\nabla \cdot \left(\rho_t^{\bZ} \, b^{\bZ}_t \right) + \reg \Delta \rho_t^{\bZ}
    \end{align*}
    where we define the vector field $b^{\bZ}_t \colon \R^{2dN} \to \R^{2dN}$ by, for all $\bz \in \R^{2dN}$:
    \begin{align*}
        b^{\bZ}_t(\bz) = \E_{\rho_{0 \mid t}^{\bZ}}\left[b^{\bZ}(\bZ_0) \mid \bZ_t = \bz\right]
    \end{align*}
    where $\rho_{0 \mid t}^{\bZ}(\cdot \mid \bz)$ is the conditional distribution of $\bZ_0$ given $\bZ_t = \bz$ from the model~\eqref{Eq:Interpolation}.    
\end{lemma}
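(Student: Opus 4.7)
The strategy is the standard test-function / Itô calculation for an Euler--Maruyama step, combined with the conditional-expectation rewriting that makes the drift depend only on $\bZ_t$. The key observation is that the process $\bZ_s = \bZ_0 + s\, b^{\bZ}(\bZ_0) + \sqrt{2\epsilon s}\,\bm{\zeta}$ for $s\in[0,t]$ is equivalent in distribution to the solution of the SDE $d\bZ_s = b^{\bZ}(\bZ_0)\,ds + \sqrt{2\epsilon}\,dW_s$ with the drift frozen at its initial value; this is a Brownian motion with constant (random, $\F_0$-measurable) drift.

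The plan is as follows. First, fix any compactly supported smooth test function $\phi\colon\R^{2dN}\to\R$. Using Itô's formula on $\phi(\bZ_s)$ for this frozen-drift SDE, one obtains
\begin{align*}
\frac{d}{ds}\E[\phi(\bZ_s)] = \E\bigl[\langle b^{\bZ}(\bZ_0),\, \nabla \phi(\bZ_s)\rangle\bigr] + \epsilon\,\E[\Delta\phi(\bZ_s)].
\end{align*}
Second, apply the tower property in the first term, conditioning on $\bZ_s$:
\begin{align*}
\E\bigl[\langle b^{\bZ}(\bZ_0),\, \nabla\phi(\bZ_s)\rangle\bigr]
= \E\bigl[\langle \E[b^{\bZ}(\bZ_0)\mid \bZ_s],\, \nabla\phi(\bZ_s)\rangle\bigr]
= \E\bigl[\langle b^{\bZ}_s(\bZ_s),\, \nabla\phi(\bZ_s)\rangle\bigr],
\end{align*}
by the very definition of $b^{\bZ}_s$. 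Third, express both remaining terms as integrals against $\rho_s^{\bZ}$ and integrate by parts (the boundary terms vanish since $\phi$ has compact support), which yields
\begin{align*}
\frac{d}{ds}\int \phi\, \rho_s^{\bZ}\,d\bz
= \int \phi\,\Bigl(-\nabla\cdot(\rho_s^{\bZ}\, b^{\bZ}_s) + \epsilon\,\Delta \rho_s^{\bZ}\Bigr)\,d\bz.
\end{align*}
Since this identity holds for arbitrary smooth compactly supported $\phi$, the claimed Fokker--Planck equation follows in the weak (hence, by standard smoothing from the Gaussian noise, classical) sense at the interpolation time $s=t$.

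The main obstacle is not conceptual but notational/technical: one must justify that $\rho_s^{\bZ}$ is smooth enough to perform integration by parts, and that the conditional expectation $b^{\bZ}_s(\bz) = \E[b^{\bZ}(\bZ_0)\mid\bZ_s=\bz]$ is well defined and measurable. Both are immediate here because, given $\bZ_0$, the law of $\bZ_s$ is Gaussian $\N(\bZ_0+s\,b^{\bZ}(\bZ_0),\, 2\epsilon s\,I)$, so $\rho_s^{\bZ}$ is a convolution with a Gaussian kernel for every $s>0$ and therefore $C^\infty$ with rapidly decaying derivatives, and the conditional density $\rho_{0\mid s}^{\bZ}(\cdot\mid\bz)$ has a closed-form (Bayes) expression against which $b^{\bZ}$ is integrable under Assumption~\ref{As:SCSmooth} (via the Lipschitz bound from Lemma~\ref{Lem:LipschitzMonotone_bz} and finiteness of second moments). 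With these regularity points dispatched, the three-step Itô $\to$ tower $\to$ integration-by-parts argument gives the lemma.
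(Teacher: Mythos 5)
Your proposal is correct. It rests on the same two observations as the paper's proof --- that conditioned on $\bZ_0$ the interpolation is a Brownian motion with constant drift $b^{\bZ}(\bZ_0)$, and that the effective drift at time $t$ is the posterior mean $\E[b^{\bZ}(\bZ_0)\mid\bZ_t=\bz] = b^{\bZ}_t(\bz)$ --- but you execute them in the dual, weak formulation: It\^o's formula on a compactly supported test function, the tower property to replace $b^{\bZ}(\bZ_0)$ by its conditional expectation given $\bZ_t$, and integration by parts against $\phi$, followed by an upgrade from the weak to the classical equation using Gaussian smoothing. The paper instead works directly at the level of densities: it writes the conditional Fokker--Planck equation for $\rho_{t\mid 0}^{\bZ}(\cdot\mid\bz_0)$ (constant drift), differentiates the mixture $\rho_t^{\bZ} = \int \rho_0^{\bZ}(\bz_0)\,\rho_{t\mid 0}^{\bZ}(\cdot\mid\bz_0)\,d\bz_0$ under the integral sign, pulls the divergence and Laplacian outside, and uses the Bayes factorization $\rho_0^{\bZ}(\bz_0)\rho_{t\mid 0}^{\bZ}(\bz\mid\bz_0)=\rho_t^{\bZ}(\bz)\rho_{0\mid t}^{\bZ}(\bz_0\mid\bz)$ to identify the drift as $\rho_t^{\bZ}\,b_t^{\bZ}$. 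The trade-off is minor: your route handles regularity more gracefully (everything is tested against smooth compactly supported $\phi$, so you only need the integrability of $b^{\bZ}(\bZ_0)$, which follows from Lipschitzness and the finite second moment of $\rho_0^{\bZ}$, plus the observation that $\rho_t^{\bZ}$ and $\rho_t^{\bZ} b_t^{\bZ}$ are Gaussian mixtures and hence smooth), at the cost of invoking It\^o's formula and a weak-to-strong step; the paper's route is a purely analytic computation on densities that never leaves the Fokker--Planck language and produces the stated equation directly, with the differentiation under the integral and interchange of derivatives with the $\bz_0$-integral left implicit. Both are complete proofs of the lemma.
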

\begin{proof}
    Let $\rho_{0t}^{\bZ}$ be the joint distribution of $(\bZ_0, \bZ_t)$ following the model~\eqref{Eq:Interpolation}, which we can write in terms of the marginal and conditional distributions as, for all $\bz_0, \bz_t \in \R^{2dN}$:
    \begin{align*}
        \rho_{0t}^{\bZ}(\bz_0, \bz_t)
        = \rho_0^{\bZ}(\bz_0) \, \rho_{t \mid 0}^{\bZ}(\bz_t \mid \bz_0)
        = \rho_t^{\bZ}(\bz_t) \, \rho_{0 \mid t}^{\bZ}(\bz_0 \mid \bz_t).
    \end{align*}    
    Notice that $\bZ_t$ as defined in~\eqref{Eq:Interpolation} is the solution to the following stochastic process:
    \begin{align}\label{Eq:SDEInterp0}
        d\bZ_t = b^{\bZ}(\bZ_0) \, dt + \sqrt{2\reg} \, dW_t^{\bZ}
    \end{align}
    where $(W_t^{\bZ})_{t \ge 0}$ is the standard Brownian motion in $\R^{2dN}$ which is independent of $\bZ_0$.    
    We derive the Fokker-Planck equation for $\rho_t^{\bZ}$ as follows.
    First, when we condition on a fixed $\bZ_0 = \bz_0$, the drift in the process~\eqref{Eq:SDEInterp0} is a constant, so the conditional density $\rho_{t \mid 0}^{\bZ}(\cdot \mid \bz_0)$ of $\bZ_t$ conditioned on $\bZ_0 = \bz_0$ is given by the Fokker-Planck equation:
    \begin{align*}
        \part{\rho_{t \mid 0}^{\bZ}(\cdot \mid \bz_0)}{t}
        = -\nabla \cdot \left(\rho_{t \mid 0}^{\bZ}(\cdot \mid \bz_0) \, b^{\bZ}(\bz_0) \right) + \reg \Delta \rho_{t \mid 0}^{\bZ}(\cdot \mid \bz_0)
    \end{align*}
    where note that the divergence $\nabla \cdot$ and Laplacian $\Delta$ operate on the $\bz$ variable, not $\bz_0$.
    
    Then we can compute:
    \begin{align*}
        \part{\rho_t^{\bZ}}{t}
        &= \part{}{t} \int_{\R^{2dN}} \rho_0^{\bZ}(\bz_0) \, \rho_{t \mid 0}^{\bZ}(\cdot \mid \bz_0) \, d\bz_0 \\
        &= \int_{\R^{2dN}} \rho_0^{\bZ}(\bz_0) \, \part{\rho_{t \mid 0}^{\bZ}(\cdot \mid \bz_0)}{t} \, d\bz_0 \\
        &= \int_{\R^{2dN}} \rho_0^{\bZ}(\bz_0) \, \left(-\nabla \cdot \left(\rho_{t \mid 0}^{\bZ}(\cdot \mid \bz_0) \, b^{\bZ}(\bz_0) \right) + \reg \Delta \rho_{t \mid 0}^{\bZ}(\cdot \mid \bz_0)\right) \, d\bz_0 \\
        &= -\nabla \cdot \left( \int_{\R^{2dN}} \rho_0^{\bZ}(\bz_0) \, \rho_{t \mid 0}^{\bZ}(\cdot \mid \bz_0) \, b^{\bZ}(\bz_0) \, d\bz_0 \right) + \reg \Delta \left(\int_{\R^{2dN}} \rho_0^{\bZ}(\bz_0) \, \rho_{t \mid 0}^{\bZ}(\cdot \mid \bz_0) \, d\bz_0\right)  \\
        &= -\nabla \cdot \left( \rho_t^{\bZ} \, \int_{\R^{2dN}} \rho_{0 \mid t}^{\bZ}(\bz_0 \mid \cdot) \, b^{\bZ}(\bz_0) \, d\bz_0 \right) + \reg \Delta \rho_t^{\bZ}  \\
        &= -\nabla \cdot \left( \rho_t^{\bZ} \, \E_{\rho_{0 \mid t}^{\bZ}}\left[b^{\bZ}(\bZ_0) \mid \bZ_t = \cdot \right] \right) + \reg \Delta \rho_t^{\bZ} \\
        &= -\nabla \cdot \left( \rho_t^{\bZ} \, b_t^{\bZ} \right) + \reg \Delta \rho_t^{\bZ},
    \end{align*}
    as desired.
\end{proof}

\subsection{One-Step Recurrence for the Biased Convergence of the Finite-Particle Algorithm}
\label{Sec:OneStepRecurrence}

\subsubsection{One-step recurrence in $W_2$ distance}
\label{Sec:OneStepRecurrenceW2}

\begin{lemma}\label{Lem:OneStepW2}
    Assume Assumption~\ref{As:SCSmooth}.
    Let $\bz_k \sim \rho_k^{\bz,\eta}$ for any $\rho_k^{\bz,\eta} \in \P(\R^{2dN})$, and let $\bz_{k+1} \sim \rho_{k+1}^{\bz,\eta}$ be one step of the finite-particle algorithm~\eqref{Eq:ParticleAlgorithm3} with step size $0 < \eta \le \frac{\alpha}{64 L^2}$.
    Then:
    \begin{align*}
        W_2(\rho_{k+1}^{\bz,\eta}, \bar \nu^{\bZ})^2 \le e^{-\frac{3}{2} \alpha \eta} \, W_2(\rho_k^{\bz,\eta}, \bar \nu^{\bZ})^2 + \frac{8\eta L^2}{\alpha} \left(\Var_{\bar \nu^Z}(\bar Z) + 64 \, \reg \eta d N \right).
    \end{align*}
\end{lemma}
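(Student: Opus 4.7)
The plan is to run a synchronous coupling between a continuous-time interpolation of the single algorithmic step and a stationary tensorized mean-field Langevin process, then reduce the problem to integrating a linear Gronwall-type differential inequality on $[0,\eta]$.

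\textbf{Setup.} I will couple $\bz_k \sim \rho_k^{\bz,\eta}$ with $\bar \bZ_0 \sim \bar \nu^{\bZ}$ optimally in $W_2$, and let $(W_t^{\bZ})_{t\ge 0}$ be an independent standard Brownian motion in $\R^{2dN}$. For $t \in [0,\eta]$, define $\bZ_t := \bz_k + t\, b^{\bZ}(\bz_k) + \sqrt{2\epsilon}\,W_t^{\bZ}$, so that $\bZ_\eta$ is distributed as $\bz_{k+1}$; and let $\bar \bZ_t$ solve $d\bar \bZ_t = \bar b^{\bZ}(\bar \bZ_t)\,dt + \sqrt{2\epsilon}\,dW_t^{\bZ}$, which by~\eqref{Eq:TensorMFScore} has $\bar \nu^{\bZ}$ as its invariant measure, so $\bar \bZ_t \sim \bar \nu^{\bZ}$ for every $t$. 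Setting $\Delta(t) := \E[\|\bZ_t - \bar \bZ_t\|^2]$, one has $\Delta(0) = W_2(\rho_k^{\bz,\eta}, \bar \nu^{\bZ})^2$ and $W_2(\rho_{k+1}^{\bz,\eta}, \bar \nu^{\bZ})^2 \le \Delta(\eta)$.

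\textbf{Deriving the ODE.} Because the Brownian increments cancel in the coupling, $\Delta'(t) = 2\E[\langle \bZ_t - \bar \bZ_t,\; b^{\bZ}(\bz_k) - \bar b^{\bZ}(\bar \bZ_t)\rangle]$, and I will split the drift difference as
\[
b^{\bZ}(\bz_k) - \bar b^{\bZ}(\bar \bZ_t) = \underbrace{[b^{\bZ}(\bZ_t) - b^{\bZ}(\bar \bZ_t)]}_{(a)} + \underbrace{[b^{\bZ}(\bar \bZ_t) - \bar b^{\bZ}(\bar \bZ_t)]}_{(b)} + \underbrace{[b^{\bZ}(\bz_k) - b^{\bZ}(\bZ_t)]}_{(c)}.
\]
Piece (a) yields $-2\alpha\,\Delta(t)$ by the $\alpha$-strong monotonicity of $-b^{\bZ}$ (Lemma~\ref{Lem:LipschitzMonotone_bz}). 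Piece (b), handled by Young's inequality and Lemma~\ref{Lem:ComparisonStationary} (and $\bar \bZ_t \sim \bar \nu^{\bZ}$), contributes a small multiple of $\Delta(t)$ plus an $L^2 \Var_{\bar \nu^Z}(\bar Z)/\alpha$ term. Piece (c) is the discretization error, which I control by another application of Young's inequality combined with Lemma~\ref{Lem:BoundDiffAlg} (whose hypothesis $t \le 1/(4L)$ follows from $\eta \le \alpha/(64L^2)$) and the trivial bound $W_2(\rho_t^{\bZ}, \bar \nu^{\bZ})^2 \le \Delta(t)$. The step-size restriction $\eta L^2 \le \alpha/64$ is precisely what makes the additional $\Delta(t)$-coefficients from (b) and (c) sum to at most $\alpha/2$, leaving the effective contraction
\[
\Delta'(t) \le -\tfrac{3}{2}\alpha\,\Delta(t) + \Theta_1 \, \Var_{\bar \nu^Z}(\bar Z) + \Theta_2 \, \epsilon t\, dN
\]
with explicit constants $\Theta_1, \Theta_2$ of order $L^2/\alpha$.

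\textbf{Integration and main obstacle.} Gronwall on this linear ODE gives $\Delta(\eta) \le e^{-\frac{3}{2}\alpha\eta}\Delta(0) + \int_0^\eta e^{-\frac{3}{2}\alpha(\eta-s)}[\cdot]\,ds$; bounding the exponential kernel by $1$ and computing $\int_0^\eta 1\,ds = \eta$ and $\int_0^\eta s\,ds = \eta^2/2$ then collecting constants matches the stated $\frac{8\eta L^2}{\alpha}(\Var_{\bar \nu^Z}(\bar Z) + 64\,\epsilon\eta\,dN)$ forcing, and the final step $W_2(\rho_{k+1}^{\bz,\eta}, \bar \nu^{\bZ})^2 \le \Delta(\eta)$ finishes. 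The delicate point is entirely in piece~(c): the Young parameter for the discretization-error term and the step-size bound $\eta \le \alpha/(64L^2)$ must be tuned in concert so that Lemma~\ref{Lem:BoundDiffAlg}'s $\Delta(t)$-coefficient is absorbed cleanly into the strong-monotonicity contribution, without inflating either the effective rate past $-\frac{3}{2}\alpha$ or the forcing past the stated form. Once that bookkeeping is settled, the ODE integration is routine.
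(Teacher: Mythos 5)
Your proposal is correct and follows essentially the same route as the paper's proof: a synchronous coupling of the one-step interpolation $\bZ_t = \bz_k + t\,b^{\bZ}(\bz_k) + \sqrt{2\epsilon}\,W_t^{\bZ}$ with the stationary tensorized mean-field process started from the optimal $W_2$ coupling, strong monotonicity of $-b^{\bZ}$ for the contraction, Lemma~\ref{Lem:ComparisonStationary} and Lemma~\ref{Lem:BoundDiffAlg} (with $W_2(\rho_t^{\bZ},\bar\nu^{\bZ})^2 \le \Delta(t)$) absorbed via Young's inequality under $\eta \le \alpha/(64L^2)$, and a Gronwall integration over $[0,\eta]$. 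The only cosmetic difference is that the paper bounds $t\le\eta$ inside the forcing before integrating rather than integrating $\int_0^\eta s\,ds$, and applies Young once to the combined error term rather than to pieces (b) and (c) separately; the constants work out the same way.
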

\begin{proof}
    We consider a continuous-time interpolation of one step of the algorithm~\eqref{Eq:ParticleAlgorithm3} as follows.
    Let $\bZ_0 \sim \rho_0^{\bZ}$ where we define $\rho_0^{\bZ} = \rho_k^{\bz,\eta}$.
    We define $(\bZ_t)_{0 \le t \le \eta}$ where $\bZ_t \sim \rho_t^{\bZ}$ evolves following:
    \begin{align}\label{Eq:SDEInterp}
        d\bZ_t = b^{\bZ}(\bZ_0) \, dt + \sqrt{2\reg} \, dW_t^{\bZ}
    \end{align}
    where $(W_t^{\bZ})_{t \ge 0}$ is the standard Brownian motion in $\R^{2dN}$ which is independent of $\bZ_0$.
    Then we notice that the distribution of $\bZ_\eta \sim \rho_\eta^{\bZ}$ of the process~\eqref{Eq:SDEInterp} is equal to the distribution of the next iterate $\bx_{k+1} \sim \rho_{k+1}^{\bz,\eta}$ along the algorithm~\eqref{Eq:ParticleAlgorithm3}, i.e., $\rho_{k+1}^{\bz,\eta} = \rho_\eta^{\bZ}$.
    This is because the solution to the process~\eqref{Eq:SDEInterp} at time $t = \eta$ is:
    \begin{align*}
        \bZ_t &= \bZ_0 + t b^{\bZ}(\bZ_0) + \sqrt{2\reg} \, W_t^{\bZ}
        \,\stackrel{d}{=}\, \bZ_0 + t b^{\bZ}(\bZ_0) + \sqrt{2\reg t} \, \bm{\zeta}
    \end{align*}
    where $\bm{\zeta} \sim \N(0,I)$ is an independent Gaussian random variable in $\R^{2dN}$, and $\stackrel{d}{=}$ means equality in distribution.
    This is the same update of the algorithm~\eqref{Eq:ParticleAlgorithm3},
    so indeed $\bZ_\eta \stackrel{d}{=} \bz_{k+1}$, so $\rho_\eta^{\bZ} = \rho_{k+1}^{\bz,\eta}$.

    We also consider $(\bar \bZ_t)_{0 \le t \le \eta}$ where $\bar \bZ_t \sim \bar \rho_t^{\bZ} = \bar \nu^{\bZ}$ evolves following the stationary tensorized mean-field dynamics~\eqref{Eq:ParticleSystem3}:
    \begin{align}\label{Eq:CalcSDE1}
        d\bar \bZ_t = \bar b^{\bZ}(\bar \bZ_t) \, dt + \sqrt{2\reg} \, dW_t^{\bZ}
    \end{align}
    starting from the stationary distribution $\bar \rho_0^{\bZ} = \bar \nu^{\bZ}$, so $\bar \rho_t^{\bZ} = \bar \nu^{\bZ}$ for all $t \ge 0$.
    Suppose we run the stochastic processes~\eqref{Eq:SDEInterp} and~\eqref{Eq:CalcSDE1} using the same standard Brownian motion $(W_t^{\bZ})_{t \ge 0}$, which is independent of $\bZ_0$ and $\bar \bZ_0$.
    Furthermore, suppose we start the processes~\eqref{Eq:SDEInterp} and~\eqref{Eq:CalcSDE1} from $(\bZ_0, \bar \bZ_0)$ which has a joint distribution which is the optimal $W_2$ coupling between $\rho_0^{\bZ} = \rho_k^{\bz,\eta}$ and $\bar \rho_0^{\bZ} = \bar \nu^{\bZ}$.

    From the set up above, we have that the difference $\bZ_t-\bar \bZ_t$ satisfies:
    \begin{align*}
        d\left(\bZ_t-\bar \bZ_t\right)
        = \left(b^{\bZ}(\bZ_0) - \bar b^{\bZ}(\bar \bZ_t) \right) \, dt
    \end{align*}
    where the Brownian motion terms cancel because they are equal in the two processes.
    Therefore:
    \begin{align*}
        d\left\|\bZ_t-\bar \bZ_t\right\|^2 = 2\left\langle \bZ_t-\bar \bZ_t, \, b^{\bZ}(\bZ_0)-\bar b^{\bZ}(\bar \bZ_t) \right \rangle dt.
    \end{align*}
    Taking expectation over the joint distribution of all the variables, we obtain:
    \begin{align}
        \frac{d}{dt} \E\left[ \left\|\bZ_t-\bar \bZ_t\right\|^2 \right]
        &= 2 \E\left[\left\langle \bZ_t-\bar \bZ_t, \, b^{\bZ}(\bZ_0)-\bar b^{\bZ}(\bar \bZ_t) \right \rangle \right] \notag \\
        &= 2 \E\left[\left\langle \bZ_t-\bar \bZ_t, b^{\bZ}(\bZ_t)- b^{\bZ}(\bar \bZ_t) \right \rangle \right] \notag \\
        &\qquad + 2 \E\left[\left\langle \bZ_t-\bar \bZ_t, b^{\bZ}(\bZ_0)-b^{\bZ}(\bZ_t)+ b^{\bZ}(\bar \bZ_t)-\bar b^{\bZ}(\bar \bZ_t)\right \rangle \right]. \label{Eq:CalcAlg2}
    \end{align}
    By Lemma~\ref{Lem:LipschitzMonotone_bzbar}, $-b^{\bZ}$ is $\alpha$-strongly monotone, so we can bound the first term above by:
    \begin{align*}
        \E\left[\left\langle \bZ_t-\bar \bZ_t, b^{\bZ}(\bZ_t)- b^{\bZ}(\bar \bZ_t) \right \rangle \right]
        \le -\alpha \,\E\left[\left\| \bZ_t-\bar \bZ_t \right \|^2 \right].
    \end{align*}
    We can bound the second term above by:
    \begin{align*}
        &\E\left[\left\langle \bZ_t-\bar \bZ_t, b^{\bZ}(\bZ_0)-b^{\bZ}(\bZ_t)+ b^{\bZ}(\bar \bZ_t)-\bar b^{\bZ}(\bar \bZ_t)\right \rangle \right] \\
        &\stackrel{(1)}{\le} \frac{\alpha}{8} \E\left[\left\| \bZ_t-\bar \bZ_t \right \|^2 \right]
        + \frac{2}{\alpha} \E\left[\left\|b^{\bZ}(\bZ_0)-b^{\bZ}(\bZ_t)+ b^{\bZ}(\bar \bZ_t)-\bar b^{\bZ}(\bar \bZ_t)\right \|^2 \right] \\
        &\stackrel{(2)}{\le} \frac{\alpha}{8} \E\left[\left\| \bZ_t-\bar \bZ_t \right \|^2 \right]
        + \frac{4}{\alpha} \E\left[\left\|b^{\bZ}(\bZ_0)-b^{\bZ}(\bZ_t) \right \|^2 \right] + \frac{4}{\alpha} \E_{\bar \nu^{\bZ}}\left[\left\|b^{\bZ}(\bar \bZ_t)-\bar b^{\bZ}(\bar \bZ_t)\right \|^2 \right] \\
        &\stackrel{(3)}{\le} \frac{\alpha}{8} \E\left[\left\| \bZ_t-\bar \bZ_t \right \|^2 \right]
        + \frac{4}{\alpha} \left(128t^2 L^4 \, W_2(\rho_t^{\bZ}, \bar \nu^{\bZ})^2 + 64 t^2L^4 \, \Var_{\bar \nu^Z}(\bar Z) + 64\reg t dL^2 N \right) \\
        &\qquad + \frac{4}{\alpha} \, L^2 \, \Var_{\bar \nu^Z}(\bar Z) \\
        &\stackrel{(4)}{\le} \left( \frac{\alpha}{8} + \frac{512t^2 L^4}{\alpha} \right) \E\left[\left\| \bZ_t-\bar \bZ_t \right \|^2 \right] 
        + \frac{4L^2}{\alpha}\left(64 t^2 L^2 + 1 \right) \, \Var_{\bar \nu^Z}(\bar Z) + \frac{256 \reg t dL^2 N}{\alpha} \\
        &\stackrel{(5)}{\le} \frac{\alpha}{4} \E\left[\left\| \bZ_t-\bar \bZ_t \right \|^2 \right] 
        + \frac{8L^2}{\alpha} \, \Var_{\bar \nu^Z}(\bar Z) + \frac{256 \reg \eta dL^2 N}{\alpha}.
    \end{align*}
    In the above, in step (1) we use the inequality $\langle a,b \rangle \le \frac{\alpha}{8} \|a\|^2 + \frac{2}{\alpha} \|b\|^2$.
    In step (2), we use the inequality $\|a+b\|^2 \le 2\|a\|^2 + 2\|b\|^2$ to the second term.
    In step (3), we use the bound from Lemma~\ref{Lem:BoundDiffAlg} to the second term (note $t \le \eta \le \frac{\alpha}{64L^2} \le \frac{1}{64L} \le \frac{1}{4L}$ so the assumption in Lemma~\ref{Lem:BoundDiffAlg} is satisfied), and we use the bound from Lemma~\ref{Lem:ComparisonStationary} to the third term.
    In step (4), we use the bound $W_2(\rho_t^{\bZ}, \bar \nu^{\bZ})^2 \le \E\left[\left\| \bZ_t-\bar \bZ_t \right \|^2 \right]$ which follows from the definition of the $W_2$ distance.
    In step (5), we use the assumption $t \le \eta \le \frac{\alpha}{64 L^2}$, so in the first term, $\frac{512t^2 L^4}{\alpha} \le \frac{\alpha}{8}$; we also use the assumption $t \le \eta \le \frac{\alpha}{64 L^2} \le \frac{1}{64 L}$, so in the second term, $64 t^2 L^2 \le \frac{1}{64} \le 1$; and we use $t \le \eta$ in the third term.

    Plugging in the two bounds above to our earlier calculation~\eqref{Eq:CalcAlg2}, we obtain:
    \begin{align*}
        \frac{d}{dt} \E\left[ \left\|\bZ_t-\bar \bZ_t\right\|^2 \right]
        &\le -\frac{3\alpha}{2} \E\left[\left\| \bZ_t-\bar \bZ_t \right \|^2 \right] 
        + \frac{8L^2}{\alpha} \, \Var_{\bar \nu^Z}(\bar Z) + \frac{512 \reg \eta dL^2 N}{\alpha}.
    \end{align*}
    We can write this differential inequality equivalently as:
    \begin{align*}
        \frac{d}{dt} \left( e^{\frac{3}{2} \alpha t} \E\left[ \left\|\bZ_t-\bar \bZ_t\right\|^2 \right] \right)
        &\le e^{\frac{3}{2} \alpha t} \left( \frac{8L^2}{\alpha} \, \Var_{\bar \nu^Z}(\bar Z) + \frac{512 \reg \eta dL^2 N}{\alpha} \right).
    \end{align*}
    Integrating from $t=0$ to $t=\eta$ and rearranging yields:
    \begin{align*}
        \E\left[ \left\|\bZ_\eta-\bar \bZ_\eta\right\|^2 \right]
        &\le e^{-\frac{3}{2} \alpha \eta} \, \E\left[ \left\|\bZ_0-\bar \bZ_0\right\|^2 \right] + \left(\frac{1-e^{-\frac{3}{2} \alpha \eta}}{\frac{3}{2}\alpha}\right) \left(  \frac{8L^2}{\alpha} \, \Var_{\bar \nu^Z}(\bar Z) + \frac{512 \reg \eta dL^2 N}{\alpha} \right) \\
        &= e^{-\frac{3}{2} \alpha \eta} \, W_2(\rho_0^{\bZ}, \bar \nu^{\bZ})^2 + \left(\frac{1-e^{-\frac{3}{2} \alpha \eta}}{\frac{3}{2}\alpha}\right) \frac{8L^2}{\alpha} \left(\Var_{\bar \nu^Z}(\bar Z) + 64 \, \reg \eta d N \right) \\
        &\le e^{-\frac{3}{2} \alpha \eta} \, W_2(\rho_0^{\bZ}, \bar \nu^{\bZ})^2 + \frac{8\eta L^2}{\alpha} \left(\Var_{\bar \nu^Z}(\bar Z) + 64 \, \reg \eta d N \right).  
    \end{align*}
    In the second step above, we have used the assumption that $(\bZ_0, \bar \bZ_0)$ has the optimal $W_2$ coupling between $\rho_0^{\bZ}$ and $\bar \nu^{\bZ}$.
    In the last step, we use the inequality $1-e^{-c} \le c$, which holds for all $c = \frac{3}{2}\alpha \eta \ge 0$.
    Using the definition of the $W_2$ distance as the infimum over all coupling, from the above, we conclude:
    \begin{align*}
        W_2(\rho_\eta^{\bZ}, \bar \nu^{\bZ})^2 \le e^{-\frac{3}{2} \alpha \eta} \, W_2(\rho_0^{\bZ}, \bar \nu^{\bZ})^2 + \frac{8\eta L^2}{\alpha} \left(\Var_{\bar \nu^Z}(\bar Z) + 64 \, \reg \eta d N \right).   
    \end{align*}
    Substituting back $\rho_0^{\bZ} = \rho_k^{\bz,\eta}$ and $\rho_\eta^{\bZ} = \rho_{k+1}^{\bz,\eta}$ gives us the desired bound.
\end{proof}

\subsubsection{One-step recurrence in KL divergence}
\label{Sec:OneStepRecurrenceKL}

\begin{lemma}\label{Lem:OneStepKL}
    Assume Assumption~\ref{As:SCSmooth}.
    Let $\bz_k \sim \rho_k^{\bz,\eta}$ for any $\rho_k^{\bz,\eta} \in \P(\R^{2dN})$, and let $\bz_{k+1} \sim \rho_{k+1}^{\bz,\eta}$ be one step of the finite-particle algorithm~\eqref{Eq:ParticleAlgorithm3} with step size $0 < \eta \le \frac{\alpha}{64 L^2}$.
    Then:
    \begin{align*}
        \KL\left(\rho_{k+1}^{\bz,\eta} \,\|\, \bar \nu^{\bZ} \right) 
        &\le e^{-\alpha \eta} \left( \KL\left(\rho_{k}^{\bz,\eta} \,\|\, \bar \nu^{\bZ} \right) + 
        \frac{3 \eta L^2}{\reg}  W_2(\rho_k^{\bz,\eta}, \bar \nu^{\bZ})^2 \right) + 88 \eta^2 dL^2 N + \frac{6 \eta L^2}{\reg} \Var_{\bar \nu^{Z}}(\bar Z).
    \end{align*}
\end{lemma}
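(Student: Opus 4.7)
The plan is to follow the same template as the continuous-time biased convergence bound from Theorem~\ref{Thm:ConvergenceDynToMF}, but carried out on the continuous-time interpolation of a single algorithmic step. First I would set $\rho_0^{\bZ} := \rho_k^{\bz,\eta}$ and introduce the interpolating process
\begin{align*}
    d\bZ_t = b^{\bZ}(\bZ_0) \, dt + \sqrt{2\epsilon} \, dW_t^{\bZ}, \qquad 0 \le t \le \eta,
\end{align*}
so that $\rho_\eta^{\bZ} = \rho_{k+1}^{\bz,\eta}$. By Lemma~\ref{Lem:InterpolationFP}, $\rho_t^{\bZ}$ evolves under the Fokker-Planck equation with drift $b_t^{\bZ}(\bz) = \E[b^{\bZ}(\bZ_0) \mid \bZ_t = \bz]$ and diffusion $\epsilon$, while the target $\bar \nu^{\bZ}$ is stationary for the Fokker-Planck equation with drift $\bar b^{\bZ}$ and the same diffusion.

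Next I would apply Lemma~\ref{Lem:ddtKLDiv} to obtain
\begin{align*}
    \frac{d}{dt} \KL(\rho_t^{\bZ} \,\|\, \bar \nu^{\bZ}) = -\epsilon \FI(\rho_t^{\bZ} \,\|\, \bar \nu^{\bZ}) + \E_{\rho_t^{\bZ}}\!\left[\left\langle \nabla \log \tfrac{\rho_t^{\bZ}}{\bar \nu^{\bZ}}, \, b_t^{\bZ} - \bar b^{\bZ} \right\rangle\right],
\end{align*}
then use Cauchy-Schwarz ($\langle a,b\rangle \le \tfrac{\epsilon}{2}\|a\|^2 + \tfrac{1}{2\epsilon}\|b\|^2$) on the second term to absorb half of the Fisher information, and invoke the $(\alpha/\epsilon)$-LSI of $\bar \nu^{\bZ}$ (which holds since $\bar \nu^{\bZ}$ is $(\alpha/\epsilon)$-SLC by Lemma~\ref{Lem:Equilibrium}) to bound $-\tfrac{\epsilon}{2}\FI \le -\alpha \KL$. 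This yields
\begin{align*}
    \frac{d}{dt} \KL(\rho_t^{\bZ} \,\|\, \bar \nu^{\bZ}) \le -\alpha\, \KL(\rho_t^{\bZ} \,\|\, \bar \nu^{\bZ}) + \frac{1}{2\epsilon} \E_{\rho_t^{\bZ}}\!\left[\|b_t^{\bZ}(\bZ_t) - \bar b^{\bZ}(\bZ_t)\|^2\right].
\end{align*}

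The bulk of the work is then bounding the error term. Using conditional Jensen's inequality on $b_t^{\bZ}(\bZ_t) = \E[b^{\bZ}(\bZ_0) \mid \bZ_t]$ and the triangle inequality $\|a+b\|^2 \le 2\|a\|^2 + 2\|b\|^2$,
\begin{align*}
    \E\!\left[\|b_t^{\bZ}(\bZ_t) - \bar b^{\bZ}(\bZ_t)\|^2\right] \le 2\E\!\left[\|b^{\bZ}(\bZ_0) - b^{\bZ}(\bZ_t)\|^2\right] + 2\E\!\left[\|b^{\bZ}(\bZ_t) - \bar b^{\bZ}(\bZ_t)\|^2\right].
\end{align*}
The first piece is a discretization error controlled by Lemma~\ref{Lem:BoundDiffAlg} (applicable since $\eta \le \alpha/(64L^2) \le 1/(4L)$), which contributes terms of order $\eta^2 L^4 W_2(\rho_t^{\bZ},\bar\nu^{\bZ})^2$, $\eta^2 L^4 \Var_{\bar\nu^Z}$, and $\epsilon \eta dL^2 N$. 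The second piece is a finite-particle error controlled by Lemma~\ref{Lem:ComparisonArbitrary}, contributing $L^2 W_2(\rho_t^{\bZ}, \bar \nu^{\bZ})^2$ and $L^2 \Var_{\bar \nu^Z}$. To turn the $W_2(\rho_t^{\bZ},\bar\nu^{\bZ})^2$ at intermediate time into a bound in terms of the initial $W_2(\rho_0^{\bZ},\bar\nu^{\bZ})^2 = W_2(\rho_k^{\bz,\eta},\bar\nu^{\bZ})^2$, I would invoke Lemma~\ref{Lem:OneStepW2} (whose proof works for any $t \in [0,\eta]$ in place of $\eta$), giving $W_2(\rho_t^{\bZ},\bar\nu^{\bZ})^2 \le W_2(\rho_0^{\bZ},\bar\nu^{\bZ})^2 + \tfrac{8\eta L^2}{\alpha}(\Var_{\bar\nu^Z} + 64\epsilon\eta dN)$.

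Finally I would combine everything into a differential inequality of the form
\begin{align*}
    \tfrac{d}{dt} \KL(\rho_t^{\bZ}\|\bar\nu^{\bZ}) \le -\alpha\,\KL(\rho_t^{\bZ}\|\bar\nu^{\bZ}) + \tfrac{C_1 L^2}{\epsilon} W_2(\rho_0^{\bZ},\bar\nu^{\bZ})^2 + \tfrac{C_2 L^2}{\epsilon}\Var_{\bar\nu^Z} + C_3 \eta dL^2 N,
\end{align*}
then multiply by the integrating factor $e^{\alpha t}$ and integrate from $0$ to $\eta$. Using $\int_0^\eta e^{\alpha t}dt = (e^{\alpha\eta}-1)/\alpha \le \eta e^{\alpha\eta}$ and rearranging gives the claimed bound, with the $e^{-\alpha\eta}$ factor multiplying both the initial KL and initial $W_2^2$ terms.

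The main obstacle is tracking constants carefully so that the two $W_2(\rho_t^{\bZ},\bar\nu^{\bZ})^2$ sources (discretization via Lemma~\ref{Lem:BoundDiffAlg} and finite-particle via Lemma~\ref{Lem:ComparisonArbitrary}) combine cleanly, and that the secondary $\eta^2$-order terms introduced by using Lemma~\ref{Lem:OneStepW2} to control the intermediate $W_2$ are absorbed into the stated $88\eta^2 dL^2 N$ and $6\eta L^2/\epsilon \cdot \Var_{\bar\nu^Z}$ bias terms rather than blowing up the leading $W_2$ coefficient past $3\eta L^2/\epsilon$. The step-size restriction $\eta \le \alpha/(64L^2)$ is what makes this absorption possible.
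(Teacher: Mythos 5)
Your plan is correct and follows essentially the same route as the paper's proof: one-step interpolation with Lemma~\ref{Lem:InterpolationFP}, the KL time-derivative formula of Lemma~\ref{Lem:ddtKLDiv} with Cauchy--Schwarz and the $(\alpha/\epsilon)$-LSI of $\bar\nu^{\bZ}$, the splitting of the drift error into a discretization piece (Lemma~\ref{Lem:BoundDiffAlg}) and a finite-particle piece (Lemma~\ref{Lem:ComparisonArbitrary}), Lemma~\ref{Lem:OneStepW2} at intermediate times $t\le\eta$, and an integrating-factor integration over $[0,\eta]$. The only small difference is that the paper keeps the $e^{-\frac{3}{2}\alpha t}$ decay from Lemma~\ref{Lem:OneStepW2} so that the $e^{-\alpha\eta}$ prefactor lands on the $W_2$ term directly, whereas your version drops it; this still fits under the stated coefficient $3\eta L^2/\epsilon\cdot e^{-\alpha\eta}$ because the raw constant is about $2L^2/\epsilon$ and $\eta\le\alpha/(64L^2)$, but it is the one place where your constant tracking needs care.
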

\begin{proof}
    We consider a continuous-time interpolation of one step of the algorithm~\eqref{Eq:ParticleAlgorithm3}, as in the proof of Lemma~\ref{Lem:OneStepW2}, but now we work with the Fokker-Planck equation, rather than the stochastic process.
    Let $\bZ_0 \sim \rho_0^{\bZ}$ where we define $\rho_0^{\bZ} = \rho_k^{\bz,\eta}$.
    For $0 \le t \le \eta$, we define $\bZ_t \sim \rho_t^{\bZ}$ by:
    \begin{align}\label{Eq:Interpolation2}
        \bZ_t = \bZ_0 + t b^{\bZ}(\bZ_0) + \sqrt{2t \reg} \, \bm{\zeta}
    \end{align}
    where $\bm{\zeta} \sim \N(0,I)$ is an independent standard Gaussian random variable in $\R^{2dN}$.
    By the same argument as in the proof of Lemma~\ref{Lem:OneStepW2}, the distribution of $\bZ_\eta \sim \rho_\eta^{\bZ}$ is equal to the distribution of the next iterate $\bz_{k+1} \sim \rho_{k+1}^{\bz,\eta}$ along the algorithm~\eqref{Eq:ParticleAlgorithm3}.    
    Let $\rho_{0t}^{\bZ}$ be the joint distribution of $(\bZ_0, \bZ_t)$ following the model~\eqref{Eq:Interpolation2}, and let $\rho_{0 \mid t}^{\bZ}(\cdot \mid \bz)$ be the conditional distribution of $\bZ_0$ given $\bZ_t = \bz$.
    Define the vector field $b^{\bZ}_t \colon \R^{2dN} \to \R^{2dN}$ by, for all $\bz \in \R^{2dN}$:
    \begin{align*}
        b^{\bZ}_t(\bz) = \E_{\rho_{0 \mid t}^{\bZ}}\left[b^{\bZ}(\bZ_0) \mid \bZ_t = \bz\right] = \int_{\R^{2dN}} \rho_{0 \mid t}^{\bZ}(\bz_0 \mid \bz) \, b^{\bZ}(\bz_0) \, d\bz_0.
    \end{align*}
    Then by Lemma~\ref{Lem:InterpolationFP}, we know that the density $\rho_t^{\bZ}$ evolves following the Fokker-Planck equation:
    \begin{align*}
        \part{\rho_t^{\bZ}}{t} = -\nabla \cdot \left(\rho_t^{\bZ} \, b^{\bZ}_t \right) + \reg \Delta \rho_t^{\bZ}.
    \end{align*}

    We also define $\bar \bZ_t \sim \bar \rho_t^{\bZ} = \bar \nu^{\bZ}$ which evolves following the stationary tensorized mean-field dynamics~\eqref{Eq:MFSystemTensor} from $\bar \bZ_0 \sim \bar \rho_0^{\bZ} = \bar \nu^{\bZ}$, so $\bar \rho_t^{\bZ}$ satisfies the Fokker-Planck equation:
    \begin{align*}
        \part{\bar \rho_t^{\bZ}}{t} = -\nabla \cdot \left(\bar \rho_t^{\bZ} \, \bar b^{\bZ} \right) + \reg \Delta \bar \rho_t^{\bZ}.
    \end{align*}
    (Note since $\bar \rho_t^{\bZ} = \bar \nu^{\bZ}$, both sides of the Fokker-Planck equation above are in fact equal to $0$.)

    Then using the formula from Lemma~\ref{Lem:ddtKLDiv}, we can compute:
    \begin{align}
        \frac{d}{dt} \KL\left(\rho_t^{\bZ} \,\|\, \bar \rho_t^{\bZ}\right)
        &\stackrel{(1)}{=} -\reg \, \FI\left(\rho_t^{\bZ} \,\|\, \bar \rho_t^{\bZ}\right) + \E_{\rho_t^{\bZ}}\left[\left\langle \nabla \log \frac{\rho_t^{\bZ}}{\bar \rho_t^{\bZ}}, \, b_t^{\bZ}(\bZ_t) - \bar b^{\bZ}(\bZ_t) \right\rangle \right] \notag \\
        &\stackrel{(2)}{=} -\reg \, \FI\left(\rho_t^{\bZ} \,\|\, \bar \rho_t^{\bZ}\right) + \E_{\rho_{0t}^{\bZ}}\left[\left\langle \nabla \log \frac{\rho_t^{\bZ}}{\bar \rho_t^{\bZ}}(\bZ_t), \, b^{\bZ}(\bZ_0) - \bar b^{\bZ}(\bZ_t) \right\rangle \right] \notag \\
        &\stackrel{(3)}{\le} -\frac{\reg}{2} \, \FI\left(\rho_t^{\bZ} \,\|\, \bar \rho_t^{\bZ}\right) + \frac{1}{2\reg} \, \E_{\rho_{0t}^{\bZ}}\left[\left\|b^{\bZ}(\bZ_0) - \bar b^{\bZ}(\bZ_t) \right\|^2 \right] \notag \\
        &\stackrel{(4)}{\le} -\alpha \, \KL\left(\rho_t^{\bZ} \,\|\, \bar \rho_t^{\bZ}\right) + \frac{1}{2\reg} \, \E_{\rho_{0t}^{\bZ}}\left[\left\|b^{\bZ}(\bZ_0) - \bar b^{\bZ}(\bZ_t) \right\|^2 \right]. \label{Eq:CalcOneStep1}
    \end{align}
    In the above, in step (1) we use the time derivative formula for KL divergence from Lemma~\ref{Lem:ddtKLDiv}.
    In step (2), we plug in the definition $b^{\bZ}_t(\bz) = \E_{\rho_{0 \mid t}^{\bZ}}\left[b^{\bZ}(\bZ_0) \mid \bZ_t = \bz\right]$, and use the tower property to write the iterated expectation $\E_{\rho_t^{\bZ}}\E_{\rho_{0 \mid t}^{\bZ}}$ as a joint expectation $\E_{\rho_{0t}^{\bZ}}$ over $(\bZ_0,\bZ_t)$.
    In step (3), we use the inequality $\langle a,b \rangle \le \frac{\reg}{2}\|a\|^2 + \frac{1}{2\reg}\|b\|^2$ to the second term.
    In step (4), we use the fact that $\bar \rho_t^{\bZ} = \bar \nu^{\bZ}$ is $(\alpha/\reg)$-strongly log-concave by Lemma~\ref{Lem:Equilibrium}, so it also satisfies $(\alpha/\reg)$-LSI, which allows us to bound the relative Fisher information by KL divergence.

    We can bound the second term above by:
    {\allowdisplaybreaks
    \begin{align*}
        \frac{1}{2\reg} \, &\E_{\rho_{0t}^{\bZ}}\left[\left\|b^{\bZ}(\bZ_0) - \bar b^{\bZ}(\bZ_t) \right\|^2 \right] \\
        &\stackrel{(5)}{\le} 
        \frac{1}{\reg} \, \E_{\rho_{0t}^{\bZ}}\left[\left\|b^{\bZ}(\bZ_0) - b^{\bZ}(\bZ_t) \right\|^2 \right]
        + \frac{1}{\reg} \, \E_{\rho_{t}^{\bZ}}\left[\left\|b^{\bZ}(\bZ_t) - \bar b^{\bZ}(\bZ_t) \right\|^2 \right] \\ 
        &\stackrel{(6)}{\le} 
        \frac{1}{\reg} \left(128t^2 L^4 \, W_2(\rho_t^{\bZ}, \bar \nu^{\bZ})^2 + 64 t^2L^4 \, \Var_{\bar \nu^Z}(\bar Z) + 64\reg t dL^2 N\right) \\
        &\qquad 
        + \frac{1}{\reg} \left(2L^2 \, W_2(\rho_t^{\bZ}, \bar \nu^{\bZ})^2 + 4L^2 \, \Var_{\bar \nu^{Z}}(\bar Z) \right) \\
        &\stackrel{(7)}{\le} 
        \frac{3L^2}{\reg} \, W_2(\rho_t^{\bZ}, \bar \nu^{\bZ})^2 + \frac{5L^2}{\reg} \, \Var_{\bar \nu^{Z}}(\bar Z) + 64 t dL^2 N \\
        &\stackrel{(8)}{\le} 
        \frac{3L^2}{\reg} \left(e^{-\frac{3}{2} \alpha t} \, W_2(\rho_0^{\bZ}, \bar \nu^{\bZ})^2 + \frac{8t L^2}{\alpha} \left(\Var_{\bar \nu^Z}(\bar Z) + 64 \, \reg t d N \right) \right) + \frac{5L^2}{\reg} \, \Var_{\bar \nu^{Z}}(\bar Z) + 64 t dL^2 N \\
        &\stackrel{(9)}{\le} 
        e^{-\frac{3}{2} \alpha t} \, \frac{3L^2}{\reg} \, W_2(\rho_0^{\bZ}, \bar \nu^{\bZ})^2 + \frac{24 \eta L^4}{\alpha \reg} \left(\Var_{\bar \nu^Z}(\bar Z) + 64 \, \reg \eta d N \right) + \frac{5L^2}{\reg} \, \Var_{\bar \nu^{Z}}(\bar Z) + 64 \eta dL^2 N \\
        &\stackrel{(10)}{\le} 
        e^{-\frac{3}{2} \alpha t} \, \frac{3L^2}{\reg} \, W_2(\rho_0^{\bZ}, \bar \nu^{\bZ})^2 + 88 \, \eta dL^2 N + \frac{6L^2}{\reg} \, \Var_{\bar \nu^{Z}}(\bar Z).
    \end{align*}
    }
    In step (5), we introduce an intermediate term $b^{\bZ}(\bZ_t)$ and use the inequality $\|a+b\|^2 \le 2\|a\|^2 + 2\|b\|^2$ to the second term.
    In step (6), we use the bound from Lemma~\ref{Lem:BoundDiffAlg} in the second term (note $t \le \eta \le \frac{\alpha}{64L^2} \le \frac{1}{64L} \le \frac{1}{4L}$ so the assumption in Lemma~\ref{Lem:BoundDiffAlg} is satisfied), and the bound from Lemma~\ref{Lem:ComparisonArbitrary} in the third term.
    In step (7), we use the assumption $t \le \eta \le \frac{\alpha}{64L^2} \le \frac{1}{64L}$, so $128t^2 L^4 \le L^2$ and $64 t^2L^4 \le L^2$.
    In step (8), we apply the $W_2$ bound from Lemma~\ref{Lem:OneStepW2}, applied to $\rho_t^{\bZ}$ when considered as one step of the discrete-time algorithm~\eqref{Eq:ParticleAlgorithm3} from $\rho_0^{\bZ}$ with step size $t \le \eta \le \frac{\alpha}{64L^2}$.
    In step (9), we use the bound $t \le \eta$ in the second and fourth terms.
    In step (10), we use the bound $\eta \le \frac{\alpha}{64L^2}$, so $\frac{24 \eta L^4}{\alpha \reg} \cdot 64 \reg \eta dN \le 24 \eta dL^2 N$, and we also bound $\frac{24 \eta L^4}{\alpha \reg} \le \frac{L^2}{\reg}$.

    Plugging in the bound above to the calculation in~\eqref{Eq:CalcOneStep1}, we obtain:
    \begin{align*}
        \frac{d}{dt} \KL\left(\rho_t^{\bZ} \,\|\, \bar \rho_t^{\bZ}\right)
        &\le -\alpha \, \KL\left(\rho_t^{\bZ} \,\|\, \bar \rho_t^{\bZ}\right) + e^{-\frac{3}{2} \alpha t} \, \frac{3L^2}{\reg} \, W_2(\rho_0^{\bZ}, \bar \nu^{\bZ})^2 + 88 \, \eta dL^2 N + \frac{6L^2}{\reg} \, \Var_{\bar \nu^{Z}}(\bar Z).
    \end{align*}
    We can write this differential inequality equivalently as:
    \begin{align*}
        \frac{d}{dt} \left(e^{\alpha t} \, \KL\left(\rho_t^{\bZ} \,\|\, \bar \rho_t^{\bZ}\right) \right)
        &\le e^{-\frac{1}{2} \alpha t} \, \frac{3L^2}{\reg} \, W_2(\rho_0^{\bZ}, \bar \nu^{\bZ})^2 
        + e^{\alpha t} \left( 88 \, \eta dL^2 N + \frac{6L^2}{\reg} \, \Var_{\bar \nu^{Z}}(\bar Z)\right).
    \end{align*}
    Integrating from $t=0$ to $t=\eta$ and rearranging yields:
    \begin{align*}
        \KL\left(\rho_\eta^{\bZ} \,\|\, \bar \rho_\eta^{\bZ}\right)
        &\le e^{-\alpha \eta} \, \KL\left(\rho_0^{\bZ} \,\|\, \bar \rho_0^{\bZ}\right) + 
        e^{-\alpha \eta} \left(\frac{1-e^{-\frac{1}{2}\alpha \eta}}{\frac{1}{2} \alpha} \right) \frac{3L^2}{\reg} \, W_2(\rho_0^{\bZ}, \bar \nu^{\bZ})^2 \\
        &\qquad + \left(\frac{1-e^{-\alpha \eta}}{\alpha}\right) \left( 88 \, \eta dL^2 N + \frac{6L^2}{\reg} \, \Var_{\bar \nu^{Z}}(\bar Z) \right) \\
        &\le e^{-\alpha \eta} \left( \KL\left(\rho_0^{\bZ} \,\|\, \bar \rho_0^{\bZ}\right) + 
        \frac{3 \eta L^2}{\reg} \, W_2(\rho_0^{\bZ}, \bar \nu^{\bZ})^2 \right) + 88 \, \eta^2 dL^2 N + \frac{6 \eta L^2}{\reg} \, \Var_{\bar \nu^{Z}}(\bar Z)
    \end{align*}
    where in the second step above we use the inequality $1-e^{-c} \le c$ for $c = \frac{1}{2} \alpha \eta$ in the second term, and for $c = \alpha \eta$ in the third term.
    Substituting $\rho_0^{\bZ} = \rho_k^{\bz,\eta}$, $\rho_\eta^{\bZ} = \rho_{k+1}^{\bz,\eta}$, and $\bar \rho_0^{\bZ} = \bar \rho_\eta^{\bZ} = \bar \nu^{\bZ}$ gives the desired bound.
\end{proof}

\subsection{Proof of Theorem~\ref{Thm:ConvergenceAlgToMF} (Biased Convergence of Finite-Particle Algorithm to Stationary Mean-Field Distribution)}
\label{Sec:ConvergenceAlgToMFProof}

\begin{proof}[Proof of Theorem~\ref{Thm:ConvergenceAlgToMF}]
    \textbf{(1)~Biased $W_2$ convergence:}
    For simplicity, let $D_k := W_2(\rho_k^{\bz,\eta}, \bar \nu^{\bZ})^2$.
    Recall from Lemma~\ref{Lem:OneStepW2} we have the recurrence:
    \begin{align*}
        D_{k+1} \le e^{-\frac{3}{2} \alpha \eta} \, D_k + C
    \end{align*}
    where $C := \frac{8\eta L^2}{\alpha} \left(\Var_{\bar \nu^Z}(\bar Z) + 64 \, \reg \eta d N \right).$
    Iterating this recurrence gives us:
    \begin{align*}
        D_k &\le e^{-\frac{3}{2} \alpha \eta k} \, D_0 + C \sum_{i=0}^{k-1} e^{-\frac{3}{2} \alpha \eta i} \\
        &\le e^{-\frac{3}{2} \alpha \eta k} \, D_0 + \frac{C}{1-e^{-\frac{3}{2} \alpha \eta}} \\
        &\le e^{-\frac{3}{2} \alpha \eta k} \, D_0 + \frac{C}{\alpha \eta}.
    \end{align*}
    In the second step above, we use the bound $\sum_{i=0}^{k-1} e^{-\frac{3}{2} \alpha \eta i} \le \sum_{i=0}^{\infty} e^{-\frac{3}{2} \alpha \eta i} = 1/(1-e^{-\frac{3}{2} \alpha \eta})$.
    In the last step, we use the inequality $1-e^{-c} \ge \frac{2}{3} c$ for $0 \le c = \frac{3}{2} \alpha \eta \le \frac{3}{4}$, which holds since $\eta \le \frac{\alpha}{64 L^2} \le \frac{1}{64 \alpha} \le \frac{1}{8 \alpha}$.
    Substituting the definition of $D_k$ and $C$ gives the $W_2$ convergence bound.

    \medskip
    \noindent
    \textbf{(2)Biased convergence in KL divergence:}
    Let $H_k := \KL\left(\rho_{k+1}^{\bz,\eta} \,\|\, \bar \nu^{\bZ} \right)$, and $D_k = W_2(\rho_k^{\bz,\eta}, \bar \nu^{\bZ})^2$ as before.
    Recall from Lemma~\ref{Lem:OneStepKL} we have the recurrence:
    \begin{align*}
        H_{k+1}
        &\le e^{-\alpha \eta} \left( H_k + 
        \frac{3 \eta L^2}{\reg} D_k \right) + C'
    \end{align*}
    where $C' :=  88 \eta^2 dL^2 N + \frac{6 \eta L^2}{\reg} \Var_{\bar \nu^{Z}}(\bar Z)$.
    Iterating this recurrence gives us:
    \begin{align}\label{Eq:CalcKLProof1}
        H_k &\le e^{-\alpha \eta k} H_0 + \frac{3 \eta L^2}{\reg} \sum_{i=0}^{k-1} e^{-\alpha \eta (k-i)} D_i + C' \sum_{i=0}^{k-1} e^{-\alpha \eta i}.
    \end{align}
    By using the $W_2$ convergence bound we proved above, we can write:
    \begin{align*}
        \sum_{i=0}^{k-1} e^{-\alpha \eta (k-i)} D_i
        &\le \sum_{i=0}^{k-1} e^{-\alpha \eta (k-i)} \left(e^{-\frac{3}{2} \alpha \eta i} \, D_0 + \frac{C}{\alpha \eta} \right) \\
        &= e^{-\alpha \eta k} D_0 \sum_{i=0}^{k-1} e^{-\frac{1}{2} \alpha \eta i} + \frac{C}{\alpha \eta} \sum_{i=0}^{k-1} e^{-\alpha \eta (k-i)} \\
        &\le \frac{e^{-\alpha \eta k} D_0}{(1-e^{-\frac{1}{2} \alpha \eta})} + \frac{C}{\alpha \eta} \frac{e^{-\alpha \eta}}{(1-e^{-\alpha \eta})} \\
        &\le \frac{3e^{-\alpha \eta k} D_0}{\alpha \eta} + \frac{3C}{2\alpha^2 \eta^2}
    \end{align*}
    where in the last step we again use the bound $1-e^{-c} \ge \frac{2}{3} c$ for $c = \frac{1}{2} \alpha \eta$ and $c = \alpha \eta$, and we also bound $e^{-\alpha \eta} \le 1$.
    Plugging in the value $C = \frac{8\eta L^2}{\alpha} \left(\Var_{\bar \nu^Z}(\bar Z) + 64 \, \reg \eta d N \right)$, the middle term in~\eqref{Eq:CalcKLProof1} can be bounded by:
    \begin{align*}
        \frac{3 \eta L^2}{\reg} \sum_{i=0}^{k-1} e^{-\alpha \eta (k-i)} D_i
        &\le \frac{3 \eta L^2}{\reg} \left( \frac{3e^{-\alpha \eta k} D_0}{\alpha \eta} + \frac{3}{2\alpha^2 \eta^2} \left(\frac{8\eta L^2}{\alpha} \left(\Var_{\bar \nu^Z}(\bar Z) + 64 \, \reg \eta d N \right)\right) \right) \\
        &= e^{-\alpha \eta k} \, \frac{9 L^2 D_0}{\alpha \reg} + \frac{36 L^4}{\alpha^3 \reg} \left(\Var_{\bar \nu^Z}(\bar Z) + 64 \, \reg \eta d N \right)
    \end{align*}
    For the last term in~\eqref{Eq:CalcKLProof1}, we can bound:
    \begin{align*}
        C' \sum_{i=0}^{k-1} e^{-\alpha \eta i}
        &\le \frac{C'}{1-e^{-\alpha \eta}}
        \,\le\, \frac{3C'}{2\alpha \eta}
        \,=\, \frac{132 \eta dL^2 N}{\alpha} + \frac{9 L^2}{\alpha \reg} \Var_{\bar \nu^{Z}}(\bar Z)
    \end{align*}
    where again we use the bound $1-e^{-c} \ge \frac{2}{3} c$.
    Plugging in these two bounds to~\eqref{Eq:CalcKLProof1}, we obtain:
    \begin{align*}
        H_k 
        &\le e^{-\alpha \eta k} H_0 + e^{-\alpha \eta k} \frac{9 L^2 D_0}{\alpha \reg} + \frac{36 L^4}{\alpha^3 \reg} \left(\Var_{\bar \nu^Z}(\bar Z) + 64 \, \reg \eta d N \right)
        + \frac{132 \eta dL^2 N}{\alpha} + \frac{9 L^2}{\alpha \reg} \Var_{\bar \nu^{Z}}(\bar Z) \\
        &= e^{-\alpha \eta k} \left(H_0 + \frac{9 L^2}{\alpha \reg} D_0 \right) + \frac{9 L^2}{\alpha \reg} \left(1+\frac{4L^2}{\alpha^2}\right) \Var_{\bar \nu^{Z}}(\bar Z) 
        + \frac{6 \eta dL^2 N}{\alpha} \left( 22 + 384 \frac{L^2}{\alpha^2} \right) \\
        &\le e^{-\alpha \eta k} \left(H_0 + \frac{9 L^2}{\alpha \reg} D_0 \right) + \frac{45 L^4}{\alpha^3 \reg} \left(\Var_{\bar \nu^{Z}}(\bar Z) 
        + 55 \, \eta \reg d N \right)
    \end{align*} 
    where in the last step we use the bound $1 \le \frac{L^2}{\alpha^2}$ to simplify the second and third terms, and bound $6 \times (22+384) = 2436 \le 2475 = 45 \times 55$.
\end{proof}

\subsection{Proof of Corollary~\ref{Cor:AvgParticleAlgorithm} (Average Particle Along the Finite-Particle Algorithm)}
\label{Sec:AvgParticleAlgorithmProof}

\begin{proof}[Proof of Corollary~\ref{Cor:AvgParticleAlgorithm}]
    The proof below follows identically as in the proof of Corollary~\ref{Cor:AvgParticleSystem}.

    For $\bz_k = (\bx_k,\by_k) = (x_k^1,\dots,x_k^N,y_k^1,\dots,y_k^N) \sim \rho_k^{\bz,\eta}$ in $\R^{2dN}$, let $\rho_k^{z,\eta,i} \in \P(\R^{2d})$ be the marginal distribution of the component $z_k^i = (x_k^i,y_k^i) \in \R^{2d}$, for $i \in [N]$.
    We rearrange the coordinates to write $\bz_k = (z_k^1,\dots,z_k^N)$ for convenience, and still denote its distribution by $\rho_k^{\bz,\eta}$. 
    We introduce an independent product of the marginal distributions:
    \begin{align*}
        \hat \rho_k^{\bz,\eta} := \bigotimes_{i \in [N]} \rho_k^{z,\eta,i}.
    \end{align*}    
    Since $\bar \nu^{\bZ} = (\bar \nu^X)^{\otimes N} \otimes (\bar \nu^Y)^{\otimes N}$ is an independent product, after rearranging the coordinates as above, we can write it as $\bar \nu^{\bZ} = (\bar \nu^Z)^{\otimes N}$ where $\bar \nu^Z = \bar \nu^X \otimes \bar \nu^Y$.
    
    Then we can bound the KL divergence by:
    \begin{align*}
        \KL(\rho_k^{\bz,\eta} \,\|\, \bar \nu^{\bZ})
        = \E_{\rho_k^{\bz,\eta}}\left[ \log \frac{\rho_k^{\bz,\eta}}{\bar \nu^{\bZ}} \right]
        &= \E_{\rho_k^{\bz,\eta}}\left[ \log \frac{\rho_k^{\bz,\eta}}{\hat \rho_k^{\bz,\eta}} \right] + \E_{\rho_k^{\bz,\eta}}\left[\log \frac{\hat \rho_k^{\bz,\eta}}{(\bar \nu^{Z})^{\otimes N}} \right] \\
        &= \E_{\rho_k^{\bz,\eta}}\left[ \log \frac{\rho_k^{\bz,\eta}}{\hat \rho_k^{\bz,\eta}} \right] + \sum_{i \in [N]} \E_{\rho_k^{z,\eta,i}} \left[\log \frac{\rho_k^{z,\eta,i}}{\bar \nu^{Z}} \right] \\
        &= \KL\left(\rho_k^{\bz,\eta} \,\|\, \hat \rho_k^{\bz,\eta}\right) + \sum_{i \in [N]} \KL\left(\rho_k^{z,\eta,i} \,\|\, \bar \nu^{Z}\right) \\
        &\ge \sum_{i \in [N]} \KL\left(\rho_k^{z,\eta,i} \,\|\, \bar \nu^{Z}\right)
    \end{align*}
    where the inequality follows by dropping the first term $\KL\left(\rho_k^{\bz,\eta} \,\|\, \hat \rho_k^{\bz,\eta}\right) \ge 0$, which is the multivariate mutual information of $\bz_k^\eta$. 
    Furthermore, since KL divergence is jointly convex in both arguments, we can further bound the above by:
    \begin{align*}
        \KL(\rho_k^{\bz,\eta} \,\|\, \bar \nu^{\bZ})
        &\ge N \cdot \frac{1}{N} \sum_{i \in [N]} \KL\left(\rho_k^{z,\eta,i} \,\|\, \bar \nu^{Z}\right) 
        \,\ge\, N \cdot \KL\left(\rho_k^{z,\eta,\avg} \,\|\, \bar \nu^{Z}\right)
    \end{align*}
    using the definition $\rho_k^{z,\eta,\avg} = \frac{1}{N} \sum_{i \in [N]} \rho_k^{z,\eta,i}$.
    
    Combining this with the bound for $\KL(\rho_k^{\bz,\eta} \,\|\, \bar \nu^{\bZ})$ from Theorem~\ref{Thm:ConvergenceAlgToMF} gives the desired result.
\end{proof}

\subsection{Proofs for Iteration Complexity of the Finite-Particle Algorithm}

\subsubsection{Preliminary results}
\label{Sec:BoundInitFI}

Recall $\bar \nu^Z = \bar \nu^X \otimes \bar \nu^Y$ is the stationary mean-field distribution, which is $(\alpha/\reg)$-SLC and $(L/\reg)$-smooth by Lemma~\ref{Lem:Equilibrium}, under Assumption~\ref{As:SCSmooth}.
Recall by Theorem~\ref{Thm:DetMinMaxGF} (in Section~\ref{Sec:ReviewDeterministic}) that there exists a unique pair of equilibrium points $(x^*,y^*) \in \R^{2d}$ that satisfies $\nabla V(x^*,y^*) = 0$.
We first show the following.

\begin{lemma}\label{Lem:BoundVar}
    Assume Assumption~\ref{As:SCSmooth}.
    Then for $\bar Z \sim \bar \nu^Z$ in $\R^{2d}$:
    $$\Var_{\bar \nu^Z}(\bar Z) \le \frac{2\reg d}{\alpha}.$$    
\end{lemma}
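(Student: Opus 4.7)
The plan is to exploit strong log-concavity of $\bar \nu^Z$ via a simple Poincaré-type argument. By Lemma~\ref{Lem:Equilibrium}, both $\bar \nu^X$ and $\bar \nu^Y$ are $(\alpha/\epsilon)$-strongly log-concave on $\R^d$, so the product distribution $\bar \nu^Z = \bar \nu^X \otimes \bar \nu^Y$ is $(\alpha/\epsilon)$-strongly log-concave on $\R^{2d}$. This is the only input from earlier in the paper that I will need.

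Next, I will recall that any $\beta$-strongly log-concave probability measure $\nu$ on $\R^D$ satisfies the Brascamp--Lieb / Poincar\'e inequality $\Var_\nu(f) \le \tfrac{1}{\beta}\,\E_\nu[\|\nabla f\|^2]$ for every smooth $f$; equivalently, writing $\bar \mu = \E_{\bar \nu^Z}[\bar Z]$, the covariance matrix satisfies $\Cov_{\bar \nu^Z}(\bar Z) \preceq (\epsilon/\alpha)\, I_{2d}$. This inequality is a direct consequence of $-\nabla^2 \log \bar \nu^Z \succeq (\alpha/\epsilon) I$ and is already implicit in the paper's use of LSI/Talagrand (which follow from strong log-concavity by Otto--Villani, cited earlier). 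I will either cite Brascamp--Lieb or give a one-line derivation via the Poincar\'e inequality applied to each coordinate function.

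The final step is to take the trace: $\Var_{\bar \nu^Z}(\bar Z) = \Tr(\Cov_{\bar \nu^Z}(\bar Z)) \le \Tr((\epsilon/\alpha) I_{2d}) = 2d\epsilon/\alpha$. Equivalently, applying the scalar Poincar\'e inequality to $f_i(z) = z_i$ for $i = 1, \dots, 2d$, each coordinate variance is at most $\epsilon/\alpha$, and summing over the $2d$ coordinates yields the claimed bound.

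There is no real obstacle here; the only thing to be careful about is citing (or briefly proving) the Brascamp--Lieb / strong log-concavity Poincar\'e inequality, since the paper has only explicitly introduced LSI and Talagrand. If the authors prefer to avoid naming Brascamp--Lieb, a clean alternative is to use LSI plus the entropy-variance inequality, or to argue directly that a $(\alpha/\epsilon)$-SLC measure satisfies the Poincar\'e inequality with the same constant (a standard fact following from $\alpha$-LSI implying $\alpha$-PI, or from convexity of $-\log \bar \nu^Z$ and integration by parts). I expect the proof to occupy only a few lines in the paper.
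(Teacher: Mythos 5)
Your proposal is correct and matches the paper's proof essentially verbatim: the paper also invokes the $(\alpha/\epsilon)$-Poincar\'e inequality implied by $(\alpha/\epsilon)$-strong log-concavity of $\bar \nu^Z$, applies it to the linear functions $z \mapsto \langle z, u\rangle$ to get $\Cov_{\bar \nu^Z}(\bar Z) \preceq (\epsilon/\alpha) I$, and takes the trace. The only cosmetic difference is the name you attach to the inequality (Brascamp--Lieb versus Poincar\'e for SLC measures), which does not change the argument.
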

\begin{proof}
    Recall that since $\bar \nu^Z$ is $(\alpha/\reg)$-SLC, it also satisfies the $(\alpha/\reg)$-Poincar\'e inequality~\citep{villani2009optimal}, which means for all smooth functions $\phi \colon \R^{2d} \to \R$:
    $$\Var_{\bar \nu^Z}(\phi(\bar Z)) \le \frac{\reg}{\alpha} \, \E_{\bar \nu^Z}\left[\left\|\nabla \phi(\bar Z)\right\|^2\right].$$
    For each unit vector $u \in \R^{2d}$, $\|u\| = 1$, by applying the Poincar\'e inequality to the function $\phi(z) = \langle z, u \rangle$, we get:
    \begin{align*}
        u^\top \Cov_{\bar \nu^Z}(\bar Z) u
        = \Var_{\bar \nu^Z}(\langle \bar Z, u \rangle)
        \le \frac{\reg}{\alpha} \, \E_{\bar \nu^Z}\left[\left\|u\right\|^2\right] = \frac{\reg}{\alpha}. 
    \end{align*}
    This shows that the covariance matrix $\Cov_{\bar \nu^Z}(\bar Z) \in \R^{2d \times 2d}$ satisfies:
    $$\Cov_{\bar \nu^Z}(\bar Z) \preceq \frac{\reg}{\alpha} I.$$
    Taking trace gives us the desired result:
    $\Var_{\bar \nu^Z}(\bar Z) = \Tr\left(\Cov_{\bar \nu^Z}(\bar Z)\right) \le \frac{2\reg d}{\alpha}.$
\end{proof}

\begin{lemma}\label{Lem:BoundDistStationary}
    Assume Assumption~\ref{As:SCSmooth}.
    For $\bar Z \sim \bar \nu^Z$ in $\R^{2d}$, we have:
    $$\E_{\bar \nu^Z}\left[\left\|\bar Z-z^*\right\|^2\right] \le \frac{8\reg dL^2}{\alpha^3}.$$
\end{lemma}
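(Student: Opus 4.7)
The plan is to exploit the stationarity of $\bar \nu^Z$ under the mean-field Langevin dynamics~\eqref{Eq:MFSystem} and recycle the drift estimate derived inside Lemma~\ref{Lem:BoundMomentParticleMF}. That proof shows, for any solution of~\eqref{Eq:MFSystem}, the differential inequality
\begin{align*}
\frac{d}{dt}\E_{\bar \rho_t^Z}\!\left[\|\bar Z_t - z^*\|^2\right] \,\le\, -2\alpha\,\E_{\bar \rho_t^Z}\!\left[\|\bar Z_t - z^*\|^2\right] + 4\epsilon d.
\end{align*}
Initializing at $\bar \rho_0^Z = \bar \nu^Z$ forces $\bar \rho_t^Z \equiv \bar \nu^Z$ for all $t \ge 0$, so the left-hand side vanishes identically and we get $\E_{\bar \nu^Z}[\|\bar Z-z^*\|^2] \le 2\epsilon d/\alpha$. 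The claimed bound $8\epsilon d L^2/\alpha^3$ then follows a fortiori, since $\alpha \le L$ gives $2\epsilon d/\alpha \le 8\epsilon d L^2/\alpha^3$.

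If a more self-contained argument is preferred, I would instead decompose
\begin{align*}
\E_{\bar \nu^Z}\!\left[\|\bar Z - z^*\|^2\right] \,=\, \Var_{\bar \nu^Z}(\bar Z) + \|m - z^*\|^2, \qquad m := \E_{\bar \nu^Z}[\bar Z],
\end{align*}
bound the variance by Lemma~\ref{Lem:BoundVar}, and control the bias $\|m - z^*\|^2$ through three ingredients. First, the integration-by-parts identity $\E_{\bar \nu^Z}[b^Z(\bar Z)] = 0$, where $b^Z(x,y) := (-\nabla_x V(x,y), \nabla_y V(x,y))$; this follows by substituting the best-response scores $\nabla \log \bar \nu^X(x) = -\epsilon^{-1}\E_{\bar \nu^Y}[\nabla_x V(x,\bar Y)]$ and its $Y$-analogue from~\eqref{Eq:BestResponse} into $\int \nabla \bar \nu^X = \int \nabla \bar \nu^Y = 0$. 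Second, Jensen combined with $L$-Lipschitzness of $b^Z$ (immediate from Assumption~\ref{As:SCSmooth}) gives $\|b^Z(m)\| = \|b^Z(m) - \E_{\bar \nu^Z}[b^Z(\bar Z)]\| \le L\sqrt{\Var_{\bar \nu^Z}(\bar Z)}$. Third, $\alpha$-strong monotonicity of $-b^Z$ (Lemma~\ref{Lem:LipschitzMonotone_bz} with $N=1$) together with $b^Z(z^*)=0$ yields $\alpha\|m-z^*\|^2 \le \langle b^Z(z^*) - b^Z(m), m-z^*\rangle \le \|b^Z(m)\|\,\|m-z^*\|$. Chaining these, $\|m-z^*\|^2 \le (L^2/\alpha^2)\Var_{\bar \nu^Z}(\bar Z) \le 2\epsilon d L^2/\alpha^3$, which, added to the variance bound and inflated using $\alpha \le L$, fits under the stated $8\epsilon d L^2/\alpha^3$.

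No real obstacle is anticipated; both routes are short. The only nontrivial content is the identity $\E_{\bar \nu^Z}[b^Z(\bar Z)] = 0$, which encodes the mean-field fixed-point condition and is exactly what lets the already-developed tools (Lipschitzness, strong monotonicity, and the Poincar\'e inequality behind Lemma~\ref{Lem:BoundVar}) pin the mass of $\bar \nu^Z$ at a distance at most $O(\sqrt{\epsilon d/\alpha})$ from the deterministic equilibrium point $z^*$.
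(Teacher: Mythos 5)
Your proposal is correct, and both of your routes differ from the paper's argument. The paper proves the pointwise inequality $\alpha\|z-z^*\|\le\|b^Z(z)\|$ from strong monotonicity of $-b^Z$ and $b^Z(z^*)=0$, and then bounds $\E_{\bar\nu^Z}[\|b^Z(\bar Z)\|^2]$ via Lemma~\ref{Lem:BoundFPMF} with $N=1$ (which in turn combines the comparison bound of Lemma~\ref{Lem:ComparisonStationary} with the score-function/Fisher-information estimate $\E_{\bar\nu^Z}[\|\epsilon\nabla\log\bar\nu^Z\|^2]\le 2\epsilon dL$ of Lemma~\ref{Lem:BoundFisher}), arriving at $\frac{4\epsilon dL^2}{\alpha^3}+\frac{4\epsilon dL}{\alpha^2}\le\frac{8\epsilon dL^2}{\alpha^3}$. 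Your first route instead exploits that $\bar\nu^Z$ is stationary for the mean-field dynamics and reuses the dissipation inequality established inside Lemma~\ref{Lem:BoundMomentParticleMF}; this is legitimate (no circularity, since that lemma does not rely on the present one), and it in fact yields the strictly sharper, $L$-free bound $2\epsilon d/\alpha$ — equivalently, one can just take $\bar\rho_0^Z=\bar\nu^Z$ in the lemma's final Gr\"onwall estimate and let $t\to\infty$, which avoids even discussing the time derivative of a constant. Your second route is also sound: the identity $\E_{\bar\nu^Z}[b^Z(\bar Z)]=0$ follows exactly as you say from the best-response score formulas and $\int\nabla\bar\nu^X=\int\nabla\bar\nu^Y=0$, strong monotonicity at the mean plus Cauchy--Schwarz gives $\alpha\|m-z^*\|\le\|b^Z(m)\|$, and Jensen with the $L$-Lipschitzness of $b^Z$ (which does hold with constant $L$, not $2L$, since $b^Z=J\nabla V$ with $J$ a sign matrix, and indeed you need $L$ rather than the paper's $2L$ for the constants to land under $8\epsilon dL^2/\alpha^3$) bounds $\|b^Z(m)\|\le L\sqrt{\Var_{\bar\nu^Z}(\bar Z)}$, giving $4\epsilon dL^2/\alpha^3$ after Lemma~\ref{Lem:BoundVar}. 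In short: the paper pays the price of a second-moment bound on the drift (hence the extra factor $L^2/\alpha^2$), while your arguments use only the first-moment stationarity information, which is why they produce tighter constants at essentially no extra cost.
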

\begin{proof}
    Define the vector field $b^Z \colon \R^{2d} \to \R^{2d}$ by, for all $z = (x,y) \in \R^{2d}$:
    \begin{align*}
        b^Z(x,y) = \begin{pmatrix}
            -\nabla_x V(x,y) \\
            \nabla_y V(x,y)
        \end{pmatrix}.
    \end{align*}
    Observe that $b^Z$ is the case $N=1$ of the vector field $b^{\bZ}$ that we defined in~\eqref{Eq:Defbz} (in $\R^{2dN} = \R^{2d}$).
    Note by definition, $b^Z(z^*) = 0$, since both $\nabla_x V(z^*) = \nabla_y V(z^*) = 0$.
    Recall by Lemma~\ref{Lem:LipschitzMonotone_bz} that $-b^Z$ is $\alpha$-strongly monotone, so for all $z \in \R^{2d}$:
    \begin{align*}
        \alpha \|z-z^*\|^2 \le \langle -b^Z(z)+b^Z(z^*), z-z^* \rangle \le \|b^Z(z)-b^Z(z^*)\| \cdot \|z-z^*\|
    \end{align*}
    where the second inequality is by Cauchy-Schwarz.
    Then we conclude that:
    $$\alpha \|z-z^*\| \le \|b^Z(z)-b^Z(z^*)\| = \|b^Z(z)\|.$$
    Therefore, for $\bar Z \sim \bar \nu^{Z}$:
    \begin{align*}
        \E_{\bar \nu^Z}\left[\left\|\bar Z-z^*\right\|^2\right]
        &\,\le\, \frac{1}{\alpha^2} \E_{\bar \nu^Z}\left[\left\|b^Z(\bar Z)\right\|^2\right] 
        \,\le\, \frac{2L^2 \Var_{\bar \nu^Z}(\bar Z)}{\alpha^2} + \frac{4\reg dL}{\alpha^2} 
        \,\le\, \frac{4\reg dL^2}{\alpha^3} + \frac{4\reg dL}{\alpha^2} 
        \,\le\, \frac{8\reg dL^2}{\alpha^3}
    \end{align*}
    where the second inequality is by Lemma~\ref{Lem:BoundFPMF} for the case $N=1$,
    the third inequality is by the bound on the variance of $\bar \nu^Z$ from Lemma~\ref{Lem:BoundVar},
    and the last inequality is by the bound $1 \le \frac{L}{\alpha}$.    
\end{proof}

\subsubsection{Bound on the initial relative Fisher information}

We have the following bounds on the distances from the stationary mean-field distribution from a Gaussian starting distribution.

\begin{lemma}\label{Lem:BoundInitKL}
    Assume Assumption~\ref{As:SCSmooth}.
    Let $\rho^X = \N(m^X, \frac{\reg}{L}I)$ and $\rho^Y = \N(m^Y, \frac{\reg}{L}I)$ for arbitrary $m^X, m^Y \in \R^d$, and let $\rho^Z = \rho^X \otimes \rho^Y = \N(m^Z, \frac{\reg}{L}I)$ where $m^Z = (m^X, m^Y) \in \R^{2d}$.
    Then:
    \begin{align*}
        \FI(\rho^Z \,\|\, \bar \nu^Z) &\le \frac{22 dL^4}{\reg \alpha^3} + \frac{2L^2}{\reg^2} \|m^Z-z^*\|^2 \\
        \KL(\rho^Z \,\|\, \bar \nu^Z) &\le \frac{11 dL^4}{\alpha^4} + \frac{L^2}{\reg \alpha} \|m^Z-z^*\|^2  \\
        W_2(\rho^Z, \, \bar \nu^Z)^2 &\le \frac{22 \reg dL^4}{\alpha^5} + \frac{2L^2}{\alpha^2} \|m^Z-z^*\|^2.
    \end{align*}
\end{lemma}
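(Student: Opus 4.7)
The three bounds are linked by standard functional inequalities for the $(\alpha/\epsilon)$-strongly log-concave distribution $\bar \nu^Z$ (Lemma~\ref{Lem:Equilibrium}). The $(\alpha/\epsilon)$-log-Sobolev inequality gives $\KL(\rho^Z \,\|\, \bar \nu^Z) \le \tfrac{\epsilon}{2\alpha}\, \FI(\rho^Z \,\|\, \bar \nu^Z)$, and the $(\alpha/\epsilon)$-Talagrand inequality gives $W_2(\rho^Z, \bar \nu^Z)^2 \le \tfrac{2\epsilon}{\alpha}\, \KL(\rho^Z \,\|\, \bar \nu^Z)$. Multiplying the claimed Fisher information bound by $\tfrac{\epsilon}{2\alpha}$ produces exactly the stated KL bound, and multiplying that by $\tfrac{2\epsilon}{\alpha}$ produces exactly the stated $W_2^2$ bound. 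So the whole proof reduces to establishing the first estimate.

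For the Fisher information, write $g = -\log \bar \nu^Z$, which is $(\alpha/\epsilon)$-strongly convex and $(L/\epsilon)$-smooth by Assumption~\ref{As:SCSmooth} and the explicit form of the best-response density in~\eqref{Eq:BestResponse}. Since $\rho^Z = \mathcal{N}(m_Z, \tfrac{\epsilon}{L} I)$ on $\R^{2d}$ has score $\nabla \log \rho^Z(z) = -\tfrac{L}{\epsilon}(z - m_Z)$,
$$\FI(\rho^Z \,\|\, \bar \nu^Z) = \E_{\rho^Z}\!\left[\left\|\nabla g(Z) - \tfrac{L}{\epsilon}(Z - m_Z)\right\|^2\right].$$
My plan is to rewrite $\nabla g(Z) - \tfrac{L}{\epsilon}(Z-m_Z) = M(Z)(Z-m_Z) + \nabla g(m_Z)$, where $M(Z) = \int_0^1 \nabla^2 g((1-t) m_Z + t Z)\, dt - \tfrac{L}{\epsilon} I$ has operator norm at most $(L-\alpha)/\epsilon$ by the Hessian bounds on $g$. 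Combined with $\E_{\rho^Z}[\|Z-m_Z\|^2] = 2d\epsilon/L$, this yields a clean bound of the form $O(dL/\epsilon) + O(\|\nabla g(m_Z)\|^2)$.

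The final ingredient is controlling $\|\nabla g(m_Z)\|$ by the deterministic equilibrium $z^*$. From the identity $\bar b^Z(z) = -\epsilon \nabla g(z)$ (the $N=1$ case of~\eqref{Eq:TensorMFScore}) and the $L$-Lipschitzness of $\bar b^Z$ from Lemma~\ref{Lem:LipschitzMonotone_bzbar}, I get $\|\nabla g(m_Z)\| \le \tfrac{1}{\epsilon}\bigl(L\|m_Z - z^*\| + \|\bar b^Z(z^*)\|\bigr)$. The anchor term $\|\bar b^Z(z^*)\|$ is handled by exploiting $\nabla V(x^*, y^*) = 0$: rewriting
$$\bar b^X(x^*) = \E_{\bar \nu^Y}\bigl[-\nabla_x V(x^*, \bar Y) + \nabla_x V(x^*, y^*)\bigr],$$
Jensen's inequality and the $L$-Lipschitz property of $\nabla_x V$ in its second argument give $\|\bar b^X(x^*)\|^2 \le L^2\, \E_{\bar \nu^Y}[\|\bar Y - y^*\|^2]$, and symmetrically for $\bar b^Y(y^*)$. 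Summing and invoking Lemma~\ref{Lem:BoundDistStationary} produces $\|\bar b^Z(z^*)\|^2 \le L^2\, \E_{\bar \nu^Z}[\|\bar Z - z^*\|^2] \le 8\epsilon dL^4/\alpha^3$.

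The main obstacle is cosmetic rather than conceptual: squeezing the various factor-of-two losses from $\|a+b\|^2 \le 2\|a\|^2 + 2\|b\|^2$ into the specific constants $22$ and $2$ quoted in the lemma. The cleanest way to tighten things is to use the Gaussian integration-by-parts identity $\FI(\rho^Z \,\|\, \bar \nu^Z) = \E_{\rho^Z}[\|\nabla g(Z)\|^2] - 2\E_{\rho^Z}[\Delta g(Z)] + \tfrac{2dL}{\epsilon}$ instead of a pure triangle inequality, and to use $\Delta g \ge 2d\alpha/\epsilon$ (strong convexity) so that the negative $-2\E[\Delta g]$ term absorbs part of the $\tfrac{2dL}{\epsilon}$ contribution, with the $L^4/\alpha^3$ dependence then coming exclusively from the $\|\bar b^Z(z^*)\|^2$ anchor. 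Apart from this accounting, every step is a direct Lipschitz or smoothness bound.
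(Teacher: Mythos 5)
Your proposal is correct in substance and follows essentially the same route as the paper: expand the relative Fisher information via Gaussian integration by parts and drop the $-2\,\E_{\rho^Z}[\Delta g]$ term by strong convexity, anchor at the deterministic equilibrium $z^*$ where $\nabla V(z^*)=0$, control the anchor through $\E_{\bar \nu^Z}[\|\bar Z - z^*\|^2] \le 8\epsilon d L^2/\alpha^3$ (Lemma~\ref{Lem:BoundDistStationary}), and then deduce the KL and $W_2$ bounds from the $(\alpha/\epsilon)$-LSI and Talagrand inequalities, exactly as the paper does. The only genuine difference is packaging: the paper splits $\nabla_x V(X,\bar Y)$ around $(x^*,y^*)$ inside the inner expectation over $\bar \nu^Y$, componentwise in $X$ and $Y$, while you work with the aggregated field $\bar b^Z = \epsilon \nabla \log \bar \nu^Z$, its $L$-Lipschitzness (Lemma~\ref{Lem:LipschitzMonotone_bzbar}), and a pointwise bound on $\|\bar b^Z(z^*)\|$; these are equivalent, and your anchor estimate $\|\bar b^Z(z^*)\|^2 \le L^2\,\E_{\bar\nu^Z}[\|\bar Z - z^*\|^2]$ is valid.

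One bookkeeping caution: as you organize it (split around $m_Z$, then split $\nabla g(m_Z)$ around $z^*$), you apply $\|a+b\|^2 \le 2\|a\|^2 + 2\|b\|^2$ twice in a nested way, which produces a coefficient $4L^2/\epsilon^2$ on $\|m_Z-z^*\|^2$ and roughly $36$--$38$ on the leading $dL^4/(\epsilon\alpha^3)$ term; the integration-by-parts identity you invoke as a fix only removes the loss between the Gaussian score and $\nabla g$, not the nested-split loss. To land the stated constants $2$ and $22$, make a single split of $\nabla g(Z)$ around $z^*$: bound $\E_{\rho^Z}[\|\nabla g(Z)\|^2] \le \frac{2L^2}{\epsilon^2}\E_{\rho^Z}[\|Z-z^*\|^2] + \frac{2}{\epsilon^2}\|\bar b^Z(z^*)\|^2$ and use the bias--variance identity $\E_{\rho^Z}[\|Z-z^*\|^2] = \frac{2d\epsilon}{L} + \|m_Z-z^*\|^2$; combined with $\FI(\rho^Z \,\|\, \bar\nu^Z) \le \frac{2dL}{\epsilon} + \E_{\rho^Z}[\|\nabla g(Z)\|^2]$ and your anchor bound this gives $\frac{6dL}{\epsilon} + \frac{16 dL^4}{\epsilon\alpha^3} + \frac{2L^2}{\epsilon^2}\|m_Z-z^*\|^2 \le \frac{22 dL^4}{\epsilon\alpha^3} + \frac{2L^2}{\epsilon^2}\|m_Z-z^*\|^2$, matching the lemma, and the KL and $W_2$ bounds then follow by the scaling you already checked.
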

\begin{proof}
    Since $\rho^Z = \rho^X \otimes \rho^Y$ and $\bar \nu^Z = \bar \nu^X \otimes \bar \nu^Y$ are product distributions, we have
    $$\FI(\rho^Z \,\|\, \bar \nu^Z) = \FI(\rho^X \,\|\, \bar \nu^X) + \FI(\rho^Y \,\|\, \bar \nu^Y).$$
    We will bound each term separately.

    Define $g^X = -\log \bar \nu^X$, so $\nabla g^X(x) = \reg^{-1} \E_{\bar \nu^Y}[\nabla_x V(x, \bar Y)]$
    and $\Delta g^X(x) = \reg^{-1} \E_{\bar \nu^Y}[\Delta_x V(x,\bar Y)]$.
    Since we assume $V(x,y)$ is $\alpha$-strongly convex in $x$, we have $\Delta g^X(x) \ge \alpha d/\reg \ge 0$ for all $x \in \R^d$.
    Note also that for $\rho^X = \N(m^X, \frac{\reg}{L}I)$ on $\R^d$, we have
    $$\E_{\rho^X}\left[\left\| \nabla \log \rho^X \right\|^2\right] = \frac{L^2}{\reg^2} \E_{\rho^X}\left[\left\|X-m^X\right\|^2\right] = \frac{Ld}{\reg}.$$
    Then by expanding the square and using integration by parts, we can write:
    \begin{align*}
        \FI(\rho^X \,\|\, \bar \nu^X)
        &= \E_{\rho^X}\left[\left\| \nabla \log \rho^X + \nabla g^X \right\|^2\right] \\
        &= \E_{\rho^X}\left[\left\| \nabla \log \rho^X \right\|^2\right] -2 \, \E_{\rho^X}\left[\Delta g^X\right] + \E_{\rho^X}\left[\left\|\nabla g^X \right\|^2\right] \\
        &\le \frac{Ld}{\reg} + \E_{\rho^X}\left[\left\|\nabla g^X \right\|^2\right].
    \end{align*}
    For the second term above, we can bound:
    \begin{align*}
        &\reg^2 \, \E_{\rho^X}\left[\left\|\nabla g^X \right\|^2\right] \\
        &= \E_{\rho^X}\left[\left\|\E_{\bar \nu^Y}[\nabla_x V(X,\bar Y)] \right\|^2\right] \\
        &\le \E_{\rho^X \otimes \bar \nu^Y}\left[\left\|\nabla_x V(X,\bar Y) \right\|^2\right] \\
        &\le 2\E_{\rho^X \otimes \bar \nu^Y}\left[\left\|\nabla_x V(X,\bar Y) - \nabla_x V(x^*, \bar Y) \right\|^2\right] + 2\E_{\rho^X \otimes \bar \nu^Y}\left[\left\|\nabla_x V(x^*, \bar Y) - \nabla_x V(x^*,y^*) \right\|^2\right] \\
        &\le 2L^2 \E_{\rho^X}\left[\left\|X-x^*\right\|^2\right] + 2L^2 \, \E_{\bar \nu^Y}\left[\left\|\bar Y-y^*\right\|^2\right] \\
        &= 2L^2 \left(\frac{\reg d}{L} + \|m^X-x^*\|^2 \right) + 2L^2 \, \E_{\bar \nu^Y}\left[\left\|\bar Y-y^*\right\|^2\right]
    \end{align*}
    In the above, the first inequality follows by Cauchy-Schwarz.
    In the second inequality, we introduce the additional term $\nabla_x V(x^*,\bar Y)$ and use the inequality $\|a+b\|^2 \le 2\|a\|^2 + 2\|b\|^2$, and also introduce $\nabla_x V(x^*,y^*) = 0$.
    In the third inequality, we use the property that $V$ is $L$-smooth, so $\nabla_x V$ is $L$-Lipschitz.
    The next step follows from the bias-variance decomposition.
    Combining the above calculations, we obtain:
    \begin{align*}
        \FI(\rho^X \,\|\, \bar \nu^X)
        &\le \frac{3Ld}{\reg} + \frac{2L^2}{\reg^2} \|m^X-x^*\|^2 + \frac{2L^2}{\reg^2} \, \E_{\bar \nu^Y}\left[\left\|\bar Y-y^*\right\|^2\right].
    \end{align*}

    By an identical argument, we can also bound:
    \begin{align*}
        \FI(\rho^Y \,\|\, \bar \nu^Y)
        &\le \frac{3L d}{\reg} + \frac{2L^2}{\reg^2} \|m^Y-y^*\|^2 + \frac{2L^2}{\reg^2} \, \E_{\bar \nu^X}\left[\left\|\bar X-x^*\right\|^2\right].
    \end{align*}
    Combining the two bounds above gives:
    \begin{align*}
        \FI(\rho^Z \,\|\, \bar \nu^Z)
        &\le \frac{6L d}{\reg} + \frac{2L^2}{\reg^2} \left(\|m^X-x^*\|^2 + \|m^Y-y^*\|^2\right) \\
        &\qquad + \frac{2L^2}{\reg^2} \left(\E_{\bar \nu^X}\left[\left\|\bar X-x^*\right\|^2\right] + \E_{\bar \nu^Y}\left[\left\|\bar Y-y^*\right\|^2\right]\right) \\
        &= \frac{6Ld}{\reg} + \frac{2L^2}{\reg^2} \|m^Z-z^*\|^2 + \frac{2L^2}{\reg^2} \, \E_{\bar \nu^Z}\left[\left\|\bar Z-z^*\right\|^2\right] \\
        &\le \frac{6Ld}{\reg} + \frac{2L^2}{\reg^2} \|m^Z-z^*\|^2 + \frac{16 dL^4}{\reg \alpha^3} \\
        &\le \frac{22 dL^4}{\reg \alpha^3} + \frac{2L^2}{\reg^2} \|m^Z-z^*\|^2 
    \end{align*}
    where the second inequality follows from Lemma~\ref{Lem:BoundDistStationary},
    and in the last inequality we use the bound $1 \le \frac{L}{\alpha}$ to simplify the result.

    Since $\bar \nu^Z$ is $(\alpha/\reg)$-SLC, it also satisfies $(\alpha/\reg)$-LSI and $(\alpha/\reg)$-TI, so we have:
    $$\KL(\rho^Z \,\|\, \bar \nu^Z) \le \frac{\reg}{2\alpha} \FI(\rho^Z \,\|\, \bar \nu^Z)$$
    and
    $$W_2(\rho^Z, \bar \nu^Z)^2 \le \frac{2\reg}{\alpha} \KL(\rho^Z \,\|\, \bar \nu^Z).$$
    Thus, the bounds for KL divergence and $W_2$ distance follow from the bound for relative Fisher information above.
\end{proof}

\subsubsection{Proof of Corollary~\ref{Cor:ComplexityAlgToMF} (Iteration Complexity of the Finite-Particle Algorithm)}
\label{Sec:ComplexityAlgToMFProof}

\begin{proof}[Proof of Corollary~\ref{Cor:ComplexityAlgToMF}]
Fix any regularization parameter $\reg > 0$, and any small error threshold $\error > 0$.
We want to run the finite-particle algorithm~\eqref{Eq:ParticleAlgorithm3} with a sufficiently small step size $\eta$ and a sufficiently large number of particles $N$, for a sufficiently large number of iterations $k$, such that the upper bound on the KL divergence for the average particle in Corollary~\ref{Cor:AvgParticleAlgorithm} is less than $\error$.

We choose the parameters to make each term in the bound from Corollary~\ref{Cor:AvgParticleAlgorithm} less than $\frac{1}{3} \error$.
To do so, we can choose the step size to be:
\begin{align*}
    \eta = \frac{\error \, \alpha^3}{7500 \, d L^4}
    \qquad~\Rightarrow~\qquad
    2475 \frac{\eta d L^4}{\alpha^3} \,\le\, 2500 \frac{\eta d L^4}{\alpha^3} = \frac{\error}{3}.
\end{align*}
We assume $\error$ is small enough so that the choice of $\eta$ above satisfies the assumption $\eta \le \frac{\alpha}{64 L^2}$ in Corollary~\ref{Cor:AvgParticleAlgorithm};
this is ensured if $\error \le \frac{7500}{64} \frac{dL^2}{\alpha^2}$.

Recall the bound $\Var_{\bar \nu^{Z}}(\bar Z) \le \frac{2\reg d}{\alpha}$ from Lemma~\ref{Lem:BoundVar}.
We choose the number of particles to be:
\begin{align*}
    N \ge \frac{270 \, dL^4}{\error \, \alpha^4}
    \qquad~\Rightarrow~\qquad
    \frac{45 L^4}{\alpha^3 \reg N} \Var_{\bar \nu^{Z}}(\bar Z)
    \,\le\, \frac{90 \, d L^4}{\alpha^4 N}
    \,\le\, \frac{\error}{3}.
\end{align*}

Next, suppose we run the min-max gradient descent algorithm~\eqref{Eq:MinMaxGD} from $\tilde z_0 = (0,0) \in \R^{2d}$ with step size $\eta_{\GD} = \frac{\alpha}{4L^2}$ for the number of iterations 
$k_{\GD} \ge \frac{4L^2}{\alpha^2} \log \frac{\alpha^3 \|z^*\|^2}{\reg dL^2}$,
so that by Corollary~\ref{Cor:MinMaxGDIter}, we obtain a final point $m^Z := \tilde z_{k_\GD}$ which satisfies the guarantee:
$$\|m^Z - z^*\|^2 \le \frac{\reg dL^2}{\alpha^3}.$$
We consider the Gaussian distribution $\gamma^Z := \N(m^Z, \frac{\reg}{L} I)$.
By the bound from Lemma~\ref{Lem:BoundInitKL}, we have:
\begin{align*}
    \KL(\gamma^Z \,\|\, \bar \nu^Z) &\le \frac{11 dL^4}{\alpha^4} + \frac{L^2}{\reg \alpha} \|m^Z-z^*\|^2
    \,\le\, \frac{12 dL^4}{\alpha^4} \\
    W_2(\gamma^Z, \, \bar \nu^Z)^2 &\le \frac{22 \reg dL^4}{\alpha^5} + \frac{2L^2}{\alpha^2} \|m^Z-z^*\|^2
    \,\le\, \frac{24 \reg dL^4}{\alpha^5}.
\end{align*}
Therefore,
\begin{align*}
    \KL(\gamma^Z \,\|\, \bar \nu^{Z}) + \frac{9 L^2}{\alpha \reg} W_2(\gamma^Z, \, \bar \nu^{Z})^2
    \,\le\, \frac{12 dL^4}{\alpha^4} + \frac{9 L^2}{\alpha \reg} \cdot \frac{24 \reg dL^4}{\alpha^5}
    \,\le\, 228 \, \frac{dL^6}{\alpha^6}
\end{align*}
where in the last inequality we use the bound $1 \le \frac{L}{\alpha}$ to simplify the result, and $12 + 9 \times 24 = 228$.
We use this to initialize the finite-particle algorithm~\eqref{Eq:ParticleAlgorithm3} from the product distribution where each component is $\gamma^Z$:
$$\rho_0^{\bz,\eta} := \left(\gamma^Z\right)^{\otimes N}.$$
This means we start the algorithm from $\bz_0 = (z_0^{1}, \dots, z_0^{N}) \in \R^{2dN}$
where $z_0^{1}, \dots, z_0^{N} \sim \gamma^Z$ are i.i.d.
Since $\bar \nu^{\bZ} = (\bar \nu^Z)^{\otimes N}$ is also a product distribution, the KL divergence and $W_2$ distance split:
\begin{align*}
    \frac{1}{N} \left(\KL(\rho_0^{\bz,\eta} \,\|\, \bar \nu^{\bZ}) + \frac{9 L^2}{\alpha \reg} W_2(\rho_0^{\bz,\eta}, \bar \nu^{\bZ})^2 \right)
    &= \KL(\gamma^Z \,\|\, \bar \nu^{Z}) + \frac{9 L^2}{\alpha \reg} W_2(\gamma^Z, \, \bar \nu^{Z})^2
    \le 228 \, \frac{dL^6}{\alpha^6}.
\end{align*}
Therefore, we can choose the number of iterations $k$ of the finite-particle algorithm~\eqref{Eq:ParticleAlgorithm3} to be:
\begin{align}\label{Eq:CorIter}
    k \,\ge\, \frac{1}{\alpha \eta} \log \frac{3 \cdot 228 \, dL^6}{\error \, \alpha^6} 
    \,=\, \frac{7500 \, d L^4}{\error \, \alpha^4} \log \frac{684 \, dL^6}{\error \, \alpha^6}
\end{align}
so that
$$\frac{e^{-\alpha \eta k}}{N} \left(\KL(\rho_0^{\bz,\eta} \,\|\, \bar \nu^{\bZ}) + \frac{9 L^2}{\alpha \reg} W_2(\rho_0^{\bz,\eta}, \bar \nu^{\bZ})^2 \right) \le \frac{\error}{3}.$$

Combining all of the above, we conclude that if we run the finite-particle algorithm~\eqref{Eq:ParticleAlgorithm3} with the above choice of step size $\eta$ and number of particles $N$, from the initial distribution $\rho_0^{\bz,\eta} = \left(\gamma^Z\right)^{\otimes N}$,
then after $k$ iterations given by~\eqref{Eq:CorIter}, the average particle $z_k^I \sim \rho_k^{z,\eta,\avg}$ satisfies, by Corollary~\ref{Cor:AvgParticleAlgorithm}:
$$\KL(\rho_k^{z,\eta,\avg} \,\|\, \bar \nu^{Z}) \le \frac{\error}{3} + \frac{\error}{3} + \frac{\error}{3} = \error$$
as desired. 
\end{proof}

\section{Convergence of Finite-Particle Systems to Their Limiting Distributions}

We study the convergence guarantees of the finite-particle dynamics~\eqref{Eq:ParticleSystem3} and algorithm~\eqref{Eq:ParticleAlgorithm3} to their limiting stationary distributions.

\subsection{Preliminary Results}

\subsubsection{Transformation of LSI constant}

We recall the following classical result on how the LSI constant of a probability distribution changes under a pushforward operation by a Lipschitz map.

\begin{lemma}\cite[Remark~7]{chafai2004entropies}\label{Lem:LSIPushforward}
    Suppose $\nu \in \P(\R^D)$ satisfies $\alpha$-LSI for some $\alpha > 0$.
    Let $T \colon \R^D \to \R^D$ be a differentiable map which is $M$-Lipschitz for some $0 < M < \infty$.
    Then the pushforward distribution $ \tilde \nu = T_{\#}\nu$ satisfies $(\alpha/M^2)$-LSI.
\end{lemma}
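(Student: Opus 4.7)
The plan is to work with the functional (Dirichlet-form) formulation of LSI, which states that $\nu$ satisfies $\alpha$-LSI if and only if for every smooth $f>0$ on $\R^D$ (say with $\int f^2\, d\nu = 1$),
$$\mathrm{Ent}_\nu(f^2) := \int f^2 \log f^2 \, d\nu \le \frac{2}{\alpha} \int \|\nabla f\|^2 \, d\nu.$$
The equivalence with the density formulation stated in the paper comes from writing $\rho = f^2 \nu$, so that $\KL(\rho\,\|\,\nu) = \mathrm{Ent}_\nu(f^2)$ and $\FI(\rho\,\|\,\nu) = 4 \int \|\nabla f\|^2 \, d\nu$.

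With this reformulation in hand, the proof is a one-line change of variables. Given any test function $g>0$ on $\R^D$ normalized so that $\int g^2 \, d\tilde{\nu} = 1$, I would define $f := g \circ T$. Then by the pushforward identity $\tilde{\nu} = T_{\#}\nu$, we have $\int f^2 \, d\nu = \int g^2 \, d\tilde{\nu} = 1$ and $\mathrm{Ent}_\nu(f^2) = \mathrm{Ent}_{\tilde{\nu}}(g^2)$. Applying the $\alpha$-LSI for $\nu$ to the test function $f$ then gives
$$\mathrm{Ent}_{\tilde{\nu}}(g^2) = \mathrm{Ent}_\nu(f^2) \le \frac{2}{\alpha} \int \|\nabla f\|^2 \, d\nu.$$

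The remaining step is to bound $\int \|\nabla f\|^2 \, d\nu$ by $M^2 \int \|\nabla g\|^2 \, d\tilde{\nu}$. By the chain rule, $\nabla f(x) = \nabla T(x)^\top \, \nabla g(T(x))$, so $\|\nabla f(x)\| \le \|\nabla T(x)\|_{\op} \cdot \|\nabla g(T(x))\|$. Since $T$ is $M$-Lipschitz, $\|\nabla T(x)\|_{\op} \le M$ for all $x$, so $\|\nabla f(x)\|^2 \le M^2 \|\nabla g(T(x))\|^2$. Integrating against $\nu$ and using $\tilde{\nu} = T_{\#}\nu$ gives $\int \|\nabla f\|^2 d\nu \le M^2 \int \|\nabla g\|^2 d\tilde{\nu}$. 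Combining with the previous display yields $\mathrm{Ent}_{\tilde{\nu}}(g^2) \le \frac{2 M^2}{\alpha} \int \|\nabla g\|^2 \, d\tilde{\nu}$, which is exactly $(\alpha/M^2)$-LSI for $\tilde{\nu}$.

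There is no real obstacle here, since the argument is essentially a change of variables plus the operator-norm bound on the Jacobian of $T$; the mild technical point is an approximation argument to go from arbitrary densities $\rho \ll \tilde\nu$ to smooth, strictly positive $g$, and to justify the chain rule a.e. when $T$ is merely differentiable and Lipschitz, both of which are standard.
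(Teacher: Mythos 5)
Your proof is correct. Note that the paper does not actually prove this lemma---it only cites \cite[Remark~7]{chafai2004entropies}---and your argument is precisely the standard Lipschitz contraction principle underlying that citation: pass to the Dirichlet-form formulation of LSI, pull back a test function $g$ for $\tilde\nu$ to $f = g\circ T$, and use the chain rule together with $\|\nabla T\|_{\op}\le M$ to compare the two Dirichlet energies. The only caveat, which you already flag, is the routine approximation/regularity bookkeeping (and, if $T$ is not injective, the fact that $\tilde\nu$ need not have a Lebesgue density, so one should state LSI for $\tilde\nu$ in the functional form; in the paper's application $T$ is a bijective contraction, so this is immaterial).
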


We also recall the following result on how the LSI constant changes under a convolution.

\begin{lemma}\label{Lem:LSIConvolution}
    Suppose $\nu \in \P(\R^D)$ satisfies $\alpha$-LSI for some $\alpha > 0$.
    For $t > 0$, the probability distribution $\nu_t := \nu \ast \N(0, tI)$ satisfies $\alpha_t$-LSI where $\alpha_t = (\frac{1}{\alpha} + t)^{-1}$.
\end{lemma}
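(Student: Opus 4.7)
The plan is to prove this directly from the definition of LSI, by decomposing the entropy of $f$ under $\nu_t$ into two pieces, one that sees only $\nu$ and one that sees only the Gaussian kernel, and then applying the LSI for $\nu$ (with constant $\alpha$) and for $\N(0,tI)$ (with constant $1/t$) separately. Since these constants combine as $\frac{1}{\alpha_t} = \frac{1}{\alpha} + t$, this precisely yields the claim.

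First I would fix any smooth positive function $f$ with $\int f\,d\nu_t = 1$ and define the partial average $h(x) := \int f(x+y)\,\phi_t(y)\,dy$, where $\phi_t$ is the density of $\N(0,tI)$. Using the tower-style identity
\[
\E_{\nu_t}[f\log f] = \E_\nu[h \log h] + \E_\nu\!\left[ \int f(x+y)\log \frac{f(x+y)}{h(x)} \,\phi_t(y)\,dy \right],
\]
which follows by adding and subtracting $\log h(x)$ inside the integral, I obtain the clean decomposition
\[
\mathrm{Ent}_{\nu_t}(f) = \mathrm{Ent}_\nu(h) + \E_\nu\!\left[\mathrm{Ent}_{\phi_t}(f(x+\,\cdot\,))\right].
\]

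Next I would bound each term. For the first term, the $\alpha$-LSI for $\nu$ gives $\mathrm{Ent}_\nu(h) \le \frac{1}{2\alpha}\E_\nu[|\nabla h|^2/h]$. The key step is the standard Cauchy--Schwarz inequality $|\nabla h(x)|^2 = |\E_Z[\nabla f(x+\sqrt{t}Z)]|^2 \le h(x)\cdot \E_Z[|\nabla f(x+\sqrt{t}Z)|^2/f(x+\sqrt{t}Z)]$, which upon integration in $\nu$ yields
\[
\E_\nu\!\left[\frac{|\nabla h|^2}{h}\right] \le \E_{\nu_t}\!\left[\frac{|\nabla f|^2}{f}\right].
\]
For the second term, I use that $\phi_t = \N(0,tI)$ satisfies the sharp $(1/t)$-LSI, so $\mathrm{Ent}_{\phi_t}(f(x+\,\cdot\,)) \le \frac{t}{2}\E_{\phi_t}[|\nabla f(x+\,\cdot\,)|^2/f(x+\,\cdot\,)]$; integrating over $\nu$ produces $\frac{t}{2}\E_{\nu_t}[|\nabla f|^2/f]$.

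Combining the two bounds gives
\[
\mathrm{Ent}_{\nu_t}(f) \le \left(\frac{1}{2\alpha} + \frac{t}{2}\right)\E_{\nu_t}\!\left[\frac{|\nabla f|^2}{f}\right] = \frac{1}{2\alpha_t}\E_{\nu_t}\!\left[\frac{|\nabla f|^2}{f}\right],
\]
which is exactly the $\alpha_t$-LSI. The only subtle step is the Cauchy--Schwarz bound on $|\nabla h|^2$, which crucially uses the translation invariance of the Gaussian kernel so that $\nabla$ can be exchanged with the Gaussian convolution; everything else is bookkeeping. Standard approximation arguments (e.g., working with smooth, strictly positive, bounded $f$ and then extending) handle the regularity issues.
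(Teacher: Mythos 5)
Your proof is correct, but it takes a different route from the paper. The paper disposes of this lemma by citation: it invokes the general fact (Chafa\"i, Corollary~3.1 of the cited work) that if $\rho$ satisfies $\alpha_\rho$-LSI and $\nu$ satisfies $\alpha_\nu$-LSI, then $\rho \ast \nu$ satisfies LSI with constant $(\tfrac{1}{\alpha_\rho}+\tfrac{1}{\alpha_\nu})^{-1}$, and then only notes that $\N(0,tI)$ is $(1/t)$-strongly log-concave, hence $(1/t)$-LSI. You instead give a self-contained derivation of exactly this convolution property in the Gaussian case: the conditional-entropy decomposition
\begin{align*}
\mathrm{Ent}_{\nu_t}(f) = \mathrm{Ent}_\nu(h) + \E_\nu\bigl[\mathrm{Ent}_{\phi_t}(f(x+\cdot))\bigr],
\qquad h(x) = \E_{\phi_t}[f(x+\cdot)],
\end{align*}
followed by the $\alpha$-LSI for $\nu$, the $(1/t)$-LSI for the Gaussian, and the Cauchy--Schwarz bound $|\nabla h|^2 \le h \cdot \E_{\phi_t}\bigl[|\nabla f(x+\cdot)|^2/f(x+\cdot)\bigr]$, which indeed relies on commuting $\nabla$ with the translation-invariant convolution. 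All steps check out, including the identification of the inner term as $\mathrm{Ent}_{\phi_t}$ of the translated function and the passage from the entropy form of LSI to the paper's $\FI \ge 2\alpha\,\KL$ form, which are equivalent by the standard substitution $\rho = f\nu_t$. What your argument buys is self-containedness and transparency about where the additivity of the constants $\tfrac{1}{\alpha}+t$ comes from (the two independent layers of randomness contribute separately); what the paper's route buys is brevity and immediate access to the fully general two-measure statement without any regularity or approximation bookkeeping, which in your version is deferred to a standard but unspelled-out truncation/smoothing argument.
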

\begin{proof}
    We recall by~\citep[Corollary~3.1]{chafai2004entropies} that if $\rho$ satisfies $\alpha_\rho$-LSI and $\nu$ satisfies $\alpha_\nu$-LSI, then the convolution $\rho \ast \nu$ satisfies LSI with constant $(\frac{1}{\alpha_\rho} + \frac{1}{\alpha_\nu})^{-1}$.
    By assumption, $\nu$ satisfies $\alpha$-LSI.
    Since the Gaussian distribution $\N(0,tI)$ is $(1/t)$-SLC, it satisfies $(1/t)$-LSI.
    Then by the cited result above, the convolution $\nu_t = \nu \ast \N(0, tI)$ satisfies LSI with constant $(\frac{1}{\alpha} + t)^{-1}$, as claimed.
\end{proof}

We recall the following property that KL divergence is preserved under a deterministic map.

\begin{lemma}\label{Lem:KLPreserved}
    Let $T \colon \R^D \to \R^D$ be a deterministic, differentiable bijective map.
    For any probability distributions $\rho, \nu \in \P(\R^D)$:
    $$\KL(T_\# \rho \,\|\, T_\# \nu) = \KL(\rho \,\|\, \nu).$$
\end{lemma}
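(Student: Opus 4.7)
The plan is to compute both sides directly via the change of variables formula. Since $T$ is differentiable and bijective, the pushforward densities are given by $(T_\# \rho)(y) = \rho(T^{-1}(y))\,|\det \nabla T^{-1}(y)|$ and likewise for $\nu$, so the Jacobian factor is common to numerator and denominator and cancels in the log-ratio:
\begin{align*}
\frac{(T_\# \rho)(y)}{(T_\# \nu)(y)} = \frac{\rho(T^{-1}(y))}{\nu(T^{-1}(y))}.
\end{align*}

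Starting from the definition,
\begin{align*}
\KL(T_\# \rho \,\|\, T_\# \nu) = \int_{\R^D} (T_\# \rho)(y) \, \log \frac{(T_\# \rho)(y)}{(T_\# \nu)(y)} \, dy,
\end{align*}
I would substitute $y = T(x)$, using $dy = |\det \nabla T(x)| \, dx$. The density relation $(T_\# \rho)(T(x)) \, |\det \nabla T(x)| = \rho(x)$ then turns $(T_\# \rho)(y) \, dy$ into $\rho(x) \, dx$, while the cancellation observed above makes the logarithm equal to $\log(\rho(x)/\nu(x))$. This yields
\begin{align*}
\KL(T_\# \rho \,\|\, T_\# \nu) = \int_{\R^D} \rho(x) \, \log \frac{\rho(x)}{\nu(x)} \, dx = \KL(\rho \,\|\, \nu),
\end{align*}
which is the claim. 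One should also check the case $\rho \not\ll \nu$: bijectivity of $T$ ensures $T_\# \rho \not\ll T_\# \nu$, so both sides equal $+\infty$.

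There is no real obstacle here; the only subtlety is justifying that $T$ being differentiable and bijective suffices for $|\det \nabla T(x)| > 0$ almost everywhere so that the change of variables formula is valid, which is the standard content of the diffeomorphism hypothesis implicit in such statements. An alternative one-line argument is to invoke the data-processing inequality in both directions: $\KL(T_\#\rho \,\|\, T_\#\nu) \le \KL(\rho \,\|\, \nu)$ by monotonicity of KL under any measurable map, and the reverse inequality $\KL(\rho \,\|\, \nu) = \KL(T^{-1}_\# T_\# \rho \,\|\, T^{-1}_\# T_\# \nu) \le \KL(T_\# \rho \,\|\, T_\# \nu)$ by applying the same inequality to $T^{-1}$, which is well-defined and measurable since $T$ is bijective.
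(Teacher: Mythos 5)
Your proposal is correct and matches the paper's proof essentially verbatim: the paper cites exactly the same two arguments, namely a direct change-of-variables computation (which you write out in detail) and, as an alternative, the two-sided data processing inequality applied to $T$ and $T^{-1}$.
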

\begin{proof}
    This follows from a direct computation using the change-of-variable formula for $T_\# \rho$ and $T_\# \nu$.
    Alternatively, this follows from two applications of the data processing inequality from information theory, applied to the channels $T$ and $T^{-1}$:
    \begin{align*}
        \KL(T_\# \rho \,\|\, T_\# \nu)
        &\le \KL(\rho \,\|\, \nu) 
        = \KL((T^{-1})_\# (T_\#\rho) \,\|\, (T^{-1})_\# (T_\#\nu)) 
        \le \KL(T_\# \rho \,\|\, T_\# \nu).        
    \end{align*}
    Hence, both inequalities above must be equality.    
\end{proof}

\subsubsection{Contraction of the deterministic step of the finite-particle algorithm}

The following shows that the deterministic step in the finite-particle algorithm is a contraction.

\begin{lemma}\label{Lem:Contraction_bz}
    Assume Assumption~\ref{As:SCSmooth}.
    For $\eta > 0$, define the map $G \colon \R^{2dN} \to \R^{2dN}$ by, for $\bz \in \R^{2dN}$:
    $$G(\bz) = \bz + \eta b^{\bZ}(\bz)$$
    where $b^{\bZ} \colon \R^{2dN} \to \R^{2dN}$ is the vector field defined in~\eqref{Eq:Defbz}.
    If $\eta \le \frac{\alpha}{2L^2}$, then $G$ is $M$-Lipschitz, where:
    $$M := \sqrt{1-2\eta \alpha + 4 \eta^2 L^2} \in [0,1].$$
\end{lemma}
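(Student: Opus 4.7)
The plan is to prove this by a direct expansion of the squared norm followed by an application of the strong monotonicity and Lipschitz properties of $b^{\bZ}$ that were established earlier in Lemma~\ref{Lem:LipschitzMonotone_bz}. Specifically, for any $\bz, \bar\bz \in \R^{2dN}$, I would write
\begin{align*}
\|G(\bz)-G(\bar\bz)\|^2
&= \|\bz-\bar\bz\|^2 + 2\eta \langle \bz-\bar\bz, b^{\bZ}(\bz)-b^{\bZ}(\bar\bz)\rangle + \eta^2 \|b^{\bZ}(\bz)-b^{\bZ}(\bar\bz)\|^2.
\end{align*}

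Next, I would invoke Lemma~\ref{Lem:LipschitzMonotone_bz}: the strong monotonicity of $-b^{\bZ}$ gives the inner product bound $\langle \bz-\bar\bz, b^{\bZ}(\bz)-b^{\bZ}(\bar\bz)\rangle \le -\alpha \|\bz-\bar\bz\|^2$, and the $(2L)$-Lipschitzness of $b^{\bZ}$ gives $\|b^{\bZ}(\bz)-b^{\bZ}(\bar\bz)\|^2 \le 4L^2 \|\bz-\bar\bz\|^2$. Plugging these in yields $\|G(\bz)-G(\bar\bz)\|^2 \le (1-2\eta\alpha + 4\eta^2 L^2)\|\bz-\bar\bz\|^2 = M^2 \|\bz-\bar\bz\|^2$, which is the desired Lipschitz bound.

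Finally, I must verify that $M \in [0,1]$ under the step size constraint $\eta \le \alpha/(2L^2)$. For the upper bound $M \le 1$, note that $1-2\eta\alpha + 4\eta^2 L^2 \le 1$ is equivalent to $4\eta^2 L^2 \le 2\eta\alpha$, i.e., $\eta \le \alpha/(2L^2)$, which is exactly the assumption. For the lower bound $M \ge 0$, I would observe that $1-2\eta\alpha + 4\eta^2 L^2 = (1-\eta\alpha)^2 + \eta^2(4L^2 - \alpha^2) \ge 0$, since $L \ge \alpha$ by Assumption~\ref{As:SCSmooth}.

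There is no main obstacle here — the result is a routine consequence of combining the two structural properties of $b^{\bZ}$ already proved in Lemma~\ref{Lem:LipschitzMonotone_bz}, and the only bookkeeping is to check that the step size constraint is exactly what is needed to ensure the resulting constant $M$ lies in $[0,1]$.
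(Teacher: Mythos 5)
Your proposal is correct and follows essentially the same argument as the paper: expand the square, apply the $\alpha$-strong monotonicity of $-b^{\bZ}$ and the $(2L)$-Lipschitzness of $b^{\bZ}$ from Lemma~\ref{Lem:LipschitzMonotone_bz}, and check that the step-size condition places $M$ in $[0,1]$. Your verification of $M \ge 0$ via $(1-\eta\alpha)^2 + \eta^2(4L^2-\alpha^2)$ is a valid (and slightly more explicit) alternative to the paper's observation that $\eta \le \alpha/(2L^2) \le 1/(2\alpha)$, but the proofs are otherwise identical.
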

\begin{proof}
    For any $\bz, \bar \bz \in \R^{2dN}$, we can compute:
    \begin{align*}
        \left\|G(\bz)-G(\bar\bz)\right\|^2
        &= \left\|\bz-\bar\bz\right\|^2 + 2\eta\left\langle \bz-\bar\bz, b^{\bZ}(\bz)-b^{\bZ}(\bar\bz) \right\rangle + \eta^2 \left\|b^{\bZ}(\bz)-b^{\bZ}(\bar\bz)\right\|^2 \\
        &\le \left\|\bz-\bar\bz\right\|^2 - 2\eta \alpha \left\| \bz-\bar\bz \right\|^2 + 4\eta^2 L^2 \left\|\bz-\bar\bz\right\|^2 \\
        &= (1-2\eta \alpha + 4 \eta^2 L^2) \left\|\bz-\bar\bz\right\|^2.
    \end{align*}
    In the above, in the first step we expand the square.
    In the second step, we use the properties from Lemma~\ref{Lem:LipschitzMonotone_bz} that $b^{\bZ}$ is $(2L)$-Lipschitz, and $-b^{\bZ}$ is $\alpha$-strongly monotone.
    In the last step, we collect the terms.    
    Note that $1-2\eta \alpha + 4 \eta^2 L^2 \in [0,1]$ from our assumption that $\eta \le \frac{\alpha}{2L^2} \le \frac{1}{2\alpha}$.
\end{proof}

\subsubsection{Bounds on the second moment along the finite-particle systems}
\label{Sec:BoundSecondMomentParticle}

In this section, we show that along the finite-particle dynamics~\eqref{Eq:ParticleSystem3} and algorithm~\eqref{Eq:ParticleAlgorithm3}, the distributions remain in $\P(\R^{2dN})$.

Under Assumption~\ref{As:SCSmooth}, recall from Theorem~\ref{Thm:DetMinMaxGF} (in Section~\ref{Sec:ReviewDeterministic}) that there exists a unique equilibrium point $z^* = (x^*,y^*) \in \R^{2d}$ which satisfies $\nabla V(z^*) = 0$.
Define $\bz^* = (x^*,\dots,x^*,y^*,\dots,y^*) \in \R^{2dN}$.
Then by construction, $b^{\bZ}(\bz^*) = 0$, where $b^{\bZ}$ is the vector field defined in~\eqref{Eq:Defbz}.

\begin{lemma}\label{Lem:BoundMomentParticleDynamics}
    Assume Assumption~\ref{As:SCSmooth}.
    Suppose $\bZ_t \sim \rho_t^{\bZ}$ evolves following the finite-particle dynamics~\eqref{Eq:ParticleSystem3} in $\R^{2dN}$ from $\bZ_0 \sim \rho_0^{\bZ} \in \P(\R^{2dN})$.
    Then $\rho_t^{\bZ} \in \P(\R^{2dN})$ for all $t \ge 0$.
\end{lemma}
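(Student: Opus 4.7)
The plan is to mirror the argument given for the mean-field analogue in Lemma~\ref{Lem:BoundMomentParticleMF}, but now in the lifted space $\R^{2dN}$ and using the joint vector field $b^{\bZ}$ from~\eqref{Eq:Defbz} together with its strong monotonicity from Lemma~\ref{Lem:LipschitzMonotone_bz}. Absolute continuity of $\rho_t^{\bZ}$ with respect to the Lebesgue measure on $\R^{2dN}$ is immediate from the non-degenerate Brownian diffusion $\sqrt{2\epsilon}\,dW_t^{\bZ}$ driving~\eqref{Eq:ParticleSystem3}, so the only substantive task is to show that the second moment stays finite for all $t \ge 0$.

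The key observation is that if we set $\bz^* := (x^*,\dots,x^*,y^*,\dots,y^*) \in \R^{2dN}$, where $(x^*,y^*)$ is the deterministic equilibrium point from Theorem~\ref{Thm:DetMinMaxGF}, then each block of $b^{\bZ}(\bz^*)$ equals either $-\frac{1}{N}\sum_j \nabla_x V(x^*,y^*) = 0$ or $\frac{1}{N}\sum_j \nabla_y V(x^*,y^*) = 0$, so $b^{\bZ}(\bz^*) = 0$. I would then track $\E_{\rho_t^{\bZ}}[\|\bZ_t - \bz^*\|^2]$ by applying the Fokker-Planck equation associated with~\eqref{Eq:ParticleSystem3} (or equivalently Itô's formula), integrating by parts as in the proof of Lemma~\ref{Lem:BoundMomentParticleMF}. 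The drift contribution produces $2\,\E[\langle b^{\bZ}(\bZ_t), \bZ_t - \bz^*\rangle] = 2\,\E[\langle b^{\bZ}(\bZ_t) - b^{\bZ}(\bz^*), \bZ_t - \bz^*\rangle]$, which is bounded above by $-2\alpha\,\E[\|\bZ_t - \bz^*\|^2]$ thanks to the $\alpha$-strong monotonicity of $-b^{\bZ}$ from Lemma~\ref{Lem:LipschitzMonotone_bz}. The diffusion contribution gives $2\epsilon \cdot \operatorname{Tr}(I_{2dN}) = 4\epsilon dN$. Combining these yields
\begin{align*}
    \frac{d}{dt} \E_{\rho_t^{\bZ}}\!\left[\left\|\bZ_t - \bz^*\right\|^2\right]
    \le -2\alpha\,\E_{\rho_t^{\bZ}}\!\left[\left\|\bZ_t - \bz^*\right\|^2\right] + 4\epsilon dN.
\end{align*}

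Finally I would rewrite this as $\frac{d}{dt}\bigl(e^{2\alpha t}\,\E[\|\bZ_t - \bz^*\|^2]\bigr) \le 4\epsilon dN\,e^{2\alpha t}$, integrate from $0$ to $t$, and obtain
\begin{align*}
    \E_{\rho_t^{\bZ}}\!\left[\left\|\bZ_t - \bz^*\right\|^2\right]
    \le e^{-2\alpha t}\,\E_{\rho_0^{\bZ}}\!\left[\left\|\bZ_0 - \bz^*\right\|^2\right] + \frac{2\epsilon dN}{\alpha}.
\end{align*}
Since $\rho_0^{\bZ} \in \P(\R^{2dN})$ gives $\E_{\rho_0^{\bZ}}[\|\bZ_0\|^2] < \infty$, and hence $\E_{\rho_0^{\bZ}}[\|\bZ_0 - \bz^*\|^2] \le 2\E_{\rho_0^{\bZ}}[\|\bZ_0\|^2] + 2\|\bz^*\|^2 < \infty$, the triangle inequality $\E[\|\bZ_t\|^2] \le 2\E[\|\bZ_t - \bz^*\|^2] + 2\|\bz^*\|^2$ then shows the second moment is finite for all $t \ge 0$, so $\rho_t^{\bZ} \in \P(\R^{2dN})$.

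There is no significant obstacle here: the proof is a direct transcription of the mean-field argument, and the only place where the finite-particle structure enters is in invoking Lemma~\ref{Lem:LipschitzMonotone_bz} (rather than the elementary monotonicity of $\nabla V$) and in recognizing the constant point $\bz^*$ as a zero of $b^{\bZ}$. The trace of the diffusion scales linearly with $N$, which is why the stationary bound on the second moment is $O(\epsilon dN/\alpha)$, consistent with the dimension $2dN$ of the state space.
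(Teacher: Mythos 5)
Your proposal is correct and follows essentially the same route as the paper: both identify $\bz^*$ as a zero of $b^{\bZ}$, compute the time derivative of $\E[\|\bZ_t-\bz^*\|^2]$ via the Fokker--Planck equation, invoke the $\alpha$-strong monotonicity of $-b^{\bZ}$ from Lemma~\ref{Lem:LipschitzMonotone_bz} together with the $4\epsilon dN$ diffusion term, and integrate the resulting Gr\"onwall-type inequality. No gaps.
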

\begin{proof}
    We note that $\rho_t^{\bZ}$ is absolutely continuous with respect to the Lebesgue measure on $\R^{2dN}$ by virtue of the Brownian motion component in the dynamics~\eqref{Eq:ParticleSystem3}.
    We now show that the second moment of $\rho_t^{\bZ}$ remains finite for all $t \ge 0$.
    Since $\bZ_t$ evolves following the dynamics~\eqref{Eq:ParticleSystem3}, its density $\rho_t^{\bZ}$ evolves following the Fokker-Planck equation:
    $$\part{\rho_t^{\bZ}}{t} = -\nabla \cdot \left(\rho_t^{\bZ} \, b^{\bZ}\right) + \reg \, \Delta \rho_t^{\bZ}.$$
    From this, we can compute, using integration by parts:
    \begin{align*}
        \frac{d}{dt} \E_{\rho_t^{\bZ}}\left[\left\|\bZ_t-\bz^*\right\|^2\right]
        &= \int_{\R^{2dN}} \part{\rho_t^{\bZ}(\bz)}{t} \, \|\bz-\bz^*\|^2 \, d\bz \\
        &= \int_{\R^{2dN}} \left(-\nabla \cdot \left(\rho_t^{\bZ} \, b^{\bZ}\right)(\bz) + \reg \, \Delta \rho_t^{\bZ}(\bz)\right) \|\bz-\bz^*\|^2 \, d\bz \\
        &= \int_{\R^{2dN}} \rho_t^{\bZ}(\bz) \left\langle b^{\bZ}(\bz), \nabla \left(\|\bz-\bz^*\|^2\right) \right \rangle \, d\bz + \reg \, \int_{\R^{2dN}} \rho_t^{\bZ}(\bz) \, \Delta \left(\|\bz-\bz^*\|^2\right) \, d\bz \\
        &= 2 \int_{\R^{2dN}} \rho_t^{\bZ}(\bz) \left\langle b^{\bZ}(\bz), \bz-\bz^* \right \rangle \, d\bz + \reg \, \int_{\R^{2dN}} \rho_t^{\bZ}(\bz) \, (4dN) \, d\bz \\
        &= 2 \int_{\R^{2dN}} \rho_t^{\bZ}(\bz) \left\langle b^{\bZ}(\bz) - b^{\bZ}(\bz^*), \bz-\bz^* \right \rangle \, d\bz + 4 \reg dN \\
        &\le -2\alpha \int_{\R^{2dN}} \rho_t^{\bZ}(\bz) \left\|\bz-\bz^* \right\|^2 \, d\bz + 4 \reg dN \\
        &= -2\alpha \, \E_{\rho_t^{\bZ}}\left[\left\|\bZ_t-\bz^*\right\|^2\right] + 4 \reg dN
    \end{align*}
    where the inequality above uses the property that $-b^{\bZ}$ is $\alpha$-strongly monotone, from Lemma~\ref{Lem:LipschitzMonotone_bz}.
    We can write the differential inequality above equivalently as:
    \begin{align*}
        \frac{d}{dt} \left(e^{2\alpha t} \, \E_{\rho_t^{\bZ}}\left[\left\|\bZ_t-\bz^*\right\|^2\right] \right)
        \le e^{2\alpha t} \, 4 \reg dN.
    \end{align*}
    Integrating from $0$ to $t$ and rearranging the result gives:
    \begin{align*}
        \E_{\rho_t^{\bZ}}\left[\left\|\bZ_t-\bz^*\right\|^2\right]
        &\le e^{-2\alpha t} \, \E_{\rho_0^{\bZ}}\left[\left\|\bZ_0-\bz^*\right\|^2\right] + \frac{(1-e^{-2\alpha t})}{\alpha} \, 2 \reg d N \\
        &\le \E_{\rho_0^{\bZ}}\left[\left\|\bZ_0-\bz^*\right\|^2\right] + \frac{2\reg d N}{\alpha}.
    \end{align*}
    Since $\rho_0^{\bZ} \in \P(\R^{2dN})$, $\E_{\rho_0^{\bZ}}\left[\left\|\bZ_0\right\|^2\right] < \infty$, so $\E_{\rho_0^{\bZ}}\left[\left\|\bZ_0-\bz^*\right\|^2\right] \le 2\E_{\rho_0^{\bZ}}\left[\left\|\bZ_0\right\|^2\right] + 2\|\bz^*\|^2 < \infty$.
    Therefore, we also have for all $t \ge 0$:
    \begin{align*}
        \E_{\rho_t^{\bZ}}\left[\left\|\bZ_t\right\|^2\right]
        &\le 2\E_{\rho_t^{\bZ}}\left[\left\|\bZ_t-\bz^*\right\|^2\right] + 2\|\bz^*\|^2 \\
        &\le 2 \, \E_{\rho_0^{\bZ}}\left[\left\|\bZ_0-\bz^*\right\|^2\right] + \frac{4\reg d N}{\alpha} + 2\|\bz^*\|^2 
        \,<\, \infty
    \end{align*}
    which shows that $\rho_t^{\bZ} \in \P(\R^{2dN})$.
\end{proof}

\begin{lemma}\label{Lem:BoundMomentParticleAlgorithm}
    Assume Assumption~\ref{As:SCSmooth}.
    Suppose $\bz_k \sim \rho_k^{\bz,\eta}$ evolves via the finite-particle algorithm~\eqref{Eq:ParticleAlgorithm3} with step size $0 < \eta < \frac{\alpha}{2L^2}$ from $\bz_0 \sim \rho_0^{\bz,\eta} \in \P(\R^{2dN})$.
    Then $\rho_k^{\bz,\eta} \in \P(\R^{2dN})$ for all $k \ge 0$.
\end{lemma}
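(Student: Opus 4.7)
The plan is to show the two defining properties of $\P(\R^{2dN})$ separately: absolute continuity with respect to Lebesgue measure, and finiteness of the second moment. Absolute continuity is immediate by induction: the update in~\eqref{Eq:ParticleAlgorithm3} can be written as $\bz_{k+1} = G(\bz_k) + \sqrt{2\epsilon\eta}\,\bm{\zeta}_k^{\bz}$ where $G(\bz) = \bz + \eta \, b^{\bZ}(\bz)$, so $\rho_{k+1}^{\bz,\eta}$ is the convolution of the pushforward $G_\# \rho_k^{\bz,\eta}$ with the Gaussian density $\N(0, 2\epsilon\eta I)$ on $\R^{2dN}$, and any such convolution is absolutely continuous with respect to Lebesgue measure.

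For the second moment bound, I would center around the point $\bz^* := (x^*,\dots,x^*,y^*,\dots,y^*) \in \R^{2dN}$, where $(x^*,y^*)$ is the unique deterministic equilibrium from Theorem~\ref{Thm:DetMinMaxGF}, as already used in the proof of Lemma~\ref{Lem:BoundMomentParticleDynamics}. By construction $b^{\bZ}(\bz^*) = 0$, so $G(\bz^*) = \bz^*$. By Lemma~\ref{Lem:Contraction_bz}, under the assumption $\eta \le \frac{\alpha}{2L^2}$, the map $G$ is $M$-Lipschitz with $M^2 = 1 - 2\eta\alpha + 4\eta^2 L^2 \in [0,1)$ (strictly less than $1$ since $4\eta^2 L^2 < 2\eta\alpha$ under the strict inequality assumed). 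Using independence of $\bm{\zeta}_k^{\bz}$ from $\bz_k$ and the identity $\E[\|\bm{\zeta}_k^{\bz}\|^2] = 2dN$, I would compute
\begin{align*}
    \E\left[\|\bz_{k+1} - \bz^*\|^2\right]
    &= \E\left[\|G(\bz_k) - G(\bz^*) + \sqrt{2\epsilon\eta}\,\bm{\zeta}_k^{\bz}\|^2\right] \\
    &= \E\left[\|G(\bz_k) - G(\bz^*)\|^2\right] + 2\epsilon\eta \, \E\left[\|\bm{\zeta}_k^{\bz}\|^2\right] \\
    &\le M^2 \, \E\left[\|\bz_k - \bz^*\|^2\right] + 4\epsilon\eta dN,
\end{align*}
where the cross-term vanishes because $\bm{\zeta}_k^{\bz}$ has zero mean and is independent of $\bz_k$.

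Iterating this one-step recursion yields
\begin{align*}
    \E\left[\|\bz_k - \bz^*\|^2\right]
    \,\le\, M^{2k} \, \E\left[\|\bz_0 - \bz^*\|^2\right] + \frac{4\epsilon\eta dN}{1-M^2},
\end{align*}
which is finite for every $k \ge 0$ since $\rho_0^{\bz,\eta} \in \P(\R^{2dN})$ gives $\E[\|\bz_0\|^2] < \infty$ and hence $\E[\|\bz_0 - \bz^*\|^2] \le 2\E[\|\bz_0\|^2] + 2\|\bz^*\|^2 < \infty$. The bound $\E[\|\bz_k\|^2] \le 2\E[\|\bz_k - \bz^*\|^2] + 2\|\bz^*\|^2 < \infty$ then concludes the proof. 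I do not expect any genuine obstacle here; the only subtlety is checking that $M^2 < 1$ strictly under the given step-size constraint, and that $b^{\bZ}$ vanishes at the point $\bz^*$ so that $\bz^*$ is a fixed point of $G$ around which the contraction estimate can be anchored.
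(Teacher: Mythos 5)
Your proposal is correct and follows essentially the same route as the paper's proof: absolute continuity via the Gaussian convolution, then a second-moment bound anchored at the fixed point $\bz^*$ of $G(\bz)=\bz+\eta b^{\bZ}(\bz)$, using the contraction from Lemma~\ref{Lem:Contraction_bz} and iterating the one-step recursion $\E[\|\bz_{k+1}-\bz^*\|^2]\le M^2\,\E[\|\bz_k-\bz^*\|^2]+4\epsilon\eta dN$. No gaps to report.
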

\begin{proof}
    We note that $\rho_k^{\bz,\eta}$ is absolutely continuous with respect to the Lebesgue measure on $\R^{2dN}$ since the update rule of the algorithm~\eqref{Eq:ParticleAlgorithm3} adds an independent Gaussian random variable, which corresponds to convolution with the Gaussian distribution.
    We now show that the second moment of $\rho_k^{\bz,\eta}$ remains finite for all $k \ge 0$.

    Define $G \colon \R^{2d} \to \R^{2d}$ by $G(\bz) = \bz + \eta b^{\bZ}(\bz)$.
    By Lemma~\ref{Lem:Contraction_bz}, we know $G$ is $M$-Lipschitz where $M = \sqrt{1-2\eta \alpha + 4 \eta^2 L^2},$
    and note that $0 < M < 1$ from the assumption $\eta < \frac{\alpha}{2L^2} \le \frac{1}{2\alpha}$.
    Note also that by definition, $G(\bz^*) = \bz^*$ since $b^{\bZ}(\bz^*) = 0$.
    We have the update rule from~\eqref{Eq:ParticleAlgorithm3}:
    \begin{align*}
        \bz_{k+1} &= \bz_k + \eta b^{\bZ}(\bz_k) + \sqrt{2\reg\eta} \, \bm{\zeta}_k^{\bz}
        = G(\bz_k) + \sqrt{2\reg\eta} \, \bm{\zeta}_k^{\bz}
    \end{align*}
    where $\bm{\zeta}_k^{\bz} \sim \N(0,I)$ is independent of $\bz_k$.
    Then we can compute:
    \begin{align*}
        \E\left[\left\|\bz_{k+1} - \bz^*\right\|^2\right]
        &= \E\left[\left\|G(\bz_k) - G(\bz^*) + \sqrt{2\reg\eta} \, \bm{\zeta}_k^{\bz}\right\|^2\right] \\
        &= \E\left[\left\|G(\bz_k) - G(\bz^*)\right\|^2\right] + 2\reg\eta \, \E\left[\left\|\bm{\zeta}_k^{\bz}\right\|^2\right] \\
        &\le M^2 \, \E\left[\left\|\bz_k - \bz^*\right\|^2\right] + 4\reg\eta dN
    \end{align*}
    where the inequality follows from the property that $G$ is $M$-Lipschitz.
    Since $0 < M < 1$, we can iterate the recurrence above to obtain:
    \begin{align*}
        \E\left[\left\|\bz_{k} - \bz^*\right\|^2\right]
        &\le M^{2k} \, \E\left[\left\|\bz_{0} - \bz^*\right\|^2\right] + 4\reg\eta dN \sum_{i=0}^{k-1} M^{2i} 
        \,\le\, \E\left[\left\|\bz_{0} - \bz^*\right\|^2\right] + \frac{4\reg\eta dN}{1-M^2}.
    \end{align*}
    Since $\rho_0^{\bz,\eta} \in \P(\R^{2dN})$, $\E_{\rho_0^{\bz,\eta}}\left[\left\|\bz_0\right\|^2\right] < \infty$, so $\E_{\rho_0^{\bz,\eta}}\left[\left\|\bz_0-\bz^*\right\|^2\right] \le 2\E_{\rho_0^{\bz,\eta}}\left[\left\|\bz_0\right\|^2\right] + 2\|\bz^*\|^2 <~\infty$.
    Therefore, we also have for all $k \ge 0$:
    \begin{align*}
        \E_{\rho_k^{\bz,\eta}}\left[\left\|\bz_k\right\|^2\right]
        &\le 2\E_{\rho_k^{\bz,\eta}}\left[\left\|\bz_k-\bz^*\right\|^2\right] + 2\|\bz^*\|^2 \\
        &\le 2 \, \E_{\rho_0^{\bz,\eta}}\left[\left\|\bz_0-\bz^*\right\|^2\right] + \frac{8\reg\eta dN}{1-M^2} + 2\|\bz^*\|^2 
        \,<\, \infty
    \end{align*}
    which shows that $\rho_k^{\bz,\eta} \in \P(\R^{2dN})$.
\end{proof}

\subsection{Convergence of Finite-Particle Dynamics to Its Stationary Distribution}
\label{Sec:DynamicsToBiasedLimit}

We show the following exponential convergence rates of the finite-particle dynamics~\eqref{Eq:ParticleSystem3} to its stationary distribution $\rho_\infty^{\bZ}$ in $W_2$ distance and in KL divergence.
We note that the exponential convergence rate in KL divergence guarantee below can be strengthened to hold for all R\'enyi divergence, but we only present the result for KL divergence here for simplicity.

\begin{theorem}\label{Thm:DynamicsToBiasedLimit}
    Assume Assumption~\ref{As:SCSmooth}.
    There exists a unique stationary distribution $\rho_\infty^{\bZ} \in \P(\R^{2dN})$ of the finite-particle dynamics~\eqref{Eq:ParticleSystem3}, and it satisfies $(\alpha/\reg)$-LSI.
    Furthermore, suppose $\bZ_t \sim \rho_t^{\bZ}$ evolves following the finite-particle dynamics~\eqref{Eq:ParticleSystem3} in $\R^{2dN}$ from $\rho_0^{\bZ} \in \P(\R^{2dN})$.
    For all $t \ge 0$:
    \begin{align*}
        W_2(\rho_t^{\bZ}, \rho_\infty^{\bZ})^2 &\le e^{-2\alpha t} \, W_2(\rho_0^{\bZ}, \rho_\infty^{\bZ})^2 \\
        \KL(\rho_t^{\bZ} \,\|\, \rho_\infty^{\bZ}) &\le e^{-2\alpha t} \, \KL(\rho_0^{\bZ} \,\|\, \rho_\infty^{\bZ}).
    \end{align*}
\end{theorem}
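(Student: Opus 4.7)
The plan is to prove existence and uniqueness of $\rho_\infty^{\bZ}$ together with the $W_2$ bound by a synchronous coupling, then obtain $(\alpha/\epsilon)$-LSI for $\rho_\infty^{\bZ}$ via exponential decay of the relative Fisher information along the Fokker-Planck equation (a Bakry-\'Emery-style $\Gamma_2$ computation), and finally combine the LSI with the standard KL dissipation identity to get the KL bound.

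For the $W_2$ piece, I will run two solutions $\bZ_t,\tilde\bZ_t$ of the SDE~\eqref{Eq:ParticleSystem3} driven by the \emph{same} Brownian motion, with $(\bZ_0,\tilde\bZ_0)$ an optimal $W_2$-coupling of the initial distributions. The Brownian terms cancel in $\bZ_t-\tilde\bZ_t$, and the $\alpha$-strong monotonicity of $-b^{\bZ}$ from Lemma~\ref{Lem:LipschitzMonotone_bz} gives $\tfrac{d}{dt}\|\bZ_t-\tilde\bZ_t\|^2 \le -2\alpha\|\bZ_t-\tilde\bZ_t\|^2$, hence $W_2(\rho_t^{\bZ},\tilde\rho_t^{\bZ})^2 \le e^{-2\alpha t}W_2(\rho_0^{\bZ},\tilde\rho_0^{\bZ})^2$. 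Applied with $\tilde\rho_0^{\bZ}=\rho_s^{\bZ}$ for $s>0$, this shows that $(\rho_t^{\bZ})_{t\ge 0}$ is $W_2$-Cauchy; since $(\P_2(\R^{2dN}),W_2)$ is complete, the limit $\rho_\infty^{\bZ}$ exists and is the unique stationary distribution by the contraction. Taking $\tilde\rho_0^{\bZ}=\rho_\infty^{\bZ}$ yields the claimed $W_2$ bound.

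For the LSI, I will adapt the calculation of Lemma~\ref{Lem:ddtFI_MF}, with the simplification that the reference $\rho_\infty^{\bZ}$ is now time-independent and the drift is a single vector field $b^{\bZ}$ rather than a mean-field score. Setting $h_t := \log(\rho_t^{\bZ}/\rho_\infty^{\bZ})$, using the stationary equation for $\rho_\infty^{\bZ}$, and integrating by parts, the analog of the $\Gamma_2$ identity takes the form
\[
\frac{d}{dt}\FI(\rho_t^{\bZ}\,\|\,\rho_\infty^{\bZ}) = -2\epsilon\,\E_{\rho_t^{\bZ}}\left[\|\nabla^2 h_t\|_{\HS}^2\right] + 2\,\E_{\rho_t^{\bZ}}\left[\langle \nabla h_t,\,(\nabla b^{\bZ})_{\sym}\,\nabla h_t\rangle\right],
\]
where only the symmetric part of $\nabla b^{\bZ}$ survives, because antisymmetric quadratic forms vanish in an inner product against the same vector on both sides. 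Lemma~\ref{Lem:LipschitzMonotone_bz} gives $(\nabla b^{\bZ})_{\sym}\preceq -\alpha I$, so the right-hand side is at most $-2\alpha\FI(\rho_t^{\bZ}\,\|\,\rho_\infty^{\bZ})$, and integrating yields $\FI(\rho_t^{\bZ}\,\|\,\rho_\infty^{\bZ}) \le e^{-2\alpha t}\FI(\rho_0^{\bZ}\,\|\,\rho_\infty^{\bZ})$. Applying Lemma~\ref{Lem:ddtKLDiv} with both drifts equal to $b^{\bZ}$ gives the dissipation identity $\tfrac{d}{dt}\KL(\rho_t^{\bZ}\,\|\,\rho_\infty^{\bZ}) = -\epsilon\FI(\rho_t^{\bZ}\,\|\,\rho_\infty^{\bZ})$. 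Integrating from $0$ to $\infty$, using the Fisher decay together with $\KL(\rho_t^{\bZ}\,\|\,\rho_\infty^{\bZ})\to 0$ (which follows from the $W_2$ convergence plus lower semicontinuity of KL), I obtain $\KL(\rho_0^{\bZ}\,\|\,\rho_\infty^{\bZ}) = \epsilon\int_0^\infty \FI(\rho_s^{\bZ}\,\|\,\rho_\infty^{\bZ})\,ds \le \tfrac{\epsilon}{2\alpha}\FI(\rho_0^{\bZ}\,\|\,\rho_\infty^{\bZ})$ for any $\rho_0^{\bZ}$, which is exactly the $(\alpha/\epsilon)$-LSI for $\rho_\infty^{\bZ}$.

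The KL convergence bound is then immediate: using LSI in the dissipation identity gives $\tfrac{d}{dt}\KL \le -2\alpha\KL$, which integrates to the claimed rate. The main obstacle is the Fisher-information identity in the second step: while it is structurally a routine adaptation of Lemma~\ref{Lem:ddtFI_MF}, one must carefully reorganize the integration-by-parts terms in the absence of the mean-field decomposition and track that all non-symmetric contributions from $\nabla b^{\bZ}$ cancel, leaving only $(\nabla b^{\bZ})_{\sym}$, which is precisely the quantity controlled by Lemma~\ref{Lem:LipschitzMonotone_bz}. The integrability conditions required for integration by parts and for the passage $t\to\infty$ can be handled by mollifying $\rho_0^{\bZ}$ (extending LSI by approximation) and by using that the Brownian motion instantly regularizes $\rho_t^{\bZ}$ into a smooth, strictly positive density.
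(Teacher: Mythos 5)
Your $W_2$ step and your final KL step are fine and essentially identical to the paper's (synchronous coupling plus $\alpha$-strong monotonicity of $-b^{\bZ}$, then the dissipation identity of Lemma~\ref{Lem:ddtKLDiv} with identical drifts combined with the LSI of $\rho_\infty^{\bZ}$). The gap is in your Step 2, the LSI for $\rho_\infty^{\bZ}$. The finite-particle drift $b^{\bZ}$ is \emph{not} a gradient field (its Jacobian has a genuinely antisymmetric part coming from the $\mp\frac{1}{N}\nabla^2_{yx}V$ blocks), so the dynamics is non-reversible and $b^{\bZ}\neq \epsilon\nabla\log\rho_\infty^{\bZ}$. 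Writing $b^{\bZ}=\epsilon\nabla\log\rho_\infty^{\bZ}+v$ with $\nabla\cdot(\rho_\infty^{\bZ} v)=0$ and $h_t=\log(\rho_t^{\bZ}/\rho_\infty^{\bZ})$, the transport part $-\nabla\cdot(\rho_t^{\bZ} v)$ contributes to $\frac{d}{dt}\FI(\rho_t^{\bZ}\,\|\,\rho_\infty^{\bZ})$ the term $-2\,\E_{\rho_t^{\bZ}}[\langle\nabla h_t,(\nabla v)_{\sym}\nabla h_t\rangle]$ (the cubic terms cancel, but the $(\nabla v)^\top\nabla h_t$ term from differentiating $\partial_t h_t=-\langle v,\nabla h_t\rangle$ does not), so the correct identity is
\begin{align*}
\frac{d}{dt}\FI(\rho_t^{\bZ}\,\|\,\rho_\infty^{\bZ})
= -2\epsilon\,\E_{\rho_t^{\bZ}}\!\left[\|\nabla^2 h_t\|^2_{\HS}\right]
+ 2\,\E_{\rho_t^{\bZ}}\!\left[\big\langle\nabla h_t,\big(2\epsilon\nabla^2\log\rho_\infty^{\bZ}-(\nabla b^{\bZ})_{\sym}\big)\nabla h_t\big\rangle\right],
\end{align*}
not the one you wrote with $(\nabla b^{\bZ})_{\sym}$ alone (in the reversible case the two agree, which is why the check against Lemma~\ref{Lem:ddtFI_MF} does not detect the error). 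The matrix appearing here involves $\nabla^2\log\rho_\infty^{\bZ}$, i.e.\ exactly the unknown quantity, and $(\nabla b^{\bZ})_{\sym}\preceq-\alpha I$ does not control it; to get decay you would essentially need $\rho_\infty^{\bZ}$ to be $(\alpha/\epsilon)$-strongly log-concave, which is (more than) what you are trying to prove. So the exponential decay of $\FI(\rho_t^{\bZ}\,\|\,\rho_\infty^{\bZ})$, and hence the integrated-de-Bruijn derivation of LSI, is unsupported. (A secondary soft spot: even granting FI decay, lower semicontinuity of KL only bounds $\liminf$ from below and does not by itself give $\KL(\rho_t^{\bZ}\,\|\,\rho_\infty^{\bZ})\to 0$ from $W_2$ convergence; that step again implicitly uses regularity of $\rho_\infty^{\bZ}$ you have not yet established.)

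The conclusion itself is true, and there are two ways to repair the step. The paper's route is entirely different: it obtains $(\alpha/\epsilon)$-LSI for $\rho_\infty^{\bZ}$ as the $\eta\to 0$ limit of the $\alpha_\eta=(\alpha-2\eta L^2)/\epsilon$-LSI of the discrete-time stationary distributions $\rho_\infty^{\bz,\eta}$ (Theorem~\ref{Thm:AlgorithmToBiasedLimit}), which is proved by iterating the LSI transformation rules under the $M$-Lipschitz pushforward $G(\bz)=\bz+\eta b^{\bZ}(\bz)$ and Gaussian convolution (Lemmas~\ref{Lem:LSIPushforward} and~\ref{Lem:LSIConvolution}); no $\Gamma_2$-type identity for the non-reversible flow is needed. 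Alternatively, a direct continuous-time argument that does work without reversibility is the semigroup commutation route: the same synchronous coupling you already use gives $|\nabla P_t f|\le e^{-\alpha t}P_t|\nabla f|$ for the Markov semigroup of~\eqref{Eq:ParticleSystem3}, and the standard local log-Sobolev interpolation (which only uses the carr\'e du champ $\epsilon|\nabla f|^2$, not symmetry of the generator) then yields $(\alpha/\epsilon)$-LSI for the invariant measure after letting $t\to\infty$. Either fix replaces your Fisher-information computation; as written, that computation is incorrect for this non-gradient drift.
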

\begin{proof}
    We showed in Lemma~\ref{Lem:BoundMomentParticleDynamics} that since $\rho_0^{\bZ} \in \P(\R^{2dN})$, we have $\rho_t^{\bZ} \in \P(\R^{2dN})$ for all $t \ge 0$. 

    \medskip
    \noindent
    \textbf{(1) Exponential contraction in $W_2$ distance:}
    Suppose we run two copies of the finite-particle dynamics~\eqref{Eq:ParticleAlgorithm3} from $\bZ_0 \sim \rho_0^{\bZ}$ and $\tilde \bZ_0 \sim \tilde \rho_0^{\bZ}$, where $(\bZ_0, \tilde \bZ_0)$ has the joint distribution which is the optimal $W_2$ coupling between $\rho_0^{\bZ}$ and $\tilde \rho_0^{\bZ}$,
    to get $\bZ_t \sim \rho_t^{\bZ}$ and $\tilde \bZ_t \sim \tilde \rho_t^{\bZ}$, for $t \ge 0$.
    We can write the two stochastic processes using synchronous coupling:
    \begin{align*}
        d\bZ_t &= b^{\bZ}(\bZ_t) \, dt + \sqrt{2\reg} \, dW_t^{\bZ} \\
        d\tilde \bZ_t &= b^{\bZ}(\tilde \bZ_t) \, dt + \sqrt{2\reg} \, dW_t^{\bZ}
    \end{align*}
    we use the same standard Brownian motion $(W_t^{\bZ})_{t \ge 0}$ in $\R^{2dN}$.
    Then we have:
    \begin{align*}
        d(\bZ_t-\tilde \bZ_t) = \left(b^{\bZ}(\bZ_t)-b^{\bZ}(\tilde \bZ_t)\right) dt
    \end{align*}
    where the Brownian motion terms cancel because they are equal in the two processes.
    From this, we get:
    \begin{align*}
        d\left\|\bZ_t-\tilde \bZ_t\right\|^2 = 2\left\langle \bZ_t-\tilde \bZ_t, \, b^{\bZ}(\bZ_t)- b^{\bZ}(\tilde \bZ_t) \right \rangle dt.
    \end{align*}
    Taking expectation over the joint distribution, we obtain:
    \begin{align*}
        \frac{d}{dt} \E\left[ \left\|\bZ_t-\tilde \bZ_t\right\|^2 \right]
        &= 2 \E\left[\left\langle \bZ_t-\tilde \bZ_t, \, b^{\bZ}(\bZ_t)-b^{\bZ}(\tilde \bZ_t) \right \rangle \right]
        \,\le\, -2\alpha \,\E\left[ \left\|\bZ_t-\tilde \bZ_t\right\|^2 \right]
    \end{align*}
    where in the last step we use the property from Lemma~\ref{Lem:LipschitzMonotone_bz} that $-b^{\bZ}$ is $\alpha$-strongly monotone.
    Integrating the differential inequality above from $0$ to $t$, and using the fact that $(\bZ_0, \tilde \bZ_0)$ has the optimal $W_2$ coupling gives:
    \begin{align*}
        \E\left[ \left\|\bZ_t-\tilde \bZ_t\right\|^2 \right]
        \,\le\, e^{-2\alpha t} \, \E\left[ \left\|\bZ_0-\tilde \bZ_0\right\|^2 \right]
        \,=\, e^{-2\alpha t} \, W_2\left(\rho_0^{\bZ}, \, \tilde \rho_0^{\bZ}\right)^2.
    \end{align*}
    Using the definition of the $W_2$ distance as the infimum over all coupling, we conclude that
    \begin{align*}
        W_2\left(\rho_t^{\bZ}, \, \tilde \rho_t^{\bZ}\right)^2
        \,\le\, e^{-2\alpha t} \, W_2\left(\rho_0^{\bZ}, \, \tilde \rho_0^{\bZ}\right)^2.
    \end{align*}
    This implies $\lim_{t \to \infty} W_2\left(\rho_t^{\bZ}, \, \tilde \rho_t^{\bZ}\right)^2 = 0$.
    Since this holds for any initial distributions $\rho_0^{\bZ}, \, \tilde \rho_0^{\bZ}$, this shows that there must be a stationary distribution $\rho_\infty^{\bZ}$ of the dynamics, and it is unique by the contraction property above.
    Then plugging in $\tilde \rho_t^{\bZ} = \tilde \rho_0^{\bZ} = \rho_\infty^{\bZ}$ yields the desired exponential convergence guarantee in $W_2$ distance.

    \medskip
    \noindent
    \textbf{(2) Isoperimetry of the stationary distribution:}
    As $\eta \to 0$, the finite-particle algorithm~\eqref{Eq:ParticleAlgorithm3} with step size $\eta$ recovers the continuous-time finite-particle dynamics~\eqref{Eq:ParticleSystem3}.
    In particular, as $\eta \to 0$, the stationary distribution $\rho_\infty^{\bz,\eta}$ of the finite-particle algorithm recovers the stationary distribution $\rho_\infty^{\bZ}$ of the finite-particle dynamics.
    We show in Theorem~\ref{Thm:AlgorithmToBiasedLimit} below that $\rho_\infty^{\bz,\eta}$ satisfies $\alpha_\eta = \frac{(\alpha-2\eta L^2)}{\reg}$-LSI, which implies $\rho_\infty^{\bZ} = \lim_{\eta \to 0} \rho_\infty^{\bz,\eta}$ satisfies LSI with constant $\lim_{\eta \to 0} \alpha_\eta = \alpha/\reg$, as claimed.

    \medskip
    \noindent
    \textbf{(3) Exponential convergence in KL divergence:}
    We run two copies of the finite-particle dynamics~\eqref{Eq:ParticleAlgorithm3} from $\bZ_0 \sim \rho_0^{\bZ}$ and from the stationary distribution $\tilde \bZ_0 \sim \tilde \rho_0^{\bZ} = \rho_\infty^{\bZ}$, 
    to get $\bZ_t \sim \rho_t^{\bZ}$ and $\tilde \bZ_t \sim \tilde \rho_t^{\bZ} = \rho_\infty^{\bZ}$, for $t \ge 0$.
    Then $\rho_t^{\bZ}$ and $\tilde \rho_t^{\bZ}$ evolve following the Fokker-Planck equations:
    \begin{align*}
        \part{\rho_t^{\bZ}}{t}
        &= -\nabla \cdot\left(\rho_t^{\bZ} \, b^{\bZ} \right) + \reg \, \Delta \rho_t^{\bZ} \\
        \part{\tilde \rho_t^{\bZ}}{t}
        &= -\nabla \cdot\left(\tilde \rho_t^{\bZ} \, b^{\bZ} \right) + \reg \, \Delta \tilde \rho_t^{\bZ}.
    \end{align*}
    Using the identity from Lemma~\ref{Lem:ddtKLDiv}, we can compute:
    \begin{align*}
        \frac{d}{dt} \KL(\rho_t^{\bZ} \,\|\, \tilde \rho_t^{\bZ})
        &= -\reg \, \FI(\rho_t^{\bZ} \,\|\, \tilde \rho_t^{\bZ}) 
        \,\le\, -2\alpha \, \KL(\rho_t^{\bZ} \,\|\, \tilde \rho_t^{\bZ})
    \end{align*}
    where in the second step we use the fact that $\tilde \rho_t^{\bZ} = \rho_\infty^{\bZ}$ satisfies $(\alpha/\reg)$-LSI.
    Integrating the differential inequality above from $0$ to $t$ yields:
    \begin{align*}
        \KL(\rho_t^{\bZ} \,\|\, \tilde \rho_t^{\bZ})
        \,\le\, e^{-2\alpha t} \, \KL(\rho_0^{\bZ} \,\|\, \tilde \rho_0^{\bZ}).
    \end{align*}
    Substituting $\tilde \rho_t^{\bZ} = \tilde \rho_0^{\bZ} = \rho_\infty^{\bZ}$ yields the desired exponential convergence rate in KL divergence.
\end{proof}

\subsection{Convergence of Finite-Particle Algorithm to Its Stationary Distribution}
\label{Sec:AlgorithmToBiasedLimit}

We show the following exponential convergence rates of the finite-particle algorithm~\eqref{Eq:ParticleAlgorithm3} to its stationary limiting distribution $\rho_\infty^{\bz,\eta}$.
We note that the convergence in KL divergence below can be strengthened to be an exponential convergence in all R\'enyi divergence, similar to the result for the Unadjusted Langevin Algorithm shown in~\citep{VW19}; for simplicity, here we only present the convergence results for $W_2$ distance and KL divergence.

We also note that similar to the standard discretization of the Langevin dynamics as the Unadjusted Langevin Algorithm which is biased, here the stationary distribution $\rho_\infty^{\bz,\eta}$ of the finite-particle algorithm~\eqref{Eq:ParticleAlgorithm3} is not equal to the stationary distribution $\rho_\infty^{\bZ}$ of the continuous-time finite-particle dynamics~\eqref{Eq:ParticleSystem3}.
We can also characterize the biased convergence of the finite-particle algorithm~\eqref{Eq:ParticleAlgorithm3} to the continuous-time stationary distribution $\rho_\infty^{\bZ}$, but we skip this part because in principle we do not actually care about the convergence to $\rho_\infty^{\bZ}$, but only about the biased convergence to the stationary mean-field distribution $\bar \nu^{\bZ}$, which we show in Theorem~\ref{Thm:ConvergenceAlgToMF} in the main text of this paper.

\begin{theorem}\label{Thm:AlgorithmToBiasedLimit}
    Assume Assumption~\ref{As:SCSmooth} and $0 < \eta < \frac{\alpha}{2L^2}$. 
    Then there exists a unique stationary distribution $\rho_\infty^{\bz,\eta} \in \P(\R^{2dN})$ of the finite-particle algorithm~\eqref{Eq:ParticleAlgorithm3} with step size $\eta$, and $\rho_\infty^{\bz,\eta}$ satisfies $\alpha_\eta$-LSI where
    $\alpha_\eta = (\alpha-2\eta L^2)/\reg$.
    Furthermore, suppose $\bz_k \sim \rho_k^{\bz,\eta}$ evolves following the finite-particle algorithm~\eqref{Eq:ParticleAlgorithm3} with step size $0 < \eta < \frac{\alpha}{2L^2}$ from $\rho_0^{\bz,\eta} \in \P(\R^{2dN})$.
    Define $M = \sqrt{1-2\eta \alpha + 4 \eta^2 L^2} \in (0,1)$.
    Then for all $k \ge 0$:
    \begin{align*}
        W_2(\rho_k^{\bz,\eta}, \,\rho_\infty^{\bz,\eta})^2 
        &\le M^{2k}  \, W_2(\rho_0^{\bz,\eta}, \, \rho_\infty^{\bz,\eta})^2 \\
        \KL(\rho_k^{\bz,\eta} \,\|\, \rho_\infty^{\bz,\eta}) 
        &\le M^{2k}  \, \KL(\rho_0^{\bz,\eta} \,\|\, \rho_\infty^{\bz,\eta}).
    \end{align*}
\end{theorem}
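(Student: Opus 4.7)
The plan is to first establish the $W_2$ contraction by synchronous coupling, which simultaneously gives existence and uniqueness of $\rho_\infty^{\bz,\eta}$; then to identify its LSI constant by iterating the one-step LSI propagation; and finally to extract the KL contraction by interpolating one step of the algorithm as a heat flow and exploiting the LSI. For the $W_2$ step, I would run two copies of the algorithm from $\bz_0 \sim \rho_0^{\bz,\eta}$ and $\tilde \bz_0 \sim \tilde\rho_0^{\bz,\eta}$ coupled through the optimal $W_2$ coupling at time zero and driven by the same Gaussian noise $\bm{\zeta}_k^{\bz}$ at every step. Writing one update as $\bz_{k+1} = G(\bz_k) + \sqrt{2\epsilon\eta}\,\bm{\zeta}_k^{\bz}$ with $G(\bz) := \bz + \eta b^{\bZ}(\bz)$, the noise cancels pathwise, and Lemma~\ref{Lem:Contraction_bz} gives $\|\bz_{k+1}-\tilde\bz_{k+1}\|^2 \le M^2 \|\bz_k-\tilde\bz_k\|^2$. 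Iterating, taking expectations, and using optimality of the initial coupling yields $W_2(\rho_k^{\bz,\eta}, \tilde\rho_k^{\bz,\eta})^2 \le M^{2k}\,W_2(\rho_0^{\bz,\eta}, \tilde\rho_0^{\bz,\eta})^2$. Since $M<1$, the iterates are $W_2$-Cauchy and converge to a unique limit $\rho_\infty^{\bz,\eta}$ (absolute continuity of the limit follows from the fixed-point identity $\rho_\infty^{\bz,\eta} = (G_\#\rho_\infty^{\bz,\eta}) \ast \N(0, 2\epsilon\eta I)$); the stated $W_2$ bound is then the case $\tilde\rho_0^{\bz,\eta} = \rho_\infty^{\bz,\eta}$.

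Next I would identify the LSI constant of $\rho_\infty^{\bz,\eta}$ by propagating the LSI constants of the iterates. Factoring one step as $\rho \mapsto G_\#\rho \mapsto (G_\#\rho) \ast \N(0, 2\epsilon\eta I)$, Lemma~\ref{Lem:LSIPushforward} sends a $\beta$-LSI distribution to $(\beta/M^2)$-LSI, and then Lemma~\ref{Lem:LSIConvolution} produces a $\beta'$-LSI distribution with $(\beta')^{-1} = M^2/\beta + 2\epsilon\eta$. Initialized from $\rho_0^{\bz,\eta} = \N(0,I)$, which satisfies $1$-LSI, the scalar recurrence $\beta_{k+1} = \beta_k/(M^2 + 2\epsilon\eta\beta_k)$ converges monotonically to its unique positive fixed point $(1-M^2)/(2\epsilon\eta) = (\alpha-2\eta L^2)/\epsilon = \alpha_\eta$. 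Combined with the weak convergence $\rho_k^{\bz,\eta} \to \rho_\infty^{\bz,\eta}$ from Step~1 and the lower semicontinuity of the LSI constant under weak convergence, this forces $\rho_\infty^{\bz,\eta}$ to satisfy $\alpha_\eta$-LSI, as claimed.

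The main obstacle is the KL contraction, because data processing alone only gives the non-strict bound $\KL(\rho_{k+1}^{\bz,\eta} \,\|\, \rho_\infty^{\bz,\eta}) \le \KL(\rho_k^{\bz,\eta} \,\|\, \rho_\infty^{\bz,\eta})$, and one must extract the factor $M^2$ from the Gaussian-convolution step by exploiting the LSI from Step~2. To that end, I would interpolate the Gaussian convolution as a heat flow: define $\tilde\rho_t := (G_\#\rho_k^{\bz,\eta}) \ast \N(0, 2\epsilon t I)$ and $\bar\rho_t := (G_\#\rho_\infty^{\bz,\eta}) \ast \N(0, 2\epsilon t I)$ for $t \in [0,\eta]$; both satisfy the pure heat equation $\partial_t \rho = \epsilon\Delta\rho$, so Lemma~\ref{Lem:ddtKLDiv} with zero drift gives $\frac{d}{dt}\KL(\tilde\rho_t \,\|\, \bar\rho_t) = -\epsilon\,\FI(\tilde\rho_t \,\|\, \bar\rho_t)$. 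Applying Lemma~\ref{Lem:LSIPushforward} then Lemma~\ref{Lem:LSIConvolution} to $\bar\rho_t$ with the $\alpha_\eta$-LSI from Step~2 shows that $\bar\rho_t$ satisfies $\alpha_t$-LSI with $\alpha_t^{-1} = M^2/\alpha_\eta + 2\epsilon t$, and hence $\FI(\tilde\rho_t \,\|\, \bar\rho_t) \ge 2\alpha_t \KL(\tilde\rho_t \,\|\, \bar\rho_t)$. Integrating the resulting Gronwall-type inequality $\frac{d}{dt}\KL \le -2\epsilon\alpha_t\,\KL$ over $[0,\eta]$ produces the multiplicative factor $\exp\bigl(-\int_0^\eta 2\epsilon\alpha_t\,dt\bigr) = (1 + 2\epsilon\eta\alpha_\eta/M^2)^{-1}$, which collapses to exactly $M^2$ via the identity $2\epsilon\eta\alpha_\eta = 1-M^2$. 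At $t=0$ data processing for the pushforward $G_\#$ gives $\KL(\tilde\rho_0 \,\|\, \bar\rho_0) \le \KL(\rho_k^{\bz,\eta} \,\|\, \rho_\infty^{\bz,\eta})$, while at $t=\eta$ one has $\tilde\rho_\eta = \rho_{k+1}^{\bz,\eta}$ and $\bar\rho_\eta = \rho_\infty^{\bz,\eta}$ (the latter by stationarity). Combining yields the one-step contraction $\KL(\rho_{k+1}^{\bz,\eta} \,\|\, \rho_\infty^{\bz,\eta}) \le M^2 \, \KL(\rho_k^{\bz,\eta} \,\|\, \rho_\infty^{\bz,\eta})$, and iteration over $k$ finishes the proof.
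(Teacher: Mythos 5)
Your proposal is correct and follows essentially the same route as the paper: synchronous coupling with the contraction of $G(\bz)=\bz+\eta b^{\bZ}(\bz)$ for the $W_2$ bound and uniqueness, the LSI recurrence $\beta_{k+1}=\beta_k/(M^2+2\epsilon\eta\beta_k)$ through the pushforward-plus-Gaussian-convolution factorization for $\alpha_\eta$, and the heat-flow interpolation with the time-dependent LSI constant to extract the factor $(1+2\epsilon\eta\alpha_\eta/M^2)^{-1}=M^2$ for the KL step. The only cosmetic differences are that you use the data-processing inequality at the pushforward step where the paper notes exact KL preservation (Lemma~\ref{Lem:KLPreserved}), and you make explicit the stability of the LSI constant under the weak limit, which the paper leaves implicit.
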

\begin{proof}
    We showed in Lemma~\ref{Lem:BoundMomentParticleAlgorithm} that since $\rho_0^{\bz,\eta} \in \P(\R^{2dN})$, we have $\rho_k^{\bz,\eta} \in \P(\R^{2dN})$ for all $k \ge 0$. 

    Define $G \colon \R^{2d} \to \R^{2d}$ by $G(\bz) = \bz + \eta b^{\bZ}(\bz)$.
    By Lemma~\ref{Lem:Contraction_bz}, we know $G$ is $M$-Lipschitz where 
    $M = \sqrt{1-2\eta \alpha + 4 \eta^2 L^2},$
    and note that $0 < M < 1$ from the assumption $\eta < \frac{\alpha}{2L^2} \le \frac{1}{2\alpha}$.

    \medskip
    \noindent
    \textbf{(1) Exponential contraction in $W_2$ distance:}
    Suppose we run two copies of the finite-particle algorithm~\eqref{Eq:ParticleAlgorithm3} with step size $0 < \eta < \frac{\alpha}{2L^2}$ from $\bz_0 \sim \rho_0^{\bz,\eta}$ and $\tilde \bz_0 \sim \tilde \rho_0^{\bz,\eta}$, where $(\bz_0, \tilde \bz_0)$ has the joint distribution which is the optimal $W_2$ coupling between $\rho_0^{\bz,\eta}$ and $\tilde \rho_0^{\bz,\eta}$,
    to get $\bz_k \sim \rho_k^{\bz,\eta}$ and $\tilde \bz_k \sim \tilde \rho_k^{\bz,\eta}$, for $k \ge 1$.
    We can write the update of the algorithm~\eqref{Eq:ParticleAlgorithm3} using synchronous coupling as:
    \begin{subequations}\label{Eq:ParticleAlgorithmUpdate}
    \begin{align}
        \bz_{k+1} &= G(\bz_k) + \sqrt{2\reg\eta} \, \bm{\zeta}_k, \\
        \tilde \bz_{k+1} &= G(\tilde \bz_k) + \sqrt{2\reg\eta} \, \bm{\zeta}_k
    \end{align}        
    \end{subequations}
    where we use the same Gaussian noise $\bm{\zeta}_k \sim \N(0,I)$ for both updates, independent of $\bz_k$ and $\tilde \bz_k$.
    Then we can compute:
    \begin{align*}
        \|\bz_{k+1}-\tilde\bz_{k+1}\|^2
        &= \|G(\bz_{k})-G(\tilde\bz_{k})\|^2 
        \,\le\, M^2 \|\bz_{k}-\tilde\bz_{k}\|^2
    \end{align*}
    where in the last step we have used the property that $G$ is $M$-Lipschitz, from Lemma~\ref{Lem:Contraction_bz}.
    Taking expectation over the joint distribution, this gives:
    \begin{align*}
        \E\left[\|\bz_{k+1}-\tilde\bz_{k+1}\|^2\right] \,\le\, M^2 \E\left[\|\bz_{k}-\tilde\bz_{k}\|^2\right].
    \end{align*}
    Unrolling the recursion and using the fact that $(\bz_0,\tilde \bz_0)$ has the optimal $W_2$ coupling, we get:
    \begin{align*}
        \E\left[\|\bz_{k}-\tilde\bz_{k}\|^2\right] \,\le\, M^{2k} \, \E\left[\|\bz_{0}-\tilde\bz_{0}\|^2\right]
        \,=\, M^{2k} \, W_2(\rho_0^{\bz,\eta}, \, \tilde \rho_0^{\bz,\eta})^2.
    \end{align*}
    Using the definition of the $W_2$ distance as the infimum over all coupling, this implies:
    \begin{align*}
        W_2(\rho_k^{\bz,\eta}, \, \tilde \rho_k^{\bz,\eta})^2
        \,\le\, M^{2k} \, W_2(\rho_0^{\bz,\eta}, \, \tilde \rho_0^{\bz,\eta})^2.
    \end{align*}
    Since $0 < M < 1$, this implies $\lim_{k \to \infty} W_2(\rho_k^{\bz,\eta}, \, \tilde \rho_k^{\bz,\eta})^2 = 0$.
    Since this holds for any initial distributions $\rho_0^{\bz,\eta}, \, \tilde \rho_0^{\bz,\eta}$, this shows that there must be a stationary distribution $\rho_\infty^{\bz,\eta}$ of the algorithm, and it is unique by the contraction property above.
    Then plugging in $\tilde \rho_k^{\bz,\eta} = \tilde \rho_0^{\bz,\eta} = \rho_\infty^{\bz,\eta}$ gives us the desired exponential convergence guarantee in $W_2$ distance.
    
    \medskip
    \noindent
    \textbf{(2) Isoperimetry of the stationary distribution:}
    We write each step of the finite-particle algorithm~\eqref{Eq:ParticleAlgorithm3} as a composition of a pushforward of the deterministic map $G$, followed by a Gaussian convolution, so at each iteration $k \ge 0$, the next distribution is given by:
    \begin{align}\label{Eq:ParticleAlgorithmUpdateDistribution}
        \rho_{k+1}^{\bz,\eta} = (G_\# \rho_k^{\bz,\eta}) \ast \N(0, 2\reg\eta I).
    \end{align}
    
    Suppose we run the algorithm~\eqref{Eq:ParticleAlgorithm3} from $\rho_0^{\bz,\eta}$ which satisfies $\beta_0$-LSI for some $0 < \beta_0 < \infty$
    (for example, a Gaussian distribution).
    Then inductively from the iteration above, for each $k \ge 1$, $\rho_k^{\bz,\eta}$ satisfies $\beta_k$-LSI, where the constants $(\beta_k)_{k \ge 0}$ satisfy the recurrence:
    \begin{align}\label{Eq:BetaRecurrence}
        \beta_{k+1} = \frac{1}{\frac{M^2}{\beta_k} + 2\reg\eta} = \frac{\beta_k}{M^2+2\reg\eta \beta_k}.
    \end{align}
    Define $\alpha_\eta$ by:
    $$\alpha_\eta := \frac{1-M^2}{2\reg\eta} = \frac{\alpha-2\eta L^2}{\reg}$$
    and observe that $\beta_k = \alpha_\eta$ is a fixed point of the recurrence~\eqref{Eq:BetaRecurrence}.
    Furthermore, we can rewrite the recurrence~\eqref{Eq:BetaRecurrence} as:
    \begin{align*}
        \frac{1}{\beta_{k+1}} - \frac{1}{\alpha_\eta} 
        &=
        \frac{M^2}{\beta_k} + \frac{2\reg\eta \alpha_\eta - 1}{\alpha_\eta} 
        \,=\,
        \frac{M^2}{\beta_k} - \frac{M^2}{\alpha_\eta} 
        \,=\, M^2 \left(\frac{1}{\beta_{k}} - \frac{1}{\alpha_\eta}\right).
    \end{align*}
    Unrolling the recurrence, we get:
    $$\frac{1}{\beta_{k}} = \frac{1}{\alpha_\eta} + M^{2k} \left(\frac{1}{\beta_{0}} - \frac{1}{\alpha_\eta}\right).$$
    As $M^2 < 1$, this shows that $1/\beta_{k}$ converges to $1/\alpha_\eta$ exponentially fast.
    Therefore, the stationary distribution $\rho_{\infty}^{\bz,\eta} = \lim_{k \to \infty} \rho_k^{\bz,\eta}$ satisfies LSI with constant $\lim_{k \to \infty} \beta_k = \alpha_\eta$, as desired.

    \medskip
    \noindent
    \textbf{(3) Exponential convergence in KL divergence:}
    We run two copies of the finite-particle algorithm~\eqref{Eq:ParticleAlgorithm3} with step size $0 < \eta < \frac{\alpha}{2L^2}$ from $\bz_0 \sim \rho_0^{\bz,\eta}$ and from the stationary distribution $\tilde \bz_0 \sim \tilde \rho_0^{\bz,\eta} = \rho_\infty^{\bz,\eta}$, to get $\bz_k \sim \rho_k^{\bz,\eta}$ and $\tilde \bz_k \sim \tilde \rho_k^{\bz,\eta} = \rho_\infty^{\bz,\eta}$.
    We write one step of the algorithm update as a pushforward followed by a Gaussian convolution, as in~\eqref{Eq:ParticleAlgorithmUpdateDistribution}.
    We define the half-steps:
    \begin{align*}
        \rho_{k+\frac{1}{2}}^{\bz,\eta} &= G_\# \rho_k^{\bz,\eta},
        \qquad\qquad
        \tilde \rho_{k+\frac{1}{2}}^{\bz,\eta} = G_\# \tilde \rho_k^{\bz,\eta}.   
    \end{align*}
    Since $G$ is a contraction ($M$-Lipschitz with $M < 1$), it is bijective, and so we know by Lemma~\ref{Lem:KLPreserved}:
    \begin{align*}
        \KL\left(\rho_{k+\frac{1}{2}}^{\bz,\eta} \,\big\|\, \tilde \rho_{k+\frac{1}{2}}^{\bz,\eta}\right)
        \,=\, \KL\left(\rho_{k}^{\bz,\eta} \,\|\, \tilde \rho_{k}^{\bz,\eta}\right).
    \end{align*}
    Furthermore, since $\tilde \rho_k^{\bz,\eta} = \rho_\infty^{\bz,\eta}$ satisfies $\alpha_\eta$-LSI, and $G$ is $M$-Lipschitz, we know by Lemma~\ref{Lem:LSIPushforward} that $\tilde \rho_{k+\frac{1}{2}}^{\bz,\eta}$ satisfies $(\alpha_\eta/M^2)$-LSI.
    Then we interpret the next half-step, the Gaussian convolution:
    \begin{align*}
        \rho_{k+1}^{\bz,\eta} &= \rho_{k+\frac{1}{2}}^{\bz,\eta} \ast \N(0, 2\reg\eta I),
        \qquad\qquad
        \tilde \rho_{k+1}^{\bz,\eta} = \tilde \rho_{k+\frac{1}{2}}^{\bz,\eta} \ast \N(0, 2\reg\eta I),   
    \end{align*}
    as the solutions of the heat flow:
    \begin{align*}
        \part{\rho_t}{t} &= \Delta \rho_t,
        \qquad\qquad
        \part{\nu_t}{t} = \Delta \nu_t
    \end{align*}
    starting from $\rho_0 := \rho_{k+\frac{1}{2}}^{\bz,\eta}$ and $\nu_0 := \tilde \rho_{k+\frac{1}{2}}^{\bz,\eta}$,
    for time $t = \reg\eta$, to get
    $\rho_{\reg\eta} = \rho_{k+1}^{\bz,\eta}$ and $\nu_{\reg\eta} = \tilde \rho_{k+1}^{\bz,\eta}$.
    Note since $\nu_0 = \tilde \rho_{k+\frac{1}{2}}^{\bz,\eta}$ satisfies $\gamma_0$-LSI with $\gamma_0 := (\alpha_\eta/M^2)$, by Lemma~\ref{Lem:LSIConvolution}, we know $\nu_t = \nu_0 \ast \N(0,2tI)$ satisfies $\gamma_t$-LSI where:
    $$\gamma_t = \frac{1}{\frac{1}{\gamma_0} + 2t} = \frac{\gamma_0}{1+2\gamma_0\,t} = \frac{1}{2} \frac{d}{dt} \log(1+2\gamma_0 \, t).$$
    Then by the formula from Lemma~\ref{Lem:ddtKLDiv}, we can compute:
    \begin{align*}
        \frac{d}{dt} \KL(\rho_t \,\|\, \nu_t) &= -\FI(\rho_t \,\|\, \nu_t) 
        \,\le\, -2\gamma_t \, \KL(\rho_t \,\|\, \nu_t).
    \end{align*}
    Integrating from $t=0$ to $t=\reg\eta$ gives:
    \begin{align*}
        \KL(\rho_{\reg\eta} \,\|\, \nu_{\reg\eta})
        &\le \exp\left(-2\int_0^{\reg\eta} \gamma_t \, dt \right) \, \KL(\rho_{0} \,\|\, \nu_{0})
        \,=\, \frac{\KL(\rho_{0} \,\|\, \nu_{0})}{1+2\gamma_0 \reg\eta}.
    \end{align*}
    Note that by definition,
    \begin{align*}
        1+2\gamma_0 \reg\eta
        \,=\, 1+\frac{2\alpha_\eta \reg\eta}{M^2} 
        \,=\, 1+ \frac{1-M^2}{M^2}
        \,=\, \frac{1}{M^2}.
    \end{align*}
    Then substituting back the definitions of $\rho_{\reg\eta},\rho_0$, and $\nu_{\reg\eta},\nu_0$ gives us:
    \begin{align*}
        \KL\left(\rho_{k+1}^{\bz,\eta} \,\|\, \tilde \rho_{k+1}^{\bz,\eta}\right)
        &\le \frac{\KL\left(\rho_{k+\frac{1}{2}}^{\bz,\eta} \,\big\|\, \tilde \rho_{k+\frac{1}{2}}^{\bz,\eta}\right)}{1+2\gamma_0 \reg\eta} 
        \,=\, 
        M^2 \,\KL\left(\rho_{k+\frac{1}{2}}^{\bz,\eta} \,\big\|\, \tilde \rho_{k+\frac{1}{2}}^{\bz,\eta}\right)
        \,=\, M^2 \,\KL\left(\rho_{k}^{\bz,\eta} \,\|\, \tilde \rho_{k}^{\bz,\eta}\right).        
    \end{align*}
    Iterating this recursion and substituting $\tilde \rho_{k+1}^{\bz,\eta} = \tilde \rho_{k}^{\bz,\eta} = \rho_\infty^{\bz,\eta}$ gives:
    \begin{align*}
        \KL\left(\rho_{k}^{\bz,\eta} \,\|\, \rho_\infty^{\bz,\eta} \right)
        \,\le\, M^{2k} \, \KL\left(\rho_{0}^{\bz,\eta} \,\|\, \rho_\infty^{\bz,\eta} \right)
    \end{align*}
    as desired.
\end{proof}

\section{Deterministic Zero-Sum Game}
\label{Sec:ReviewDeterministic}

We review the properties of the deterministic finite-dimensional zero-sum game:
\begin{align}\label{Eq:GameDet}
    \min_{x \in \R^d} \, \max_{y \in \R^d} \, V(x,y)
\end{align}
where $V \colon \R^d \times \R^d \to \R$ satisfies Assumption~\ref{As:SCSmooth}.
In particular, for each $x \in \R^d$, we have $\max_{y \in \R^d} V(x,y) < \infty$, and similarly, for each $y \in \R^d$, we have $\min_{x \in \R^d} V(x,y) > -\infty$,
so all the quantities we discuss below are finite.

Recall we say that $(x^*,y^*) \in \R^{2d}$ is an \textit{equilibrium point} (or a \textit{Nash equilibrium}) for the game~\eqref{Eq:GameDet} if for all $x,y \in \R^d$, the following holds:
\begin{align}\label{Eq:DetEquilibrium}
    V(x^*,y) \le V(x^*,y^*) \le V(x,y^*).
\end{align}
We recall the \textit{duality gap} $\DG \colon \R^d \times \R^d \to \R$ of the game~\eqref{Eq:GameDet} is defined by:
$$\DG(x,y) = \max_{y' \in \R^d} V(x,y') - \min_{x' \in \R^d} V(x',y).$$
Note that $\DG(x,y) \ge 0$ for all $(x,y) \in \R^{2d}$, and $\DG(x,y) = 0$ if and only if $(x,y) = (x^*,y^*)$ is an equilibrium point.
Under Assumption~\ref{As:SCSmooth}, there exists a unique equilibrium point $(x^*,y^*)$; furthermore, the duality gap is bounded by the squared gradient norm, see Theorem~\ref{Thm:DetMinMaxGF} below.

In this section, we review the exponential convergence guarantees of the continuous-time \textit{min-max gradient flow} in Section~\ref{Sec:ReviewDetGF}, and the discrete-time \textit{min-max gradient descent} in Section~\ref{Sec:ReviewDetGD}.

In Section~\ref{Sec:ReviewZeroEntropy}, we review the zero-sum game~\eqref{Eq:Game0} in the space of distributions without entropy regularization.
Under the same assumptions on $V$, we show that the unique equilibrium distribution of the game~\eqref{Eq:Game0} is a pure equilibrium, i.e., a point mass $(\delta_{x^*}, \delta_{y^*})$, where $(x^*,y^*)$ is the equilibrium point of the deterministic game from this section.

\subsection{Convergence of Min-Max Gradient Flow}
\label{Sec:ReviewDetGF}

We consider the \textit{min-max gradient flow} which is the dynamics for $(X_t,Y_t)_{t \ge 0}$ in $\R^{2d}$ which evolves following:
\begin{align*}
    \dot X_t &= -\nabla_x V(X_t,Y_t) \\
    \dot Y_t &= \nabla_y V(X_t,Y_t). 
\end{align*}
In terms of the joint variable $Z_t = (X_t,Y_t) \in \R^{2d}$, we can write this as:
\begin{align}\label{Eq:MinMaxGF}
    \dot Z_t = b^Z(Z_t)
\end{align}
where we have defined the vector field $b^Z \colon \R^{2d} \to \R^{2d}$ by, for all $z = (x,y) \in \R^{2d}$:
\begin{align}\label{Eq:DetDefbz}
    b^Z(x,y) = \begin{pmatrix}
        -\nabla_x V(x,y) \\
        \nabla_y V(x,y)
    \end{pmatrix}.
\end{align}
Observe that $b^Z$ is the case $N=1$ of the vector field $b^{\bZ}$ that we defined in~\eqref{Eq:Defbz} (in $\R^{2dN} = \R^{2d}$).
In particular, the guarantees that we proved for $b^{\bZ}$ also apply to $b^Z$ when we specialize to $N=1$.

Below, for $z = (x,y) \in \R^{2d}$, we also write $\nabla V(z) \equiv \nabla V(x,y)$, and $\DG(z) \equiv \DG(x,y)$.

\begin{theorem}\label{Thm:DetMinMaxGF}
    Assume Assumption~\ref{As:SCSmooth}.
    Then we have the following properties:
    \begin{enumerate}
        \item There exists a unique equilibrium point $z^* = (x^*,y^*) \in \R^{2d}$, and it satisfies $\nabla V(z^*) = 0$.
        \item For all $z = (x,y) \in \R^{2d}$, the duality gap is bounded by the squared gradient norm:
        \begin{align}\label{Eq:DetDGBound}
            \DG(z) \le \frac{1}{2\alpha} \|\nabla V(z)\|^2.
        \end{align}
        \item Suppose $(Z_t)_{t \ge 0}$ evolves following the min-max gradient flow~\eqref{Eq:MinMaxGF} in $\R^{2d}$.
        For all $t \ge 0$:
        \begin{subequations}
        \begin{align}
            \|Z_t - z^*\|^2 &\le e^{-2\alpha t} \, \|Z_0 - z^*\|^2 \label{Eq:ConvGFDist} \\
            2\alpha \, \DG(Z_t) \le \|\nabla V(Z_t)\|^2 
            &\le e^{-2\alpha t} \, \|\nabla V(Z_0)\|^2 \label{Eq:ConvGFGrad}
        \end{align}            
        \end{subequations}
    \end{enumerate}
\end{theorem}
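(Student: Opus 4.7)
The plan is to prove the three claims in order, leaning heavily on the specialization of Lemma~\ref{Lem:LipschitzMonotone_bz} to $N=1$, which gives us exactly the structural properties we need for the vector field $b^Z$ defined in~\eqref{Eq:DetDefbz}: it is $(2L)$-Lipschitz, and $-b^Z$ is $\alpha$-strongly monotone. In fact, the proof of that lemma also establishes (in our $N=1$ case) that the symmetrized Jacobian satisfies $(\nabla b^Z(z))_\sym \preceq -\alpha I$ for all $z$, a fact I will reuse in part (3).

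For part (1), strong monotonicity of $-b^Z$ implies it has at most one zero, and the Banach fixed-point theorem applied to $z \mapsto z + \eta b^Z(z)$ for $\eta \le \alpha/(2L^2)$ (which is a contraction by Lemma~\ref{Lem:Contraction_bz} at $N=1$) yields a unique fixed point, hence a unique $z^* \in \R^{2d}$ with $b^Z(z^*) = 0$, i.e.\ $\nabla V(z^*) = 0$. Because $x \mapsto V(x,y^*)$ is $\alpha$-strongly convex with vanishing gradient at $x^*$, $x^*$ is its unique global minimizer; symmetrically, $y^*$ uniquely maximizes $y \mapsto V(x^*,y)$, giving~\eqref{Eq:DetEquilibrium}.

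For part (2), I would invoke the standard Polyak--{\L}ojasiewicz inequality for strongly convex functions: if $f \colon \R^d \to \R$ is $\alpha$-strongly convex with minimizer $x^*_f$, then $f(x) - f(x^*_f) \le \tfrac{1}{2\alpha}\|\nabla f(x)\|^2$. Applying this to $x' \mapsto V(x',y)$ and to $y' \mapsto -V(x,y')$, then adding the two bounds, yields
\begin{align*}
    \DG(x,y) \,=\, \max_{y'} V(x,y') - \min_{x'} V(x',y) \,\le\, \tfrac{1}{2\alpha}\bigl(\|\nabla_x V(x,y)\|^2 + \|\nabla_y V(x,y)\|^2\bigr) \,=\, \tfrac{1}{2\alpha}\|\nabla V(z)\|^2,
\end{align*}
which is~\eqref{Eq:DetDGBound}.

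For part (3), the distance bound is a direct calculation: since $b^Z(z^*)=0$, differentiating gives $\tfrac{d}{dt}\|Z_t-z^*\|^2 = 2\langle Z_t-z^*,\, b^Z(Z_t)-b^Z(z^*)\rangle \le -2\alpha\|Z_t-z^*\|^2$ by strong monotonicity, and Gr\"onwall yields~\eqref{Eq:ConvGFDist}. For the gradient bound, I would compute $\tfrac{d}{dt}\|b^Z(Z_t)\|^2 = 2\,b^Z(Z_t)^\top \nabla b^Z(Z_t)\, b^Z(Z_t) = 2\,b^Z(Z_t)^\top (\nabla b^Z(Z_t))_\sym\, b^Z(Z_t) \le -2\alpha\|b^Z(Z_t)\|^2$, using the symmetrized-Jacobian bound from the proof of Lemma~\ref{Lem:LipschitzMonotone_bz}. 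Integrating gives $\|\nabla V(Z_t)\|^2 \le e^{-2\alpha t}\|\nabla V(Z_0)\|^2$, and combining with~\eqref{Eq:DetDGBound} gives the duality-gap part of~\eqref{Eq:ConvGFGrad}. The only mild obstacle is that the symmetrized-Jacobian fact $(\nabla b^Z)_\sym \preceq -\alpha I$ is not stated as a standalone lemma in the paper but only extracted within the proof of Lemma~\ref{Lem:LipschitzMonotone_bz}; I would simply cite that computation explicitly when used.
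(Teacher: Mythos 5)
Your proof is correct, and for parts (2) and (3) it coincides with the paper's argument: the duality-gap bound via gradient domination applied to $x' \mapsto V(x',y)$ and $y' \mapsto -V(x,y')$, the distance contraction via $\alpha$-strong monotonicity of $-b^Z$ plus Gr\"onwall, and the gradient-norm decay via $\frac{d}{dt}\|b^Z(Z_t)\|^2 = 2\, b^Z(Z_t)^\top (\nabla b^Z(Z_t))_\sym\, b^Z(Z_t) \le -2\alpha \|b^Z(Z_t)\|^2$, citing the symmetrized-Jacobian bound established inside the proof of Lemma~\ref{Lem:LipschitzMonotone_bz} (the paper reuses it in exactly the same implicit way, so that is fine). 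The only place you diverge is the existence step in part (1): the paper runs two copies of the continuous-time flow, deduces $\|Z_t - Z_t'\| \to 0$ for arbitrary initial points, and infers a unique stationary point of the flow, whereas you apply the Banach fixed-point theorem to the Euler map $G(z) = z + \eta b^Z(z)$ using Lemma~\ref{Lem:Contraction_bz} with $N=1$. Your route is arguably cleaner, since the paper's ``this means there must be a stationary point'' step is somewhat informal, while Banach gives existence and uniqueness in one stroke; the one small correction is that at $\eta = \alpha/(2L^2)$ the Lipschitz constant is $M = \sqrt{1 - 2\eta\alpha + 4\eta^2 L^2} = 1$, so $G$ is only nonexpansive there --- take $\eta$ strictly smaller (e.g.\ $\eta = \alpha/(4L^2)$, giving $M = \sqrt{1 - \alpha^2/(2L^2)} < 1$) to get a genuine contraction. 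Also, to conclude uniqueness of the \emph{equilibrium point} (not just of the zero of $b^Z$), you should note that any equilibrium satisfies the first-order conditions $\nabla_x V(x^*,y^*) = \nabla_y V(x^*,y^*) = 0$ and hence is a zero of $b^Z$; the paper leaves this equally implicit, so it is a one-line addition rather than a gap specific to your argument.
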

\begin{proof}
\textbf{(1) Bound in duality gap:}
For each $y \in \R^d$, since $x \mapsto V(x,y)$ is $\alpha$-strongly convex by assumption, it also satisfies $\alpha$-gradient domination, i.e., 
$$V(x,y) - \min_{x' \in \R^d} V(x',y) \le \frac{1}{2\alpha} \|\nabla_x V(x,y)\|^2.$$
Similarly, for each $x \in \R^d$, since $y \mapsto -V(x,y)$ is $\alpha$-strongly convex by assumption, it also satisfies $\alpha$-gradient domination, i.e., 
$$\max_{y' \in \R^d} V(x,y') - V(x,y) \le \frac{1}{2\alpha} \|\nabla_y V(x,y)\|^2.$$
Summing the two inequalities above gives:
\begin{align*}
    \DG(x,y) &= \max_{y' \in \R^d} V(x,y') - \min_{x' \in \R^d} V(x',y) \\
    &\le \frac{1}{2\alpha} \left(\|\nabla_x V(x,y)\|^2 + \|\nabla_y V(x,y)\|^2\right)
    = \frac{1}{2\alpha} \|\nabla V(x,y)\|^2
\end{align*}
as claimed in~\eqref{Eq:DetDGBound}.

\medskip
\noindent
\textbf{(2) Exponential convergence in distance and existence of stationary point:} 
Suppose we run two copies of the min-max gradient flow~\eqref{Eq:MinMaxGF} from $Z_0, Z_0' \in \R^{2d}$, to get $Z_t, Z_t' \in \R^{2d}$ for $t \ge 0$,
so they satisfy: $\dot Z_t = b^Z(Z_t)$ and $\dot Z_t' = b^Z(Z_t')$.
Then we can compute:
\begin{align*}
    \frac{d}{dt} \|Z_t-Z_t'\|^2 &= 2 \langle Z_t-Z_t', b(Z_t)-b(Z_t') \rangle 
    \le -2\alpha \|Z_t-Z_t'\|^2
\end{align*}
where the inequality follows from the property that $-b^Z$ is $\alpha$-strongly monotone, by Lemma~\ref{Lem:LipschitzMonotone_bz}.
Integrating the differential inequality above gives:
\begin{align*}
    \|Z_t-Z_t'\|^2 \le e^{-2\alpha t} \, \|Z_0-Z_0'\|^2.
\end{align*}
Therefore, $\lim_{t \to \infty} \|Z_t-Z_t'\|^2 = 0$.
Since this holds for any initial points $Z_0,Z_0'$, this means there must be a stationary point $z^*$, and furthermore, this stationary point $z^*$ is unique by the contraction property above.
Plugging in $Z_t' = Z_0' = z^*$ to the guarantee above gives the desired exponential convergence rate in distance~\eqref{Eq:ConvGFDist}.

\medskip
\noindent
\textbf{(3) Stationary point is an equilibrium point:}
Since $z^* = (x^*,y^*)$ is stationary for the min-max gradient flow~\eqref{Eq:MinMaxGF}, it makes the vector field vanish: $b^Z(z^*) = 0$, which means $\nabla_x V(x^*,y^*) = 0$ and $\nabla_y V(x^*,y^*) = 0$, so indeed $\nabla V(x^*, y^*) = 0$.

Furthermore, since $x \mapsto V(x,y^*)$
is strongly convex by assumption, $\nabla_x V(x^*,y^*) = 0$ means $x^* = \arg\min_{x \in \R^d} V(x,y^*)$, so $V(x^*,y^*) \le V(x,y^*)$ for all $x \in \R^d$.
Similarly, since $y \mapsto V(x^*,y)$ is strongly concave by assumption, $\nabla_y V(x^*,y^*) = 0$ means $y^* = \arg\max_{y \in \R^d} V(x^*,y)$, so $V(x^*,y^*) \ge V(x^*,y)$ for all $y \in \R^d$.
This shows that $z^* = (x^*,y^*)$ is an equilibrium point as defined in~\eqref{Eq:DetEquilibrium}.

\medskip
\noindent
\textbf{(4) Exponential convergence in gradient norm:}
Observe that $\|b^Z(Z_t)\|^2 = \|\nabla_x V(X_t,Y_t)\|^2 + \|\nabla_y V(X_t,Y_t)\|^2 = \|\nabla V(X_t,Y_t)\|^2$,
so it suffices to prove the exponential convergence of $\|b^Z(Z_t)\|$.
We can compute:
\begin{align*}
    \frac{d}{dt} \|b^Z(Z_t)\|^2
    &= 2 \langle b^Z(Z_t), \nabla b^Z(Z_t) \, \dot Z_t \rangle \\
    &= 2 \langle b^Z(Z_t), \nabla b^Z(Z_t) \, b^Z(Z_t) \rangle \\
    &= 2 \langle b^Z(Z_t), (\nabla b^Z(Z_t))_\sym \, b^Z(Z_t) \rangle
\end{align*}
where the last step follows since $u^\top A u = u^\top (A_\sym) u$ for all $u \in \R^{2d}$ and $A \in \R^{2d \times 2d}$, where $A_\sym = \frac{1}{2} (A+A^\top)$.
Recall from Lemma~\ref{Lem:LipschitzMonotone_bz} part (2), we have shown that $(\nabla b^Z(Z_t))_\sym \preceq -\alpha I$ (which is equivalent to the property that $-b^Z$ is $\alpha$-strongly monotone).
Then from the computation above, we can bound:
\begin{align*}
    \frac{d}{dt} \|b^Z(Z_t)\|^2
    &\le -2\alpha \, \| b^Z(Z_t) \|^2.
\end{align*}
Integrating this differential inequality gives:
\begin{align*}
    \|b^Z(Z_t)\|^2 \le e^{-2\alpha t} \, \|b^Z(Z_0)\|^2
\end{align*}
as claimed in the second inequality in~\eqref{Eq:ConvGFGrad}.
Combining this with the bound for the duality gap~\eqref{Eq:DetDGBound} yields:
\begin{align*}
    2\alpha \, \DG(Z_t) \le \|\nabla V(Z_t)\|^2 \le e^{-2\alpha t} \, \|\nabla V(Z_0)\|^2
\end{align*}
as desired.
\end{proof}

\subsection{Convergence of Min-Max Gradient Descent}
\label{Sec:ReviewDetGD}

We consider the \textit{min-max gradient descent} with step size $\eta > 0$, which maintains the iterates $(x_k,y_k)_{k \ge 0}$ in $\R^{2d}$ which evolves following the update:
\begin{align*}
    x_{k+1} &= x_{k} - \eta \nabla_x V(x_{k},y_{k}) \\
    y_{k+1} &= y_{k} + \eta \nabla_y V(x_{k},y_{k}). 
\end{align*}
In terms of the joint variable $z_k = (x_{k},y_{k}) \in \R^{2d}$, we can write this as:
\begin{align}\label{Eq:MinMaxGD}
    z_{k+1} = z_k + \eta b^Z(z_k)
\end{align}
where $b^Z$ is the vector field defined in~\eqref{Eq:DetDefbz}.

\begin{theorem}\label{Thm:DetMinMaxGD}
    Assume Assumption~\ref{As:SCSmooth}.
    Let $z^* = (x^*,y^*) \in \R^{2d}$ be the unique equilibrium point.
    Suppose $(z_k)_{k \ge 0}$ evolves following the min-max gradient descent~\eqref{Eq:MinMaxGD} in $\R^{2d}$ with step size $\eta > 0$.
    If $\eta \le \frac{\alpha}{4L^2}$, then for all $k \ge 0$:
    \begin{align}\label{Eq:ConvGDDist}
        \|z_k - z^*\|^2 &\le e^{-\alpha \eta k} \, \|z_0 - z^*\|^2. 
    \end{align}            
    Furthermore, if $\eta \le \frac{\alpha}{16L^2}$, then we also have for all $k \ge 0$:
    \begin{align}
        2\alpha \, \DG(z_k) \,\le\, \|\nabla V(z_k)\|^2 
        &\le e^{-\alpha \eta k} \, \|\nabla V(z_0)\|^2 \label{Eq:ConvGDGrad}.
    \end{align}            
\end{theorem}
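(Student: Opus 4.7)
The plan is to discretize the continuous-time argument from Theorem~\ref{Thm:DetMinMaxGF} by analyzing a single step of the update at a time and iterating. Throughout, I leverage the properties of $b^Z$---specialized from Lemma~\ref{Lem:LipschitzMonotone_bz} with $N=1$---namely that $b^Z$ is $(2L)$-Lipschitz, $-b^Z$ is $\alpha$-strongly monotone, and the pointwise bound $(\nabla b^Z(z))_{\sym}\preceq -\alpha I$ holds for every $z\in\R^{2d}$ (as extracted in the proof of that lemma).

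For Part 1, I would expand $\|z_{k+1}-z^*\|^2$ using the update rule~\eqref{Eq:MinMaxGD} together with $b^Z(z^*)=0$ from Theorem~\ref{Thm:DetMinMaxGF}:
\begin{align*}
\|z_{k+1}-z^*\|^2
&= \|z_k-z^*\|^2 + 2\eta\,\langle z_k-z^*,\, b^Z(z_k)-b^Z(z^*)\rangle + \eta^2\,\|b^Z(z_k)-b^Z(z^*)\|^2 \\
&\le (1-2\eta\alpha + 4\eta^2 L^2)\,\|z_k-z^*\|^2,
\end{align*}
where strong monotonicity controls the cross term and the Lipschitz bound controls the squared term. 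Under $\eta \le \alpha/(4L^2)$, we have $4\eta^2 L^2 \le \eta\alpha$, so the contraction factor is at most $1-\eta\alpha \le e^{-\eta\alpha}$; iterating yields~\eqref{Eq:ConvGDDist}.

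For Part 2, define $g_k := b^Z(z_k)$, so that $\|g_k\|^2 = \|\nabla V(z_k)\|^2$. Applying the fundamental theorem of calculus to the map $t \mapsto b^Z(z_k + t\eta g_k)$ on $[0,1]$ gives $g_{k+1}-g_k = \eta\, A_k\, g_k$ with $A_k := \int_0^1 \nabla b^Z(z_k + t\eta g_k)\,dt$. Expanding the square and using the identity $u^\top M u = u^\top M_{\sym}\, u$ yields
\begin{align*}
\|g_{k+1}\|^2 = \|g_k\|^2 + 2\eta\, g_k^\top (A_k)_{\sym}\, g_k + \eta^2\,\|A_k g_k\|^2.
\end{align*}
Because the pointwise bound $(\nabla b^Z(z))_{\sym}\preceq -\alpha I$ is preserved under averaging, we have $(A_k)_{\sym}\preceq -\alpha I$; similarly, $\|A_k\|_{\op}\le 2L$. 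This produces $\|g_{k+1}\|^2\le(1-2\eta\alpha+4\eta^2 L^2)\|g_k\|^2$, which under the stated step-size condition $\eta\le\alpha/(16L^2)$ is bounded by $e^{-\eta\alpha}\|g_k\|^2$. Iterating and combining with the duality-gap estimate~\eqref{Eq:DetDGBound} yields~\eqref{Eq:ConvGDGrad}.

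The main obstacle is the absence of a chain rule in discrete time: the increment $g_{k+1}-g_k$ is not governed by $\nabla b^Z(z_k)$, so one cannot directly mimic the continuous-time computation $\frac{d}{dt}\|b^Z(Z_t)\|^2 = 2\langle b^Z,\, (\nabla b^Z)_{\sym}\, b^Z\rangle$ that drives the gradient-norm convergence of the gradient flow. The remedy is to replace the pointwise Jacobian by the secant-averaged Jacobian $A_k$; the crucial enabling fact is that the strong-monotonicity estimate on the symmetrized Jacobian is uniform in $z$, hence automatically inherited by any convex or integral average, so the discrete-time analog of the differential identity still produces the desired $-\alpha$ factor.
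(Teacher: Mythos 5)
Your proposal is correct. Part~1 is essentially the paper's argument: expanding $\|z_{k+1}-z^*\|^2$ and invoking $\alpha$-strong monotonicity of $-b^Z$ together with the $(2L)$-Lipschitz bound is exactly the content of Lemma~\ref{Lem:Contraction_bz}, which the paper cites and you simply inline; both give the contraction factor $1-2\eta\alpha+4\eta^2L^2 \le e^{-\eta\alpha}$ under $\eta \le \frac{\alpha}{4L^2}$.

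Part~2 is where you genuinely diverge. The paper interpolates one step continuously via $Z_t = Z_0 + t\,b^Z(Z_0)$, first proves the auxiliary estimate $\|b^Z(Z_t)-b^Z(Z_0)\| \le 4tL\,\|b^Z(Z_t)\|$, and then bounds $\frac{d}{dt}\|b^Z(Z_t)\|^2 \le -\alpha\|b^Z(Z_t)\|^2$ by splitting $b^Z(Z_0)$ into $b^Z(Z_t)$ plus an error term, which is what forces the stronger restriction $\eta \le \frac{\alpha}{16L^2}$. You instead write $g_{k+1}-g_k = \eta A_k g_k$ with the secant-averaged Jacobian $A_k = \int_0^1 \nabla b^Z(z_k + t\eta g_k)\,dt$, and exploit that the uniform pointwise bound $(\nabla b^Z(z))_{\sym} \preceq -\alpha I$ (established inside the proof of Lemma~\ref{Lem:LipschitzMonotone_bz}) and the bound $\|\nabla b^Z(z)\|_{\op} \le 2L$ pass to the average, yielding the one-step factor $1-2\eta\alpha+4\eta^2L^2$ directly. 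This is cleaner: it avoids the interpolation and auxiliary estimate, makes the gradient-norm recursion structurally identical to the distance recursion, and in fact works already under the weaker condition $\eta \le \frac{\alpha}{4L^2}$ (so a fortiori under $\eta \le \frac{\alpha}{16L^2}$), slightly improving on the paper. The only hypotheses your route uses—differentiability of $b^Z$ (guaranteed since $V$ is four-times continuously differentiable), the pointwise symmetrized-Jacobian bound, and the operator-norm bound on the Jacobian implied by $(2L)$-Lipschitzness—are all available in the paper, and the concluding step combining with the duality-gap bound~\eqref{Eq:DetDGBound} matches the paper's. No gaps.
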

\begin{proof}
    \textbf{(1) Exponential convergence in distance:} Let $G \colon \R^{2d} \to \R^{2d}$ be $G(\bz) = \bz + \eta b^Z(\bz)$.
    Recall we show in Lemma~\ref{Lem:Contraction_bz} (with $N=1$) that $G$ is $M$-Lipschitz, where
    \begin{align*}
        M := \sqrt{1-2\eta \alpha + 4 \eta^2 L^2} \le \sqrt{1-\eta \alpha} \le e^{-\frac{1}{2} \eta \alpha}
    \end{align*}
    where the first bound above follows from the assumption $\eta \le \frac{\alpha}{4L^2}$, and the second from the inequality $1-c \le e^{-c}$ for $c \ge 0$.
    We can write the min-max gradient descent update~\eqref{Eq:MinMaxGD} as $z_{k+1} = G(z_k)$, and note $z^* = G(z^*)$ is a fixed point.
    Then we can compute:
    \begin{align*}
        \|z_{k+1} - z^*\|^2
        &= \|G(z_{k}) - G(z^*)\|^2
        \le M^2 \, \|z_{k} - z^*\|^2
    \end{align*}
    where the inequality follows from the property that $G$ is $M$-Lipschitz.
    Iterating this bound gives:
    \begin{align*}
        \|z_{k} - z^*\|^2
        \,\le\, M^{2k} \, \|z_{0} - z^*\|^2
        \,\le\, e^{-\alpha \eta k} \, \|z_{0} - z^*\|^2
    \end{align*}
    where the last inequality follows from the bound for $M$ above.

    \medskip
    \noindent
    \textbf{(2) Exponential convergence in gradient norm:} We consider a continuous-time interpolation of one step of the min-max gradient descent~\eqref{Eq:MinMaxGD} as:
    $$Z_t = Z_0 + t b^Z(Z_0)$$
    so that if $Z_0 = z_k$, then $Z_\eta = Z_0 + \eta b^Z(Z_0) = z_k + \eta b^Z(z_k) = z_{k+1}$.
    We can first bound:
    \begin{align*}
        \|b^Z(Z_t)-b^Z(Z_0)\|^2
        &\le 4L^2 \, \|Z_t-Z_0\|^2 \\
        &=\, 4t^2L^2 \, \|b^Z(Z_0)\|^2 \\
        &\le 8t^2L^2 \, \|b^Z(Z_t)-b^Z(Z_0)\|^2 + 8t^2L^2 \, \|b^Z(Z_t)\|^2 \\
        &\le \frac{1}{2} \, \|b^Z(Z_t)-b^Z(Z_0)\|^2 + 8t^2L^2 \, \|b^Z(Z_t)\|^2
    \end{align*}
    where the first inequality follows from the property that $b^Z$ is $(2L)$-Lipschitz from Lemma~\ref{Lem:LipschitzMonotone_bz}.
    In the second inequality we introduce the term $b^Z(Z_t)$ and use the inequality $\|a+b\|^2 \le 2\|a\|^2 + 2\|b\|^2$.
    In the third inequality we use the assumption $t \le \eta \le \frac{\alpha}{16L^2} \le \frac{1}{4L}$, so $8t^2L^2 \le \frac{1}{2}$.
    Rearranging the above and taking square-root gives us:
    \begin{align}\label{Eq:CalcGD1}
        \|b^Z(Z_t)-b^Z(Z_0)\|
        &\le 4tL \, \|b^Z(Z_t)\|.
    \end{align}
    Next, we can compute along the continuous-time interpolation, where $\dot Z_t = b^Z(Z_0)$, for $0 \le t \le \eta$:
    \begin{align*}
        \frac{d}{dt} \|b^Z(Z_t)\|^2
        &= 2 \langle b^Z(Z_t), \nabla b^Z(Z_t) \, b^Z(Z_0) \rangle \\
        &= 2 \langle b^Z(Z_t), \nabla b^Z(Z_t) \, b^Z(Z_t) \rangle + 2 \langle b^Z(Z_t), \nabla b^Z(Z_t) \, (b^Z(Z_0)-b^Z(Z_t)) \rangle \\
        &= 2 \langle b^Z(Z_t), (\nabla b^Z(Z_t))_\sym \, b^Z(Z_t) \rangle + 2 \langle b^Z(Z_t), \nabla b^Z(Z_t) \, (b^Z(Z_0)-b^Z(Z_t)) \rangle \\
        &\le -2\alpha \|b^Z(Z_t)\|^2 + 4L \|b^Z(Z_t)\| \cdot \|b^Z(Z_0)-b^Z(Z_t)\| \\
        &\le -2\alpha \|b^Z(Z_t)\|^2 + 16tL^2 \|b^Z(Z_t)\|^2 \\
        &\le -\alpha \|b^Z(Z_t)\|^2.
    \end{align*}
    In the first inequality above, we use the properties from Lemma~\ref{Lem:LipschitzMonotone_bz} that $(\nabla b^Z(Z_t))_\sym \preceq -\alpha I$ (equivalent to $-b^Z$ is $\alpha$-strongly monotone), and $b^Z$ is $(2L)$-Lipschitz.
    In the second inequality, we use the bound on $\|b^Z(Z_t)-b^Z(Z_0)\|$ from~\eqref{Eq:CalcGD1}.
    In the third inequality, we use the assumption $t \le \eta \le \frac{\alpha}{16L^2}$, so $16 t L^2 \le \alpha$.
    Integrating the differential inequality above from $t=0$ to $t=\eta$ gives:
    \begin{align*}
        \|b^Z(z_{k+1})\|^2 
        = \|b^Z(Z_\eta)\|^2
        \le e^{-\alpha \eta} \, \|b^Z(Z_0)\|^2
        = e^{-\alpha \eta} \, \|b^Z(z_k)\|^2.
    \end{align*}
    Iterating the recurrence above and recalling $\|b^Z(z)\| = \|\nabla V(z)\|$ gives us:
    \begin{align*}
        \|\nabla V(z_k)\|^2 = \|b^Z(z_{k})\|^2 
        \,\le\, e^{-\alpha \eta k} \, \|b^Z(z_0)\|^2
        = e^{-\alpha \eta k} \, \|\nabla V(z_0)\|^2
    \end{align*}
    as claimed in the second inequality in~\eqref{Eq:ConvGDGrad}.
    Finally, combining this bound with the bound for the duality gap~\eqref{Eq:DetDGBound} yields:
    \begin{align*}
        2\alpha \, \DG(z_k) \le \|\nabla V(z_k)\|^2 \le e^{-\alpha \eta k} \, \|\nabla V(z_0)\|^2
    \end{align*}
    as desired.
\end{proof}

\subsubsection{Corollary on iteration complexity of min-max gradient descent}

\begin{corollary}\label{Cor:MinMaxGDIter}
    Given any $\error > 0$, if we run the min-max gradient descent algorithm~\eqref{Eq:MinMaxGD}
    from $z_0 = (0,0) \in \R^{2d}$ with step size $\eta = \frac{\alpha}{4L^2}$, then we have
    $\|z_k - z^*\|^2 \le \error$
    for all
    $$k \ge \frac{4 L^2}{\alpha^2} \log \left( \frac{\|z^*\|^2}{\error} \right).$$
\end{corollary}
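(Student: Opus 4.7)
The plan is to apply Theorem~\ref{Thm:DetMinMaxGD} directly. The chosen step size $\eta = \frac{\alpha}{4L^2}$ satisfies the hypothesis $\eta \le \frac{\alpha}{4L^2}$ required for the distance bound~\eqref{Eq:ConvGDDist}, so invoking that theorem gives
\[
\|z_k - z^*\|^2 \le e^{-\alpha \eta k} \, \|z_0 - z^*\|^2
\]
for all $k \ge 0$. Substituting $\eta = \frac{\alpha}{4L^2}$ turns the rate into $e^{-\frac{\alpha^2}{4L^2} k}$.

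Next, I would use the initialization $z_0 = (0,0)$ to simplify $\|z_0 - z^*\|^2 = \|z^*\|^2$, giving
\[
\|z_k - z^*\|^2 \le e^{-\frac{\alpha^2}{4L^2} k} \, \|z^*\|^2.
\]
It then suffices to require the right-hand side to be at most $\error$. Taking logarithms and rearranging yields the sufficient condition
\[
k \ge \frac{4L^2}{\alpha^2} \log \frac{\|z^*\|^2}{\error},
\]
which is exactly the claimed iteration complexity.

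There is no real obstacle here: the statement is a direct translation of the exponential contraction guarantee~\eqref{Eq:ConvGDDist} into an $\error$-type complexity bound, combined with the specific initialization $z_0 = 0$. The only minor care needed is to verify the step size lies in the allowed range, which is immediate from the choice $\eta = \frac{\alpha}{4L^2}$.
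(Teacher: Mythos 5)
Your proposal is correct and follows exactly the paper's argument: invoke the distance bound~\eqref{Eq:ConvGDDist} from Theorem~\ref{Thm:DetMinMaxGD} with $\eta = \frac{\alpha}{4L^2}$, use $\|z_0 - z^*\|^2 = \|z^*\|^2$ from the initialization at the origin, and solve $e^{-\alpha^2 k/(4L^2)}\|z^*\|^2 \le \error$ for $k$. No gaps.
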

\begin{proof}
    This follows from the bound~\eqref{Eq:ConvGDDist} from Theorem~\ref{Thm:DetMinMaxGD}:
    \begin{align*}
        \|z_k - z^*\|^2
        \le \exp\left(-\alpha \eta k \right) \|z^*\|^2 
        = \exp\left(-\frac{\alpha^2 k}{4 L^2}\right)\|z^*\|^2 
        \le \error
    \end{align*}
    where the last inequality follows from our choice of $k$.
\end{proof}

\section{Zero-Sum Game in the Space of Distributions Without Regularization}
\label{Sec:ReviewZeroEntropy}

Consider the zero-sum game in the space of probability distributions without entropy regularization:
\begin{align}\label{Eq:GameZeroEntropy}
    \min_{\rho^X \in \P(\R^d)} \, \max_{\rho^Y \in \P(\R^d)} \; \E_{\rho^X \otimes \rho^Y}[V]
\end{align}
where $V \colon \R^d \times \R^d \to \R$ satisfies Assumption~\ref{As:SCSmooth}.
This is the same game~\eqref{Eq:Game0} as in Section~\ref{Sec:Introduction}.

We say that a pair of probability distributions $(\bar \nu^X, \bar \nu^Y) \in \P(\R^d) \times \P(\R^d)$ is an \textit{equilibrium distribution} for the game~\eqref{Eq:GameZeroEntropy} if the following holds for all $(\rho^X, \rho^Y) \in \P(\R^d) \times \P(\R^d)$:
\begin{align}\label{Eq:EquilibriumZeroEntropy}
    \E_{\bar \nu^X \otimes \rho^Y}[V] \le \E_{\bar \nu^X \otimes \bar \nu^Y}[V] \le \E_{\rho^X \otimes \bar \nu^Y}[V].
\end{align}
Under Assumption~\ref{As:SCSmooth}, there exists a unique equilibrium distribution, which is the point mass $(\delta_{x^*},\delta_{y^*})$, where $(x^*,y^*)$ is the equilibrium point of the deterministic game~\eqref{Eq:GameDet}, see Theorem~\ref{Thm:ZeroEntropy} below.

We consider the \textit{mean-field min-max gradient flow} which is the dynamics for random variables $Z_t = (X_t,Y_t) \sim \rho_t^X \otimes \rho_t^Y = \rho_t^Z$ in $\R^{2d}$ which evolves via:
\begin{subequations}\label{Eq:MinMaxGFZeroEntropy}
    \begin{align}
        \dot X_t &= -\E_{\rho_t^Y}[\nabla_x V(X_t, Y_t)] \\
        \dot Y_t &= \E_{\rho_t^X}[\nabla_x V(X_t, Y_t)].
    \end{align}
\end{subequations}

We also consider the \textit{mean-field min-max gradient descent} algorithm with step size $\eta > 0$ which maintains random variables $z_k = (x_k,y_k) \sim \rho_k^{x,\eta} \otimes \rho_k^{y,\eta} = \rho_k^{z,\eta}$ in $\R^{2d}$ with the update rule:
\begin{subequations}\label{Eq:MinMaxGDZeroEntropy}
    \begin{align}
        x_{k+1} &= x_k - \eta \, \E_{\rho_k^{y,\eta}}[\nabla_x V(x_k, y_k)] \\
        y_{k+1} &= y_k + \eta \, \E_{\rho_k^{x,\eta}}[\nabla_y V(x_k, y_k)].
    \end{align}
\end{subequations}

We have the following convergence guarantees.

\begin{theorem}\label{Thm:ZeroEntropy}
    Assume Assumption~\ref{As:SCSmooth}.
    Let $z^* = (x^*,y^*) \in \R^{2d}$ be the unique equilibrium point for the deterministic game~\eqref{Eq:GameDet}.
    Then:
    \begin{enumerate}
        \item There exists a unique equilibrium distribution for the game~\eqref{Eq:GameZeroEntropy}, which is the point mass distribution: $(\delta_{x^*},\delta_{y^*})$.
        \item Suppose $Z_t = (X_t,Y_t) \sim \rho_t^X \otimes \rho_t^Y = \rho_t^Z$ evolves via the mean-field min-max gradient flow~\eqref{Eq:MinMaxGFZeroEntropy} in $\R^{2d}$,
        and let $\delta_{z^*} = \delta_{x^*} \otimes \delta_{y^*}$.
        For all $t \ge 0$:
        \begin{align}\label{Eq:ConvGFZeroEntropy}
            W_2(\rho_t^Z, \delta_{z^*})^2 \le e^{-2\alpha t} \, W_2(\rho_0^Z, \delta_{z^*})^2.
        \end{align}
        \item Suppose $z_k = (x_k,y_k) \sim \rho_k^{x,\eta} \otimes \rho_k^{y,\eta} = \rho_k^{z,\eta}$ evolves following the mean-field min-max gradient descent~\eqref{Eq:MinMaxGDZeroEntropy} in $\R^{2d}$ with step size $0 < \eta \le \frac{\alpha}{4L^2}$.
        Then for all $k \ge 0$:
        \begin{align}\label{Eq:ConvGDZeroEntropy}
            W_2(\rho_k^{z,\eta}, \delta_{z^*})^2 \le e^{-\alpha \eta k} \, W_2(\rho_0^{z,\eta}, \delta_{z^*})^2.
        \end{align}
    \end{enumerate}    
\end{theorem}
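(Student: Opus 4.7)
\textbf{Proof plan for Theorem~\ref{Thm:ZeroEntropy}.}

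\emph{Part (1): Equilibrium.} The plan is to exploit the fact that, without entropy regularization, the payoff $\E_{\rho^X \otimes \rho^Y}[V]$ is \emph{linear} in each argument, while the integrand is strongly convex/concave in the corresponding variable. First I would verify that $(\delta_{x^*}, \delta_{y^*})$ is an equilibrium: since $y \mapsto V(x^*, y)$ is $\alpha$-strongly concave with unique maximizer $y^*$ (by Theorem~\ref{Thm:DetMinMaxGF}), we have $\sup_{\rho^Y} \E_{\delta_{x^*} \otimes \rho^Y}[V] = V(x^*, y^*)$, and analogously for the min; the definition~\eqref{Eq:EquilibriumZeroEntropy} follows. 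For uniqueness, if $(\bar \nu^X, \bar \nu^Y)$ is an equilibrium, then $\bar \nu^Y$ maximizes $\rho^Y \mapsto \E_{\bar Y \sim \rho^Y}[\E_{\bar X \sim \bar \nu^X}[V(\bar X, \bar Y)]]$ over $\P(\R^d)$; because the inner function $y \mapsto \E_{\bar \nu^X}[V(\bar X, y)]$ is strongly concave with a unique maximizer, only a Dirac at that maximizer attains the supremum, so $\bar \nu^Y = \delta_{y^{**}}$ and similarly $\bar \nu^X = \delta_{x^{**}}$. The equilibrium relations then reduce to the deterministic Nash conditions~\eqref{Eq:DetEquilibrium}, and uniqueness follows from Theorem~\ref{Thm:DetMinMaxGF}(1).

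\emph{Part (2): Mean-field gradient flow.} The central observation is that, because the drifts in~\eqref{Eq:MinMaxGFZeroEntropy} depend only on the marginals, independent initializations stay independent: $\rho_t^Z = \rho_t^X \otimes \rho_t^Y$ for all $t \ge 0$. Consequently, when I differentiate $\E_{\rho_t^Z}[\|Z_t-z^*\|^2]$, the expectations over the ``copy'' $\bar Y$ in $\dot X_t$ and $\bar X$ in $\dot Y_t$ can, by independence, be re-identified with $Y_t$ and $X_t$ themselves. Explicitly,
\begin{align*}
    \frac{d}{dt} \E[\|Z_t - z^*\|^2]
    &= 2\, \E_{(X,Y) \sim \rho_t^Z}\!\left[\langle (X,Y) - (x^*,y^*),\, b^Z(X,Y)\rangle\right],
\end{align*}
where $b^Z$ is the deterministic min-max drift from~\eqref{Eq:DetDefbz}. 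Since $b^Z(z^*) = 0$ and $-b^Z$ is $\alpha$-strongly monotone (the $N=1$ case of Lemma~\ref{Lem:LipschitzMonotone_bz}), the right-hand side is bounded by $-2\alpha \, \E[\|Z_t - z^*\|^2]$. Gr\"onwall gives the claimed $e^{-2\alpha t}$ rate; since $W_2^2$ to a Dirac equals the second moment, the bound is automatic for $W_2$.

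\emph{Part (3): Mean-field gradient descent.} The same independence argument gives $\rho_k^{z,\eta} = \rho_k^{x,\eta} \otimes \rho_k^{y,\eta}$. Expanding $\|z_{k+1} - z^*\|^2$ and using Jensen's inequality on the $\eta^2$ cross term (to replace $\|\E[\nabla_x V]\|^2$ by $\E[\|\nabla_x V\|^2]$), together with the independence-based collapse to $(X,Y)\sim \rho_k^{z,\eta}$, yields
\begin{align*}
    \E[\|z_{k+1}-z^*\|^2] \le \E[\|z_k - z^*\|^2] + 2\eta\, \E_{(X,Y)}[\langle z - z^*, b^Z(X,Y)\rangle] + \eta^2\, \E_{(X,Y)}[\|b^Z(X,Y)\|^2].
\end{align*}
Applying strong monotonicity of $-b^Z$ to the first expectation and the $(2L)$-Lipschitz bound of $b^Z$ (again Lemma~\ref{Lem:LipschitzMonotone_bz}) to the second gives the recurrence $\E[\|z_{k+1}-z^*\|^2] \le (1 - 2\eta\alpha + 4\eta^2 L^2)\, \E[\|z_k - z^*\|^2]$; for $\eta \le \alpha/(4L^2)$ the factor is at most $1-\eta\alpha \le e^{-\eta\alpha}$, and iterating gives~\eqref{Eq:ConvGDZeroEntropy}.

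\emph{Main difficulty.} The only subtle step is the interchange that collapses the double expectation $\E_{X_t}\E_{\bar Y \sim \rho_t^Y}$ (coming from the mean-field drift) into a single expectation over $(X,Y) \sim \rho_t^X \otimes \rho_t^Y$ so that the deterministic min-max drift $b^Z(X,Y)$ appears and the deterministic monotonicity/Lipschitz lemmas become directly applicable. Once this is in place, parts (2) and (3) follow the exact template of Theorems~\ref{Thm:DetMinMaxGF} and~\ref{Thm:DetMinMaxGD} applied pointwise to the joint variable $(X,Y)$, and the analysis never needs to track the cross terms involving $\nabla^2_{xy}V$ that are usually the delicate part of min-max arguments.
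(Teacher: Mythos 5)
Your proposal is correct. Parts (2) and (3) follow essentially the paper's own route: the key step in both is precisely the collapse, via independence and Fubini, of the mean-field drift's inner expectation into a single expectation over $(X,Y)\sim\rho^X\otimes\rho^Y$, after which the $N=1$ case of Lemma~\ref{Lem:LipschitzMonotone_bz} (strong monotonicity and $(2L)$-Lipschitzness of $b^Z$) and the identity $W_2(\rho,\delta_{z^*})^2=\E_\rho[\|Z-z^*\|^2]$ do the work. For part (3) the paper first pulls the inner expectation outside the full squared norm by Jensen/Cauchy--Schwarz and then invokes the contraction of the map $G(z)=z+\eta b^Z(z)$ from Lemma~\ref{Lem:Contraction_bz}, whereas you expand the square and apply monotonicity and Lipschitzness term by term; since Lemma~\ref{Lem:Contraction_bz} is itself proved by exactly that expansion, both yield the same factor $1-2\eta\alpha+4\eta^2L^2$, and the difference is purely presentational (note your phrase ``Jensen's inequality on the $\eta^2$ cross term'' is a slip: Jensen is needed only for the quadratic term $\|\E[\nabla_x V]\|^2$, while the cross term collapses with equality, as your displayed recurrence correctly reflects). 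Where you genuinely diverge is the uniqueness claim in part (1): the paper deduces uniqueness dynamically, by noting any equilibrium is stationary for the mean-field flow and invoking the convergence~\eqref{Eq:ConvGFZeroEntropy} from every initialization, whereas you give a direct static argument — a linear functional $\rho^Y\mapsto\E_{\rho^Y}[h]$ with $h(\cdot)=\E_{\bar\nu^X}[V(X,\cdot)]$ strongly concave is maximized only by the Dirac at its unique argmax, forcing any equilibrium to be a pair of Diracs and reducing uniqueness to the deterministic case in Theorem~\ref{Thm:DetMinMaxGF}. Your argument is self-contained (it does not lean on part (2)) and arguably cleaner; it does require the feasible set to contain Dirac measures, i.e.\ one must read $\P(\R^d)$ here as all finite-second-moment measures rather than the absolutely continuous ones of Section~\ref{Sec:Prelim}, but the theorem statement itself already makes this implicit assumption, so this is consistent with the paper rather than a gap.
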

\begin{proof}
    \textbf{(1)~Equilibrium distribution:}
    Recall since $(x^*,y^*) \in \R^{2d}$ is an equilibrium point for the deterministic game~\eqref{Eq:GameDet}, it satisfies the property~\eqref{Eq:DetEquilibrium} that for all $x,y \in \R^d$:
    \begin{align*}
        V(x^*,y) \le V(x^*,y^*) \le V(x,y^*).
    \end{align*}
    Then for any probability distributions $\rho^X, \rho^Y \in \P(\R^d)$, with $X \sim \rho^X$ and $Y \sim \rho^Y$, we have:
    \begin{align*}
        \E_{\rho^Y}[V(x^*,Y)] \le V(x^*,y^*) \le \E_{\rho^X}[V(X,y^*)].
    \end{align*}
    We can write this equivalently as the condition~\eqref{Eq:EquilibriumZeroEntropy}:
    \begin{align*}
        \E_{\delta_{x^*} \otimes \rho^Y}[V] \le \E_{\delta_{x^*} \otimes \delta_{y^*}}[V] \le \E_{\rho^X \otimes \delta_{y^*}}[V].
    \end{align*}
    This shows that $(\delta_{x^*}, \delta_{y^*})$ is an equilibrium distribution of the game~\eqref{Eq:GameDet}.

    Furthermore, note that any equilibrium distribution of the game~\eqref{Eq:GameDet} is a stationary distribution for the mean-field min-max gradient flow~\eqref{Eq:MinMaxGFZeroEntropy}.
    Then the uniqueness of the equilibrium follows from the convergence guarantee~\eqref{Eq:ConvGFZeroEntropy} that we will prove below, which shows that along the mean-field min-max gradient flow~\eqref{Eq:MinMaxGFZeroEntropy}, any starting distribution $\rho_0^Z$ converges to $\delta_{z^*} = \delta_{x^*} \otimes \delta_{y^*}$. 

    \medskip
    \noindent
    \textbf{(2)~Convergence of mean-field min-max gradient flow:} Let $b^Z \colon \R^{2d} \to \R^{2d}$ be the vector field defined in~\eqref{Eq:DetDefbz}. 
    Recall from Lemma~\ref{Lem:LipschitzMonotone_bz} that $-b^Z$ is $\alpha$-strongly monotone.
    Recall that $b^Z(z^*) = 0$.
        
    Let $Z_t = (X_t,Y_t) \sim \rho_t^X \otimes \rho_t^Y = \rho_t^Z$ evolve via the mean-field min-max gradient flow~\eqref{Eq:MinMaxGFZeroEntropy} in $\R^{2d}$.
    Then we can compute:
    \begin{align*}
        \frac{d}{dt} \|X_t-x^*\|^2
        &= 2 \langle X_t - x^*, \dot X_t \rangle \\
        &= -2 \left\langle X_t - x^*, \E_{\rho_t^Y}[\nabla_x V(X_t, Y_t)] \right\rangle \\
        &= -2 \E_{\rho_t^Y}\big[\left\langle X_t - x^*, \nabla_x V(X_t, Y_t) \right\rangle \big].
    \end{align*}
    Taking expectation over $X_t \sim \rho_t^X$ and writing $\rho_t^Z = \rho_t^X \otimes \rho_t^Y$ gives:
    \begin{align*}
        \frac{d}{dt} \E_{\rho_t^X}\left[\|X_t-x^*\|^2\right]
        &= -2 \E_{\rho_t^Z}\big[\left\langle X_t - x^*, \nabla_x V(X_t, Y_t) \right\rangle \big].
    \end{align*}
    Similarly, we can also compute:
    \begin{align*}
        \frac{d}{dt} \E_{\rho_t^Y}\left[\|Y_t-y^*\|^2\right]
        &= 2 \E_{\rho_t^Z}\big[\left\langle Y_t - y^*, \nabla_y V(X_t, Y_t) \right\rangle \big].
    \end{align*}
    Adding the two identities above gives:
    \begin{align*}
        \frac{d}{dt} \E_{\rho_t^Z}\left[\left\|Z_t-z^*\right\|^2\right]
        &= \frac{d}{dt} \E_{\rho_t^X}\left[\|X_t-x^*\|^2\right] + \frac{d}{dt} \E_{\rho_t^Y}\left[\|Y_t-y^*\|^2\right] \\
        &= 2 \E_{\rho_t^Z}\big[\left(\left\langle X_t - x^*, -\nabla_x V(X_t, Y_t) \right\rangle + \left\langle Y_t - y^*, \nabla_y V(X_t, Y_t) \right\rangle \right) \big] \\
        &= 2 \E_{\rho_t^Z}\big[\left\langle Z_t - z^*, b^Z(Z_t) \right\rangle \big] \\
        &= 2 \E_{\rho_t^Z}\big[\left\langle Z_t - z^*, b^Z(Z_t) - b^Z(z^*) \right\rangle \big] \\
        &\le -2\alpha \E_{\rho_t^Z}\left[\left\| Z_t - z^* \right\|^2 \right]
    \end{align*}
    where the inequality follows from the property that $-b^Z$ is $\alpha$-strongly monotone.
    Integrating the differential inequality above from $0$ to $t$, and noting the special formula for the $W_2$ distance to a point mass $\delta_{z^*}$ yields the desired convergence guarantee:
    \begin{align*}
        W_2(\rho_t^Z, \delta_{z^*})^2
        = \E_{\rho_t^Z}\left[\left\|Z_t-z^*\right\|^2\right]
        \le e^{-2\alpha t} \, \E_{\rho_0^Z}\left[\left\|Z_0-z^*\right\|^2\right]
        = e^{-2\alpha t} \, W_2(\rho_0^Z, \delta_{z^*})^2.
    \end{align*}

    \medskip
    \noindent
    \textbf{(3)~Convergence of mean-field min-max gradient descent:} 
    Let $G \colon \R^{2d} \to \R^{2d}$ be $G(z) = z + \eta b^Z(z)$.
    Note that $G(z^*) = z^*$.
    Recall from Lemma~\ref{Lem:Contraction_bz} that $G$ is $M$-Lipschitz, where
    $$M := \sqrt{1-2\eta \alpha + 4 \eta^2 L^2} \le \sqrt{1-\eta \alpha} \le e^{-\frac{1}{2} \eta \alpha}$$
    where the bound above follows from the assumption $\eta \le \frac{\alpha}{4L^2}$ and the inequality $1-c \le e^{-c}$.

    From the update of mean-field min-max gradient descent~\eqref{Eq:MinMaxGDZeroEntropy}, we can compute:
    \begin{align*}
        \E_{\rho_k^{x,\eta}} \left[\|x_{k+1}-x^*\|^2\right]
        &= \E_{\rho_k^{x,\eta}} \left[\left\|x_k - \eta \, \E_{\rho_k^{y,\eta}}[\nabla_x V(x_k, y_k)] - x^* \right\|^2\right] \\
        &\le \E_{\rho_k^{z,\eta}} \left[\left\|x_k - \eta \, \nabla_x V(x_k, y_k) - x^* \right\|^2\right]
    \end{align*}
    where the inequality follows from Cauchy-Schwarz and using $\rho_k^{z,\eta} = \rho_k^{x,\eta} \otimes \rho_k^{y,\eta}$.
    Similarly, we can compute:
    \begin{align*}
        \E_{\rho_k^{y,\eta}} \left[\|y_{k+1}-y^*\|^2\right]
        &= \E_{\rho_k^{y,\eta}} \left[\left\|y_k + \eta \, \E_{\rho_k^{x,\eta}}[\nabla_y V(x_k, y_k)] - y^* \right\|^2\right] \\
        &\le \E_{\rho_k^{z,\eta}} \left[\left\|y_k + \eta \, \nabla_y V(x_k, y_k) - y^* \right\|^2\right].
    \end{align*}
    Adding the two bounds above, and using $G(z)= z + \eta b^Z(z)$ and $G(z^*) = z^*$, we can bound:
    \begin{align*}
        \E_{\rho_k^{z,\eta}} \left[\|z_{k+1}-z^*\|^2\right]
        &\le 
        \E_{\rho_k^{z,\eta}} \left[\left\|G(z_k) - G(z^*) \right\|^2\right] 
        \le e^{-\alpha \eta} \, \E_{\rho_k^{z,\eta}} \left[\left\|z_k - z^* \right\|^2\right]
    \end{align*}
    where the last inequality follows from the property that $G$ is $M$-Lipschitz with $M \le e^{-\frac{1}{2} \alpha \eta}$.
    Iterating the bound above gives:
    \begin{align*}
        W_2(\rho_k^{z,\eta}, \delta_{z^*})^2 = \E_{\rho_k^{z,\eta}} \left[\left\|z_k - z^* \right\|^2\right] \le e^{-\alpha \eta k} \, \E_{\rho_0^{z,\eta}} \left[\left\|z_0 - z^* \right\|^2\right]
        = e^{-\alpha \eta k} \, 
        W_2(\rho_0^{z,\eta}, \delta_{z^*})^2
    \end{align*}
    as desired.
\end{proof}

\end{document}